\documentclass[a4paper,10pt]{article}
\usepackage[top=2cm, bottom=1.5cm]{geometry} %
\pagestyle{empty}
\pagestyle{myheadings}
\usepackage[utf8]{inputenc}
\usepackage[american]{babel}  
\usepackage{amsmath,amssymb}
\usepackage{amsthm}
\usepackage{mathtools}
\usepackage{hyperref}
\usepackage{enumerate}
\usepackage{paralist}
\usepackage{graphics}
\usepackage[commentColor=black]{algpseudocodex}
\usepackage{algorithm}
\usepackage{float}
\usepackage{todonotes}
\usepackage[many]{tcolorbox}

\newtcolorbox{header}{
  fontupper = \it\color{black}, 
  boxrule = 0.5pt,
  colframe = black,
  colback = white,
  rounded corners,
  arc = 5pt   
}

\newcommand{\almosttextwidth}{0.95\textwidth}

\newcommand{\From}{~\textbf{from}~ }
\newcommand{\To}{~\textbf{to}~ }
\newcommand{\Downto}{~\textbf{downto}~ } 
\newcommand{\fun}[1]{\textsc{#1}} 
\newcommand{\arr}[1]{\texttt{#1}} 
\newcommand{\var}[1]{\texttt{#1}} 

\newenvironment{nosAlgos}[1][]{
  \begin{minipage}[t]{0.00\textwidth}
    ~
  \end{minipage}
  \begin{minipage}[t]{0.85\textwidth}
    \begin{algorithm}[H]
      \caption{#1}
      \begin{algorithmic}[1]
}{
      \end{algorithmic}
    \end{algorithm}
  \end{minipage}
  \medskip
  }

\newcommand{\largeB}{B}
\newcommand{\smallB}{K}

\usepackage{tikz}
\usetikzlibrary{arrows}
\newtheorem{lemma}{Lemma}
\newtheorem{definition}{Definition}
\newtheorem{theorem}{Theorem}
\newtheorem{proposition}{Proposition}
\newtheorem{corollary}{Corollary}
\newtheorem{remark}{Remark}
\newtheorem*{notation}{Notation}
\newtheorem*{convention}{Convention}
\newtheorem{example}{Example}
\newtheorem{claim}{Claim}
\newtheorem{openpb}{Open problem}
\newtheorem{conjecture}{Conjecture}


\def\lc{\left\lceil}   
\def\rc{\right\rceil}
\def\lf{\left\lfloor}   
\def\rf{\right\rfloor}

\newcommand{\predtab}{\mathtt{PRED}}

\newcommand{\divtab}{\mathtt{DIV}}
\newcommand{\modtab}{\mathtt{MOD}}

\newcommand{\modop}{\;\mathtt{mod}\;}
\newcommand{\divop}{\;\mathtt{div}\;}

\newcommand{\xorop}{\mathtt{xor}}
\newcommand{\orop}{\mathtt{or}}
\newcommand{\andop}{\mathtt{and}}

\newcommand{\op}{\mathtt{op}}
\newcommand{\Input}{\mathfrak{I}}

\newcommand{\Op}{\mathtt{Op}}
\newcommand{\RAM}[1][{\Op}]{\mathcal{M}[#1]}

\newcommand{\lin}{\mathtt{LIN}}

\DeclareMathOperator{\linsp}{\ensuremath{\mathtt{LIN}^{\mathtt{succ,pred}}}}

\newcommand{\dlin}{\mathtt{DLIN}}

\DeclareMathOperator{\cdlin}{\ensuremath{\mathtt{CONST}}-\ensuremath{\mathtt{DELAY}_{\mathtt{lin}}}}

\DeclareMathOperator{\lindlin}{\ensuremath{\mathtt{LIN}}-\ensuremath{\mathtt{DELAY}_{\mathtt{lin}}}}

\DeclareMathOperator{\cdlinlinspace}{\ensuremath{\mathtt{CONST}}-\ensuremath{\mathtt{DELAY}}-
\ensuremath{\mathtt{LIN}}-\ensuremath{\mathtt{SPACE}_{\mathtt{lin}}}}

\DeclareMathOperator{\cdlinsp}{\ensuremath{\mathtt{CONST}}-\ensuremath{\mathtt{DELAY}}_{\mathtt{lin}}^{\mathtt{succ,pred}}}

\DeclareMathOperator{\clin}{\ensuremath{\mathtt{CONST}}_{\mathtt{lin}}}

\DeclareMathOperator{\clinsp}{\ensuremath{\mathtt{CONST}}_{\mathtt{lin}}^{\mathtt{succ,pred}}}


  

\begin{document}

\title{Which arithmetic operations can be performed \\
 in constant time in the RAM model with addition?}


\date{\today}

\author{Étienne Grandjean,\\ 
{GREYC, Normandie Univ, UNICAEN, ENSICAEN, CNRS,
   Caen, France}\\
  Louis Jachiet,\\
 {LTCI, Télécom Paris, Institut polytechnique de Paris, France}
}

\maketitle
\tableofcontents

\paragraph{Abstract:}

The Church-Turing thesis~\cite{WikipediaChurchTuring} states that the
set of computable functions is the same for any ``reasonable''
computation model. When looking at complexity classes such
as \textsc{P}, \textsc{Exptime}, \textsc{Pspace}, etc., once again, a
variation of this thesis~\cite{WikipediaChurchTuring} states that these
classes do not depend on the specific choice of a computational model as
long as it is somewhat ``reasonable''\footnote{We are talking about classical computers here, to the exclusion of quantum computers which can, for example, factorize any integer in polynomial time. The question of knowing if the factorization problem can be solved in polynomial time is a major open problem in the classical framework!}. In contrast to that, when talking
of a function computable in linear time, it becomes crucial to specify
the model of computation used and to define what is the time required
for each operation in that model.

In the literature of algorithms, the specific model is often not
explicit as it is assumed that the model of computation is the RAM
(Random Access Machine) model. However, the RAM model itself is
ill-founded in the literature, with disparate definitions and no
unified results.  

The ambition of this paper is to found the RAM model from scratch by exhibiting a RAM model that enjoys interesting
algorithmic properties and the robustness of its complexity classes. With that goal in mind, we have tried to make this article as progressive, self-contained, and comprehensive as possible so that it can be read by graduate or even undergraduate students.

The computation model that we define is a RAM in which the values
contained in registers as well as the addresses of registers
(including input registers) are $O(N)$, where $N$ is the size
(number of registers) of the input, and where the time cost of
each instruction is 1 (unit cost criterion).

The key to the foundation of our RAM model will be to prove that even
if addition is the only primitive operation, such a RAM can still
compute all the arithmetic operations taught at school, subtraction, multiplication and
above all division, in constant time after a linear-time
preprocessing.

Moreover, while handling only $O(N)$ integers in each register is a strict limit of
the model, we will show that our RAM can handle $O(N^d)$ integers, for any fixed~$d$,
by storing them on $O(d)$ registers and we will have surprising
algorithms that compute many operations acting on these
``polynomial'' integers -- addition, subtraction, multiplication,
division, exponential, integer logarithm, integer square root (or $c$th root, for any integer~$c$),
bitwise logical operations, and, more generally, any operation computable in linear time on a cellular automaton -- in constant time after a linear-time preprocessing.

The most important consequence of our work is that, on the RAM model,
the $\lin$ class of linear-time computable problems, or the now
well-known $\cdlin$ class of enumeration problems computable with
constant delay after linear-time preprocessing, are invariant with
respect to which subset of these primitive operations the RAM provides
as long as it includes addition.

\section{Introduction and discussion of the RAM model}

While the Turing machine is the ``standard'' model in computability theory and polynomial time complexity theory, the standard character of this model is not essential due to the Church-Turing thesis and the ``complexity-theoretic'' Church-Turing thesis~\cite{WikipediaChurchTuring}: the polynomial time complexity class does not depend on the precise model of computation provided it is ``reasonable''.

Nevertheless, it is well-known that the Turing machine is too rudimentary to model the precise functioning of the vast majority of concrete algorithms. Instead, the complexity of algorithms is measured up to a constant factor using the RAM model (Random Access Machine)~\cite{AhoHU74,WikipediaRAM}.

However, there is a strange paradox about the RAM model, which has persisted since the model was introduced more than half a century ago~\cite{ShepherdsonS63, ElgotR64, Hartmanis71, CookR73, AhoHU74}: 
\begin{itemize}
\item on the one hand, the RAM is recognized as the appropriate machine model ``par excellence" for the implementation of algorithms and the precise analysis of their time and space complexity~\cite{CookR73, AhoHU74, WagnerW86, Papadimitriou94, Grandjean94-cc, Grandjean94, Grandjean96, Schwentick97, Schwentick98, GrandjeanSchwentick02, GrandjeanOlive04, FlumG06, CormenLRS09, WikipediaRAM};
\item on the other hand, unfortunately, the algorithmic literature does not agree on a standard definition of RAMs and the time costs of their instructions.
\end{itemize}
 
\paragraph{A confusing situation.}
The situation looks like a ``Babel tower": to describe and analyze algorithms, while some manuals on algorithms~\cite{Knuth68, PapadimitriouS82, DasguptaPV08, SedgewickW16} only introduce a pidgin programming language or a standard one (Pascal, C, Java, Python, etc.) instead of the RAM model, many other books on algorithms~\cite{AhoHU74, Tarjan83, Kozen92, CrochemoreR94, MotwaniR95, Niedermeier06, CormenLRS09, Skiena20} or on computational complexity~\cite{HopcroftU79, WagnerW86, Papadimitriou94, FlumG06} ``define'' and discuss ``their" RAM models.

The problem is that the RAM model, whose objective is to model the algorithms in a ``standard" way, presents a multitude of variants in the literature. While the representation of memory as a sequence of registers with indirect addressing ability is a common and characteristic feature of all versions of RAM -- random access means indirect addressing --, there is no agreement on all other aspects of the definition of this model:

\begin{description}

\item[Contents of registers:] the data stored in registers are either (possibly negative) integers~\cite{CookR73, AhoHU74, Papadimitriou94}, or natural numbers~\cite{WagnerW86, FlumG06}; in a few rare references~\cite{Tarjan83}, real numbers are also allowed.

\item[Input/output:]
an input (or an output) is either a word (on a fixed finite alphabet)
which is read (or written) letter by letter~\cite{WagnerW86,
CrochemoreR94,FlumG06}, or a sequence of integers read (or written)
sequentially one by one~\cite{CookR73, AhoHU74}; in several
references~\cite{Papadimitriou94, Schwentick97, Schwentick98,
GrandjeanSchwentick02, GrandjeanOlive04}, an input is a
one-dimensional array of integers (a list of input registers) accessed
by an index (i.e.\ they can be access in any order).

\item[Operations:] the set of authorized operations consists sometimes of the addition and the subtraction~\cite{CookR73, Grandjean94-cc, Grandjean94, Grandjean96, Schwentick97, Schwentick98, GrandjeanSchwentick02, GrandjeanOlive04}, sometimes of the four operations $+,-,\times,/$, see~\cite{AhoHU74}; in other cases, it is a subset of either of these two sets plus possibly some specific operations~\cite{Tarjan83, WagnerW86, Papadimitriou94, FlumG06, Skiena20}: division by 2, bitwise Boolean operations, etc.

\item[Sizes of registers:] in most references, see e.g.~\cite{CookR73, AhoHU74, HopcroftU79, WagnerW86, FlumG06, Papadimitriou94}, the integer contents of registers are a priori unbounded; in other papers, they must be ``polynomial", i.e.\ at most $O(N^d)$,  see~\cite{AngluinV79, GurevichS89, CrochemoreR94, CormenLRS09}, for a constant $d$, or ``linear" $O(N)$, see~\cite{Grandjean94-cc, Grandjean94, Grandjean96, Schwentick97, Schwentick98, LautemannW99, GrandjeanSchwentick02, GrandjeanOlive04}, where $N$ is the ``size'' of the input.

\item[Costs of instructions:] the literature presents the following two measures defined in the pioneering paper by Cook and Reckhow~\cite{CookR73} and the reference book by Aho, Hopcroft and Ullman~\cite{AhoHU74}: 

\begin{description}
\item[Unit cost criterion:] the execution time of each instruction is uniformly 1 (therefore, it is also called ``uniform cost criterion'', e.g.\ in~\cite{AhoHU74}); this makes sense if the integers allowed are bounded, linear, or polynomial in the size of the input;
\item[Logarithmic cost criterion:] the execution time of each instruction is the sum of the lengths of the binary representations of the integers handled by the instruction (addresses and contents of registers), or is 1 if the instruction uses no integer.
\end{description}
It should be noted that in practice, the books and papers on algorithms, see e.g.~\cite{AhoHU74, PapadimitriouS82,  Tarjan83, Kozen92, CrochemoreR94, MotwaniR95, Niedermeier06, FlumG06, CormenLRS09, Skiena20}, almost \emph{exclusively} apply the unit cost criterion, which is more intuitive and easier to handle than the logarithmic criterion.

\end{description}

The confusing situation of the RAM model that we have just described is all the more damaging since the new algorithmic concepts that have appeared in recent years, see for example~\cite{Durand20}, require a finer complexity analysis than for traditional algorithms.

\paragraph{Renew the understanding of the RAM model for the analysis of ``dynamic'' algorithms.}

Since the early 2000s, new concepts have been developed in artificial intelligence (AI), logic, database theory and combinatorics. Basically, they relate to ``dynamic'' problems whose resolution works in two or more phases.

As a significant example, a \emph{knowledge compilation problem} in
AI~\cite{CadoliD98, DarwicheM02, Marquis15} is a query whose input
breaks down into a ``fixed part'' $F$ and a (much smaller) ``variable
part" $V$.  An algorithm solving such a query operates in two
successive phases: a (long) ``preprocessing phase'' acting on the
``fixed part'' $F$; a ``response phase", which reads the ``variable
part" $V$ of the input and answers (very quickly) the query on the
$(F, V)$ input.

A very active field of research for the past fifteen years concerns \emph{enumeration algorithms} in logic, the theory of queries in databases, and combinatorics, see e.g.\ the surveys~\cite{Segoufin13, Segoufin14, Wasa16, Strozecki19, BerkholzGS20, Durand20}. A ``fine-grained" and now widespread complexity class (which has some variants) has emerged for enumeration problems, see~\cite{Bagan06, DurandG07, BaganDG07, BaganDGO08, Courcelle09, Bagan09, KazanaS11, KazanaS13, DurandSS14, Uno15, AmarilliBJM17, BerkholzKS18, KazanaS19, CarmeliZBKS20}: the class of problems whose set of solutions can be enumerated with \emph{constant delay} (between two consecutive solutions) after a \emph{linear-time} preprocessing. Each of these papers is based on the RAM model and uses the ``minimal" (or quasi-minimal) complexity classes of this model: mainly, linear time, for decision or counting problems, and, for enumeration problems, constant delay after linear (or quasi-linear, or polynomial) time preprocessing. 

More recently, several authors in database theory have introduced what they call the ``dynamic query evaluation algorithms"~\cite{BerkholzKS17,BerkholzKS18}. The purpose of these algorithms is to modify the answer of a query to a database, in \emph{constant} (or \emph{quasi-constant}) \emph{time} when the database is updated by inserting or removing one tuple.

The ``fine-grained" complexity classes associated with these new algorithmic concepts (dynamic algorithms) make a thorough understanding of the RAM computation model all the more urgent. 
The main objective of this paper is to elucidate and justify precisely what ``the" RAM model ``is" or ``should be". 

\paragraph{What should the RAM model be?}
Our requirements are as follows:
\begin{itemize}
\item the RAM model we ``re-define" must be \emph{simple} and \emph{homogeneous}: in particular, the RAM input and the RAM memory must have the same structure; also, the set of allowed register contents must equal the set of allowed addresses; 
\item the RAM model must ``stick'' closely to the  \emph{functioning of algorithms on discrete structures} (integers, graphs, etc.);
\item the ``fine-grained" complexity classes~\cite{Durand20} defined in the RAM model must be ``robust": it must be proven that these complexity classes are largely  \emph{independent of the precise definition} of the RAM and, in particular, of the \emph{arithmetic operations} allowed. 
\end{itemize}

 \begin{remark}[What the RAM model does not model]
 
 In the 80s and 90s, a few authors, see e.g.~\cite{AggarwalCS87,Regan96}, introduced several versions of the RAM model in order to model the memory hierarchy of real computers: fast memory (registers) and slower memory with cache memory hierarchy, etc. 
 
Nevertheless, we require that the RAM model be homogeneous, thus without memory hierarchy, as almost all the literature on algorithms does. As the reference book of Cormen et al.~\cite{CormenLRS09} argues, there are two reasons: first, the analysis of an algorithm is very complicated in memory hierarchy models, which, moreover, are far from being unified in a standard form; the second and main reason is that the analysis of algorithms based on the RAM model without memory hierarchy ``generally gives excellent predictions of the performances obtained on real computers''~\cite{CormenLRS09}.

 \end{remark}
 
Our precise definition of the RAM model will be given in the preliminary section. Let us present below the main characteristics of our model with some elements of justification.
 
 \paragraph{The RAM model we choose}
 
 \begin{enumerate}

\item {\bf Contents of registers:}
each register contains a \emph{natural number}; a RAM with possibly
negative integers can be easily simulated by representing an integer
$x$ by the pair of natural numbers $(s_x,\vert x \vert)$ with $s_x=0$
if $x\geq 0$ and $s_x=1$ if $x<0$; real numbers are excluded for the
sake of homogeneity.

\item {\bf Input/output:}
in order for the inputs/outputs to be of the same type as the RAM
memory, a (standard) input is an \emph{integer}\footnote{From now on,
we write ``integer" in place of ``natural number".}  $N>0$ or
a \emph{one-dimensional array of integers} $(N,I[0],\ldots,I[N-1])$;
similarly, an output is an integer or a list of integers. For the sake
of simplicity, even when the input is solely composed of $N$, we call
the positive integer $N$, the ``size" of the input.

\item {\bf Operations:} our RAM has a basic instruction set plus a set  of
      primitive operations. In this paper the set of primitive operations
      consists only of the \emph{addition}.

\item {\bf Sizes of registers:}
the contents of each register (including each input register $I[j]$)
is $O(N)$, where $N$ is the ``size'' of the input, or equivalently,
its length in binary notation is at most $\log_2(N)+O(1)$.

\item {\bf Costs of instructions:} the execution time of each instruction is 1 (\emph{unit cost} criterion).

\end{enumerate}

Our choice of a one-dimensional array of integers $(N,I[0],\ldots,I[N-1])$ of size $N$ as standard input (item 2) is convenient for representing usual data structures such as trees, circuits, (valued) graphs, hypergraphs, etc., while respecting their size. For example, it is easy to represent a tree of $n$ nodes as a standard input of size $\Theta(n)$, or a graph of $n$ vertices and $m$ edges 
as a standard input
\footnote{For example, a graph without isolated vertex whose vertices are $1,\ldots, n$ and edges are $(a_1,b_1),\ldots, (a_m,b_m)$ is ``naturally represented'' by the standard input $\mathcal{I}=(N,I[0],\ldots,I[N-1])$ where $N\coloneqq 2m+2$,  $I[0]\coloneqq m$,  $I[1]\coloneqq n$, and finally $I[2j]\coloneqq a_j$ and $I[2j+1]\coloneqq b_j$ for $j=1,\ldots, m$. Note that since the maximum number of vertices of a graph having $m$ edges and no isolated vertices is $2m$
, we have $n\leq 2m=N-2$, 
so that each $I[j]$ is less than $N$.}
of size $\Theta(m+n)$.

We do not yet substantiate items (3-5) of our RAM model. Most of this paper is devoted to showing that limiting primitive operations to addition alone, which acts on integers in $O(N)$, is not a limitation of the computing power of RAMs. In fact, we will prove that each of the usual arithmetic operations, subtraction, multiplication, division, and many others, acting on integers in $O(N)$, even on ``polynomial'' integers in $O(N^d)$ represented in $d$ registers (for a fixed integer~$d$), can be simulated in \emph{constant time} in our RAM model with only the addition operation. 
This will show the ``robustness'' and high computing power of our ``simple" RAM model, which is the main focus of this paper.  

\paragraph{Structure of the paper and reading tips:} 
We have tried to make this long paper as self-contained, educational and modular as possible. 
For this, we have given complete, detailed and elementary proofs\footnote{The only exception is the proof of Theorem~\ref{th:linTimeCA} which requires implementing elaborate simulations of cellular automata acting in linear time.} and structured our many sections so that they can be read, most of the time, independently of each other. 

Sections~\ref{sec:Prelimin} and~\ref{sec:RAMcomplexityclasses} define
our RAM model -- with three equivalent instruction sets -- and their
complexity classes\footnote{Moreover,
Subsection~\ref{subsec:RAMwithLargerSpace} studies the properties of
the RAM model when it is extended by allowing $k$-dimensional arrays
(for a fixed $k$), that is, with $O(N^k)$ registers available.}.
Those sections also introduce two kinds of RAM simulations with their
transitivity properties: first, the \emph{faithful simulation} which
is very sensitive to the primitive arithmetic operations of RAMs as
Proposition~\ref{prop:multWithAdd} states; second, the \emph{faithful
simulation after linear-time initialization} (also
called \emph{linear-time preprocessing}) which we prove is the right
notion to enrich the set of primitive operations of RAMs while
preserving their complexity classes, see
Corollary~\ref{cor:transitivity} (transitivity corollary).

Although Sections~\ref{sec:Prelimin}
and~\ref{sec:RAMcomplexityclasses} present the precise concepts and
their properties to formally ground the results of the paper, the
reader can skip them on the first reading of the main results, which
are given in Sections~\ref{sec:SumDiffProdBase}
to~\ref{sec:bitwiseOperConstTime}, while contenting herself with an
intuitive notion of simulation.  These sections show how to enrich the
RAM model endowed with only the addition operation, by a multitude of
operations -- the most significant of which are \emph{Euclidean
division} (Section~\ref{section:division}) and any
operations \emph{computable in linear time on cellular automata}
(Subsection~ \ref{subsec:TuringCA} and Appendix)~-- while preserving
its complexity classes.

Third, Section~\ref{sec: MinAdd} proves several results that argue for the ``minimality'' of the RAM model with addition.

Section~\ref{sec:conclusion} (concluding section) recapitulates and completes our results establishing the invariance/robustness of our ``minimal'' complexity classes, mainly $\lin$, $\clin$ and $\cdlin$, according to the set of primitive operations (Corollaries~\ref{cor:invOp+} and~\ref{cor:invRAM+}). 
We also explain why and how our RAM model ``faithfully'' models the computations of algorithms on combinatorial structures (trees, graphs, etc.) and discuss the more subtle case of string/text algorithms. 
Finally, Section~\ref{sec:conclusion} gives a list of open problems, supplementing those presented throughout the previous sections.

\paragraph{Small digression: a comparison with Gurevich’s Abstract State Machines.}
In addition to the sequential computation models that are the Turing machine, the ``Storage Modification Machine'' (or ``pointer machine'', see Section~\ref{sec:conclusion}) and the RAM that we study here, we must mention the model of ``Abstract State Machines'' (ASM) introduced by Gurevich in the 80s~\cite{Gurevich93} and since studied in many publications~\cite{BoergerH98, Gurevich00, BlassGRR07, BlassGRR07a, Gurevich12}. 
The ambition of Gurevich and his collaborators is to faithfully model, that is to say step by step, the functioning of all the algorithms (sequential algorithms, parallel algorithms, etc.).

Although the goals of the ASM model are foundational, like ours, they are much broader in a sense:
\begin{itemize}
\item its objectives are to refound the theory of computation, algorithms and programming on a single model, the ASM;
\item this unique model includes/simulates in a faithful and natural way the functioning of all algorithms and computation models: Turing machines, RAMs, etc. (see e.g.~\cite{BlassGRR07, BlassGRR07a}).
\end{itemize}

Our objectives in this paper are different:
\begin{itemize}
\item we limit our paper to a systematic study of the RAM model alone and we do not try to model
parallel algorithms, quantum computers, etc.;
\item while we are also interested in the fine functioning of algorithms, our main objective is to study the properties of fine-grained complexity classes, in particular minimal classes; 
\item for this, our main tool is not the \emph{faithful simulation}, similar to the \emph{lock-step simulation} of the ASM (see e.g.~\cite{DexterDG97}), because it is very sensitive to the primitive arithmetic operations of the RAMs (by our Proposition~\ref{prop:multWithAdd}), 
but it is the \emph{faithful simulation after linear-time initialization}, which is not allowed in the ASM framework.
\end{itemize}

\section{Instruction sets for the RAM model}\label{sec:Prelimin}

In this section, we define our computation model. This is the standard
RAM with addition using only integers $O(N)$: here, $N$ is the integer
corresponding to the ``size" of the input, i.e.\ the nonnegative number
of registers occupied by the input or, when an arithmetic operation is
simulated, an upper bound on the value of the operands. Note that this
$O(N)$ limit on the contents of registers also places a limit on the
memory since only cells whose addresses are $O(N)$ can be accessed.

To convince the reader of the flexibility of the model, we define a RAM model
with a minimal instruction set, called $\RAM{}$, where $\mathtt{Op}$
denotes a set of allowed operations (e.g.\ addition, subtraction,
etc.). We will also introduce two other minimal instruction sets and
show that all these models are in fact equivalent.

For the sake of simplicity, since in what follows we are interested in
``minimal" complexity classes, namely \emph{linear time} complexity
and \emph{constant time} complexity after \emph{linear time
preprocessing}, 
our RAM model also uses \emph{linear space}.
However, it would be possible to define the RAM model for any
time $T(N)$ and space~$S(N)$ complexity class, provided that the
functions $N\mapsto T(N)$ and $N\mapsto S(N)$ meet some ``natural"
conditions: constructibility (e.g., see~\cite{BalcazarDG88}, section~2.4), etc.


\subsection{The RAM model with minimal instruction set}\label{subsec: AB-inst}

A \emph{RAM} $\RAM{}$ has two special registers $A$ (the
\emph{accumulator}) and $B$ (the \emph{buffer}), \emph{work registers}
$R[j]$ (the \emph{memory}), \emph{input registers} $I[j]$, for all
$j\geq 0$, and an \emph{input size register}~$N$.  (For the sake of
simplicity, throughout this paper, we identify a RAM with its
program.)

\paragraph{Syntax.} A \emph{program/RAM} ($AB$-\emph{program}, $AB$-\emph{RAM})
of $\RAM{}$ is a sequence $I_0,\ldots, I_{r-1}$ of $r$
\emph{instructions}, also called $AB$-\emph{instructions}, where each of the $I_i$ has one of the following
forms:

\begin{table}[H]
  \centering
\begin{tabular}{|clr|}
  \hline
  \bf{Instruction} &  \bf{Meaning} &  \bf{Parameter} \\
  \hline
  $\mathtt{CST}$ $j$ & $A \gets j$ & for some constant integer $j\geq 0$ \\
  $\mathtt{Buffer}$  & $B \gets A$ & \\ 
  $\mathtt{Store}$ & $R[A] \gets B$ & \\ 
  $\mathtt{Load}$ & $A \gets R[A]$ &  \\
  $\mathtt{Jzero} \; {\ell_0} \; {\ell_1}$ &  $\mathtt{if}\; A=0\;\mathtt{then}\;\mathtt{goto}\;\ell_0\;\mathtt{else}\;\mathtt{goto}\;\ell_1$& for $0\leq \ell_0,\ell_1 \leq r$
  \\
  $\op$ & see semantics below & for $\op$ in $\Op$ \\
  $\mathtt{getN}$ & $A\gets N$ & \\
  $\mathtt{Input}$ & $A \gets I[A]$ & \\ 
  $\mathtt{Output}$ & output $A$ & \\
  \hline
\end{tabular}
\caption{Possible instructions for $\RAM{}$}
\label{table:instRAM}
\end{table}


\paragraph{Semantics.}
The semantics, called \emph{computation}, of an $AB$-program $I_0,
\dots, I_{r-1}$ on a given input is defined rather
intuitively. Formally, it is defined as the (possibly infinite)
sequence of the \emph{configurations} ($AB$-\emph{configurations})
$\mathcal{C}_0,\mathcal{C}_1,\dots$ where each
$\mathcal{C}_i\coloneqq(A_i,B_i,\lambda_i,D_i)$ consists of the
contents $A_i,B_i$ of the two registers $A,B$, the index $\lambda_i$
of the current instruction, as well as the description $D_i$ of the
memory (the contents of the $R[j]$) at instant~$i$.

The \emph{initial configuration} is $A_0=B_0=\lambda_0=0$ with the array $D_0$
initialized to 0. Then, from a configuration $\mathcal{C}_i=(A_i,B_i,\lambda_i,D_i)$,
the next configuration -- if it exists -- is deduced from the instruction $I_{\lambda_i}$ using
the semantics given in Table~\ref{table:instRAM}.
The instruction $\mathtt{Jzero}\;\ell_0\;\ell_1$ is a conditional jump for which we have 
$\lambda_{i+1}=\ell_0$ when $A_i=0$, and $\lambda_{i+1}=\ell_1$ otherwise; 
for all other instructions, we have $\lambda_{i+1}=\lambda_i+1$. 
Note that we allow $\lambda_{i+1}=r$: in that case, 
we say that $C_{i+1}=(A_{i+1},B_{i+1},\lambda_{i+1},D_{i+1})$ where $A_{i+1}=A_i$, $B_{i+1}=B_i$, $D_{i+1}=D_i$ and $\lambda_{i+1}=r$ is a \emph{final configuration}, it is the last of the computation. 
For an instruction $\op$, its exact semantics depends on the particular operand but for
all classical unary or binary operations (addition, multiplication,
etc.) we assume that it reads $A$ (and $B$ when binary) and
stores its result in $A$, i.e., its semantics is $A \gets \mathtt{op}(A)$ 
(resp.\ $A \gets \mathtt{op}(A,B)$) when $\mathtt{op}$ is unary (resp. binary). 

The \emph{output} of a program that stops is the list of values that
it outputs in the order it produces them. The \emph{running time} of a
program is the number of configurations it goes through before ending
in its final configuration.

\paragraph{Input.}
In this paper, we will consider two forms of random access inputs:
\begin{itemize}
\item a list of integers $(N,I[0],\ldots,I[N-1])$, for $N>0$ --
  contained in the registers having the same names -- where each
  $I[j]$, $0\leq j <N$, is such that $0\leq I[j] \leq cN$, for some
  constant integer $c\geq 1$;
\item a single integer $N>0$, contained in the register $N$.
\end{itemize}

Note that these types of inputs impose a limit on what we can model.
For instance, it does not allow us to model streaming algorithms or
online algorithms where the algorithm is required to read the data in
a particular order because, in our model, the input is read in random
order.
We believe this design is adequate for our purpose (grounding the RAM model) but this limitation is not fundamental and the input/output model could easily be tweaked for many specific needs. 

\paragraph{An illustrative program.}
We now demonstrate how to use $\RAM[\{+\}]$, the RAM equipped with
addition (denoted $+$ or $\mathtt{add}$), with the repetitive but
simple $AB$-program given by Table~\ref{table:exampleProgram}.
We claim that the $AB$-program is equivalent to the following ``high-level'' program (in pseudo-code format), which reads the input size $N$, assumed to be greater than 1, sets the registers $1$ to $N-1$ to the value $I[0]$ and outputs/returns the value $I[1]$. 

For this, it uses a ``while'' construct, which does not exist in our RAM, 
and we also adopt the convention that the variable~$i$ is stored within the $R[0]$ register.

\begin{nosAlgos}[Example program in pseudo-code format]
\State $R[N] \gets 0$ \Comment{lines 0 - 3}
\State $i\gets 1$ \Comment{expressed by $R[0] \gets 1$ since $i$ is stored in $R[0]$, see lines 4 - 7} 
\While{$R[N] = 0$} \Comment{loop test at lines 8 - 10}
  \State $R[i] \gets I[0]$ \Comment{lines 11 - 16}
  \State $R[i+1] \gets i+1$ \Comment{lines 17 - 21}
  \State $i \gets i + 1$ \Comment{lines 22 - 23}
\EndWhile
\State \Output{$I[1]$} \Comment{lines 25 - 27}
\end{nosAlgos}

For better readability, each instruction of Algorithm 1 is
annotated with the lines it corresponds to in our $AB$-program;
additionally, the $AB$-instructions of Table~\ref{table:exampleProgram} are grouped according to each basic instruction of Algorithm 1 that they simulate together 
(bold lines beginning by~$\vartriangleright$)
and the column ``Effect'' also details what each $AB$-instruction does.

\begin{table}[H]
  \centering
  \begin{tabular}{clr}
    \textbf{\#}  & \textbf{Instruction} & \textbf{Effect} \\
    \\
 $\vartriangleright$ & $\mathbf{R[N]\gets 0}$   &   \\
    $0$  & $\mathtt{CST} \; 0 $   & $A \gets 0 $        \\
    $1$  & $\mathtt{Buffer} $   & $B \gets A=0 $        \\
    $2$  & $\mathtt{getN}  $   & $A \gets N $        \\
    $3$  & $\mathtt{Store} $   & $R[A]\gets B$, which means $R[N] \gets 0 $        \\
    \\
    $\vartriangleright$ & $\mathbf{i \gets 1}$   &   \\
    $4$  & $\mathtt{CST} \; 1 $   & $A \gets 1 $        \\
    $5$  & $\mathtt{Buffer} $   & $B \gets A= 1 $        \\
    6 & $\mathtt{CST} \; 0 $   & $A \gets 0 $        \\
    7 & $\mathtt{Store} $   & $R[A] \gets B$, which means $R[0] \gets 1$, or $i \gets 1$        \\

    \\ 
    $\vartriangleright$ & \bf{if} $\mathbf{R[N]=0}$ \bf{then enter the loop}  &   \\
    8 & $\mathtt{getN}  $   & $A \gets N $        \\
    9 & $\mathtt{Load}  $   & $A \gets R[A]=R[N]$      \\
   10 & 
   $\mathtt{Jzero} \; 11 \; 25  $   & Enter loop when $A=0$, i.e.\ when $R[N]= 0$,
   \\ & & else go to loop end    \\
    \\
    $\vartriangleright$ & \bf{loop body start}   &   \\
    \\
    $\vartriangleright$ & $\mathbf{R[i]\gets I[0]}$   &   \\
   11            & $\mathtt{CST} \; 0 $   & $A \gets 0 $        \\
    $12$  
    & $\mathtt{Input} $   & $A \gets I[A]=I[0] $     \\
    $13$  
   & $\mathtt{Buffer}$     & $B \gets A= I[0]$ \\
    $14$  
   & $\mathtt{CST} \; 0$     & $A \gets 0$ \\
    $15$  
    & $\mathtt{Load}$     & $A \gets R[A]$, which means $A \gets R[0] = i$ \\
    $16$  
    & $\mathtt{Store}$     & $R[A]\gets B$, which means $R[i] \gets I[0]$ \\
    \\
    $\vartriangleright$ & $\mathbf{R[i+1] \gets i+1}$   &   \\
    $17$  
    & $\mathtt{Buffer}$     & $B \gets A= i$ \\
    $18$  
    & $\mathtt{CST} \; 1$     & $A \gets 1$ \\
    $19$  
    & $\mathtt{add}$     & $A \gets A+B= 1+i$ \\
    $20$  
    & $\mathtt{Buffer}$     & $B \gets A= i+1$ \\
    $21$  
    & $\mathtt{Store}$     & $R[A]\gets B$, which means $R[i+1] = i+1$ \\
    \\
    $\vartriangleright$ & $\mathbf{i\gets i+1}$   &   \\
    $22$  
    & $\mathtt{CST} \; 0$     & $A \gets 0$ \\
    $23$  
    & $\mathtt{Store}$     & $R[A]\gets B$, which means $i = R[0] \gets i+1$ \\
    \\
    $\vartriangleright$ & \bf{back to the loop test}   &   \\
    $24$   
    &$\mathtt{JZero} \; 10 \; 10$     & unconditional jump to the conditional ``go to'' \\
    \\
    $\vartriangleright$ & \bf{loop end}   &   \\
    \\
    $\vartriangleright$ & $\mathbf{output\; I[1]}$   &   \\
    $25$  
    & $\mathtt{CST} \; 1$     & $A \gets 1$ \\
    $26$  
    & $\mathtt{Input}$     & $A \gets I[A]=I[1]$ \\
    $27$  
    & $\mathtt{Output}$   & $\mathtt{Output}\; A$, which means  $\mathtt{Output}\; I[1]$ \\    
\end{tabular}
\caption{Example of $AB$-program of $\RAM[\{+\}]$}
\label{table:exampleProgram}
\end{table}

\paragraph{Minimality of the instruction set.}
No $AB$-instruction of Table~\ref{table:instRAM} can be removed
without diminishing the expressive power of the model: $\mathtt{CST}$
$j$ must be used to compute the constant function $N \mapsto j$ when
$\mathtt{Op}$ is the empty set; if we remove $\mathtt{Load}$
(resp.\ $\mathtt{Buffer}$ or $\mathtt{Store}$), then the memory
registers $R[j]$ cannot be read (resp.\ modified); if we remove the
jump instruction $\mathtt{Jzero} \; \ell_0 \; \ell_1$ then any program
performs a constant number of instructions, and finally, it is obvious
that the input/output instructions $\mathtt{getN}$, $\mathtt{Input}$,
and $\mathtt{Output}$ cannot be removed.
Note that the minimality we claim is that no instruction from our RAM
can be removed without decreasing expressiveness. This does not mean
that another RAM needs at least as many instructions to be as
expressive, in particular, this statement is not contradicted by the
existence of one-instruction set computers~\cite{oisc}.


\subsection{Two more instruction sets}

We have just observed that the RAM with the $AB$-instructions is
minimal in the following sense: removing one of the $AB$-instructions
listed in Table~\ref{table:instRAM} strictly weakens the computational
power of the model.  Recall that our main goal in this paper is to
exhibit a ``minimal'' RAM in the sense of a RAM endowed with the least
number of operations on the registers: test at zero and addition, but
neither multiplication, nor comparisons $=$, $<$, etc. The second
(complementary) objective of this article is to establish the
``robustness'' of the RAM model: its computational power and
complexity classes are invariant under many changes to its
definition. To this end, this subsection will present two alternative
instruction sets and Subsection~\ref{sub:equivInst} will show that all
programs can be transformed from one instruction set to another with
time and space bounds multiplied by a constant.

\subsubsection*{RAM with assembly-like instruction set}
A RAM with $\mathtt{Op}$ operations and \emph{assembly-like instruction
set}, called $R$-RAM, has \emph{work registers} $R[j]$, \emph{input registers} $I[j]$, for all $j\geq
0$, and an \emph{input size register}~$N$.  An $R$-\emph{program} is a sequence
$I_0, \dots, I_{r-1}$ of labeled instructions, called
$R$-\emph{instructions}, of the following forms with given intuitive
meaning, where $i,j$ are explicit integers, $\ell_0, \ell_1$ are valid indexes
of instructions and $\op$ is an operator from $\Op$ of arity $k$:

\begin{table}[H]
  \centering
\begin{tabular}{|clr|}
  \hline
   \bf{Instruction} &  \bf{Meaning} & \\ 
  \hline
  $\mathtt{CST} \; i \; j$ & $R[i] \gets j$ & \\
  $\mathtt{Move} \; i \; j$ & $R[i] \gets R[j]$  & \\ 
  $\mathtt{Store}\; i \; j$ & $R[R[i]] \gets R[j]$  & \\ 
  $\mathtt{Load}\; i \; j$  & $R[i] \gets R[R[j]]$  & \\ 
  $\mathtt{Jzero} \; i \; {\ell_0} \; {\ell_1}$  & $\mathtt{if}\; R[i]=0\;\mathtt{then}\;\mathtt{goto}\;\ell_0\;\mathtt{else}\;\mathtt{goto}\;\ell_1$
  & \\ 
  $\op$ & $R[0]\gets \op (R[0],\ldots, R[k-1])$  & \\
  $\mathtt{getN}\; i$ & $R[i]\gets N$  & \\ 
  $\mathtt{Input}\; i \; j$ & $R[i] \gets I[R[j]]$  & \\ 
  $\mathtt{Output} \; i$ & output $R[i]$ & \\ 
  \hline
\end{tabular}
\caption{Set of $R$-instructions for $\RAM{}$}
\label{table:instRAM2}
\end{table}

\noindent
The semantics given in Table~\ref{table:instRAM2} is very similar to
that of our original set of instructions.  Formally, the
\emph{computation} of an $R$-\emph{program} $I_0, \dots, I_{r-1}$ on a
given input is defined as the sequence of the \emph{configurations}
($R$-\emph{configurations}) $\mathcal{C}_0,\mathcal{C}_1,\dots$ where
each $\mathcal{C}_i\coloneqq(\lambda_i,D_i)$ consists of the index $\lambda_i$
of the current instruction and the description $D_i$ of the RAM memory
(contents of the $R[j]$) at instant~$i$; in particular,
$\mathcal{C}_0=(\lambda_0,D_0)$ is the initial configuration with
$\lambda_0=0$ and the RAM memory $D_0$ initialized to 0.

Noticing that we can store the value of $A$ into $R[0]$ and the value
of $B$ into $R[1]$, it is relatively clear that any $AB$-program can
be simulated with an $R$-program. The third definition of RAM data
structures and instructions that follows is closer to those of a
programming language.

\subsubsection*{Multi-memory RAM}
 A multi-memory RAM with $\mathtt{Op}$-operations uses a fixed number $t$ ($t\geq 1$) of one-dimensional \emph{arrays of integers}
$T_0[0,\ldots],\ldots,T_{t-1}[0,\ldots]$, a fixed number of \emph{integer
variables} $C_0,\ldots,C_{v-1}$, an \emph{input array} $I[0,\ldots]$ of integers
and the \emph{input size register}~$N$.  The \emph{program} of $M$ uses
$\mathtt{Op}$-\emph{expressions} or, for short, \emph{expressions}, which are
defined recursively as follows.

\begin{definition}[$\mathtt{Op}$-\emph{expressions}]
\begin{itemize}
\item Any fixed integer $i$, the size integer $N$, and any integer
  variable $C_j$ are $\Op$-expressions.
\item If $\alpha$ is an $\Op$-expression, then an array
  element $T_j[\alpha]$ or $I[\alpha]$ is an $\Op$-expression.
\item If $\alpha_1, \dots, \alpha_k$ are $\Op$-expressions and $\mathtt{op}$ is a $k$-ary operation in $\mathtt{Op}$, then $\mathtt{op}(\alpha_1, \dots, \alpha_k)$ is an
  $\Op$-expression.
\end{itemize}
\end{definition}
\noindent
\emph{Example:} $T_1[T_3[2]+N]\times I[C_1]$ is a $\{+,\times\}$-expression.

\medskip
The \emph{program} of a multi-memory RAM with $\Op$ operations is
a sequence $I_0, \dots, I_{r-1} $ of instructions, called
\emph{array-instructions}, of the following forms with intuitive meaning, where $\alpha$ and~$\beta$ are $\Op$-expressions and  $\ell_0, \ell_1$ are valid indexes of instructions.

\begin{table}[H]
  \centering
\begin{tabular}{|lr|}
  \hline
   \bf{Instruction} &  \bf{Meaning} \\
  \hline
  $T_j[\alpha] \gets \beta$  & \\
  $C_j \gets \alpha$ & \\
  $\mathtt{Jzero} \; \alpha \; \ell_0 \; \ell_1$  & $\mathtt{if}\; \alpha=0\;\mathtt{then}\;\mathtt{goto}\;\ell_0\;\mathtt{else}\;\mathtt{goto}\;\ell_1$ \\
  $\mathtt{Output} \; \alpha$
  & \\
  \hline
\end{tabular}
\caption{Set of array-instructions for $\RAM{}$}
\label{table:instRAM3}
\end{table}

The \emph{computation} of a program with array-instructions (called \emph{array-program}) $I_0, \dots, I_{r-1}$ on a given input is defined as the sequence of the \emph{configurations} (\emph{array-configurations}) $\mathcal{C}_0,\mathcal{C}_1,\dots$ where the configuration
$\mathcal{C}_i\coloneqq(\lambda_i,D_i)$ at instant~$i$ consists of the index $\lambda_i$ of the current instruction ($\lambda_0=0$, initially) and the description $D_i$ of the memory (the values of the integer variables and arrays, all initialized to 0).


\subsection{Equivalence of our three instruction sets}\label{sub:equivInst}

The following lemma states that our three instructions sets (our three classes of RAMs) are ``equivalent'', within the meaning of Definition~\ref{def:equivalence} of Subsection~\ref{subsec:emul}, when the addition is available.
(Note that, for pedagogical reasons, the notions of ``simulation'' and ``equivalence'' involved in the statement of Lemma~\ref{lemma:lockstepEquivInst} are first presented here in an informal/intuitive way. Section~\ref{sec:RAMcomplexityclasses} will give formal definitions and complete evidence.)
\begin{lemma}
\begin{enumerate}
\item Any $R$-RAM can be simulated by an $AB$-RAM.
\item Any $AB$-RAM can be simulated by a multi-memory RAM.
\item If addition belongs to the set $\mathtt{Op}$ of allowed
  operations then any multi-memory RAM can be simulated by an $R$-RAM.
\end{enumerate}

\noindent 
Therefore, by transitivity, the three classes of $AB$-RAMs, $R$-RAMs,
and multi-memory RAMs are equivalent when
$\mathtt{add}\in\mathtt{Op}$.

\label{lemma:lockstepEquivInst} 
\end{lemma}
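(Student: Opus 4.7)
The overall strategy is to give, for each direction, a \emph{local} translation sending each instruction of the source RAM to a fixed-length block of instructions of the target RAM, together with a global relabelling of jump targets so that control flow is preserved. Since every block has constant length, running times change only by a constant factor, and, as each block accesses only a constant number of additional scratch cells, space usage also changes only by a constant additive term. In all three cases I will describe an invariant tying the configurations of the two machines step by step, which is the informal notion of simulation used in the statement (to be formalized later in Section~\ref{sec:RAMcomplexityclasses}).

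\emph{Point (1): $R$-RAM to $AB$-RAM.} The $AB$-RAM already provides work registers $R[j]$, so I keep the same addressing scheme and reserve two scratch registers, say $R[\sigma_0]$ and $R[\sigma_1]$ (with $\sigma_0,\sigma_1$ beyond the area used by the source program; concretely, shift every index in the $R$-program by $2$ and use $R[0],R[1]$ as scratch). Each $R$-instruction then expands to a constant-length $AB$-block: for instance, $\mathtt{Store}\;i\;j$, meaning $R[R[i]]\gets R[j]$, becomes ``$\mathtt{CST}\;j$; $\mathtt{Load}$; $\mathtt{Buffer}$; $\mathtt{CST}\;i$; $\mathtt{Load}$; $\mathtt{Store}$'', and similarly for the other instructions. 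The operator $\op$ of arity $k$ is simulated by first copying $R[0],\ldots,R[k-1]$ into $A$ and $B$ via the scratch cells in the order expected by the $AB$-semantics, executing $\op$, and writing $A$ back into $R[0]$. Jump targets $\ell_0,\ell_1$ are rewritten to point at the first $AB$-instruction of the block simulating the target $R$-instruction; this is a purely syntactic reindexing since all blocks have known constant length.

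\emph{Point (2): $AB$-RAM to multi-memory RAM.} I take a multi-memory RAM with one array $T_0$ playing the role of the $R[\cdot]$ memory and two integer variables $C_A,C_B$ playing the roles of $A,B$. Each $AB$-instruction then translates to one or two array-instructions, using the expressions allowed by Table~\ref{table:instRAM3}: for example, $\mathtt{Load}$ becomes $C_A\gets T_0[C_A]$, $\mathtt{Store}$ becomes $T_0[C_A]\gets C_B$, $\mathtt{Input}$ becomes $C_A\gets I[C_A]$, $\mathtt{getN}$ becomes $C_A\gets N$, $\mathtt{Jzero}\;\ell_0\;\ell_1$ becomes $\mathtt{Jzero}\;C_A\;\ell_0\;\ell_1$, and an operator $\op$ is rendered directly as $C_A\gets \op(C_A,C_B)$. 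As before, jump labels are relabelled according to the positions of the translated blocks.

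\emph{Point (3): multi-memory RAM to $R$-RAM, assuming addition.} This is the main obstacle, because the multi-memory RAM has $t$ parallel arrays and $v$ integer variables and can evaluate arbitrarily nested expressions in a single array-instruction, while the $R$-RAM only has one flat memory $R[\cdot]$ with instructions that touch at most three registers at a time. I interleave the $t$ arrays into a single address space by the layout ``$T_j[k]$ is stored at address $t\cdot k + j + c$'', where $c$ is an offset reserving the first $v$ cells for the integer variables $C_0,\ldots,C_{v-1}$ and a constant block of scratch cells. The multiplication $t\cdot k$ is done with $t-1$ additions since $t$ is a fixed constant of the program, so addressing requires only a constant number of $\mathtt{add}$ instructions; this is exactly where the hypothesis $\mathtt{add}\in\Op$ enters. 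A nested $\Op$-expression occurring in the source program is of bounded size (it is written in the program) and is compiled to a fixed straight-line block of $R$-instructions by a standard bottom-up evaluation into scratch registers: each subexpression is evaluated into a dedicated scratch cell, then array lookups become $\mathtt{Load}$ instructions at the computed address, and $k$-ary operations $\op$ become the single $R$-instruction $\op$ after moving the $k$ operands into $R[0],\ldots,R[k-1]$ (which are restored from scratch between expressions). The instructions $T_j[\alpha]\gets\beta$, $C_j\gets\alpha$, $\mathtt{Jzero}\;\alpha\;\ell_0\;\ell_1$, $\mathtt{Output}\;\alpha$ all compile to such a block followed by one $R$-level $\mathtt{Store}$, $\mathtt{Move}$, $\mathtt{Jzero}$, or $\mathtt{Output}$ respectively. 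Every block has constant length depending only on the source program, so we obtain a step-by-step simulation with constant-factor overhead. Transitivity of these three local simulations then yields the three-way equivalence claimed.
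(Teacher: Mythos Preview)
Your proposal is correct and follows essentially the same approach as the paper: local, constant-length instruction-by-instruction translations with jump relabelling, the trivial embedding for item~2, and for item~3 the same interleaved memory layout $T_j[k]\mapsto R[t\cdot k + j + \text{offset}]$ with $t\cdot k$ computed by $t-1$ additions. The only structural difference is that the paper carries out item~3 in two explicit passes via an intermediate language of ``elementary array-instructions'' (first flattening nested $\Op$-expressions into straight-line code on fresh variables, then compiling these to $R$-instructions), whereas you describe the same bottom-up expression compilation in one pass; and in item~1 the paper does not introduce scratch registers or shift indices, since $A$ and $B$ already suffice.
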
 

\begin{proof}[Proof of item 1]
As representative examples, Tables~\ref{tab:emulOP},
\ref{tab:emulOutput}, \ref{tab:emulStore}, \ref{tab:emul_Input}
and~\ref{tab:emulJzero} give the simulation (by a sequence of
$AB$-instructions) of each of the $R$-instructions $\op$, for a binary
$\op$, $\mathtt{Output}\; i$, $\mathtt{Store} \; i \; j$,
$\mathtt{Input}\; i\; j$ and $\mathtt{Jzero}\; i \; \ell_0\;\ell_1$.

\begin{minipage}[b]{0.45\textwidth}
\begin{table}[H]
  \centering
  \begin{tabular}{lr}
    \textbf{Instruction} & \textbf{Effect} \\ \hline 
  $\mathtt{CST} \; 1$ & $A \gets 1$ \\ 
  $\mathtt{Load}$ & $A \gets R[1]$ \\
  $\mathtt{Buffer}$ & $B \gets R[1]$ \\
  $\mathtt{CST} \; 0$ & $A\gets 0$ \\ 
  $\mathtt{Load}$ & $A\gets R[0]$ \\
  $\op$ & $A \gets \op(R[0],R[1])$ \\
  $\mathtt{Buffer}$ & $B \gets \op{}( R[0], R[1])$ \\
  $\mathtt{CST} \; 0$ & $A\gets 0$ \\ 
  $\mathtt{Store}$ & $R[0]\gets \op{}(R[0],R[1])$ \\
\end{tabular}
\caption{Emulation 
of $\op$}
\label{tab:emulOP}
\end{table}
\begin{table}[H]
  \centering
  \begin{tabular}{lr}
    \textbf{Instruction} & \textbf{Effect} \\ \hline 
    $\mathtt{CST}\;i$ & $A\gets i$ \\
    $\mathtt{Load}$ & $A\gets R[i]$ \\
    $\mathtt{Output}$ & output $R[i]$ \\
  \end{tabular}
\caption{Emulation of $\mathtt{Output}\; i$}
\label{tab:emulOutput}
\end{table}  
\end{minipage}
\begin{minipage}[b]{0.45\textwidth}
\begin{table}[H]
  \centering
\begin{tabular}{lr}
    \textbf{Instruction} & \textbf{Effect} \\ \hline
  $\mathtt{CST} \; j$ & $A\gets j$ \\ 
  $\mathtt{Load}$ & $A\gets R[j]$\\ 
  $\mathtt{Buffer}$ & $B\gets R[j]$ \\
  $\mathtt{CST} \; i$ & $A\gets i$ \\
  $\mathtt{Load}$ & $A\gets R[i]$ \\ 
  $\mathtt{Store}$ & $R[R[i]]\gets R[j]$ \\
\end{tabular}
\caption{Emulation of $\mathtt{Store} \; i \; j$}
\label{tab:emulStore}
\end{table}
\begin{table}[H]
  \centering
  \begin{tabular}{lr}
    \textbf{Instruction} & \textbf{Effect} \\ \hline 
    $\mathtt{CST}\;j$ & $A\gets j$ \\
    $\mathtt{Load}$ & $A\gets R[j]$ \\
    $\mathtt{Input}$ & $A\gets I[R[j]]$ \\
    $\mathtt{Buffer}$ & $B\gets I[R[j]]$ \\
    $\mathtt{CST}\;i$ & $A\gets i$ \\
    $\mathtt{Store}$ & $R[i]\gets I[R[j]]$ \\
  \end{tabular}
\caption{Emulation of $\mathtt{Input}\; i\; j$}
\label{tab:emul_Input}
\end{table}  
\end{minipage}
\begin{table}[H]
  \centering
  \begin{tabular}{lr}
    \textbf{Instruction} & \textbf{Effect} \\ \hline 
    $\mathtt{CST}\;i$ & $A\gets i$ \\
    $\mathtt{Load}$ & $A\gets R[i]$ \\
    $\mathtt{Jzero}\;\ell_0\;\ell_1$ & 
    $\mathtt{if}\; R[i]=0\;\mathtt{then}\;\mathtt{goto}\;\ell_0\;\mathtt{else}\;\mathtt{goto}\;\ell_1$ \\
  \end{tabular}
\caption{Emulation of $\mathtt{Jzero}\; i \; \ell_0\;\ell_1$}
\label{tab:emulJzero}
\end{table} 
\end{proof}

\begin{proof}[Proof of item 2]
  This item is the easiest to prove: we can consider the special
  registers $A,B$ as integer variables and the sequence of work
  registers $R[0],R[1],\ldots$ as the array of integers $R[0\ldots]$
  and then any $AB$-program becomes a multi-memory program.
\end{proof}

\begin{proof}[Proof of item 3]
Similarly to a compiler that works in two ``passes'' with an
intermediate language, our simulation of array-instructions with
$R$-instructions is done in two steps with an intermediate instruction
set that we call ``elementary array-instructions'' depicted in
Table~\ref{tab:elementaryArrayInstructions} where $h,i,j,j_1,\ldots,
j_k$ are explicit integers, $\ell_0, \ell_1$ are valid indexes of
instructions and $\op$ is a $k$-ary operator from $\Op$.

\begin{table}[H]
  \centering
 \begin{tabular}{|lr|}
 \hline
 \textbf{Instruction} &  \textbf{Meaning} \\
 \hline
  $C_i\gets j$  & \\
  $C_i\gets N$  & \\
  $C_i\gets \op{}(C_{j_1}, \dots, C_{j_k})$  &\\
  $C_i\gets I[C_j]$  &\\
  $C_h\gets T_j[C_i]$  &\\
  $T_j[C_i]\gets C_h$  &\\
  $\mathtt{Jzero} \; C_i \; {\ell_0} \; \ell_1$  &$\mathtt{if}\; C_i=0\;\mathtt{then}\;\mathtt{goto}\;\ell_0\;\mathtt{else}\;\mathtt{goto}\;\ell_1$ \\
  $\mathtt{Output}$ $C_i$  &\\ 
  \hline
\end{tabular}

\caption{Elementary array-instructions}
\label{tab:elementaryArrayInstructions}
\end{table}

\paragraph{From array-instructions to elementary array-instructions.}
To achieve this first compilation step, the idea is to create a new integer variable $C_j$ for each sub-expression $\alpha'$ appearing in any of the $\mathtt{Op}$-expression $\alpha$ that occurs in the initial multi-memory RAM program.
So, any array-instruction will be simulated (replaced) by a sequence of elementary array-instructions.

As a representative example, the array-instruction $T_2[C_1*T_1[4]] \gets T_1[T_3[2]+N]$, which is of the form $T_j[\alpha]\gets\beta$, is simulated by the following sequence of elementary array-instructions:
\begin{itemize}
\item Storing $C_1*T_1[4]$ into $D_0$: 
$D_1\gets 4$ ; $D_2\gets T_1[D_1]$ ; $D_0\gets C_1*D_2$;
\item 
Storing $T_1[T_3[2]+N]$ into $D_3$: \\
$D_4\gets 2$ ; $D_5\gets T_3[D_4]$ ; $D_6\gets N$ ; $D_7\gets D_5 +D_6$ ; $D_3 \gets T_1[D_7]$;
\item Storing $T_1[T_3[2]+N]$ into  $T_2[C_1*T_1[4]]$ : 
$T_2[D_0]\gets D_3$.
\end{itemize}

Note that the resulting instructions have the required form: to see
it, rename each variable $D_j$ to $C_{j'}$ for some $j'$. This process
can be adapted for any array-instruction using $\Op$-expressions $\alpha$
and $\beta$ (see Table~\ref{table:instRAM3}) by adding a number of instructions and variables roughly
equal to the sum of the sizes of the expressions $\alpha$ and $\beta$.

\paragraph{From elementary array-instructions to $R$-instructions.}

Let us suppose that we want to translate a program with elementary array-instructions into an $R$-program. Let $C_0,\ldots,C_{v-1}$ be the list of variables of the array-program and $T_0,\ldots,T_{t-1}$ be its list of arrays, and finally let $k$ be an integer greater than the maximal arity of an $\op$ (which is at least 2 because we have $\mathtt{add}$).

The idea for building an $R$-configuration from an array-configuration
is to segment the memory in the following way: the $k$ first registers
$R[0], \dots, R[k-1]$ will be used for intermediate computations, the
$v$ next registers $R[k], \dots, R[k+v-1]$ will store the contents of
the variables, and finally the rest of the memory will be dedicated to
store the contents of the arrays.  More precisely, the register
$R[k+i]$ for $i<v$ will store the content of the variable $C_i$ and
$T_j[i]$ will be stored in the register $R[k+v+t*i+j]$.
Clearly, there is no interference between the roles of registers $R[y]$.

Since each integer variable $C_i$ is translated as $R[j]$ where $j$ is
the explicit value of $k+i$, each elementary array-instruction turns
trivially into an $R$-instruction, with the exception of the three
instructions $C_h \gets T_j[C_i]$, $T_j[C_i] \gets C_h$ and
$C_i\gets \op{}(C_{j_1}, \dots, C_{j_k})$.

The first two of these instructions can be translated as $R[k+h] \gets
R[k+v+t*R[k+i]+j]$ and $R[k+v+t*R[k+i]+j] \gets R[k+h]$, respectively,
which must be transformed. The two transformations are similar, so we
will detail the first one. By using the buffers $R[0],R[1]$ and the
addition operation (but not the multiplication!) we can transform the
assignment $R[k+h] \gets R[k+v+t*R[k+i]+j]$ into an equivalent
sequence of $R$-instructions as shown in the program of
Table~\ref{tab:emulArrayAccess} with $t+3$ instructions. (Note that
$h, i, j, k,v$ and $t$ are all explicit integers and therefore $t+3$,
$k+i$, $k+v+j$, $k+h$, etc., are also explicit integers whose value
can be computed before the RAM is launched.)

\begin{table}[H]
  \centering
  \begin{tabular}{clr}
    \textbf{\#}  & \textbf{Instruction} & \textbf{Effect} \\
  $\ell+0$  & $\mathtt{Move} \; 1 \; (k+i)$ & $R[1] \gets R[k+i]$        \\
  $\ell+1$  & $\mathtt{CST} \; 0 \; (k+v+j)$ & $R[0] \gets k+v+j$
  \\
  $\ell+2$  & $\mathtt{add}$ & $R[0]\gets R[0]+R[1]$, meaning $R[0] \gets k+v+j+R[k+i]$        \\
  $\ell+3$  & $\mathtt{add}$ & $R[0]\gets R[0]+R[1]$, meaning $R[0] \gets k+v+j+2*R[k+i]$        \\
    $\dots$  & $\dots$ & $\dots$         \\
  $\ell+t+1$  & $\mathtt{add}$ & $R[0]\gets R[0]+R[1]$, meaning $R[0] \gets k+v+j+t*R[k+i]$        \\
  $\ell+t+2$  & $\mathtt{Load} \; (k+h) \; 0$ & $R[k+h]\gets R[R[0]]$, or $R[k+h] \gets R[k+v+j+t*R[k+i]]$        \\
\end{tabular}
\caption{Emulation of $C_h  \gets  T_j[C_i]$}
\label{tab:emulArrayAccess}
\end{table}

For the $\op$ operation $C_i \gets \op(C_{j_1}, \dots, C_{j_k})$, the
translation is simpler: we first store each $C_{j_l}$ into
$R[l-1]$ with $\mathtt{Move} \; (l-1) \; (k+j_l)$, then we apply the 
$R$-instruction $\op$, which computes $R[0]\gets \op(R[0], \dots, R[k-1])$; 
finally, we store back the result into $C_i$ with
 $\mathtt{Move}\;(k+i) \; 0$.

This completes the proof of item 3. Lemma~\ref{lemma:lockstepEquivInst} is proved.
\end{proof}

\subsection{Richer instruction set}

At the moment we saw three instructions sets and showed that they are
equivalent. Since they are all equivalent, in the following sections,
our procedures will be written in the third instruction set, which is
the easiest to work with.

While this multi-memory model is the closest to a standard language
among our three instruction sets, it still lacks a lot of the
constructs available in most programming languages such as loops,
functions, etc. This section is dedicated to show that we can easily
compile a language with those features to our assembly.

Most of the techniques we use here to enrich the instruction set are
very well known and very well studied for compilers. While they are
interesting in themselves there are not the novelty of the paper,
which is why we will only give the intuition of what is happening and
point the reader to the existing literature.

\paragraph{From indexed programs to labeled programs.}

At this point in the paper, a program is still viewed as a sequence
$I_0, \dots, I_{r-1}$ of instructions where the index of each
instruction is important as a $\mathtt{Jzero}$ instruction will jump
to a given index. We can replace this scheme with a label scheme and a
program becomes a set of labeled instructions $I_a, I_b, \dots$ where
each instruction has one or several labels and each instruction $e$
which is not $\mathtt{Jzero}$ has a next label $next(e)$ (determining
what instruction should be executed after).  We also have a special
label $\mathtt{end}$ for the label $next(e)$ where $e$ is the final instruction
and a special label  $\mathtt{begin}$ to determine the first instruction.  Such a
labeled program can easily be translated into an indexed program.

\paragraph{Composing programs.}
Once we have labeled programs, it is easy to talk about the
composition of two programs $P_1$ and $P_2$: we relabel the labels of
$P_1$ and $P_2$ to make sure that they are unique in both, with the
$\mathtt{end}$ of $P_1$ being the $\mathtt{begin}$ instruction of
$P_2$; then we replace the output instructions of $P_1$ by write
instructions in an array $O$ and the input instructions of~$P_2$ are
replaced by read instructions of the said array $O$.

\paragraph{Equality conditions.}
Using labeled programs with array-instructions, we can allow the equality
condition $\alpha=\beta$ for all expressions $\alpha,\beta$,
i.e.\ allow a new instruction
$\mathtt{Jequal}\;\alpha\;\beta\;\ell_0\;\ell_1$ (extending the
$\mathtt{Jzero}$ instruction), which means
``$\mathtt{if}\;\alpha=\beta\;\mathtt{then}\;\mathtt{goto}\;\ell_0\;\mathtt{else}\;\mathtt{goto}\;\ell_1$''.
Actually, we claim that the instruction
$\mathtt{Jequal}\;\alpha\;\beta\;\ell_0\;\ell_1$ is equivalent to the
following sequence of array-instructions using the new array
$\mathtt{Eq}$ with suggestive name:
\begin{center}
$\mathtt{Eq}[\alpha]\gets 1$ ;
$\mathtt{Eq}[\beta]\gets 0$ ;
$\mathtt{Jzero}\;\mathtt{Eq}[\alpha]\;\ell_0\;\ell_1$
\end{center}
\begin{proof}
Clearly, if $\alpha=\beta$ then the assignment $\mathtt{Eq}[\beta]\gets 0$ gives 
$\mathtt{Eq}[\alpha]=\mathtt{Eq}[\beta]=0$; otherwise, we still have 
$\mathtt{Eq}[\alpha]=1\neq 0$. In other words, we have the equivalence $\alpha=\beta \iff \mathtt{Eq}[\alpha]=0$.
\end{proof}

\paragraph{Advanced Boolean manipulation.}
One of the features of a programming language is to be able to perform
complex conditions such as $(a=b) \lor (b\neq c \land b\neq d)$. At
the moment, the only condition allowed is the equality comparison but
it is easy to convert such a comparison $\alpha=\beta$ into an integer $c$
which is 0 when $\alpha \neq \beta$ and 1 when $\alpha=\beta$. 
This also allows manipulation of Boolean logic: the negation of
$\text{cond}_1$ and the conjunction $\text{cond}_1 \land
\text{cond}_2$ can be transformed into $c_1=0$ and $c_1+c_2=2$,
respectively, where $c_i$ is the integer associated with
$\text{cond}_i$. This allows us to manipulate any Boolean combinations
of conditions (the basic conditions only allow equalities for the
moment, inequalities $<$ will be introduced in Subsection~\ref{sec:sumdiffproduct}).

\paragraph{Basic constructions: if-then-else, loops, etc.}
Our procedures and programs will freely use the basic constructs and
syntax of a programming language. Once you have moved from an indexed program
to a labeled program, it is easy to translate an ``$\mathtt{if} \;
\alpha=\beta \; \mathtt{then} \; \text{body}_0 \; \mathtt{else} \;
\text{body}_1 $'' into a $\mathtt{Jequal} \; \alpha \; \beta \; \ell_0
\; \ell_1$ where $\ell_i$ points to the first instruction of
$\text{body}_i$ and $\text{body}_0$ finishes with a jump to the end of
$\text{body}_1$.

The reader can also easily check that a loop ``$\mathtt{while} \;
\alpha=\beta \; \mathtt{do}\; \text{body} $'' can be translated with
one conditional test at the beginning, then the translation of the
body, then one unconditional test to go back to the conditional test.
In the same manner, we can have the for-loop construct. We will use
the following format ``$\mathtt{for} \; var \; \mathtt{from}\; \alpha \; \mathtt{to} \; 
\beta \; \mathtt{do} \; \text{body} \; \mathtt{done}$'' meaning that
$var$ will range from $\alpha$ up to $\beta$. At this stage, this
construct is only defined for $\alpha\leq \beta$.

\paragraph{Dynamic arrays of bounded size.}
In our simulation of array-instructions with $R$-instructions we use a
constant number of arrays with no bound on the size. As in most actual
programming languages, it is often useful to be able to allocate
arrays dynamically (i.e.\ during the execution).

In our RAM model, we can do this with the addition operation using the
following memory allocator scheme. The idea to allocate dynamic arrays
uses one static array called \texttt{DATA} storing the data of all the
dynamic arrays and one variable \texttt{nbCellsUsed} storing the
number of cells used for dynamic arrays in \texttt{DATA}. To allocate
a dynamic array of size $k$ we increase \texttt{nbCellsUsed} by $k$
and return the value $p=$ \texttt{nbCellsUsed}$-k$. The value $p$
identifies the array and is a sort of ``pointer'' towards the memory
of the newly allocated array, very similarly to what happens in
languages such as C. Each read or write access to $i$-th cell of the
array $T$ is then replaced with an access to $\mathtt{DATA}[T+i]$.

Note that this creates two kinds of arrays, the static ones (available
in array instructions) and the dynamic ones but for the sake of
simplicity we will denote by $T[i]$ the $i$-th cell of $T$ no matter
whether $T$ is a value corresponding to a dynamically allocated array
or a static one.

\paragraph{Two-dimensional arrays.}
The allocation scheme presented above is very useful to allocate many
arrays (i.e.\ more than a constant number) as long as the total
allocated size is $O(N)$. For instance, using this we can
allocate a dynamic array $T$ of size $R$ and store within each $T[i]$
an array of size $C$. This creates a two-dimensional array of
dimension $R\times C$. Note that this is only allowed if the total
allocated memory is $O(N)$ and thus we need to have $R\times C=O(N)$.

Note that we can also generalize this method to any fixed number of
dimensions as long as the total number of cells is $O(N)$.



\paragraph{Functions.}
Most programming languages are equipped with functions. It is well
known that one can ``de-functionalize'' a program with a stack. Here,
we informally explain one such way to de-functionalize programs so
that we can notice that the addition is all we need on a RAM to make
this ``de-functionalization'' work.

Each active function call will have its memory stored in an array $S$
(the stack) and we will have a variable $F$ (the frame pointer)
pointing to the point in the array $S$ where is the memory for the
current function call. The memory used by one function call will span
cells $S[F], \dots, S[F+v]$, where $v$ is the number of registers
used. The first cell, $S[F]$, designates the point in the program
where the function call originated from (so we can return to it), the
cell $S[F+1]$ will store the value of $F$ for the caller (so we can
return to it with $F$ set to the correct value), $S[F+2]$ will store
the number of registers used by the current call. When calling a
function, we need to compute $F=F+S[F+2]$, then we set $S[F]$ to mark
the current function call as the caller and finally, we jump to the
function. When returning from a function call we determine the caller
from $S[F]$ and then we set $F=S[F+1]$.

Since the memory of the function called is not overwritten until we
make a new function call, we can simulate a ``return'' instruction by
reading the value of the variables of the function we have just
called.  Note that our scheme allows for function to have arrays and
we will often handle integers spanning several registers (that we
later call ``polynomial'' integers) and will freely use \Return
$\arr{x}$ for $\arr{x}$ an array when the size of the array is known
in advance.



\subsection{RAM models with a different space usage}\label{subsec:RAMwithLargerSpace}

We have seen three different RAM models which are all equally
expressive when the addition is present (equally expressive in the
sense that they define the same complexity classes). These three
models have three different instructions sets that all manipulate a
memory that is zero-initialized, which leads to our first question:

\textit{Does the expressiveness of the model change if we suppose that the memory is
  not zero-initialized?} 
  
 A second property of our model is that it depends on a parameter~$N$ and we suppose that all registers can only contain a value $O(N)$ which, in turn, limits the available memory to
$O(N)$. This leads to our second natural question:

\textit{Can we change our model to take more memory into account? 
And if so, how does that impact expressiveness?}

\paragraph{The initialization problem.}
As in the original RAM model~\cite{CookR73,AhoHU74}, we have assumed
that all registers are initialized to zero before the start of the
computation. However, as is the rule in the Python language, it is a
principle of good programming that any memory variable (integer,
array, etc.) be explicitly initialized as soon as it is declared/used.

This requirement can be satisfied for any program of the ``linear
time'' complexity classes that interest us (see
Subsection~\ref{subsec:linComplexityClasses}) provided that the
addition is available. Indeed, such a program uses only a linear
number of memory registers $R[0],R[1],\ldots,R[c \times N]$, for a
constant integer $c$. Therefore, the program can be transformed into
an equivalent program of the same complexity by adding at the start of
the program the explicit initialization. For this, we can use a program
very similar to the illustrative program given in Table~\ref{table:exampleProgram} but where, instead of setting and reading $R[N]$, we set and read $R[N\times c]$ where $N\times c$ is
$N$ added $c$ times to $0$.

If we suppose that when reading uninitialized memory it might return
any value, another way to deal with the initialization problem is the
``lazy array evaluation technique'' described below.

\paragraph{RAM model with more memory.}
\label{sec:more_memory}
All the RAM models presented above suffer from the same limit on space usage. 
Since the integers stored in the registers are $O(N)$ and all native arrays are one-dimensional, it means that the memory accessed by any program is always $O(N)$. 

One way to overcome this limit is to consider that we can have
$k$-dimensional arrays available, for some fixed integer $k$, and in
this case a $k$-dimensional array would allow $O(N^k)$ registers.
Several recent papers about queries in logic and databases, see
e.g.\ ~\cite{DurandSS14, Segoufin14, BerkholzKS17,BerkholzKS17b},
present algorithms inside such an extended RAM model. Significantly,
Berkholz, Keppeler and Schweikardt write in~\cite{BerkholzKS17}: ``A
further assumption that is unproblematic within the RAM-model, but
unrealistic for real-world computers, is that for every fixed
dimension $d\in \mathbb{N}_{>1}$ we have available an unbounded number
of $d$-ary arrays $\mathbf{A}$ such that for given
$(n_1,\dots,n_d)\in\mathbb{N}^d$ the entry $\mathbf{A}[n_1,\dots,n_d]$
at position $(n_1,\dots,n_d)$ can be accessed in constant time.''

We know by the space hierarchy theorem that a machine with a memory of
size $O(N^k)$ cannot be emulated (in general) by a machine with only
$O(N)$ memory for any $k>1$. This means that the models with or
without $k$-dimensional arrays are incomparable and, more generally,
for all $k$ the models differ. If we really want to emulate ``true''
$k$-dimensional arrays we need to set the parameter $N'$ to $N^k$ but
this does not provide much hindsight in the problem at hand.  An
interesting case is when the program has access to $k$-dimensional
arrays but actually only uses only $O(N)$ memory registers. Consider
first the problem of initializing $k$-dimensional arrays.

\paragraph{The ``lazy array evaluation technique''}(according to Moret and Shapiro's book~\cite{MoretS1991}, Subsection 4.1, Program 4.2): 
Let $P$ be a program using $k$-dimensional arrays \emph{assumed to be initialized to zero for free}.
We want to construct a program $P'$ simulating $P$ with $k$-dimensional arrays, for the same $k$, which are  \emph{not initialized}.
For each $k$-dimensional array $\arr{A}[1..N]\ldots [1..N]$ of $P$, we create a variable $\var{countA}$ counting the number of initialized cells $(i_1,\ldots, i_k)$ of $\arr{A}$,
a $k$-dimensional array $\arr{rankInitA}[1..N]\ldots [1..N]$ and its ``inverse'' one-dimensional array 
$\arr{rankInitA}^{-1}$, called \emph{lazy arrays}, such that $\arr{rankInitA}[i_1]\ldots [i_k]=y$ and, conversely, 
$\arr{rankInitA}^{-1}[y]=(i_1,\ldots, i_k)$ if  $\arr{A}[i_1]\ldots [i_k]$ 
is the $y$th initialized cell of $\arr{A}$. 
The important thing is to maintain the following equivalence, called equivalence~$(E)$, for each 
$(i_1,\ldots, i_k)\in [1,N]^k$:
\begin{center}
the cell $\arr{A}[i_1]\ldots [i_k]$ has been given a value, i.e.\ has been initialized, \emph{iff}
$\arr{rankInitA}^{-1}[\arr{rankInitA}[i_1]\ldots [i_k]]= (i_1,\ldots, i_k)$, 
for $\arr{rankInitA}[i_1]\ldots [i_k]\in [1,\var{countA}]$.
\end{center}
This gives the following procedures (freely inspired by Program 4.2
in~\cite{MoretS1991}), which we present, to simplify the notation,
with $k=2$. Also, we omit mentioning $\arr{A}$ in the
identifiers of the variable $\var{count}$, of the arrays $\arr{rankInit}$ and
$\arr{rankInit}^{-1}$, and in the procedures $\fun{initCount}$,
$\fun{initCell}$, $\fun{store}$ and $\fun{fetch}$ corresponding to array $\arr{A}$.

\begin{minipage}{\almosttextwidth}

  \begin{algorithm}[H]
    \caption{Lazy evaluation procedures for a 2-dimensional array $\arr{A}$}
    
\begin{algorithmic}[1]
\Procedure{initCount}{\mbox{}}
  \State $\var{count} \gets 0$ 
\EndProcedure
\end{algorithmic}
      
 \vspace{1em}
    
\begin{algorithmic}[1] 
  \Procedure{initCell}{$\var{i},\var{j}$}
  \LComment{answers the question: ``Has cell $\arr{A}[\var{i}][\var{j}] $ been initialized?''}
  \State \Return $(1\leq \arr{rankInit}[\var{i}][\var{j}]\leq \var{count}) \; \mathtt{and} \;
  (\arr{rankInit}^{-1}[\arr{rankInit}[\var{i}][\var{j}]]= (\var{i},\var{j}))$
\EndProcedure
\end{algorithmic}
      
 \vspace{1em}
      
\begin{algorithmic}[1]
\Procedure{store}{$\var{i},\var{j}, \var{x}$}
 \LComment{runs $\arr{A}[\var{i}][\var{j}] \gets \var{x}$ and validates that cell $(\var{i},\var{j})$ is now initialized}
   \State $\arr{A}[\var{i}][\var{j}] \gets \var{x}$
   \If {$\mathtt{not}~\fun{initCell}(\var{i},\var{j})$}
      \State $\var{count} \gets \var{count}+1$
      \State $\arr{rankInit}[\var{i}][\var{j}] \gets \var{count}$
      \State $\arr{rankInit}^{-1}[\var{count}] \gets (\var{i},\var{j})$ 
      \Comment{now, $\fun{initCell}(\var{i},\var{j})=\mathtt{true}$}
   \EndIf 
 \EndProcedure
 \end{algorithmic}
\vspace{1em}

\begin{algorithmic}[1]
\Procedure{fetch}{$\var{i},\var{j}$}
\LComment{returns $\arr{A}[\var{i}][\var{j}]$ if cell $(\var{i},\var{j})$ has been initialized, and $0$ otherwise}
\If{$\fun{initCell}(\var{i},\var{j})$}
  \State \Return $\arr{A}[\var{i}][\var{j}]$
\Else
    \State \Return $0$
\EndIf
\EndProcedure
\end{algorithmic}
 
  \end{algorithm}
\end{minipage}

\begin{proof} Lines 3-6 of the procedure $\fun{store}$, with the procedures $\fun{initCell}$ and $\fun{initCount}$, justify that the function $(i,j)\mapsto \arr{rankInit}[\var{i}][\var{j}]$ is always a bijection from the set of ``initialized'' pairs  $(i,j)\in[1,N]^2$, i.e.\ for which $\fun{initCell}(\var{i},\var{j})$ is true, to the integer interval $[1,\var{count}]$, and 
$y \mapsto \arr{rankInit}^{-1}[y]$, for $y\in[1,\var{count}]$, is its inverse bijection. 
The fact that the set of images $\arr{rankInit}^{-1}([1,\var{count}])$ is equal to the set of ``initialized'' pairs $(i,j)\in[1,N]^2$ justifies that the equivalence~$(E)$ seen above is true for all $(i,j)\in[1,N]^2$.

\medskip
As a consequence, in the general case, a program $P$ using $k$-dimensional arrays 
assumed to be initialized to zero for free, is simulated by a program $P'$ constructed from $P$ as follows:
\begin{itemize}
\item put the procedure call $\fun{initCount}(\mbox{})$ (i.e.\ the assignment $\var{count}\gets 0$), for each $k$-dimensional array $\arr{A}$ of $P$, at the beginning of the program $P'$;
\item replace each \emph{assignment} of the form $\arr{A}[t_1]\ldots[t_k] \gets t$ (resp.\ each \emph{argument} of the form $\arr{A}[t_1]\ldots [t_k]$) in $P$ by the call instruction 
$\fun{store}(t_1,\ldots,t_k,t)$ 
(resp.\ by the term \linebreak
$\fun{fetch}(t_1,\ldots,t_k)$) corresponding to $\arr{A}$.
\end{itemize}

Now, it is easy to verify that if the RAM program $P$ using $k$-dimensional arrays $N\times \cdots \times N$, assumed to be initialized to zero for free, runs in time $O(T)$, then its simulating program $P'$ using only \emph{uninitialized} $k$-dimensional (and one-dimensional) arrays $N\times \cdots \times N$, including the additional \emph{lazy arrays}, runs also in time $O(T)$.
In conclusion, the lazy array evaluation technique (to simulate multidimensional arrays initialized to zero for free) does not lead to a real complexity overhead: time and space are only multiplied by a (small) constant.
\end{proof}


\paragraph{Multidimensional arrays are equivalent to 2-dimensional arrays.}
We now give a (very) partial answer to our second question: how does the expressiveness of the RAM model depend on the dimensions of the allowed arrays?
The following theorem states that the RAM model with arrays of arbitrary dimensions is equivalent to the RAM model using only 2-dimensional arrays, and even, more strongly, using arrays of dimensions 
$N \times N^\epsilon$, for any $\epsilon>0$.

\begin{theorem}
  \label{thm:mem}
For any integers $k>1$ and $d\geq 1$, a RAM model equipped with the basic arithmetic operations and $k$-dimensional arrays but using only $O(N)$ registers and time $O(t(N))$, can be simulated in time $O(t(N))$ and using $O(N)$ registers by a RAM model equipped with the same set of operations but using only $1$-dimensional arrays and one 2-dimensional array 
$N \times  \lf N^{1/d} \rf$. 
\end{theorem}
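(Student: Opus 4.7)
The plan is to represent all of the source RAM's $k$-dimensional arrays by a single trie of depth $kd$ whose nodes are stored as rows of the one 2-dimensional array $M$ of dimensions $N\times \lf N^{1/d}\rf$ available to the simulator. The crucial feature is that, although $M$ has $\Theta(N^{1+1/d})$ slots, only $O(N)$ of them are ever written to; all other slots remain at their free zero-initialization, which is semantically interpreted as ``absence of a child'' or ``never-assigned source cell''.

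In a linear-time preprocessing phase, I would first set $b\coloneqq \lf N^{1/d}\rf$ and build a division/modulo table allowing constant-time extraction of each base-$b$ digit of any integer of size $O(N)$, using the results of Section~\ref{section:division} and the integer-root results mentioned in the introduction. For each of the constant number $c$ of source $k$-dimensional arrays $A_1,\dots,A_c$, I allocate a distinct trie-root identifier stored in a small 1-dimensional table $\arr{root}[1..c]$, and I maintain a counter $\var{nextID}$ that hands out fresh trie-node identifiers on demand.

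To simulate an access $A_j[i_1,\dots,i_k]$, I expand each index $i_\ell$ into its $d$ base-$b$ digits via the precomputed table, yielding a sequence $(\delta_1,\dots,\delta_{kd})\in[0,b)^{kd}$. Starting from $u\gets \arr{root}[j]$, I iterate $u\gets M[u][\delta_s]$ for $s=1,\dots,kd-1$, and the final cell $M[u][\delta_{kd}]$ stores the value of the source cell. On a read, the moment some $M[u][\delta_s]$ equals $0$ during traversal, the simulator returns $0$, which is correct since the source memory is zero-initialized; on a write, each such $0$ triggers allocation of a fresh identifier through $\var{nextID}$, which is stored back into $M[u][\delta_s]$ and becomes the new~$u$. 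Each source access thus costs $O(kd)=O(1)$ basic instructions and creates at most $O(1)$ new trie nodes, so the whole simulation runs in time $O(t(N))$ and writes to only $O(N)$ cells of~$M$; all other auxiliary structures (the root table, the counter, the division table, and the registers simulating the source's work registers) are 1-dimensional and of total size $O(N)$.

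The main technical obstacle is the preprocessing: computing $\lf N^{1/d}\rf$ and a complete base-$b$ division table in linear time using only addition. This is precisely the kind of statement established in the earlier division and integer-root sections of the paper, which I would invoke directly. Beyond that, the only care needed is to check that the trie-node identifiers stay bounded by $O(N)$ and therefore fit in a single register, which follows immediately from the bound on the number of touched source cells; the rest is clean bookkeeping on the trie traversal.
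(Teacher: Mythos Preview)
Your approach is essentially the same trie construction as the paper's: store a tree of branching factor $b=\lf N^{1/d}\rf$ row-by-row in the single 2-dimensional array, convert each $k$-tuple of indices to a path of base-$b$ digits, and allocate nodes lazily. Two small technical points you gloss over are handled explicitly in the paper and would need to be fixed in your write-up.

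First, with $b=\lf N^{1/d}\rf$ one only has $b^d\le N$, not $b^d\ge N$, so an index $i_\ell\in[0,N)$ (let alone $[0,cN)$) need not fit in $d$ base-$b$ digits; the paper observes $\lf N^{1/d}\rf^{2d}>N$ and therefore takes depth $\le 2dk$ rather than $kd$. Second, you argue that the number of trie nodes is $O(N)$, which is true, but the theorem fixes the first dimension of $M$ at exactly $N$; getting from a $cN\times b$ array down to an $N\times b$ array is the reason the paper first lands on $N\times\lf N^{2/d}\rf$ (via the encoding $\arr{E}[x\divop c][cy+x\modop c]=\arr{node}[x][y]$) and then substitutes $2d$ for $d$ in the final statement. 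Both are constant-factor repairs that do not change your overall scheme, but as written your digit count and your row count are each off by the same kind of factor the paper takes care to track.
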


\begin{proof}
For the sake of simplicity, we will suppose that we want to emulate
one $k$-dimensional array $N\times\dots\times N$.  Indeed, if we have
multiple $k$-dimensional arrays $\arr{T}_0, \dots, \arr{T}_{\ell-1}$,
each of dimensions $cN\times\dots\times cN$, for a fixed integer $c$,
we can replace each $k$-dimensional key $v_1,\dots,v_k$ ($v_j<cN$) of
$\arr{T}_i$ with one $(2k+1)$-dimensional key
$i,u_1,u_2,\dots,u_{2k-1},u_{2k}$, with $u_{2j-1}\coloneqq v_j \divop
c$ and $u_{2j}\coloneqq v_j \modop c$: the arrays $\arr{T}_0, \dots,
\arr{T}_{\ell-1}$ are emulated by one $(2k+1)$-dimensional
array~$\arr{T}$ of dimensions $N\times\dots\times N$ such that
$\arr{T}[i][u_1][u_2]\dots[u_{2k-1}][u_{2k}]=\arr{T}_i[v_1]\dots[v_k]$,
assuming \linebreak $N\geq \mathtt{max}(c,\ell)$; thus, the dimension
$k$ is increased to $2k+1$. Note that, in the $2k+1$, the factor $2$
comes from reducing $cN$ to $N$ while the $1$ comes from the reduction
of $\ell$ arrays to $1$.
  
Now, the general idea of the emulation of a $k$-dimensional
$N\times\dots\times N$ array $\arr{T}$ with a 2-dimensional array
$cN\times \lf N^{1/d} \rf$ (for some fixed integer $c$) called
$\arr{node}$ and a one-dimensional array called~$\arr{A}$ consists in
representing all the data in all the $k$-dimensional keys as paths in
a tree where the leaves store the values of the original array
$\arr{T}$.  Each node in the tree will have an arity bounded by $\lf
N^{1/d} \rf$ and all leaves will be at depth $\leq 2dk$, which is
possible by the inequality\footnote{Indeed, for $r=\lf N^{1/d} \rf$
and $N\geq 2^d$, we have $N<(r+1)^d<r^{2d}$ with $r\geq 2$, and
therefore $\lf N^{1/d} \rf^{2d} > N$.}  $\lf N^{1/d} \rf^{2dk} \geq
N^k$. Since we only use $O(N)$ registers in~$\arr{T}$, only $O(N)$
leaves and thus $O(N)$ nodes will be present in that tree.

Now, to find the leaf associated with a key $(v_1, \dots, v_k)\in [0,N[^k$, we transform the sequence 
$(v_1,\dots, v_k)$, seen as the sequence of digits of an integer in base $N$, into the sequence 
$(u_1, \dots, u_{k'})\in [0,\var{B}[^{k'}$ representing the same integer in base 
$\var{B}\coloneqq  \lf N^{1/d} \rf$. 
  (Note that $k' < 2dk$.)
Now, we can access the value stored in the leaf corresponding to $(u_1, \dots, u_{k'})$ 
by starting from the root $n_0 = 0$ and repetitively using $n_{i} = \arr{node}[n_{i-1}][u_i]$, 
for $1\leq i \leq k'$. Thus, the time of this access is $O(k')=O(kd)$: it is a constant time.
When the value/node returned by $\arr{node}[n_i][u_i]$ is undefined, we create a new node in the tree and set $\arr{node}[n_i][u_i]$ to correspond to this node.
Finally, the value ``stored'' at the leaf $n_{k'}$ is $\arr{A}[n_{k'}]=\arr{T}[v_1]\dots[v_k]$.

Note that the values (nodes) $n_i$ are less than $cN$, for some fixed integer $c$, and therefore the arrays used by the simulation are $\arr{node}[0..cN-1][0..\var{B}-1]$ and $\arr{A}[0..cN-1]$.
Then, replace (encode) the $cN\times \var{B}$ array 
$\arr{node}$ by the $N\times \lf N^{2/d} \rf$ array~$\arr{E}$ defined by 
$\arr{E}[x \divop c][c\times y +x \modop c]\coloneqq \arr{node}[x][y]$, for $x<cN$ and $y< \var{B}$: 
this is possible because we have $\lf N^{1/d} \rf \geq c$, for $N\geq c^d$, which implies
$c \times \var{B} = c \times \lf N^{1/d} \rf \leq \lf N^{1/d} \rf^2 \leq \lf N^{2/d} \rf$. The theorem is obtained by replacing $d$ by $2d$: this means that we set $\var{B}\coloneqq  \lf N^{1/(2d)} \rf$ instead of 
$\var{B}\coloneqq  \lf N^{1/d} \rf$.
\end{proof}

\begin{remark}[Essential point]
Note that it is assumed in the proof of Theorem~\ref{thm:mem} that the RAM has access to the basic arithmetic operations (essentially, computation of $\lf N^{1/d} \rf$, division by a constant $c$ and base change) but since the RAM model with $k$-dimensional arrays is more powerful than the model we study in the rest of this paper (which allows to execute those operations in constant time with a linear-time preprocessing, as we will establish), we know that \emph{a fortiori} those operations (for example, the conversion from base~$N$ to base $\var{B}$, used in the above simulation) can be computed in constant time with a linear-time preprocessing in the more liberal model.
 \end{remark}

From Theorem~\ref{thm:mem}, we can deduce that two models using $O(N)$
registers with $k$-dimensional or $k'$-dimensional arrays are
equivalent (for $k>k'>1$) but there remains the question of whether
the RAM model with $k$-dimensional arrays but using only $O(N)$
registers is equivalent to the RAM model with only $1$-dimensional
arrays.

\medskip
To our knowledge, even the following problem (asked by Luc Segoufin in
a personal communication) is open.

\begin{openpb}\label{conj:problemeDeLuc}
Given a set $S$ of $N$ pairs of integers $(a,b)$ with $a,b<N$, can we
preprocess~$S$ in $O(N)$ time and memory, using only $1$-dimensional
arrays, so that when given $(c,d)\in [0,N[^2$ we can answer in
constant time whether $(c,d)\in S$?
\end{openpb}

Note that this problem is easily solved with a Boolean $2$-dimensional array $N\times N$. 
Conversely, if we can positively solve the open problem~\ref{conj:problemeDeLuc},
this does not necessarily mean that we can have 
$k$-dimensional arrays: arrays are dynamic (with interleaving of reads and writes) and
store integer values, not just Boolean information.

\section{Complexity classes induced by the RAM model}\label{sec:RAMcomplexityclasses}

Given a RAM machine $\RAM{}$, we can define the class $TIME(f(N))$ as
the class of problems for which there exists a program running on
$\RAM{}$ and solving the problem in time $O(f(N))$ (where $N$ is the
parameter of the RAM, that is both the value or the size of the input and the number such that the values stored within registers must be less than $cN$, for a fixed $c$).

This definition of $TIME$ depends on the specific RAM we start
with. For instance, if $\Op$ includes the division, it is pretty clear
that the problem of dividing two integers of value $O(N)$ can be done
in time $O(1)$ but if $\Op$ only includes the addition, it is not
clear that division is in $TIME(1)$.

In Subsection~\ref{subsec:emul}, we define a notion of ``faithful simulation'' of a RAM by another RAM which preserves their ``dynamic'' complexity in a very precise sense, see Lemma~\ref{lemma: faithful simul}. 
This allows to define a very strict notion of ``equivalence'' between classes of RAMs.
Building on that, we will introduce in Subsection
~\ref{subsec:linComplexityClasses}, Definition~\ref{def:equivOp,equivRAM}, a second, weaker, criterion for equivalence that states that two RAM models are equivalent when they compute the same operations in constant time \emph{after a linear-time initialization}.

We believe that this second equivalence criterion is the ``right'' one
to study complexity classes that interest us (i.e.\ where a $O(N)$
computation is allowed). The criterion is strong enough to show that
two equivalent RAM models induce the same class $TIME(N)$ (and
similarly for many other complexity classes) but will also allow us to
show that a model simply equipped with addition is equivalent to a
model with subtraction, multiplication, division, etc.

\subsection{Emulation, faithful simulation and equivalence of classes of RAMs}\label{subsec:emul}

The notions that we used in the previous section to show that our
three instruction sets are equivalent are intuitive notions but, for
the sake of rigor, we introduce below the technical notion of
``emulation'' as well as the ``faithful simulation'' and ``equivalence''
relations and their properties (Definition~\ref{def:emul},
Lemma~\ref{lemma: faithful simul} and
Definition~\ref{def:equivalence}). First, let us introduce some useful
conventions and notation.

\begin{convention}[registers of a RAM]
  We consider that the \emph{registers} of an $AB$-RAM (resp.\ a
  multi-memory RAM) are $A,B$ and the $R[j]$ (resp.\ the integer
  variables $C_i$ and the array elements $T_j[i]$).
\end{convention}

\begin{notation}[value in a configuration]
Let $\mathcal{C}$ be any configuration ($AB$-configuration,
$R$-configuration, or array-configuration). The value associated with
a register or an $\mathtt{Op}$-expression $\rho$ in $\mathcal{C}$ is
denoted $\mathtt{value}(\rho,\mathcal{C})$.
\end{notation}

\begin{convention}[Output instructions]
The instructions $\mathtt{Output}$ (meaning output $A$),
$\mathtt{Output}\; i$ (meaning output $R[i]$) and $\mathtt{Output}\;
\alpha$ (for an $\mathtt{Op}$-expression $\alpha$) are called
\emph{output instructions}. Each one means ``output the value
$\mathtt{value}(\rho,\mathcal{C})$'' for $\mathcal{C}$ the current
configuration and $\rho$ the argument of output (i.e.\ $A$, $R[i]$ or
$\alpha$ depending on the type of RAM).

\end{convention}

\begin{notation}[transitions between configurations] We write 
$\mathcal{C}\vdash^{M}_{\mathfrak{I},O} \mathcal{C}'$
  (resp.\ $\mathcal{C}\vdash^{M,j}_{\mathfrak{I},O} \mathcal{C}'$) to
  signify that for an input $\mathfrak{I}$, a configuration
  $\mathcal{C}$ is transformed into a configuration $\mathcal{C'}$ by
  one computation step (resp.\ $j$ computation steps) of the RAM $M$,
  while running with the input $\mathfrak{I}$ and writing the output
  $O$.
\end{notation}

We are now ready to give the formal definition of a faithful
simulation:

\begin{definition}[emulation, faithful simulation]\label{def:emul}
Let $M_1$ and $M_2$ be two RAMs (among $AB$-RAMs, $R$-RAMs, or
multi-memory RAMs).  An \emph{emulation} $E$ of $M_1$ by $M_2$ is a
triplet of maps $(\mathcal{I} \mapsto \mathcal{I}^E,\; \rho \mapsto
\rho^E,\; \mathcal{C} \mapsto \mathcal{C}^E)$ where 
\begin{itemize}
\item the map $\mathcal{I} \mapsto \mathcal{I}^E$ associates each
  instruction $\mathcal{I}$ of $M_1$ a sequence of at most $k$
  instructions $\mathcal{I}^E$ of $M_2$, for some constant $k$,
\item the map $\rho \mapsto \rho^E$ associates to each register $\rho$
  of $M_1$ a register $\rho^E$ of $M_2$,
\item the map $\mathcal{C} \mapsto \mathcal{C}^E$ associates to each
  configuration $\mathcal{C}$ of $M_1$ a configuration $\mathcal{C}^E$
  of $M_2$.
\end{itemize}
satisfying the following properties:
\begin{enumerate}
\item for each configuration $\mathcal{C}$ and each register $\rho$ of
  $M_1$, we have $\mathtt{value}(\rho,
  \mathcal{C})=\mathtt{value}(\rho^E, \mathcal{C}^E)$;
\item if $\mathcal{C}'$ is the configuration obtained from any
  configuration $\mathcal{C}$ of $M_1$ by an instruction $\mathcal{I}$
  (of~$M_1$) for an input $\mathfrak{I}$, then $\mathcal{C}'^E$ is
  obtained from the configuration $\mathcal{C}^E$ by the sequence of
  instructions $\mathcal{I}^E$ (of $M_2$) for the same input
  $\mathfrak{I}$;
\item moreover, if $\mathcal{I}$ is an output instruction, meaning
  ``output $\rho$'', which executes on a configuration $\mathcal{C}$
  of $M_1$, then the sequence of instructions $\mathcal{I}^E$ outputs
  $\rho^E$ from the configuration $\mathcal{C}^E$, which gives the
  same outputted value $\mathtt{value}(\rho^E,
  \mathcal{C}^E)=\mathtt{value}(\rho, \mathcal{C})$; conversely, if
  $\mathcal{I}$ is not an output instruction, then the sequence of
  instructions $\mathcal{I}^E$ outputs no value;
\item if $\mathcal{C}$ is the initial (resp.\ a final) configuration of
  $M_1$, then $\mathcal{C}^E$ is the initial (resp.\ a final)
  configuration of $M_2$.
\end{enumerate}
When properties (1-4) are satisfied we say that $M_2$ \emph{faithfully
simulates} $M_1$ by the emulation~$E$.

\end{definition}

\begin{remark} 
Our notion of faithful simulation is a very precise variant, for RAMs, of the different notions of ``real-time simulation'' (or real-time reducibility) in the literature for different machine models:  see, e.g.,~\cite{Schonhage80}, or~\cite{DexterDG97} who calls it a ``lock-step simulation''.
\end{remark}

The following result which derives from Definition~\ref{def:emul}
establishes that an emulation of a RAM~$M_1$ by another RAM $M_2$
transforms a computation of $M_1$ into a computation of $M_2$ which
computes exactly the \emph{same thing} (the same outputs for the same
inputs) and in the \emph{same way} (the number of steps between the
start and the first output, or between two outputs is the same, up to
the constant factor $k$).

\begin{lemma}[faithful simulation of a RAM]\label{lemma: faithful simul}
Let $M_1,M_2$ be two RAMs such that $M_2$ faithfully simulates $M_1$
by an emulation~$E$.
 
\begin{enumerate}
\item Then, for any input~$\mathfrak{I}$, if the computation of $M_1$
  on input~$\mathfrak{I}$ stops in time $\tau$ and outputs~$O$
  (written $\mathcal{C}_0 \vdash^{M_1,\tau}_{\mathfrak{I},O}
  \mathcal{C}_{\tau}$, where $\mathcal{C}_0$
  (resp.\ $\mathcal{C}_{\tau}$) is the initial (resp. a final)
  configuration of $M_1$), then on the same input~$\mathfrak{I}$ the
  RAM $M_2$ stops in time~$\tau'\leq k*\tau$ (where $k$ is the
  constant integer mentioned in Definition~\ref{def:emul}), which is
  $O(\tau)$, and produces the same output~$O$, which is written
  $\mathcal{C}^E_0 \vdash^{M_2,\tau'}_{\mathfrak{I},O}
  \mathcal{C}^E_{\tau}$.

\item Moreover, if the computation of $M_1$ can be divided into intervals of
  durations $\tau_1,\tau_2,\ldots,\tau_\ell$, so that within the $j$th
  computation interval, $M_1$ reads $\mathfrak{I}_j$ and outputs
  $O_j$, which is written \linebreak $\mathcal{C}_0
  \vdash^{M_1,\tau_1}_{\mathfrak{I}_1,O_1} \mathcal{C}_1
  \vdash^{M_1,\tau_2}_{\mathfrak{I}_2,O_2} \dots
  \vdash^{M_1,\tau_\ell}_{\mathfrak{I}_\ell,O_\ell}
  \mathcal{C}_{\ell}$, where $\mathcal{C}_0$
  (resp.\ $\mathcal{C}_{\ell}$) is the initial (resp. a final)
  configuration of $M_1$, then we get $\mathcal{C}^E_0
  \vdash^{M_2,\tau'_1}_{\mathfrak{I}_1,O_1} \mathcal{C}^E_1
  \vdash^{M_2,\tau'_2}_{\mathfrak{I}_2,O_2} \dots
  \vdash^{M_2,\tau'_\ell}_{\mathfrak{I}_\ell,O_\ell}
  \mathcal{C}^E_{\ell}$, with $\tau'_j\leq k * \tau_j$.

\end{enumerate}
 
\end{lemma}

\begin{proof}[Proof of item 1]
Let $\mathcal{C}_0,\mathcal{C}_1,\dots,\mathcal{C}_{\tau}$ be the computation of $M_1$ on input~$\mathfrak{I}$. That means that we have 
$\mathcal{C}_0 \vdash^{M_1}_{\mathfrak{I},O_1} \mathcal{C}_1 \vdash^{M_1}_{\mathfrak{I},O_2} 
\dots
\vdash^{M_1}_{\mathfrak{I},O_\tau} \mathcal{C}_{\tau}$, 
where $\mathcal{C}_0$ (resp.\ $\mathcal{C}_{\tau}$) is the initial (resp. a final) configuration of $M_1$
and $O_1,O_2,\dots, O_\tau$ are the successive outputs produced (notice that a ``non output'' instruction produces an empty output).
Then, by definition of the emulation~$E$ (Definition~\ref{def:emul}), we also have 
$\mathcal{C}^E_0 
\vdash^{M_2,i_1}_{\mathfrak{I},O_1} 
\mathcal{C}^E_1 
\vdash^{M_2,i_2}_{\mathfrak{I},O_2} \dots
\vdash^{M_2,i_\tau}_{\mathfrak{I},O_\tau}  
\mathcal{C}^E_{\tau}$, 
for some positive integers $i_1,i_2,\ldots,i_\tau \leq k$, where $\mathcal{C}^E_0$ 
(resp.\ $\mathcal{C}^E_\tau$) is the initial (resp. a final) configuration of $M_2$. 
Therefore, we have 
$\mathcal{C}^E_0 \vdash^{M_2,\tau'}_{\mathfrak{I},O} \mathcal{C}^E_{\tau}$, with the output
$O=O_1,O_2,\dots, O_\tau$ of $M_1$, and the computation time \linebreak
$\tau' = i_1+i_2+\cdots +i_\tau \leq k*\tau=O(\tau)$. 
\end{proof}

\begin{proof}[Proof of item 2]
From $\mathcal{C}_{j-1} \vdash^{M_1,\tau_j}_{\mathfrak{I}_j,O_j}
\mathcal{C}_j$, for $j=1,\ldots,\ell$, we deduce (by a proof similar
to that~of~item~1) that, for each $j$, we have $\mathcal{C}^E_{j-1}
\vdash^{M_2,\tau'_j}_{\mathfrak{I}_j,O_j} \mathcal{C}^E_j$, for some
$\tau'_j\leq k * \tau_j$. This is the expected result.
\end{proof}

\begin{remark}[constant time after linear-time preprocessing]
Lemma~\ref{lemma: faithful simul} has many applications.  For example,
let $M_1$ be a RAM whose computation can be divided into two phases:

\begin{enumerate}
\item $M_1$ reads an input $\mathfrak{I}_1$ and computes an output
  $O_1$ in time $T_1$;
\item after reading a second input $\mathfrak{I}_2$, the RAM $M_1$
  computes in time $T_2$ a second output $O_2$ (depending on~$O_1$ and
  $\mathfrak{I}_2$).
\end{enumerate}

\noindent
Let $M_2$ be a RAM which faithfully simulates $M_1$. Then the
computation of $M_2$ is exactly similar to that of $M_1$. This means
that the computation of $M_2$ divides into the same two phases~(1,2)
with the same outputs $O_1,O_2$ for the same inputs
$\mathfrak{I}_1,\mathfrak{I}_2$. There is only one difference: the
durations of phases 1 and 2 of $M_2$ are respectively $O(T_1)$ and
$O(T_2)$ (instead of $T_1$ and $T_2$).

In this paper, we will be particularly interested in the complexity
class called ``\emph{constant time after linear-time preprocessing}'',
defined in Subsection~\ref{subsec:linComplexityClasses}. This means
that the time bound $T_1$ of the first phase of the RAM computation is
\emph{linear}, i.e.\ is $O(N)$, where $N$ is the size of the first
input~$\mathfrak{I}_1$ (the first phase is called ``linear-time
preprocessing''), and the time bound $T_2$ of the second phase, which
reads a second input $\mathfrak{I}_2$ of constant size, is
\emph{constant}.

\end{remark}

Let us define what it means that two classes of RAMs are equivalent:

\begin{definition}[faithful simulation relation and equivalence of classes of RAMs]\label{def:equivalence}
Let $\mathcal{R}_1$ and $\mathcal{R}_2$ be two classes of RAMs.  We
say that $\mathcal{R}_2$ \emph{faithfully simulates} $\mathcal{R}_1$
if for each RAM $M_1\in \mathcal{R}_1$ there are a RAM
$M_2\in\mathcal{R}_2$ and an emulation of $M_1$ by $M_2$. Note that
the faithful simulation relation is reflexive and transitive and thus
when $\mathcal{R}_1$ also \emph{faithfully simulates} $\mathcal{R}_2$,
then we say that $\mathcal{R}_1$ and $\mathcal{R}_2$ are
\emph{equivalent}.
\end{definition}

\begin{remark}
  The class of $AB$-RAMs, or $R$-RAMs, or multi-memory RAMs are all
  equivalent.
\end{remark}
\subsection{Complexity classes in our RAM model}\label{subsec:linComplexityClasses}



The \emph{equivalence} we have defined, using the notion the
\emph{faithful simulation}, is quite natural but it is very
strong. Using this notion we can show that two equivalent RAMs define
the same complexity classes but the converse is not necessarily
true. For instance, as we will show in
Proposition~\ref{prop:multWithAdd}, the RAM equipped with addition
\emph{cannot} faithfully simulate the RAM equipped with addition and
the multiplication, but we will see in the next sections that those
diverse models define the same complexity classes of ``minimal time'':
linear time, etc.

In this subsection, we will introduce a somewhat weaker notion of
equivalence that makes two classes of RAMs ``equivalent'' if one class can simulate the
other (and vice versa) \emph{after a linear time initialization}. While this notion might seem
less natural, we can still show that the interesting complexity
classes are the same for ``equivalent'' RAMs and we can prove our
(surprising!) results making the RAM with addition equivalent to the
RAM equipped with all the usual arithmetic primitives.

First, the following result shows that the notion of faithful simulation turns out to be too restrictive:

\begin{proposition}\label{prop:multWithAdd}
$\RAM[\{+\}]$ \emph{cannot} faithfully simulate $\RAM[\{+,\times\}]$.
\end{proposition}

\begin{proof}
  To prove that $\RAM[\{+\}]$ cannot faithfully simulate
  $\RAM[\{+,\times\}]$, we only need to show that given two inputs
  $I[0]$ and $I[1]$, strictly less than $N$, a program in
  $\RAM[\{+\}]$ cannot output $I[0]\times I[1]$ in constant time.

  For that, we will consider such a program $P$ and suppose that the RAM
  starts initialized with zeros. We will consider $M_k$, the largest
  value strictly below $N$ that the RAM has access to after the $k$-th
  step (i.e.\ any constant in the program, any value in the input or in
  any register).

First, $M_0$ is the maximum value among $I[0]$, $I[1]$, all the
constants $c_1,\dots,c_r$ in the program and the maximum of registers
(which are all set to $0$).  For $M_{k+1}$ we see that the only way of
creating a larger maximum value is by summing two values and in that
case we still have $M_{k+1} \leq 2\times M_k \leq 2^{k+1} M_0$ (note
that by summing $N$ and anything else we get something bigger than $N$
and therefore it will not change the value of $M_k$).

  Overall, we have $M_k \leq 2^k \times M_0$ and thus by setting $I[0]
  = I[1] = \lfloor N^{1/3} \rfloor$ and taking $N$ big enough, we have
  $I[0] \times I[1] = \lfloor N^{1/3} \rfloor^2 > 2^\ell \times M_0
  \geq M_\ell$, where the \emph{constant} integer $\ell$ is the
  maximum number of steps executed by program $P$, because $M_0 =
  \max(I_0,I_1,c_1,\dots,c_r) = O(N^{1/3})$.  Therefore, the product
  $I[0] \times I[1]$ cannot be obtained at the $\ell$-th (final) step
  or before.
\end{proof}

\begin{remark}
Proposition~\ref{prop:multWithAdd} is not an isolated result. Several
variants can by proved by adapting the above proof, for instance:
\begin{itemize}
\item $\RAM[\{+,\dot{-}\}]$ \emph{cannot} faithfully simulate
  $\RAM[\{+,\dot{-},\times\}]$ where $\dot{-}$ is the ``subtraction''
  defined by $x\dot{-}y\coloneqq \max(0,x-y)$;

\item \label{item:proved}  $\RAM[\{+\}]$ \emph{cannot} faithfully simulate
  $\RAM[\{+,\dot{-}\}]$.
\item $\RAM[\{+,\times\}]$ cannot faithfully simulate
  $\RAM[\{+,\times,/ \}]$ where~$/$ denotes the Euclidean division.
\end{itemize}
  
Indeed consider, e.g., a program in $\RAM[\{+\}]$ trying to output
$I[0]\dot{-} I[1]$ in constant time. At each step of a constant-time
program in $\RAM[\{+\}]$, each register content is equal to a linear
combination of the form $a_0 I[0] + a_1 I[1] + a_2 N + b_1 c_1 + \dots
+ b_r c_r$, where $c_1,\dots,c_r$ are the constants of the program and
the integer coefficients $a_i,b_j$ depend on the current step and the
register involved; the important point is that the number of possible
tuples of coefficients $(a_0,a_1,a_2,b_1,\dots,b_r)$ after a constant
number of steps is \emph{bounded by a constant}. It implies that, for
$N$ large enough, such a linear combination cannot be equal to
$I[0]\dot{-} I[1]$ when we take $I[0]=\lf N^{1/2} \rf$ and $I[1]=\lf
N^{1/3} \rf$.

\end{remark}

\paragraph{Operations computed in constant time after linear-time initialization:}
We say that a RAM $\RAM{}$ computes an operation $\op$ \emph{in constant time after a linear-time initialization} if it works in two successive phases:
\begin{enumerate}
\item \emph{A pre-computation phase} where the RAM reads the integer
  $N\geq 1$ and computes an ``index" structure $p(N)$ in time $O(N)$.
  The pre-computed ``index" $p(N)$ will be a finite set of arrays
  (tables) and constants depending only on $N$, not on the rest of the
  input.
\item \emph{A computation phase} where the RAM reads the input
  $\mathcal{X}\coloneqq (x_1,\ldots,x_k)\in[0,cN]^k$, for fixed
  positive integers $k,c$, and outputs in constant time the result
  $\mathtt{op}(x_1,\ldots,x_k)\in[0,cN]$.
\end{enumerate}

In this paper, we will mainly focus on the following problem: what
operations can a RAM with addition (i.e., with
$\mathtt{Op}\coloneqq\{+\}$) compute in constant time after
linear-time initialization? We will show how this can be generalized
to RAMs which compute operations acting on ``polynomial" operands with
a ``polynomial" result: a ``polynomial" integer is an integer smaller
than $N^d$, for a constant integer $d>1$, and it is naturally
represented by the sequence of its $d$ ``digits" in the ``standard"
base $N$.

\begin{definition}[set of operations $\RAM{}$]\label{def:operation_Set_M_Op}

By abuse of notation, we call $\RAM{}$ the set of
operations $\op$ for which there exists a multi-memory RAM program
using only the $\Op$ operations that computes $\op$ in constant time
after linear initialization. 
\end{definition} 

\begin{definition}[equivalence of sets of operations, equivalence of RAM models]\label{def:equivOp,equivRAM}
Two sets of operations $\Op_1$ and $\Op_2$ are \emph{equivalent} when
the sets of operations $\mathcal{M}[\Op_1]$ and $\mathcal{M}[\Op_2]$
are equal; we will also say that the RAM models $\mathcal{M}[\Op_1]$
and $\mathcal{M}[\Op_2]$ are \emph{equivalent}.
\end{definition}

\noindent
These definitions will be justified by Corollary~\ref{cor:robustness} below.

\medskip
In this paper, we will mainly consider the two following complexity
classes: \emph{Linear time} and \emph{Constant time with linear-time initialization}. Recall that constant-time complexity is not a robust notion, as explained above, unless the constant-time computation is preceded by a linear-time initialization (also called linear-time preprocessing).

\paragraph{Linear time complexity:}
A RAM with an integer $N>0$ as input (resp.\ with a list of integers
$\Input\coloneqq (N,I[0],\ldots,I[N-1])$, $N>0$, as input) \emph{works
in linear time} if it stops after $O(N)$ steps. Note that, by
definition, such a RAM uses only integers~$O(N)$ as addresses and
contents of registers.

\paragraph{Constant time after linear-time preprocessing:}

A RAM with a list of integers \linebreak $\Input \coloneqq
(N,I[0],\ldots,I[N-1])$, $N>0$, as input and using only
integers~$O(N)$ (as addresses and contents of registers) \emph{works
in constant time after linear-time preprocessing} if it works in two
successive phases:
\begin{enumerate}
\item \emph{Pre-computation:} the RAM computes an ``index''
  $p(\Input)$ in time $O(N)$;
\item \emph{Computation:} from $p(\Input)$ and after reading a second
  input $\mathcal{X}$ of constant size, the RAM returns in constant
  time the corresponding result $\mathtt{output}(\Input,\mathcal{X})$.
\end{enumerate}

\begin{remark}
In our concluding section (Section~\ref{sec:conclusion}), we will also
study a third class of problems of ``minimal'' computational
complexity: the now well-known class of problems \emph{enumerable with
constant delay after linear-time preprocessing}.
\end{remark}

\paragraph{}
The main result of this paper is the proof of the robustness of our three ``minimal" complexity classes with respect to the set 
$\mathtt{Op}$ of arithmetic operations allowed, provided that $\mathtt{Op}$ contains addition. 
To begin with, the following proposition expresses that the two classes defined above are robust with respect to other variations of the instruction set.

\begin{proposition}\label{prop: robustInstRAM}
Let $\mathtt{Op}$ be a finite set of operations \emph{including addition}.
Then the class of problems computable in linear time (resp.\ in constant time after linear-time preprocessing) on a RAM with 
$\mathtt{Op}$ operations does not change depending on whether:
\begin{itemize}
\item the program is given by a sequence of $AB$-instructions;
\item the program is given by a sequence of $R$-instructions;
\item the program is given by a sequence of array-instructions.
\end{itemize}
\end{proposition}

\begin{proof}
It is a direct consequence of Lemma~\ref{lemma:lockstepEquivInst} and its proof.
The only condition to check is that in the three simulations (1-3) of this lemma, each integer handled is $O(N)$. This is evident for simulations 1 and 2. In simulation 3 of an array-instruction by a sequence of $R$-instructions (the only one of the three simulations that explicitly uses addition), it suffices to notice that the address $y$ of a register $R[y]$ which simulates an array element $T_j[x]$ is linear in its index $x$.
\end{proof}

As shown by the proofs of Lemma~\ref{lemma:lockstepEquivInst} and
Proposition~\ref{prop: robustInstRAM}, the availability of addition
seems essential to implement (one-dimensional) arrays in the RAM
model.  Most of this paper is devoted to showing that a RAM with only
addition can implement the other usual arithmetic operations,
subtraction, multiplication, division, etc., in constant time after
linear-time preprocessing.

\subsection{Reductions between operations in the RAM model}\label{subsec:reductions}
To show the robustness of the RAM model $\RAM{}$ with respect to the set $\mathtt{Op}$ of allowed operations, it will be convenient to introduce a notion of \emph{reduction} between operations or sets of operations.

By Definition~\ref{def:operation_Set_M_Op}, we have $\op\in\RAM{}$, 
for an operation $\op$ of arity $k$, if
there are two RAM procedures in $\RAM{}$, a ``preprocessing''
procedure, suggestively called $\fun{LinPreproc}()$ and acting in
time $O(N)$, and a ``computation'' procedure, called
$\fun{CstProc}(\mathbf{x})$, for $\mathbf{x}=(x_1,\ldots,x_k)$,
and acting in constant time, such that after executing
$\fun{LinPreproc}()$, executing $\fun{CstProc}(\mathbf{x})$
returns $\op(\mathbf{x})$, for each $k$-tuple of integers $\mathbf{x}$ in
the definition domain of the $\op$ operation.

\begin{example}\label{ex:pred}
Let $c\geq 1$ be a fixed integer. (It will be convenient to choose $c$ such that $cN$ is an upper bound of the integers allowed as contents of the RAM registers.) To define the arithmetic operations by induction we will need the predecessor function $\mathtt{pred}:
x\mapsto x-1$ from $[1,cN]$ to $[0,cN[$. To demonstrate $\mathtt{pred}
    \in \RAM[\{+\}]$, we use the following procedures of
    $\RAM[\{+\}]$, where $cN$ denotes the sum $N+\cdots +N$ ($c$ times):
 
      \begin{minipage}{0.05\textwidth}
        ~
      \end{minipage}
      \begin{minipage}{0.85\textwidth}
        \begin{algorithm}[H]
          \caption{Computation of \fun{pred}}
      \begin{minipage}[t]{0.50\textwidth}
      \begin{algorithmic}[1]
        \Procedure{LinPreprocPred}{\mbox{}}
  \State $\var{y} \gets 0$
  \While{$\var{y} \neq c\var{N}$}
    \State $\arr{PRED}[\var{y}+1] \gets \var{y}$
    \State $\var{y} \gets \var{y}+1$
  \EndWhile
\EndProcedure
      \end{algorithmic}
      \end{minipage}
      \begin{minipage}[t]{0.45\textwidth}
      \begin{algorithmic}[1]
\Procedure{Pred}{$\var{x}$}
\State \Return $\arr{PRED}[\var{x}]$
\EndProcedure
      \end{algorithmic}
      \end{minipage}
    \end{algorithm}
      \end{minipage}
\medskip

\noindent
Clearly, running $\fun{LinPreprocPred}()$ computes in $O(N)$
time the array $\predtab[1..cN]$ defined by $\predtab[y]\coloneqq
y-1$, for $y \in [1,cN]$, so that any (constant time) call
$\fun{Pred}(\alpha)$, for $\alpha \in [1,cN]$, returns
$\mathtt{pred}(\alpha)$.
This proves $\mathtt{pred} \in \RAM[\{+\}]$. 
\end{example}

\begin{lemma}\label{lemma:+=+AndPred}
 Any program $P\in \RAM[\{+,\mathtt{pred}\}]$, i.e.\ using the
 operations $+$ and $\mathtt{pred}$ is faithfully simulated by a
 program $P'\in \RAM[\{+\}]$ after a linear-time initialization also
 using only the operation $+$.
\end{lemma}

\begin{proof}[Proof of item~1]
$P'$ starts by calling the procedure $\fun{LinPreprocPred}$
  (linear-time initialization) and then calls $P$ but where each
  occurrence of the $\mathtt{pred}$ operation in $P$ is replaced with
  a call to the $\fun{Pred}$ procedure.
\end{proof}

Lemma~\ref{lemma:+=+AndPred} implies the equality
$\RAM[\{+,\mathtt{pred}\}]=\RAM[\{+\}]$. This means that the two sets
of operations computed in constant time after linear-time
initialization by a RAM using the operation $+$ and \emph{using} or
\emph{not using} the $\mathtt{pred}$ operation are equal.

The following general lemma establishes a ``transitivity'' property.

\begin{lemma}[Fundamental Lemma]\label{lemma:fund}
Let $\mathtt{Op}$ be a set of operations including $+$.  Let~$P$ be a
program in $\RAM[\mathtt{Op}\cup\{\op\}]$, i.e.\ using the operations
of $\mathtt{Op}\cup\{\op\}$, where $\op$ is an operation in $\RAM{}$,
$\op\not\in \mathtt{Op}$.

Then $P$ can be faithfully simulated by a program $P'$ in $\RAM[\mathtt{Op}]$, i.e.\ using only the operations in $\mathtt{Op}$ after a linear-time initialization, also using only the operations of $\mathtt{Op}$. 
\end{lemma}

\begin{proof}
Let $P$ be a program in $\RAM[\mathtt{Op}\cup\{\op\}]$.  
Intuitively, the idea of the proof is to create a program $P'$ that starts by calling 
$\fun{LinPreproc}$, the initialization procedure for $\op$, and then calls $P$ but where each occurrence of the $\op$ operation in $P$ is replaced with a call to $\fun{CstProc}$, the computation procedure of~$\op$.
This is the essence of our proof but we also need to take care of the following  problem: 



\begin{itemize} 
 \item 
The definition of $\op \in \RAM{}$ does not guarantee that it is possible to make two (or more) calls of
$\fun{CstProc}$. For instance, we might imagine that $\fun{CstProc}$ is destructive and prevents us from executing a second call to $\fun{CstProc}$ or that the answer might be wrong.
\end{itemize}
To solve this problem, we can modify the procedure $\fun{CstProc}$ to ensure that several successive calls are possible. 
For that, we use a modified procedure $\fun{CstProc}'$ where we store the value of each array cell or integer variable that is overwritten and at the end of the call to $\fun{CstProc}$ we restore the memory back to its state before the call. 
This is possible by maintaining a complete ``log" (a history) of the successive modifications (assignments/writes) of the memory. 
At the end of the call to $\fun{CstProc}'$, before its return instruction, the initial state of the memory is restored, step by step, by reading the ``log'' in the anti-chronological direction.

For that, we introduce three new arrays called
  $\arr{OldVal}$, $\arr{Index}$ and $\arr{Dest}$, as well as two new variables called
  $\var{NbWrite}$ and $\var{ReturnValue}$. 
  The modified procedure $\fun{CstProc}'$ starts by setting $\mathtt{NbWrite}=0$,
  and then, for each instruction in $\fun{CstProc}$ that writes
  something in an individual variable $x_i$ or in an array cell $A_i[t]$,
  i.e.\ an instruction of the form $x_i\gets u$ or $A_i[t]\gets u$, we write
  into 
  $\arr{OldVal}[\var{NbWrite}]$
  the current value $v$ of $x_i$ or $A_i[t]$ that will be overwritten;
  afterwards, we write into $\mathtt{\arr{Dest}}[\var{NbWrite}]$ the integer $i$
  that indicates\footnote{It is convenient to assume that the set of integers $i\leq k$ which identify the individual variables $x_i$ is disjoint from the set of integers $i>k$ that number the arrays $A_i$. Note that the test $i\leq k$ does not actually use the order symbol $\leq$ since it is equivalent to the disjunction of $k+1$ equalities $\bigvee_{j=0}^k i = j$.}
  which variable $x_i$ or array $A_i$ is being written into, and, if the write was a write into an array $A_i$ (i.e.\ if the integer $i$ is greater than~$k$), we note the value of the index $t$ of the cell $A_i[t]$ overwritten into $\arr{Index}[\var{NbWrite}]$; 
 finally, we write the new value $u$ into $x_i$ or $A_i[t ]$ and we increment $\var{NbWrite}$ by one.
 
 At the end of the procedure, instead of returning directly the expected value, we store it in the
 $\var{ReturnValue}$ variable. 
 Then, we restore the memory by replaying the ``log'' (write history) in reverse. 
 To do this, we decrease the counter variable $\var{NbWrite}$ by one as long as it is strictly positive and, each time, we restore the value $v=\arr{OldVal}[\var{NbWrite}]$ into the array cell (resp.\ variable) indicated by $i=\arr{Dest}[\var{NbWrite}]$, at index 
  $\var{index}=\arr{Index}[\var{NbWrite}]$ for an array cell. 
  This means that we execute the assignment $A_i[\var{index}]\gets v$ if $i>k$ 
  (resp.\ $x_i\gets v$ if $i\leq k$).
  Once done, we return the value stored in the $\var{ReturnValue}$ variable.
  
Formally, the code of the $\fun{CstProc}'(\mbox{})$ procedure is the concatenation of the following calls using the procedures of Algorithm~4 below:
\begin{itemize}
\item $\fun{InitCstProc'}(\mbox{})$;
\item $\fun{InitCstProc}(\mbox{})$ where each instruction of the form $\var{x}_{\var{i}} \gets u$
(resp.\ $\arr{A}_{\var{i}}[t] \gets u$, $\mathtt{Return}\;t$  )
is replaced by the call $\fun{AssignVar}(\var{i},u)$ (resp.\ $\fun{AssignCell}(\var{i},t, u)$, 
$\fun{StoreReturn}(t)$);
\item $\fun{FinalCstProc'}(\mbox{})$.
\end{itemize}

Finally, note that the time overhead of the procedure $\fun{CstProc}'$
is proportional to the number of assignments performed by the original
constant-time procedure $\fun{CstProc}$ it simulates, therefore the
$\fun{CstProc}'$ procedure is also constant-time.
\end{proof}

 \begin{minipage}{\almosttextwidth} 
  \begin{algorithm}[H] 
  \caption{
  Procedures used to construct \fun{CstProc}' from the code of \fun{CstProc} whose list of variables
  is $\var{x}_0,\dots,\var{x}_k$ and the list of arrays is $\arr{A}_{k+1},\dots,\arr{A}_{k'}$
  }
  
      \begin{minipage}[t]{0.53\textwidth} 
      
      \vspace{0.5em}
      
       \begin{algorithmic}[1]
       \Procedure{InitCstProc'}{$\mbox{}$}
       \State $\var{NbWrite}\gets 0$
       \State $\fun{LinPreprocPred}{()}$
       \EndProcedure
       \end{algorithmic}
       
      \vspace{1em}
      
      \begin{algorithmic}[1] 
     \Procedure{AssignVar}{\var{i},\var{u}} \Comment{$\var{x}_{\var{i}} \gets \var{u}$}
      \State $\arr{OldVal}[\var{NbWrite}]\gets \var{x}_{\var{i}}$
      \State $\arr{Dest}[\var{NbWrite}]\gets \var{i}$
      \State $\var{x}_{\var{i}} \gets \var{u}$
      \State $\var{NbWrite}\gets \var{NbWrite}+1$
    \EndProcedure
      \end{algorithmic} 
  
      \vspace{1em}
      
      \begin{algorithmic}[1] 
     \Procedure{AssignCell}{\var{i},\var{t},\var{u}} \Comment{$\arr{A}_{\var{i}}[\var{t}] \gets \var{u}$}
      \State $\arr{OldVal}[\var{NbWrite}]\gets \arr{A}_{\var{i}}[\var{t}]$
      \State $\arr{Dest}[\var{NbWrite}]\gets \var{i}$
      \State $\arr{Index}[\var{NbWrite}]\gets \var{t}$
      \State $\arr{A}_{\var{i}}[\var{t}] \gets \var{u}$
      \State $\var{NbWrite}\gets \var{NbWrite}+1$
      \EndProcedure
      \end{algorithmic} 
      
      \vspace{1em}
      
      \end{minipage} 
\begin{minipage}[t]{0.45\textwidth} 
 
 \vspace{1.5em}
 
 \begin{algorithmic}[1] 
\Procedure{StoreReturn}{$\var{t}$} 
\State $\var{ReturnValue}\gets \var{t}$
\EndProcedure
\end{algorithmic} 

      \vspace{1em}
      
\begin{algorithmic}[1] 
\Procedure{FinalCstProc'}{$\mbox{}$}
\While {$\var{NbWrite}>0$}
  \State $\var{NbWrite}\gets \arr{PRED}[\var{NbWrite}]$
  \State $\var{i}\gets \arr{Dest}[\var{NbWrite}]$
  \If{$\var{i} > k$}
     \State $\var{ind}\gets \arr{Index}[\var{NbWrite}]$
     \State $\arr{A}_{\var{i}}[\var{ind}] \gets \arr{OldVal}[\var{NbWrite}]$
   \Else
      \State $\var{x}_{\var{i}} \gets \arr{OldVal}[\var{NbWrite}]$
  \EndIf
\EndWhile
    \State \Return $\var{ReturnValue}$
\EndProcedure
\end{algorithmic} 

\end{minipage} 
      
\end{algorithm} 
 \end{minipage} 
 
\bigskip
The following result is a direct consequence of Lemma~\ref{lemma:fund}:

\begin{corollary}\label{cor:transitivity_base}
Let $\mathtt{Op}$ be a set of operations including $+$, and let $\op'$
be an operation in $\RAM[\mathtt{Op}\cup\{\op\}]$ where $\op$ is an
operation in $\RAM{}$.  Then we have
$\op'\in\RAM{}$.
\end{corollary}

\begin{proof}
  If $\op'$ is in $\RAM[\mathtt{Op}\cup\{\op\}]$ and $\op$ is an operation in $\RAM{}$, 
  then we can emulate the linear-time preprocessing and the constant-time computation of
  $\op$ in $\RAM{}$.
\end{proof}

\begin{corollary}\label{cor:robustness}
Let $\mathtt{Op}$ be a set of operations including $+$, and let $\op$ be an operation in $\RAM{}$. 
Then we have the equality of sets of operations 
$\RAM[\mathtt{Op}\cup\{\op\}]=\RAM{}$.
\end{corollary}

Using iteratively Corollary~\ref{cor:robustness}, we deduce the
following result.

\begin{corollary}[Transitivity Corollary]\label{cor:transitivity}
Let $\op_1,\dots,\op_k$ be a list of operations. Let us define the set of operations 
$\mathtt{Op}_0\coloneqq \{+\}$ and 
$\mathtt{Op}_i\coloneqq \mathtt{Op}_{i-1}\cup \{\op_i\}$, for $i=1,\dots, k$.
Assume $\op_i \in \RAM[\mathtt{Op}_{i-1}]$, for $i=1,\dots, k$.

Then we have the sequence of equalities
$ \RAM[\{+\}]  = \RAM[\mathtt{Op}_1]=\cdots = \RAM[\mathtt{Op}_k]$.
This means that each set of operations $\mathtt{Op}_{i}$ thus defined recursively is equivalent to 
$\{+\}$.
\end{corollary}

\begin{proof}
The first equality is given by Lemma~\ref{lemma:+=+AndPred}. Each of the $k$ following equalities are deduced from Corollary~\ref{cor:robustness} by setting $\mathtt{Op}\coloneqq \mathtt{Op}_{i-1}$
and $\op \coloneqq \op_i$, for each  $i=1,\dots k$. 
\end{proof}

From now on, we will make repetitive and informal (intuitive) use of Lemma~\ref{lemma:fund} (Fundamental Lemma) and Corollary~\ref{cor:transitivity} (Transitivity Corollary) without mentioning them explicitly.

\begin{remark}[Cumulative property]
The result of Corollary~\ref{cor:transitivity} is ``cumulative''.  As
we will use for the successive operations studied in the following
sections, the preprocessing and computation of an $ \op_i$ operation
can use any operation $\op_j$ introduced before ($j<i$) but also any
element (array, constant integer, etc.) pre-computed previously,
i.e.\ constructed to compute a previous~$\op_j$.
\end{remark}

\section{Computing sum, difference, product and base change in constant time}\label{sec:SumDiffProdBase}

In this section, we show that the first three usual operations acting
on ``polynomial'' integers can be performed in constant time on a RAM
of $\RAM[\{+\}]$ after a linear-time preprocessing. 

\subsection{Moving from register integers to a smaller base}

A fundamental property for establishing the robustness of complexity
classes in the RAM model is its ability to \emph{pre-compute} some
tables.  In particular, in this subsection, we show how a RAM with
addition can \emph{compute in linear time} several tables which allow
to transform a register integer $x$ into an integer in base $\largeB$,
with $\largeB$ large enough, so that $x$ fits within a constant number
of registers, but small enough to allow us, in the following
subsections, to pre-compute in \emph{constant time} the usual
arithmetic operations on integers less than $\largeB$: subtraction,
multiplication, base change, division, etc. This in turn will allow us
to compute in constant time operations over $O(N)$ or $O(N^d)$ integers, for a fixed integer $d$.

\paragraph{A general principle to implement binary arithmetic operations in constant time:}
To implement in constant time an arithmetic binary operation $
\mathtt{op} $, e.g.\ subtraction or multiplication, on operands less
than $ N $ (or less than $ N^d $, for a fixed integer $d$), it is
convenient to convert these operands into a base $\largeB$ such that
$N^{1/2}\leq \largeB = O(N^{1/2})$. We will see below how we can use this:
\begin{itemize}
\item on the one hand, to represent any natural number $ a <N$ (and
  therefore less than $\largeB^2$) by its two digits $ a_1, a_0 $ in
  the base $ \largeB $: $a=a_1\times B + a_0$ and thus $ a_1 = a \;
  \mathtt{div} \; \largeB $ and $ a_0 = a \; \mathtt{mod} \; \largeB
  $,
\item on the other hand, for any constant integer $c\geq 1$, to
  construct in time $ O(\largeB^2) = O(N) $ a ``binary" table $
  T_{\mathtt{op}}[0..\largeB-1][0..\largeB-1]$ of a binary operation $ \mathtt{op}
  $ restricted to operands less than $ \largeB $, i.e., such that, for $ x,
  y <\largeB $, $ T_{\mathtt{op}} [x] [y] \coloneqq x \; \mathtt{op} \; y $.
\end{itemize}

For this, we must pre-compute the constant $\largeB \coloneqq \lceil \sqrt{N}\rceil$ 
and then the tables of the functions 
$x\mapsto x \;\mathtt{div} \; \largeB$ and $x\mapsto x \; \mathtt{mod} \; \largeB$, for $x<cN$,
and of the function $x\mapsto \largeB x$, for $x<\largeB$. This will be done below.

\paragraph{Computing square roots:}
we will construct the array $\mathtt{CEIL}\_\mathtt{SQRT}[1..N]$ and the constant $\largeB$
defined by $\mathtt{CEIL}\_\mathtt{SQRT}[y]\coloneqq \lceil \sqrt{y}\rceil$, 
for $1\leq y \leq N$ and  $\largeB\coloneqq \lceil \sqrt{N}\rceil = \arr{CEIL\_SQRT}[N] $.  
They are computed in time $O(N)$ by the following code using only
equality tests and the addition operation:

      \begin{minipage}{0.05\textwidth}
        ~
      \end{minipage}
      \begin{minipage}{0.85\textwidth}
        \begin{algorithm}[H]
          \caption{Computation of \fun{Sqrt}, with $\fun{Sqrt}(x)\coloneqq \lc \sqrt{x} \rc$ for $1\leq x \leq N$, and $\var{B}\coloneqq \lc \sqrt{\var{N}} \rc$}
      \begin{minipage}[t]{0.50\textwidth}
      \begin{algorithmic}[1]
        \Procedure{LinPreprocSqrt}{\mbox{}}
  \State $\var{x} \gets 1$
  \State $\var{xSq} \gets 1$
  \For{$\var{y} \From  1 \To \var{N}$}
  \State $\arr{CEIL\_SQRT}[\var{y}] \gets \var{x}$
  \If{$\var{y} = \var{xSq}$}
  \State $\var{xSq} \gets \var{xSq}+\var{x}+\var{x}+1$
  \State $\var{x} \gets \var{x}+1$
  \EndIf
  \EndFor
  \EndProcedure  
      \end{algorithmic}
      \end{minipage}
      \begin{minipage}[t]{0.45\textwidth}
      \vspace{1em}
      \begin{algorithmic}[1]
\Procedure{Sqrt}{$\var{x}$}
\State \Return $\arr{CEIL\_SQRT}[\var{x}]$
\EndProcedure
      \end{algorithmic}
      
      \vspace{1em}
      \begin{algorithmic}[1]
      \Procedure{B}{$\mbox{}$}
 \State \Return $\arr{CEIL\_SQRT}[\var{N}]$     
\EndProcedure
      \end{algorithmic}      
      \end{minipage}
      
    \end{algorithm}
      \end{minipage}

\begin{proof}[Proof of correctness]
When initializing the variables $\mathtt{x},\mathtt{xSq}$ (at lines
2-3) and after each update (lines 7-8), we always have
$\mathtt{xSq}=\mathtt{x}^2$ due to the identity $(x+1)^2=x^2+x+x+1$.
Besides, the inequalities $(\var{x}-1)^2 < \var{y} \leq \var{x}^2$
(meaning $\var{x}=\lceil \sqrt{\var{y}} \rceil$) are maintained
throughout the execution of the algorithm: first, they hold at the
initialization, that is $\var{y}=1=\var{x}^2$; second, if \linebreak
$(\var{x}-1)^2 < \var{y} \leq \var{x}^2$, then either we have
$(\var{x}-1)^2 < \var{y}+1 \leq \var{x}^2$ (case $\var{y}<\var{x}^2$),
or we have $\var{x}^2<\var{y}+1\leq (\var{x}+1)^2$ (case $\var{y}=\var{x}^2$).  
This justifies the instruction
$\arr{CEIL\_SQRT}[\var{y}]\gets\var{x}$.
\end{proof}

\paragraph{Choosing a new reference base:}
We have defined the base $\largeB\coloneqq\lceil \sqrt{N} \rceil$. Now that we are
manipulating integers represented in base $\largeB$, an integer $O(N)$
cannot be stored in a single register. For the sake of simplicity, we
will represent integers as bounded arrays with a constant size $d$,
hence a representation able to represent all the integers up to
$N^{d/2}$.

\paragraph{Conversion between integers less than $cN$ and integers in base $\largeB$:}
An array $\arr{a}[0..d-1]$ of size $d$ \emph{represents in base $\largeB$} the
integer $\sum_{0\leq i < d} \arr{a}[i]\largeB^i$. In general we will assume
that the representation is \emph{normalized} which means that
$\arr{a}[i] < \largeB$ for all $0\leq i < d$. All the procedures that we are going to present
return normalized integers but they also often accept non
normalized integers as inputs. For functions that only accept
normalized integers, we can normalize them using the $\fun{Normalize}$
function. Let us now explain how to move from register integers less than $cN$ (recall that $c\geq 1$ is defined as the integer such that $cN$ is an upper bound of the register contents)
into base $\largeB$ integers (i.e.\ arrays of $d$ integers less than $\largeB$).

Let us define the arrays $\divtab \largeB[0..cN]$ and $\modtab
\largeB[0..cN]$ by $\divtab \largeB[x] \coloneqq x \divop \largeB$ and $\modtab \largeB[x]
\coloneqq x \modop \largeB$, for $x\leq cN$. Clearly, $\divtab \largeB$ and $\modtab \largeB$ are computed by the following algorithm running in time $O(N)$, and 
the procedure $\fun{ToBase\largeB}(\var{x},\arr{Out})$, which computes for an integer $x\leq cN$ its base $\largeB$ representation\footnote{Note that for $N>c^2$ we have $c<\lc \sqrt{N} \rc$, which implies $cN < \lc \sqrt{N} \rc^3$, so that any integer $x\leq cN$ can be represented with 3 digits in base $\largeB =\lc \sqrt{N} \rc$. }
 $\arr{Out}[0..d-1]$, with $d\geq 3$,
runs in constant time.

      \begin{minipage}{\almosttextwidth}
        \begin{algorithm}[H]
          \caption{Computation of the base conversion}
      \begin{minipage}[t]{0.50\textwidth}
      \begin{algorithmic}[1]
        \Procedure{LinPreprocBaseConv}{\mbox{}}
  \State $\arr{DIV\largeB}[0] \gets 0$
  \State $\arr{MOD\largeB}[0] \gets 0$
  \For{$\var{x} \From 1  \To cN$}
    \If{$\arr{MOD\largeB}[\var{x}-1] \neq \var{\largeB}-1$}
      \State $\arr{MOD\largeB}[\var{x}] \gets \arr{MOD\largeB}[\var{x}-1]+1$
      \State $\arr{DIV\largeB}[\var{x}] \gets \arr{DIV\largeB}[\var{x}-1]$
    \Else 
      \State $\arr{MOD\largeB}[\var{x}] \gets 0$
      \State $\arr{DIV\largeB}[\var{x}] \gets \arr{DIV\largeB}[\var{x}-1]+1$
      \State $\arr{MULT\largeB}[\arr{DIV\largeB}[\var{x}]] \gets \var{x}$
    \EndIf
  \EndFor 
  \EndProcedure  
      \end{algorithmic}
      \end{minipage}
      \begin{minipage}[t]{0.49\textwidth}
      \begin{algorithmic}[1]
        \Procedure{ToBase\largeB}{$\var{x}$}
        \State $\arr{Out} \gets $ array of size $d$
        \For{$\var{i} \From 0 \To d-1$}
          \State $\arr{Out}[\var{i}] \gets \arr{MOD\largeB}[\var{x}]$
          \State $\var{x}\gets\arr{DIV\largeB}[\var{x}]$
        \EndFor
        \State \Return $\arr{Out}$  
\EndProcedure
      \end{algorithmic}
\vspace{1em}
      \begin{algorithmic}[1]
          \Procedure{Normalize}{$\arr{In}$}
        \State{$\var{carry}\gets 0$}
        \State $\arr{Out} \gets $ array of size $d$
        \For{$\var{i} \From 0  \To \var{d}-1$}
          \State $\var{temp}\gets \arr{In}[\var{i}] + \var{carry}$
          \State $\var{carry}\gets \arr{DIV\largeB}[\var{temp}]$
          \State $\arr{Out}[\var{i}]\gets \arr{MOD\largeB}[\var{temp}]$
        \EndFor
        \State \Return $\arr{Out}$
      \EndProcedure
   \end{algorithmic}
      \end{minipage}
    \end{algorithm}
      \end{minipage}

\begin{proof}[Proof of correctness]
  The arrays $\arr{DIV\largeB}$ and $\arr{MOD\largeB}$ are initialized such that
  $\arr{DIV\largeB}[x] = \lfloor x/\largeB \rfloor $ and $\arr{MOD\largeB}[x]=x \modop \largeB$, for $x\leq cN$.
  The array $\arr{MULT\largeB}$ satisfies the equality $\arr{MULT\largeB}[y]=y\times \largeB$ 
  for all $y$ such that $y\times \largeB\leq cN$.
Obviously, the procedure $\fun{Normalize}$ correctly normalizes any array~$\mathtt{In}$ which represents (in base~$\largeB$) an integer less than~$\largeB^d$.
\end{proof}

\subsection{Computing sum, difference and product in constant time}
\label{sec:sumdiffproduct}

  In the previous subsection, we showed that our integers $O(N)$
          could be moved into a base $\largeB$ such that 
          $N^{1/2}\leq \largeB = O(N^{1/2})$.
This allows us to create $d$-dimensional arrays $\largeB\times \dots \times \largeB$, for a fixed $d\geq 3$,
          storing the results for each ``digit'' of our base and by combining
          them we will get the result for integers $O(N)$.
This technique will be first used for the \emph{comparison} ($\leq$-inequality test) of integers
          and then we will show how to apply it 
          to the \emph{sum}, \emph{difference} and \emph{product} of integers $O(N)$, or even
          ``polynomial'' integers, i.e.\ integers less than $N^{d/2}$ (since $N^{d/2}\leq \largeB^d$).

          The algorithms presented here are variants of the standard
          algorithms for large number manipulation but here they run
          in constant time and rely on the pre-computed tables (whereas
          standard algorithms generally suppose that each operation
          is available on registers).

\paragraph{Adding comparison of integers:}

In our RAM model we assumed that we cannot compare large integers by inequalities, $<$, $\leq$, etc., but we saw a way to move from large integers (up to $cN$) to relatively small integers (less than $\largeB$).
We now see how to pre-compute an array $\arr{LowerEQ}[0..\largeB-1][0..\largeB-1]$ 
such that $\arr{LowerEQ}[\var{x}][\var{y}] := 1$ when $x\leq y$ and $0$ otherwise. 
Using this array and a simple for loop we will be able to
compare integers lower than $\largeB^d$ by inequalities, all of that just using the addition 
and $\mathtt{pred}$.
Note that this comparison requires that both arguments are
\emph{normalized}! 

\begin{minipage}{\almosttextwidth}
  \begin{algorithm}[H]
    \caption{Computation of $\leq$-comparison of integers lower than $\var{\largeB}^d$}
    \begin{minipage}[t]{0.45\textwidth}
      \vspace{1em}
      \begin{algorithmic}[1]
        \Procedure{LinPreprocComp}{\mbox{}}
\For{$\var{x} \From  0 \To  \var{\largeB}-1$}
\For{$\var{y} \From  \var{x} \To \var{\largeB}-1$}
    \State $\arr{LowerEQ}[\var{x}][\var{y}] \gets 1$
  \EndFor
  \For{$\var{y} \From  \var{x}-1 \Downto 0$}
    \State $\arr{LowerEQ}[\var{x}][\var{y}] \gets 0$
  \EndFor
  \EndFor
  \EndProcedure
      \end{algorithmic}
    \end{minipage}
    \begin{minipage}[t]{0.540\textwidth}
      \begin{algorithmic}[1]
\Procedure{LowerEqualBase\largeB}{$\arr{a},\arr{b}$}
\For{$\var{i} \From \var{d}-1 \Downto 0$}
  \If{$\var{a}[\var{i}] \neq \var{b}[\var{i}]$}
     \State \Return{$\arr{LowerEQ}[\var{a}[\var{i}]][\var{b}[\var{i}]]$}
  \EndIf
  \EndFor
  \State \Return{1}\Comment{Case of equality }
  \EndProcedure
      \end{algorithmic}
      \vspace{1em}
      \begin{algorithmic}[1]
\Procedure{LowerEqual}{$\arr{a},\arr{b}$}
\State \Return{\Call{LowerEqualBase\largeB}{\Call{ \\
\mbox{~}\hspace{8em} ToBase\largeB}{$\var{a}$}, \\
\mbox{~}\hspace{9em}\Call{ToBase\largeB}{$\var{b}$}}}
  \EndProcedure
      \end{algorithmic}
    \end{minipage}
  \end{algorithm}
\end{minipage}

\begin{proof}[Proof of correctness]
Once the array $\arr{LowerEQ}$ is initialized as expected, if we take
two arrays of size $d$ representing normalized integers lower than
$\largeB^d$ then it is clear that we just have to look at the biggest index
where the arrays differ and compare the values contained at this
position. If the arrays do not differ, then the numbers are equal.
\end{proof}

\paragraph{Sum and difference:}
The sum and difference (of integers less than $\largeB^d$) are pretty straightforward applications of the
textbook algorithms, we just have to make sure to propagate the
carries in the right way. For the difference, we pre-compute an array
$\arr{DIFF}[0..2\largeB-1][0..2\largeB-1]$ such that $\arr{DIFF}[x][y]\coloneqq x-y$ for $0\leq y \leq x < 2\largeB$.

Note that the sum is not checking for a potential overflow (i.e.\ when
the result is greater than or equal to $\largeB^d$) and the difference is not
verifying that the number represented by $a$ is actually greater than the
number represented by $b$ but in both cases we could read it from the
$\var{carry}$ variable. Those algorithms work as if the numbers are
treated modulo $\largeB^d$ (just like many modern computers work modulo
$2^{64}$). Note also that the procedure $\fun{Difference}$ expects 
normalized arguments but the $\fun{Sum}$ procedure works even if the
given arguments are not normalized and returns a
normalized number.

\begin{minipage}{\almosttextwidth}
  \begin{algorithm}[H]
    \caption{Computation of sum and difference}
    \begin{minipage}[t]{0.50\textwidth}
      \begin{algorithmic}[1]
        \Procedure{LinPreprocSumDiff}{\mbox{}}
        
\For{$\var{x} \From  0 \To  2\times \var{\largeB}-1$}
\State $\arr{DIFF}[\var{x}][\var{x}] \gets 0$
\For{$\var{y} \From  \var{x}-1 \Downto 0$}
\State $\arr{DIFF}[\var{x}][\var{y}] \gets \arr{DIFF}[\var{x}][\var{y}+1]+1 $
\EndFor
\EndFor
\EndProcedure
      \end{algorithmic}
      \vspace{1em}
      \begin{algorithmic}[1]
\Procedure{Sum}{$\arr{a},\arr{b}$}
\State $\arr{res} \gets $ array of size $d$
\For{$\var{i} \From 0 \To \var{d}-1$}
 \State $\var{res}[\var{i}] \gets \var{a}[\var{i}] +  \var{b}[\var{i}] $
 \EndFor
\State \Return \Call{$\fun{Normalize}$}{$\var{res}$}
  \EndProcedure
      \end{algorithmic}
    \end{minipage}
    \begin{minipage}[t]{0.480\textwidth}
      \begin{algorithmic}[1]
\Procedure{Difference}{$\arr{a},\arr{b}$}
\State{$\var{carry} \gets 0$}
\State $\arr{res} \gets$ array of size $d$
\For{$\var{i} \From 0 \To \var{d}-1$}
\State $\var{c} \gets \var{b}[\var{i}] + \var{carry}$
\If{$\arr{LowerEQ}[\var{c}][\var{a}[\var{i}]]$}
\State $\var{res}[\var{i}] \gets \arr{DIFF}[\var{a}[\var{i}]][\var{c}]$
\State $\var{carry} \gets 0$
\Else
\State $\var{res}[\var{i}] \gets \arr{DIFF}[\var{a}[\var{i}]+\largeB][\var{c}]$
\State $\var{carry} \gets 1$
\EndIf
\EndFor
\State \Return $\arr{res}$
  \EndProcedure
      \end{algorithmic}
    \end{minipage}
  \end{algorithm}
\end{minipage}

\begin{proof}[Proof of correctness]
  The sum is done component by component. The result is not
  necessarily a normalized integer (even if the arguments $\var{a},\var{b}$ are normalized)
  but we normalize it afterward: finally, the integer represented by $\var{res}$ is the sum of the integers 
  $\mathtt{val}(\var{a})$ and $\mathtt{val}(\var{b})$ represented by $\var{a}$ and $\var{b}$ if it is less than $B^d$ (more generally, it is $\mathtt{val}(\var{a})+\mathtt{val}(\var{b}) \mod B^d$). 
  
  For the difference, the idea is to
  maintain a carry along the algorithm. When the carry is set to one,
  it means that we need to subtract it from the next ``digit''. Notice
  that storing a carry of $1$ is enough as $\arr{b}[i]<\largeB$ and
  $\var{carry}\leq 1$ and therefore $\arr{b}[i]+\var{carry} \leq \largeB \leq
  \largeB+\arr{a}[\var{i}]$. 
  Notice also that $\arr{a}[\var{i}]<\largeB$ and thus $\arr{a}[\var{i}]+\largeB < 2\largeB$, which justifies that the array
  $\arr{DIFF}$ only needs to be initialized up to $2\largeB-1$.
Finally, note that the $\fun{Difference}$ procedure uses not only the pre-computed array $\arr{DIFF}$ but also the array $\arr{LowerEQ}$ pre-computed in Algorithm~7 above.
\end{proof}

\paragraph{Multiplication:}
The recipe for getting multiplication is, unsurprisingly, the same as
before: we pre-compute an array $\arr{MULT}[0..\largeB-1][0..\largeB-1]$ 
and use it with the naive multiplication algorithm.

\begin{minipage}{\almosttextwidth}
  \begin{algorithm}[H]
    \caption{Computation of multiplication}
    \begin{minipage}[t]{0.50\textwidth}
      \vspace{1.5em}
      \begin{algorithmic}[1]
\Procedure{LinPreprocMult}{\mbox{}}
\For{$\var{x} \From  0 \To \var{\largeB}-1$}
\State $\arr{MULT}[\var{x}][0] \gets 0$
\For{$\var{y} \From  0 \To \var{\largeB}-2$}
\State $\arr{MULT}[\var{x}][\var{y}+1] \gets \arr{MULT}[\var{x}][\var{y}]+\var{x}$
\EndFor
\EndFor
\EndProcedure
      \end{algorithmic}
    \end{minipage}
    \begin{minipage}[t]{0.490\textwidth}
      \begin{algorithmic}[1]
\Procedure{Multiply}{$\arr{a},\arr{b}$}
\State $\arr{res} \gets $ array of size $d$
\For{$\var{i} \From 0 \To \var{d}-1$}
\State $\arr{res}[\var{i}] \gets 0$
\EndFor
\For{$\var{i} \From 0 \To \var{d}-1$}
\For{$\var{j} \From 0 \To \var{d}-1-\var{i}$}
\State $\arr{res}[\var{i}+\var{j}] \gets \arr{res}[\var{i}+\var{j}] +$
\State \hspace{6em} 
$\arr{MULT}[\arr{a}[\var{i}]][\arr{b}[\var{j}]]$
\EndFor
\EndFor
\State \Return \Call{$\fun{Normalize}$}{$\var{res}$}
\EndProcedure
      \end{algorithmic}
    \end{minipage}
  \end{algorithm}
\end{minipage}

\begin{proof}[Proof of correctness]
Here $\arr{MULT}$ stores a result that might be larger than $\largeB$, which
means that in the multiplication $\arr{res}$ will not be normalized
but it will get normalized later. (Each cell of~$\arr{res}$ might contain a value less than or equal to
$d\times(\largeB-1)^2 < dN$. If that is considered too much we could also
normalize after each loop but we don't do it here for the sake of
simplicity.)
Finally, the integer represented by $\var{res}$ is the product of the integers 
$\mathtt{val}(\var{a})$ and $\mathtt{val}(\var{b})$ represented by $\var{a}$ and $\var{b}$ if it is less than $B^d$ and, more generally, it is $(\mathtt{val}(\var{a})\times \mathtt{val}(\var{b})) \modop B^d$.
\end{proof}

\paragraph{How does the complexity of the sum, the difference, and the product depend on the size of the operands?}
Clearly, all the algorithms presented above for addition, subtraction
and product run in constant time. This is due to the fact that each
instruction involves a fixed number of registers and that it is
executed a number of times bounded by the constant $d$. Of course,
more precisely, the execution time of our algorithms depends on the
number of digits (= number of registers) in the base $\largeB=\lceil
N^{1/2}\rceil$ of their operands, which is at most twice their number
of digits in the base $N$.  It is a simple matter to observe that the
above addition and subtraction algorithms are executed in time $O(d)$
and that the multiplication algorithm of two numbers of $k$ and $m$
digits, respectively, runs in time $O(k(k+m))$, which is $O(d^2)$, for
$k+m=d$. Note that if $d$ was large, one could fall back to any fast
multiplication algorithm such as the Fast Fourier Transform.

\begin{remark}\label{rem:complexitySumDiff}
Note that our $\fun{Sum}$ (resp.\ $\fun{Diff}$) algorithm on ``polynomial'  integers $a,b< N^d$ (or $a,b< \largeB^d$, for $\largeB=\lceil N^{1/2}\rceil$) working in time $O(d)$, can be seen as an addition (resp. subtraction) algorithm on integers $a,b<N^\ell$ working in time $O(\ell)$, for any non-constant integer~$\ell$. 

In particular, suppose that $a,b$ are arbitrary integers written in binary with at most $n$ bits, i.e.\ $a,b<2^n$; 
then, splitting their binary representations into subwords of length $\lc (\log_2 n)/2 \rc$ (the last subword can be shorter), $a,b$ can be represented in base $\beta \coloneqq 2^{\lc (\log_2 n)/2 \rc}$ by the arrays $\arr{a}[0..N-1]$ and $\arr{b}[0..N-1]$ of $N$ digits $\arr{a}[i],\arr{b}[i]$, where 
$N \coloneqq \lc n/ \lc (\log_2 n)/2 \rc \rc$, with $\arr{a}[i],\arr{b}[i]<\beta<N$ (for~$n$ large enough).
Obviously, our $\fun{Sum}$ (resp.\ $\fun{Diff}$) algorithm applied to the arrays $\arr{a}[0..N-1]$ and $\arr{b}[0..N-1]$, with $N$ instead of $d$, computes their sum array $\arr{s}[0..N]$ (resp. difference array $\arr{s}[0..N-1]$) in time $O(N)=O(n/ \log n)$ under \emph{unit cost criterion}. Note that this is $O(n)$ time (linear time) under \emph{logarithmic cost criterion}.
\end{remark}

\section{Executing division in constant time}\label{section:division}

In this section, we present one of the most significant and surprising
result of this paper. Although division is obviously the most
difficult of the four standard operations, we will show in this
section that it can also be computed in constant time -- again after
linear-time preprocessing.

The importance of this result is underlined by its consequences:
\begin{itemize}
\item the following sections will show that a number of other
  arithmetic operations on ``polynomial" integers -- square root,
  general $c$th root, exponential, logarithm -- can also be computed
  in constant time, by using the four operations, including,
  essentially, division;
\item like the other arithmetic operations, the division can be used
  by some ``dynamic'' problems: as an example, the
  paper~\cite{BaganDGO08} uses a particular case of division to
  compute in constant time (with linear-time preprocessing) the $j$th
  solution of a first-order query on a bounded-degree structure.
\end{itemize}

\bigskip 
The novelty of our division algorithm is a very subtle case separation
which uses a form of approximation and recursive procedures whose
number of calls is bounded by a constant.

It will be convenient to represent the operands of a division sometimes in the base \linebreak
$\smallB{}\coloneqq \lc N^{1/6} \rc$, 
sometimes in the base $\largeB \coloneqq \smallB{}^3$. (Note that the conversion of an integer from base $\smallB{}$ to base $\largeB$, and vice versa, is immediate.) The bounds $N^{1/2}\leq
\largeB= O(N^{1/2})$ will be useful.

We say that an integer $x$ is ``small" if $x<\largeB$, i.e., if it has
only one digit when represented in the base $\largeB$. 
First, let us check that this division by a ``small'' integer can be done in
constant time by a variant of the standard division algorithm.

\subsection{Dividing a ``polynomial" integer by a ``small'' integer}\label{subsec:divbysmallint}
Here, we want to divide a ``polynomial" integer $a\coloneqq
a_{d-1}\largeB^{d-1}+\cdots+a_1\largeB+a_0$, represented in the base
$\largeB$, by an integer $b<\largeB$. For that we will use Horner's
method. Let us define the sequence $v_{d},\ldots,v_1,v_0$ defined
inductively by $v_{d}\coloneqq 0$ and $v_i\coloneqq \largeB \times v_{i+1}+a_i$,
for $0\leq i < d$. Note that $v_0=a$.

For the $v_i$, we can define $q_i$ and $r_i$, the quotient and the
remainder of dividing $v_i$ by $b$, for $0\leq i \leq d$. Note that
$v_0=a$, therefore $q_0$ is the number we are looking for.  Since
$v_i=\largeB \times v_{i+1}+~a_i$ and $v_{i+1}=q_{i+1} \times
b+r_{i+1}$ we deduce 
\[
v_i=\largeB \times q_{i+1} \times b+\largeB
\times r_{i+1}+a_i
\]
Let us introduce the respective quotients $q',q''$ and remainders $r',r''<b$ of the division of $a_i$ and 
$\largeB \times r_{i+1}$ by $b$: we have
$a_i = q' \times b + r'$ and $\largeB \times r_{i+1} = q'' \times b + r''$ and therefore 
\[
v_i = (q_{i+1}\times \largeB + q' + q'') \times b + r' + r''
\]
 and thus, depending on whether $r'+r'' < b $, we have 
\begin{itemize}
\item $q_i=q_{i+1}\times B + q'+q''$ and $r_i = r'+r''$ in case $r'+r''<b$, 
which gives the equalities (1,2)
\begin{enumerate}
\item $q_i=q_{i+1}\times B + (a_i \divop b) + (r_{i+1}\times B) \divop b$
\item $r_i=(a_i \modop b) + (r_{i+1}\times B) \modop b$, 
\end{enumerate}
and, in case $b \leq r'+r'' < 2b$,
\item $q_i= q_{i+1}\times B + q'' + q' + 1$ and $r_i=r'+r''-b$.
\end{itemize}


\noindent
The following algorithm uses four arrays $\arr{D}[0..B][1..B]$, $\arr{R}[0..B][1..B]$, 
$\arr{DM}[0..B][1..B]$, and $\arr{RM}[0..B][1..B]$, defined by 
$\arr{D}[a][b] \coloneqq \lfloor a/b \rfloor$, $\arr{R}[a][b] \coloneqq a \modop b$,
$\arr{DM}[a][b] \coloneqq \lfloor (a\times \largeB)/b \rfloor$,
and $\arr{RM}[a][b] \coloneqq (a\times \largeB) \modop b$.

\begin{minipage}{\almosttextwidth}
  \begin{algorithm}[H]
    \caption{Computation of division $\lf \var{a}/\var{b} \rf$ by small integer 
    ($\var{a}< \var{\largeB{}}^d$ and $0<\var{b}< \var{\largeB{}}$)}
    \begin{minipage}[t]{0.55\textwidth}
      \begin{algorithmic}[1]
\Procedure{LinPreprocDivBySmall}{\mbox{}}
\For{$\var{b} \From  1 \To \var{\largeB{}}$}
\State $\arr{D}[0][\var{b}] \gets 0$
\State  $\arr{R}[0][\var{b}] \gets 0$
\For{$\var{a} \From  1 \To \var{\largeB}$}
   \State $\arr{D}[\var{a}][\var{b}] \gets \arr{D}[\var{a}-1][\var{b}]$
   \State $\arr{R}[\var{a}][\var{b}] \gets \arr{R}[\var{a}-1][\var{b}]+1$
   \If{$\arr{R}[\var{a}][\var{b}]=\var{b}$}
   \State $\arr{D}[\var{a}][\var{b}] \gets \arr{D}[\var{a}][\var{b}]+1$
    \State $\arr{R}[\var{a}][\var{b}] \gets 0$ 
   \EndIf
\EndFor
\State $\arr{DM}[0][\var{b}] \gets 0$
\State $\arr{RM}[0][\var{b}] \gets 0$
\For{$\var{a} \From 1 \To \var{\largeB}$}
   \State $\arr{DM}[\var{a}][\var{b}] \gets \arr{DM}[\var{a}-1][\var{b}]+\arr{D}[\var{\largeB}][\var{b}]$
   \State $\arr{RM}[\var{a}][\var{b}] \gets \arr{RM}[\var{a}-1][\var{b}]+\arr{R}[\var{\largeB}][\var{b}]$
   \If {$\arr{RM}[\var{a}][\var{b}] \geq \var{b}$}
       \State $\arr{DM}[\var{a}][\var{b}] \gets \arr{DM}[\var{a}][\var{b}]+1$
       \State $\arr{RM}[\var{a}][\var{b}] \gets \arr{RM}[\var{a}][\var{b}]-\var{b}$
\EndIf
\EndFor
\EndFor
\EndProcedure
      \end{algorithmic}
    \end{minipage}
    \begin{minipage}[t]{0.4400\textwidth}
      \vspace{4em}
      \begin{algorithmic}[1]
\Procedure{DivBySmall}{$\arr{a},\var{b}$}
\State{$\var{q} \gets 0$}
\State{$\var{r}\gets 0$}
\For{$\var{i} \From \var{d-1} \Downto 0$}
\State $\var{q} \gets \var{q}\times \var{\largeB}+ \arr{D}[\arr{a}[\var{i}]][\var{b}] + \arr{DM}[\var{r}][\var{b}]$
\State $\var{r} \gets \arr{R}[\arr{a}[\var{i}]][\var{b}] + \arr{RM}[\var{r}][\var{b}]$
\If{$\var{r} \geq \var{b}$}
\State $\var{q} \gets \var{q} + 1$
\State $\var{r} \gets \var{r} - \var{b}$
\EndIf
\EndFor
\State \Return $\var{q}$
\EndProcedure
      \end{algorithmic}
 \end{minipage}
 \end{algorithm}
\end{minipage}

\bigskip \noindent
\emph{Justification:} Clearly, the $\fun{LinPreproc}$ procedure correctly computes the arrays 
$\arr{D}, \arr{R}$, $\arr{DM}$ and $\arr{RM}$ in time $O(B^2)=O(N)$.
The  $\fun{DivBySmall}$ procedure initializes the pair $(q,r)$ to $(0,0)=(q_d,r_d)$. 
Then, for each $i=d-1,\ldots, 1,0$ of the for loop, it transforms $(q,r)=(q_{i+1},r_{i+1})$ into 
$(q,r)=(q_{i},r_i)$: indeed, 
lines 5 and 6 implement exactly the equalities of the above items (1) and (2), respectively, and lines 7-9 implement the case distinction according to whether
$r$ ($=r'+r'')$ is less than $b$ or not.
Therefore, the return value of $q$ on line 10 is $q_0$, as expected.
Obviously, $\fun{DivBySmall}$ runs in constant time.

\subsection{Two fundamental ``recursive" lemmas}
Our division algorithm for the general case is based on the following two lemmas, which justify recursive processing.

Item 1 of Lemma~\ref{lemma:approx} below expresses that if the divisor $b$ is large enough then the quotient \linebreak
$\lf \lf a/\smallB{} \rf / \lc b/\smallB{} \rc \rf$ is an approximation within 1 of the desired quotient $\lf a/b \rf$ while item 2 justifies that this computation can be applied iteratively until the divisor is smaller than $\smallB{}^3$, i.e.\ becomes a ``small'' integer.

\begin{lemma}[first ``recursive" lemma]\label{lemma:approx}
Assume  $N>1$ so that $\smallB{}\coloneqq \lc N^{1/6}\rc$ is at least 2.
Let $a,b$ be integers such that $b \geq \smallB{}^3$ and $b>a/\smallB{}$. 
Let $q$ be the quotient $\lf a'/b' \rf$ where $a' \coloneqq \lf a/\smallB{} \rf$ and $b' \coloneqq \lc b/\smallB{} \rc$. 
\begin{enumerate}
\item We have the following lower and upper bounds: 
$q\leq a/b <q+2$. In other words, we have either $\lf a/b \rf=q$ or $\lf a/b \rf=q+1$.
\item Moreover, we still have $b'>a'/\smallB{}$. 
\end{enumerate}

\end{lemma}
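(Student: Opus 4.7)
The plan is to use the defining inequalities for floor and ceiling, namely $\smallB a' \leq a < \smallB(a'+1)$ and $b \leq \smallB b' < b + \smallB$, together with the hypothesis $b > a/\smallB$ which is what forces $q$ to be small enough for the rounding errors to be absorbed.

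For item 1, the lower bound $q \leq a/b$ is immediate from monotonicity: $q \leq a'/b' \leq (a/\smallB)/(b/\smallB) = a/b$, using $a' \leq a/\smallB$ and $b' \geq b/\smallB$. For the upper bound, I would start from $q = \lf a'/b' \rf$, which gives $a' < (q+1)b'$ and hence, by integrality, $a'+1 \leq (q+1)b'$. Multiplying by $\smallB$ and using $a < \smallB(a'+1)$ yields $a < \smallB(q+1)b'$. Then I would convert $\smallB b'$ back to $b$ at the cost of an additive error: $\smallB b' < b + \smallB$, producing
\[
a < (q+1)b + (q+1)\smallB.
\]
To conclude $a < (q+2)b$, it suffices to have $(q+1)\smallB \leq b$. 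This is where the two hypotheses combine: the assumption $b > a/\smallB$ forces $a/b < \smallB$, so $q \leq a/b < \smallB$ and $q+1 \leq \smallB$; together with $b \geq \smallB^3$, this gives $(q+1)\smallB \leq \smallB^2 \leq b$. Thus $a < (q+2)b$, i.e.\ $a/b < q+2$, and since $\lf a/b \rf$ is an integer in $[q, q+2)$, it equals $q$ or $q+1$.

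For item 2, the plan is a one-line chain: divide the hypothesis $b > a/\smallB$ by $\smallB$ to obtain $b/\smallB > a/\smallB^2$; bound $a/\smallB^2 \geq a'/\smallB$ from $a' \leq a/\smallB$; and use $b' \geq b/\smallB$. Concatenating, $b' \geq b/\smallB > a/\smallB^2 \geq a'/\smallB$, which gives the required strict inequality $b' > a'/\smallB$.

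The only delicate step is the upper bound in item 1: one has to see simultaneously that the hypothesis $b > a/\smallB$ plays two distinct roles — it forces both the smallness of $q$ (so that $q+1 \leq \smallB$) and, via item 2, the persistence of the same hypothesis under the scaling $(a,b) \mapsto (a',b')$, which will be what enables iterating this reduction in the next lemma. All remaining manipulations are elementary integer/real inequalities involving floors and ceilings.
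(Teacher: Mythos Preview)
Your proof is correct and follows essentially the same strategy as the paper: bound the rounding errors and use the hypotheses $b>a/\smallB$ (to force $q+1\le\smallB$) and $b\ge\smallB^3$ (to absorb the error term into~$b$). Your execution is in fact slightly cleaner than the paper's: instead of tracking three real error terms $\epsilon_1,\epsilon_2,\epsilon_3$ and arriving at $a-qb<b+(q+2)\smallB$, you use the integrality step $a'+1\le(q+1)b'$ directly to get the marginally sharper $a<(q+1)b+(q+1)\smallB$, but both routes conclude identically.
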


\begin{proof}
We have
$q=\lfloor\lfloor a/\smallB{} \rfloor / \lceil b/\smallB{} \rceil \rfloor\le \lfloor a/\smallB{} \rfloor / \lceil b/\smallB{} \rceil \le (a/\smallB{})/(b/\smallB{})=a/b$, hence $q\le a/b $. 
There remains to show $a/b < q+2$.
We give a name to the difference between each rounded expression that appears in the expression of $q$ and the value it approximates:
\begin{itemize}
\item $\epsilon_1\coloneqq a/\smallB{} - \lfloor a/\smallB{} \rfloor$; $\epsilon_2 \coloneqq \lceil b/\smallB{} \rceil - b/\smallB{}$;
$\epsilon_3 \coloneqq \lf a/\smallB{} \rf / \lc b/\smallB{} \rc  - \lf \lf a/\smallB{} \rf / \lc b/\smallB{} \rc \rf$.
\end{itemize}
Of course, we have $0\leq \epsilon_i < 1$ for each $\epsilon_i$ and, by definition of $q$ and the $\epsilon_i$'s, 
\[
q= (a/\smallB{} -\epsilon_1)/(b/\smallB{}+\epsilon_2)-\epsilon_3.
\]
It comes $(q+\epsilon_3)(b/\smallB{}+\epsilon_2)=a/\smallB{}-\epsilon_1$, and then, by multiplying by $\smallB{}$,\linebreak
$qb+\epsilon_3 b+q\epsilon_2\smallB{}+\epsilon_3\epsilon_2\smallB{}=a-\epsilon_1\smallB{}$. Hence, we deduce
\[
a-qb=\epsilon_3 b+q\epsilon_2\smallB{}+\epsilon_3\epsilon_2\smallB{}+\epsilon_1\smallB{}.
\]
Since each $\epsilon_i$ is smaller than 1, we get the inequality
\begin{eqnarray}\label{eqn: a-qb<...}
a-qb<b +(q+2)\smallB{}
\end{eqnarray}
From the hypothesis $b> a/\smallB{}$ which can be written $a/b < \smallB{}$ and from the inequality
$q\le a/b$ we deduce $q < \smallB{}$, and then $q\le \smallB{}-1$, 
hence $(q+2)\smallB{}\le \smallB{}^2+\smallB{}$ which is less than $\smallB{}^3$ by the hypothesis $\smallB{}\geq 2$. 
From the hypothesis $b\geq \smallB{}^3$, one deduces $(q+2)\smallB{}<b$ 
and then by~(\ref{eqn: a-qb<...}) $a-qb<2b$, which can be rewritten $a/b<q+2$.
We have therefore proved item 1 of the lemma. 

Let us now prove item 2. From the hypothesis $b>a/\smallB{}$ we deduce 
$b'=\lc b/\smallB{} \rc \geq b/\smallB{} > (a/\smallB{})/\smallB{} \geq \lf a/\smallB{} \rf /\smallB{}=a'/\smallB{}$ and finally $b' > a'/\smallB{}$.
\end{proof}

It remains to deal with the case where the condition $b> a/\smallB{}$ of Lemma~\ref{lemma:approx} is false. The idea is to return to the conditions of application of Lemma~\ref{lemma:approx} and, more precisely, to the computation of a quotient $\lfloor a'/b \rfloor$ for an integer $a'$ such that $b>a'/\smallB{}$. This is expressed by item 2 of the following lemma.

\begin{lemma}[second ``recursive" lemma]\label{lemma:a/Kb}
Assume $0<b \le a/\smallB{}$.
Let $q\coloneqq \smallB{}\lfloor a/(\smallB{}b) \rfloor$.
\begin{enumerate}
\item Obviously, we have $q\le a/b$ and $\lfloor a/b \rfloor= q +\lfloor (a-qb)/b \rfloor$.
\item We have $a-qb<\smallB{}b$ and therefore the computation of $\lfloor a/b\rfloor$ amounts to the computation of  
$\lfloor a'/b \rfloor$ for $a'=a-qb<\smallB{}b$, that means $b>a'/\smallB{}$.
\end{enumerate}
\end{lemma}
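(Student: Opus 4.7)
The plan is to dispatch both items by direct computation, since the statement is essentially a bookkeeping observation about Euclidean division.

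For item~1, the inequality $q \le a/b$ comes from
\[
q = \smallB{}\lfloor a/(\smallB{}b)\rfloor \;\le\; \smallB{}\cdot \frac{a}{\smallB{}b} \;=\; \frac{a}{b}.
\]
The identity $\lfloor a/b\rfloor = q + \lfloor(a-qb)/b\rfloor$ is then immediate from the elementary fact that integers can be pulled out of a floor: writing $a/b = q + (a-qb)/b$ and noting that $q \in \mathbb{Z}$, one gets $\lfloor a/b\rfloor = q + \lfloor(a-qb)/b\rfloor$.

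For item~2, I would perform Euclidean division of $a$ by $\smallB{}b$: write $a = (\smallB{}b)\lfloor a/(\smallB{}b)\rfloor + r$ with $0 \le r < \smallB{}b$. Since $qb = \smallB{}b\lfloor a/(\smallB{}b)\rfloor$ by the very definition of $q$, we obtain $a - qb = r < \smallB{}b$. Dividing this inequality by $\smallB{}$ yields $a'/\smallB{} < b$ where $a' \coloneqq a - qb$, which is exactly the precondition $b > a'/\smallB{}$ required to invoke Lemma~\ref{lemma:approx} on the pair $(a',b)$ in a subsequent recursive step.

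There is no real obstacle: both parts reduce to unfolding the definition of $q$ as $\smallB{}$ times a floor and noting that $qb$ therefore equals $\smallB{}b$ times that same floor, which is precisely the largest multiple of $\smallB{}b$ below $a$. The point of the lemma is not the proof itself but its algorithmic role — it shows that the ``bad'' case of Lemma~\ref{lemma:approx} (where $b \le a/\smallB{}$) can be reduced in one shot to the ``good'' case on a smaller numerator $a' < \smallB{}b$, so that the two lemmas together yield a terminating recursion.
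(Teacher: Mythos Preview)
Your proof is correct and matches the paper's argument in substance. The paper only proves item~2 (labeling item~1 ``obviously''), and does so by the inequality $\lfloor a/(\smallB{}b)\rfloor > a/(\smallB{}b) - 1$, then multiplying through by $\smallB{}$ and by $b$; your Euclidean-division framing (recognizing $a-qb$ as the remainder of $a$ modulo $\smallB{}b$) is the same fact stated a touch more directly.
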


\begin{proof}
It suffices to prove the inequality $a-qb<\smallB{}b$ of item~2. We have
$\lfloor a/(\smallB{}b) \rfloor>  a/(\smallB{}b)-1$, then $q=\smallB{}\lfloor a/(\smallB{}b) \rfloor>a/b-\smallB{}$, 
and finally $qb>a-\smallB{}b$, which gives the expected inequality.
\end{proof}

\subsection{Recursive computation of division in constant time}

The ``recursive'' lemmas 
justify recursive procedures for computing the quotient $\lf a/b \rf$, for ``polynomial" integers $a,b<\smallB{}^d$, for a constant integer $d$.

Here, we assume that the operands $a,b$ are represented in base $\smallB{}$. 
(Recall: $\smallB{}\coloneqq \lc N^{1/6} \rc$ and $\largeB \coloneqq \smallB{}^3$.)
For example, $a$ is represented by the array
$\arr{a}[0..\var{d}-1]$ of its digits \linebreak
$\arr{a}[\var{i}]\coloneqq a_i < \smallB{}$ with
$a=a_{d-1}\smallB{}^{d-1}+\cdots+a_1\smallB{}+a_0$, which is also
written $\overline{a_{d-1}\ldots a_1 a_0}$. Note that the quotients
$\lf a/\smallB{} \rf$ and $\lc b/\smallB{} \rc$ involved in
Lemma~\ref{lemma:approx} are computed in a straightforward way: we
have $\lf a/\smallB{}
\rf=a_{d-1}\smallB{}^{d-2}+\cdots+a_2\smallB{}+a_1$, that means the
$d$-digit representation of $a/\smallB{}$ is the ``right shift"
$\overline{0a_{d-1}\ldots a_2 a_1}$; moreover, the following identity
can be easily verified:
\begin{eqnarray}\label{eqn: upRoundDiv}
\lc b/\smallB{} \rc = \lf (b-1)/\smallB{} \rf +1
\end{eqnarray}

\paragraph{Dividing by an integer $b > a/\smallB{}$.}
Lemma~\ref{lemma:approx} completed by identity~(\ref{eqn: upRoundDiv})
proves that the following recursive procedure correctly computes
$\lf a/b \rf$ when 
$a$ and $b$ are ``close'', which means $b> a/\smallB{}$.

\begin{minipage}{\almosttextwidth}
  \begin{algorithm}[H]
    \caption{Computation of division for 
    ``close'' operands}
      \begin{algorithmic}[1]
\Procedure{divClose}{$\var{a},\var{b}$}
\Comment{$a/\smallB{}<b<K^d$ and $a<K^d$}
  \If{$\var{b} < \smallB{}^3$} \Comment{recall $\largeB=\smallB{}^3$}
    \State \Return \Call{DivBySmall}{$\var{a},\var{b}$}
  \Else
    \Comment{here, $b\geq \smallB{}^3$ and $b> a/\smallB{}$ : apply Lemma \ref{lemma:approx}}
 \State $\var{q} \gets \Call{divClose}{\Call{DivBySmall}{\var{a},\smallB},\Call{DivBySmall}{b-1,\smallB}+1}$
      \If{$\var{a} < (\var{q}+1)\times \var{b}$} \label{algDiv:line1}
      \State \Return $\var{q}$
      \Else
      \State \Return $\var{q}+1$
      \EndIf
 \EndIf
\EndProcedure
      \end{algorithmic}
  \end{algorithm}
\end{minipage}

\paragraph{The general division algorithm.}
Lemma~\ref{lemma:a/Kb} justifies that the following recursive
procedure which uses the previous procedure $\fun{divClose}$
correctly computes $\lf a/b \rf$ for $b>0$ in the general case.

\begin{minipage}{\almosttextwidth}
  \begin{algorithm}[H]
    \caption{Computation of division}
      \begin{algorithmic}[1]
\Procedure{divide}{$\var{a},\var{b}$} \Comment{$a,b<K^d$ and $b>0$}
  \If{$\var{b} \times \smallB > \var{a}$}
    \State \Return \Call{DivClose}{$\var{a},\var{b}$}
  \Else
    \Comment{here, $0 < b \leq a / \smallB{}$ : apply Lemma \ref{lemma:a/Kb}}
    \State $\var{q} \gets \smallB \times \Call{divide}{\var{a},\smallB \times \var{b}}$
     \label{algDiv:line2}
    \State \Return $\var{q}+\Call{DivClose}{\var{a}-\var{q}\times \var{b},\var{b}}$
     \label{algDiv:line3}
 \EndIf
\EndProcedure
      \end{algorithmic}
  \end{algorithm}
\end{minipage}

\begin{lemma}\label{lemma: division constant time} 
The time complexities of the procedures $\fun{divClose}$ and
$\fun{divide}$ are respectively $O(d^3)$ and $O(d^4)$.
\end{lemma}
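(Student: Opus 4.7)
The plan is to analyze each procedure separately by (a) bounding its recursion depth and (b) bounding the non-recursive work performed at each level, using the complexity estimates for the auxiliary subroutines established earlier in the paper, namely $\fun{Sum}$ and $\fun{Diff}$ in time $O(d)$, $\fun{Multiply}$ in time $O(d^2)$, and $\fun{DivBySmall}$ in time $O(d)$ (its inner \texttt{for} loop executes $d$ times with constant work per iteration). Throughout, all manipulated integers are bounded by $\smallB^d$, hence represented on arrays of $O(d)$ digits in base $\smallB$.

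For $\fun{divClose}$, the key observation is that each recursive call replaces $b$ by $\lceil b/\smallB \rceil$, which decreases the number of base-$\smallB$ digits of $b$ by exactly one. Since $b<\smallB^d$ initially and the recursion stops as soon as $b<\smallB^3$, the recursion depth is at most $d-3=O(d)$. At each level, the non-recursive cost consists of two calls to $\fun{DivBySmall}$ of cost $O(d)$ to build the new arguments, the check ``$a<(\var{q}+1)\times \var{b}$'' on line~\ref{algDiv:line1} which requires one $\fun{Multiply}$ of cost $O(d^2)$ and one $\fun{LowerEqual}$ of cost $O(d)$, plus a constant number of $\fun{Sum}$ operations each of cost $O(d)$. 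Thus the work per level is $O(d^2)$, yielding a total cost $O(d)\cdot O(d^2)=O(d^3)$.

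For $\fun{divide}$, each recursive call on line~\ref{algDiv:line2} replaces $b$ by $\smallB\cdot b$, increasing its number of base-$\smallB$ digits by one. The base case triggers as soon as $\smallB\cdot b>a$; in the worst case $b=1$ and $a$ has $d$ digits, so the recursion depth is again $O(d)$. At each level, the non-recursive work consists of: a $\fun{Multiply}$ by the single digit $\smallB$ for the test and for the argument $\smallB\times b$ (cost $O(d)$); on line~\ref{algDiv:line3}, a full $\fun{Multiply}$ of two $O(d)$-digit numbers to compute $\var{q}\times \var{b}$ (cost $O(d^2)$), followed by a $\fun{Diff}$ (cost $O(d)$) and a call to $\fun{DivClose}$ on operands of size $O(d)$. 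By the bound just proved, this last call costs $O(d^3)$, which dominates everything else at this level. Summing over the $O(d)$ recursive levels gives total cost $O(d)\cdot O(d^3)=O(d^4)$.

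The only mildly delicate point is to check that the sizes of all operands encountered in the recursion remain $O(d)$ digits, so that the stated complexities of $\fun{Multiply}$, $\fun{Sum}$ and $\fun{Diff}$ continue to apply. In $\fun{divClose}$ the operand sizes only shrink with depth, so this is immediate; in $\fun{divide}$ the second operand grows by one digit per level but the depth is bounded by $d$, so the arrays used at every internal call have size at most $2d=O(d)$. Once this bookkeeping is checked, the two telescoping bounds $O(d^3)$ and $O(d^4)$ follow cleanly from the per-level analyses above.
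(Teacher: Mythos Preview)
Your proof is correct and follows essentially the same approach as the paper: bound the recursion depth of each procedure by $O(d)$ (since $b$ loses or gains one base-$\smallB$ digit per call), identify the dominant per-level cost as the $O(d^2)$ multiplication in $\fun{divClose}$ and the $O(d^3)$ call to $\fun{divClose}$ in $\fun{divide}$, and multiply. One minor imprecision: the claim that $\lceil b/\smallB\rceil$ has \emph{exactly} one fewer base-$\smallB$ digit can fail (e.g.\ $b=\smallB^{d}-1$ gives $\lceil b/\smallB\rceil=\smallB^{d-1}$), but the magnitude bound $b<\smallB^{d}\Rightarrow \lceil b/\smallB\rceil\le \smallB^{d-1}$ still yields recursion depth at most $d-2=O(d)$, so the argument goes through unchanged.
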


\begin{proof}

The complexity of those procedures mainly depends on the number of recursive calls they make.

\medskip \noindent \emph{Complexity of} $\fun{divClose}$: If
$\smallB{}^3 \leq b < \smallB{}^d$ and $b>a/\smallB{}$ then the
execution of $\fun{divClose}(a,b)$ calls the same procedure
$\fun{divClose}$ with the second argument $b$ replaced by 
$\lc b/\smallB{} \rc \leq \smallB{}^{d-1}$, etc.  
The number of nested recursive calls until the second argument of $\fun{divClose}$ becomes less than $\smallB{}^3$ is therefore at most $d-2$. 

The part of the procedure $\fun{divClose}$ that takes the most time is
the line~6 
of its code which computes the product
${(\var{q}+1)\times\var{b}}$ in time $O(d^2)$. Since
$\fun{divClose}(a,b)$ executes line~6 
once for each recursive call and therefore less than $d$ times, its total time
complexity is $O(d^3)$.

\medskip \noindent \emph{Complexity of} $\fun{divide}$: The execution
of $\fun{divide}(a,b)$, for $a,b<\smallB{}^d$ and 
$b\leq a/\smallB{}<\smallB{}^{d-1}$, calls
$\fun{divide}(a,b\times\smallB{})$, which calls
$\fun{divide}(a,b\times\smallB{}^2)$, etc.  Due to the obvious inequality $b\times\smallB{}^d>a$,
the execution of $\fun{divide}(a,b)$ stops with the stack of recursive calls to the
procedure $\fun{divide}$ of height at most $d$.

The most time consuming part of the procedure $\fun{divide}$ 
is the line~\ref{algDiv:line2} of its code which is executed as many times as the number
of the nested recursive calls of the form \linebreak
\Call{divide}{$\var{a},\smallB{}\times \var{b}$} that it makes. This number is at most 
$d$. It is also the number of calls of the form 
\Call{divClose}{$a-q \times b,b$}, line~\ref{algDiv:line3}, each of which is executed in time $O(d^3)$.
Thus, the time of the procedure $\fun{divide}$ is $O(d^4)$.
\end{proof}

As a consequence of the previous results including Lemma~\ref{lemma:
  division constant time}, the following perhaps surprising theorem is
established.

\begin{theorem}\label{theorem: division constant time} 
Let $d\ge 1$ be a fixed integer.
There is a RAM with addition, such that, for any input integer $N$:
\begin{enumerate}
\item \emph{Pre-computation:} the RAM computes some tables in time
  $O(N)$;
\item \emph{Division:} using these pre-computed tables and reading any
  integers $a,b<N^d$ with $b>0$, the RAM computes in constant time
  $O(d^4)$ the quotient $\lfloor a/b \rfloor$ and the remainder $a
  \modop b$.
\end{enumerate}
\end{theorem}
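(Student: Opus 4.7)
The plan is to assemble the theorem as a packaging of all the machinery built up in Sections~\ref{sec:SumDiffProdBase} and~\ref{section:division}. The statement has two parts: the linear-time preprocessing and the constant-time division on inputs up to $N^d$; both follow directly once one verifies that bases and dimensions match up consistently.

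For part (1), I would perform sequentially all the pre-computation procedures that have been introduced so far: $\fun{LinPreprocSqrt}$ (yielding the constants $\largeB=\lc\sqrt{N}\rc$ and $\smallB=\lc N^{1/6}\rc$), $\fun{LinPreprocBaseConv}$ (for the tables $\arr{DIV}\largeB$, $\arr{MOD}\largeB$, $\arr{MULT}\largeB$, and the analogous tables in base $\smallB$), $\fun{LinPreprocComp}$, $\fun{LinPreprocSumDiff}$, $\fun{LinPreprocMult}$, and finally $\fun{LinPreprocDivBySmall}$. Each of these has been proven to run in time $O(N)$ and to use $O(N)$ registers; running them one after another still takes $O(N)$ time.

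For part (2), given $a,b<N^d$ with $b>0$, I first convert both to their base-$\smallB$ representations using $\fun{ToBase}\smallB$. Since $N^d\leq \smallB^{6d}$, the representation fits in $d':=6d$ digits, and the conversion takes $O(d')=O(d)$ constant time via the pre-computed tables. I then call $\fun{divide}(a,b)$, whose $O(d'^{\,4})=O(d^4)$ complexity was established in Lemma~\ref{lemma: division constant time}; its correctness follows from the two recursive Lemmas~\ref{lemma:approx} and~\ref{lemma:a/Kb} together with the correctness of $\fun{DivBySmall}$ for the base case. This yields $q=\lf a/b \rf$. The remainder $a\modop b$ is then computed as $a-q\cdot b$ using $\fun{Multiply}$ and $\fun{Difference}$, both of which run in constant time $O(d^2)$ on operands of at most $d'$ digits.

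I do not anticipate a substantial obstacle: the hard work -- the recursive division analysis and the table-based constant-time arithmetic -- has already been done. The only real care required is bookkeeping: checking that the number of digits $d'=6d$ in base $\smallB$ used by the recursive procedures agrees with the intended bound $N^d$, that all intermediate operands appearing in $\fun{divide}$ (in particular the values $q\times\var{b}$ and $\smallB\times\var{b}$ produced inside the recursion) remain within the sizes for which the pre-computed tables are valid, and that every primitive operation invoked (comparison, subtraction, multiplication, division by a small integer) has been shown to run in constant time after the $O(N)$ preprocessing. Once this verification is recorded, the theorem follows immediately.
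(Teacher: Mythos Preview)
Your proposal is correct and matches the paper's approach: the theorem is stated there simply ``as a consequence of the previous results including Lemma~\ref{lemma: division constant time}'', with no further argument, and your write-up just makes the packaging explicit (sequential linear-time preprocessing, conversion to base $\smallB$ on $6d$ digits, call to $\fun{divide}$, then $a-qb$ for the remainder). One tiny inaccuracy: $\fun{LinPreprocSqrt}$ only produces $\largeB=\lc\sqrt{N}\rc$; the constant $\smallB=\lc N^{1/6}\rc$ requires its own (equally straightforward) linear-time loop, so you should mention that separately rather than bundling it with the square-root preprocessing.
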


Using the fact that the usual four operations, addition, subtraction, product and, most importantly, division, are computable in constant time on a RAM with addition, we can now also design constant time algorithms for several other important operations on integers:  square or cubic root  (rounded), exponential function and rounded logarithm, etc.

\begin{convention}
For simplicity, we will henceforth write $x/y$ to denote the Euclidean quotient 
$\fun{divide}(x,y)=\lf x/y \rf$ in the algorithms.
\end{convention}

\section{Computing exponential and logarithm in constant time}\label{sec:expLog}

The next examples of arithmetic operations (on ``polynomial" operands)
that we will prove to be computable in constant time are the logarithm
function $(x,y)\mapsto \lf\log_x y\rf$ and the exponential function
$(x,y)\mapsto x^y$ provided the result $x^y$ is also ``polynomial".

\subsection{Computing exponential}

We want to compute $z=x^y$, for all integers $x,y<N^d$ such that the result $z$ is less than $N^d$, for a constant integer $d\ge 1$. In other words, we are going to compute the function 
$\mathtt{exp}_{N,d}$ from $ [0,N^d[^2$ to $[0,N^d[\cup \{\mathtt{OVERFLOW}\}$ defined by:
\begin{equation*}
\mathtt{exp}_{N,d}: (x,y)\mapsto
\begin{cases}
x^y & \mathtt{if}\; x^y<N^d\\
\mathtt{OVERFLOW} & \mathtt{otherwise}
\end{cases}
\end{equation*}

\medskip
Here, we will use the two integers  $\largeB\coloneqq\left\lceil N^{1/2}\right\rceil$ and $L_d\coloneqq \lceil \log_2(N^d) \rceil=\lceil d\log_2 N \rceil$ computable in time $O(N)$.

\paragraph{Pre-computation.} We are going to compute two arrays:
\begin{itemize}
\item the array $\arr{BOUND}[2..\largeB-1]$ such that
$\arr{BOUND}[x]\coloneqq\mathtt{max}\{y \mid x^y<N^d\}$ or, equivalently,
$\arr{BOUND}[x]\coloneqq\lceil \log_x(N^d)\rceil-1$ (note that $\arr{BOUND}[x]<L_d$);
\item the two-dimensional array $\arr{EXP}[2..\largeB-1][1..L_d-1]$ such that
$\arr{EXP}[x][y] \coloneqq x^y$ for all integers $x,y$ such that 
$2\leq x <\largeB=\left\lceil N^{1/2}\right\rceil$ and $1\leq y \leq \arr{BOUND}[x]$.
\end{itemize}
In other words, the arrays $\arr{BOUND}$ and $\arr{EXP}$ are defined so that the following equality is true:
\[
\{(x,y,z)\in [2,\largeB[\times [0,N^d[^2 \;\mid\; x^y=z\}=
\{(x,y,z) \mid x\in [2,\largeB[ \;\land\;1\leq y \leq \arr{BOUND}[x]\; \land\; \arr{EXP}[x][y]=z\}.
\]
The following code computes the arrays $\arr{BOUND}$ and $\arr{EXP}$.
\\
\begin{minipage}{0.2\textwidth}
  ~
\end{minipage}
\begin{minipage}{0.60\textwidth}
\begin{algorithm}[H]
  \caption{Pre-computation for \fun{exponential}}
  \begin{algorithmic}[1]
    \For{$\var{x} \From 2 \To \largeB-1$}
    \State $\var{y}\gets  0$
    \State $\var{z} \gets x$
    \While{$z<N^d$}
      \State $y\gets y+1$
      \State $\arr{EXP}[\var{x}][\var{y}]\gets z$
      \State $\var{z} \gets \var{z}\times \var{x}$    
    \EndWhile
    \State $\arr{BOUND}[\var{x}] \gets \var{y}$
    \EndFor
  \end{algorithmic}
\end{algorithm}
\end{minipage}
\\
\noindent
\emph{Justification:} One easily verifies inductively that each time the algorithm arrives at line~4, we have $x^y\times x = z$, so that the assignment at line~6 means 
$\arr{EXP}[\var{x}][\var{y}]\gets \var{x}^{\var{y}}$. 
This also explains why we have $x^y<N^d\leq x^{y+1}$ at line~8, which justifies the assignment $\arr{BOUND}[\var{x}] \gets \var{y}$.

\paragraph{Linear-time complexity of the pre-computation:}
Clearly, for each $x\in [2,\largeB[$, the body of the while loop is repeated $i$ times, where $i$ is the greatest integer $i\leq \log_x (N^d)$, or, equivalently, $i=\lfloor  d \log_x N \rfloor$, which is at most $\lfloor  d \log_2 N\rfloor \leq L_d$. It follows that the pre-computation runs in time and space $O(\largeB \times L_d)=O(N^{1/2}\times d  \log_2 N)=O(N)$.

\paragraph{Main procedure:}
The following remark justifies that the $\fun{exponential}$ procedure given here below correctly computes the function $\arr{exp}_{N,d}$.

\begin{remark} 
If we have $x^y<N^d$, $x\geq 2$ and $y\geq 2d$ then we also have $2\leq x < N^{1/2}$ 
and $2d\leq y <d \log_2 N \leq L_d$.
\end{remark} 

\begin{proof}
Note that $x^y<N^d$ and $y\geq 2d$ imply $x^{2d}\leq x^y<N^d$ and consequently $x<N^{1/2}$.
Also, note that $x^y<N^d$ and $x \geq 2$ imply $2^y\leq x^y<N^d$ and therefore $y<\log_2(N^d)$.
\end{proof}

  \begin{algorithm}[H]
 \caption{Computation of \fun{exponential}}
      \begin{algorithmic}[1]
\Procedure{exponential}{$\var{x},\var{y}$}
\If{$\var{x} = 1$ or $\var{y} = 0$}
  \State \Return 1
\EndIf
\If{$\var{x} = 0$}
\State \Return 0
\EndIf
\If{$\var{y} < 2d$} \Comment{$x \geq 2$}
  \State $\var{z} \gets 1$  
  \For{$\var{i} \From{1} \To{\var{y}}$}\label{line:expLoop}
    \State $\var{z} \gets \var{x} \times \var{z}$ \Comment{$z=x^i$}
    \If{$\var{z} \geq \var{N}^d$}
      \State \Return $\mathtt{OVERFLOW}$  \Comment{$x^y\geq x^i \geq \var{N}^d$}
    \EndIf
    \EndFor
    \State \Return \var{z} \Comment{$z=x^y$}
  \EndIf
  \If{$\var{x} < \var{\largeB}$ and $\var{y} \leq \arr{BOUND}[\var{x}]$}
    \State \Return $\arr{EXP}[\var{x}][\var{y}]$ 
    \Comment{the correct value by the definition of the array $\arr{EXP}$} 
  \Else \Comment{either $y\geq 2d$ and $x\geq \largeB$, or $y>\arr{BOUND}[x]$ for $x\geq 2$; therefore $x^y \geq N^d$}
    \State \Return $\mathtt{OVERFLOW}$
  \EndIf
\EndProcedure
      \end{algorithmic}
  \end{algorithm}

\paragraph{Constant complexity of the $\fun{exponential}$ procedure:}
Clearly, the most time consuming part is the for loop, lines 8-11. It makes at most $2d$ multiplications of integers $x,z<N^d$, each in time $O(d^2)$. Therefore, the procedure runs in time $O(d^3)$.

\subsection{Computing logarithm}

We want to compute $\lf \log_x y \rf$, for all integers $x\in[2,N^d[$ and $y\in[1,N^d[$, where $d\geq 1$ is a constant integer. 

\medskip
Here again, we use the base $\largeB\coloneqq\left\lceil N^{1/2}\right\rceil$.

\subsubsection*{Preliminary lemmas}

\begin{lemma}\label{lemma ineqa}
For any integer $x\ge 2$ and any real $a\ge 2$, we have  
$\log_x (a) -1 < \log_x \lfloor a \rfloor \leq \log_x (a)$.
\end{lemma}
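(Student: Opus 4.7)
The plan is to split the inequality into its two halves and handle each separately, as the upper bound is essentially by monotonicity and the lower bound reduces to an elementary estimate comparing $a$ to $x\lfloor a \rfloor$.

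First I would establish the upper bound $\log_x \lfloor a \rfloor \leq \log_x(a)$: since $x\geq 2$, the function $t \mapsto \log_x t$ is strictly increasing on $(0,\infty)$, and by definition $\lfloor a \rfloor \leq a$, so the inequality follows immediately.

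Next I would tackle the lower bound $\log_x(a)-1 < \log_x \lfloor a \rfloor$, which is equivalent to $\log_x\!\bigl(a/\lfloor a \rfloor\bigr) < 1$, i.e., $a < x\,\lfloor a\rfloor$. I would prove this last inequality by exploiting the two hypotheses $a \geq 2$ and $x \geq 2$. The hypothesis $a\geq 2$ gives $\lfloor a \rfloor \geq 2$; combining with $x \geq 2$ yields
\[
x\,\lfloor a \rfloor \;\geq\; 2\,\lfloor a \rfloor \;=\; \lfloor a \rfloor + \lfloor a \rfloor \;\geq\; \lfloor a \rfloor + 2 \;>\; \lfloor a \rfloor + 1 \;>\; a,
\]
where the last step uses the defining property $a < \lfloor a \rfloor + 1$ of the floor. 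Taking $\log_x$ of both sides of $a < x\,\lfloor a \rfloor$ gives $\log_x(a) < 1 + \log_x \lfloor a \rfloor$, which is the desired lower bound.

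There is no real obstacle here; the only subtlety worth checking is that the hypothesis $a\geq 2$ is genuinely used to guarantee $\lfloor a \rfloor \geq 2$ (and in particular $\lfloor a\rfloor \neq 0$ so the logarithm is defined). If one only assumed $a \geq 1$ the argument would fail when $1 \leq a < 2$, since then $\lfloor a \rfloor = 1$ and $\log_x \lfloor a \rfloor = 0$ would be too small relative to $\log_x a$ whenever $x = 2$ and $a$ is close to $2$. The stated hypothesis $a\geq 2$ is exactly what makes the two-line estimate above go through.
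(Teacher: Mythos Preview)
Your proof is correct and essentially the same approach as the paper's: both handle the upper bound by monotonicity and reduce the lower bound to the elementary inequality $a < x\lfloor a\rfloor$ (equivalently $a/x < \lfloor a\rfloor$), using $x\ge 2$ and $a\ge 2$. The paper routes this through $a/x \le a/2 \le a-1 < \lfloor a\rfloor$, while you go via $x\lfloor a\rfloor \ge 2\lfloor a\rfloor \ge \lfloor a\rfloor+2 > a$; these are minor rearrangements of the same idea.
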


\begin{proof}
We have $a/x \leq a/2 \leq a-1$ and therefore
$\log_x(a)-1= \log_x(a/x)\leq \log_x(a-1)<\log_x \lfloor a \rfloor \leq \log_x (a)$.
\end{proof}

Here again, our main procedure will be recursive. It is based on the following Lemma.

\begin{lemma}\label{lemma:rec}
For all integers $x\geq 2$, $\ell\geq 1$ and $y\geq 2x^{\ell}$, we have 
\begin{equation*}
\lfloor \log_x y\rfloor =
\begin{cases}
\lfloor \log_x \lfloor y/x^{\ell}\rfloor\rfloor +\ell & (1)\\
\mathtt{or} &\\
\lfloor \log_x \lfloor y/x^{\ell}\rfloor\rfloor +\ell +1& (2)
\end{cases}
\end{equation*}
\end{lemma}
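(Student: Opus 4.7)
The plan is to use the identity $\log_x y = \log_x(y/x^{\ell}) + \ell$ together with Lemma~\ref{lemma ineqa} applied to $a \coloneqq y/x^{\ell}$, which is legitimate since the hypothesis $y \geq 2 x^{\ell}$ guarantees $a \geq 2$. This gives the two-sided estimate
\[
\log_x y - \ell - 1 \;=\; \log_x(y/x^{\ell}) - 1 \;<\; \log_x \lfloor y/x^{\ell}\rfloor \;\leq\; \log_x(y/x^{\ell}) \;=\; \log_x y - \ell,
\]
so the ``approximated'' logarithm $\log_x \lfloor y/x^{\ell}\rfloor$ lies within $1$ of the desired value $\log_x y - \ell$.

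Next, I would introduce the shorthand $L \coloneqq \lfloor \log_x y \rfloor$ and $L' \coloneqq \lfloor \log_x \lfloor y/x^{\ell}\rfloor \rfloor$ and take the floor of both bounds above. Using the standard fact that $r \leq s$ implies $\lfloor r \rfloor \leq \lfloor s \rfloor$, together with $\lfloor r - n \rfloor = \lfloor r \rfloor - n$ for integer $n$, the upper bound gives $L' \leq \lfloor \log_x y - \ell \rfloor = L - \ell$, and the lower bound gives $L' \geq \lfloor \log_x y - \ell - 1 \rfloor = L - \ell - 1$. Combining them yields $L' + \ell \leq L \leq L' + \ell + 1$, and since $L$ is an integer this forces $L$ to equal $L' + \ell$ or $L' + \ell + 1$, which is precisely the claim.

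There is essentially no obstacle: the whole argument is a routine manipulation of floors once Lemma~\ref{lemma ineqa} is invoked. The only point worth checking carefully is that the hypothesis $y \geq 2 x^{\ell}$ is used exactly to obtain $a \geq 2$, which is the hypothesis needed by Lemma~\ref{lemma ineqa}; otherwise the slack in that lemma (namely the $-1$ on the left) would not suffice to get a constant-size window for the answer. In the subsequent use in a recursive $\fun{logarithm}$ procedure, the two candidate values $L' + \ell$ and $L' + \ell + 1$ can be distinguished in constant time by a single call to $\fun{exponential}$ comparing $x^{L'+\ell+1}$ with $y$, which is why the dichotomy in the lemma is exactly what is algorithmically useful.
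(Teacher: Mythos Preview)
Your proof is correct and follows essentially the same approach as the paper: both invoke Lemma~\ref{lemma ineqa} with $a=y/x^{\ell}\ge 2$, add $\ell$, and take floors to obtain the two-value window. The only cosmetic difference is that the paper rearranges to sandwich $\log_x y$ before flooring, whereas you floor first and then rearrange; the arithmetic is identical.
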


\begin{proof}
Lemma~\ref{lemma ineqa} applied to $a=y/x^{\ell}\geq 2$ 
gives $\log_x (y/x^{\ell}) -1 < \log_x \lfloor y/x^{\ell} \rfloor \leq \log_x (y/x^{\ell})$.
 Adding~$\ell$ to each member of those inequalities yields 
 \[
 \log_x (y) -1 < \log_x \lfloor y/x^{\ell} \rfloor + \ell \leq \log_x (y).
 \]
 This can be rewritten:
  $\log_x \lfloor y/x^{\ell} \rfloor + \ell \leq \log_x (y) < \log_x \lfloor y/x^{\ell} \rfloor + \ell +1$.
  Taking the integer parts, we obtain
  \[
 \lfloor\log_x \lfloor y/x^{\ell} \rfloor \rfloor+ \ell \leq \lfloor\log_x (y) \rfloor \leq 
 \lfloor\log_x \lfloor y/x^{\ell} \rfloor \rfloor+ \ell +1.
 \]
\end{proof}

How to determine which case (1) or (2) of Lemma~\ref{lemma:rec} is true? 
The following lemma completes Lemma~\ref{lemma:rec} by giving a simple criterion.

\begin{lemma}\label{lemma:choice}
Let $s\coloneqq \lfloor \log_x \lfloor y/x^{\ell}\rfloor\rfloor +\ell$ for integers $x\geq 2$, $\ell\geq 1$ and $y\geq 2x^{\ell}$. 
Lemma~\ref{lemma:rec} says \linebreak
(1) $\lfloor \log_x y\rfloor=s$ or (2) $\lfloor \log_x y\rfloor=s+1$.
If $x^{s+1}>y$ then $\lfloor \log_x y\rfloor=s$. Otherwise,  $\lfloor \log_x y\rfloor=s+1$.
\end{lemma}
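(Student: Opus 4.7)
The plan is to leverage Lemma~\ref{lemma:rec}, which has already restricted $\lfloor \log_x y \rfloor$ to the two-element set $\{s, s+1\}$, and then use the standard characterization that $\lfloor \log_x y \rfloor$ is the unique integer $k$ such that $x^k \leq y < x^{k+1}$. So the whole argument reduces to testing where $y$ lies relative to the threshold $x^{s+1}$.

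First I would handle the case $x^{s+1} > y$. Using the characterization, $y < x^{s+1}$ means $\lfloor \log_x y \rfloor \leq s$ (equivalently, $\lfloor \log_x y \rfloor < s+1$). Combined with Lemma~\ref{lemma:rec} which gives $\lfloor \log_x y \rfloor \in \{s, s+1\}$, this forces $\lfloor \log_x y \rfloor = s$, as desired.

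Second, I would handle the complementary case $x^{s+1} \leq y$. Here the characterization directly gives $\lfloor \log_x y \rfloor \geq s+1$. Again by Lemma~\ref{lemma:rec}, the only possibility is $\lfloor \log_x y \rfloor = s+1$.

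There is no real obstacle — the lemma is essentially bookkeeping on top of Lemma~\ref{lemma:rec}. The only tiny point to be careful about is that the test $x^{s+1} > y$ is well-defined and computable (and when we later implement this as a RAM algorithm we need $x^{s+1}$ to fit as a ``polynomial'' integer so that our constant-time exponentiation and comparison apply); but for the mathematical statement itself the two-case dichotomy given by the characterization of the integer part of a logarithm immediately picks the correct element of $\{s, s+1\}$.
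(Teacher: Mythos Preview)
Your proposal is correct and essentially identical to the paper's proof: the paper simply observes that case (1) means $x^s\le y<x^{s+1}$ and case (2) means $x^{s+1}\le y<x^{s+2}$, so the test $x^{s+1}>y$ versus $x^{s+1}\le y$ selects between them, which is exactly the characterization-based dichotomy you wrote.
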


\begin{proof}
(1) means $x^s\leq y < x^{s+1}$ and (2) means $x^{s+1}\leq y < x^{s+2}$. 
So, $x^{s+1}>y$ implies (1), and $x^{s+1}\leq y$ implies (2).
\end{proof}

\paragraph{Principle of the algorithm and pre-computation in linear time.}

For each $x\in [2,\largeB]$, let us define 
$\mathtt{L}[x]\coloneqq \lceil \log_x \largeB \rceil=\mathtt{min}\{z\in\mathbb{N}\mid x^z\geq \largeB\}$. 
$x \leq \largeB$ implies $\arr{L}[x]\geq 1$.
Clearly, the array $\arr{L}[2..\largeB]$ can be pre-computed in time $O(\largeB\log_2 \largeB)=O(N)$ 
by the following algorithm where an invariant of the while loop is $y=x^{z}$.

\begin{minipage}{0.2\textwidth}
  ~
\end{minipage}
\begin{minipage}{0.6\textwidth}
\begin{algorithm}[H]
  \caption{Computation of the array $\arr{L}$}
  \begin{algorithmic}[1]
\For{$\var{x} \From 2 \To \largeB $}
  \State $\var{y} \gets \var{x}$
  \State $\var{z} \gets 1$
  \While{$\var{y} < \largeB$}                 \Comment{$y=x^{z}< \largeB$}
    \State $\var{y}\gets \var{x} \times \var{y}$
    \State $\var{z} \gets \var{z}+1$
  \EndWhile                                           \Comment{$x^{z-1} < \largeB \leq y=x^{z}$}
  \State $\arr{L}[\var{x}] \gets \var{z}$
 \EndFor
  \end{algorithmic}
\end{algorithm}
\end{minipage}
\\


The algorithm that computes $\lfloor\log_x y\rfloor$, for integers $x\in[2,N^d[$ and $y\in[1,N^d[$, divides into the following four cases:
\begin{enumerate}
\item $x>y$: then $\lfloor\log_x y\rfloor=0$;
\item $x\leq y < 2\largeB$: then $\lfloor\log_x y\rfloor$ will be computed by a simple look up in a pre-computed array $\mathtt{LOGAR}[2..2\largeB-1][2..2\largeB-1]$ defined by 
$\mathtt{LOGAR}[x][y] \coloneqq \lfloor\log_x y\rfloor$, for $2\leq x \leq  y<2\largeB$;
\item $x\leq y$ and $2B\leq y < 2 x^{L[x]}$: 
then we have $x\le y < 2 x^{L[x]} <2 x^{1+\log_x B} = 2xB$; 
this implies $1\leq \lf y/x \rf < 2B$, and because of the identity\footnote{
Indeed, $z=\lf \log_x y \rf$ means $x^z\leq y < x^{z+1}$, which implies 
$x^{z-1}\leq \lf y/x \rf < x^{z}$ and therefore \linebreak
$z-1 \leq \lf \log_x \lf y/x \rf \rf< z$. That means $z=\lf \log_x \lf y/x \rf \rf +1$.} 
$\lf \log_x y \rf = \lf \log_x \lf y/x \rf \rf+1$, the value of $\lf \log_x y \rf$ is obtained by computing $\lf \log_x \lf y/x \rf \rf$ according to cases~1 or~2;
\item $y \geq 2 x^{L[x]}$: then apply Lemma~\ref{lemma:choice} recursively with $\ell\coloneqq L[x]$ (noting that the required assumptions $x\geq 2$, $\ell\geq 1$ and $y\geq 2x^{\ell}$ are satisfied).
   
\end{enumerate}

We claim that the following code computes the array $\arr{LOGAR}$:

\begin{minipage}{0.2\textwidth}
  ~
\end{minipage}
\begin{minipage}{0.6\textwidth}
\begin{algorithm}[H]
  \caption{Computation of the array $\arr{LOGAR}$}
  \begin{algorithmic}[1]
\For{$\var{x} \From 2 \To 2\times \largeB-1$}
\State $\var{y} \gets \var{x}$ 
\State $\var{z} \gets 1$ 
\State $\var{t} \gets \var{x} \times \var{x}$ 
\While{$\var{y} < 2\times\largeB$}
 \While{$\var{y} < \var{t}$}\label{line:logarwhile}
   \State $\arr{LOGAR}[\var{x}][\var{y}] \gets \var{z}$
   \State $\var{y} \gets \var{y}+1$
 \EndWhile
 \State $\var{z} \gets \var{z}+1$
 \State $\var{t} \gets \var{x}\times \var{t}$
 \EndWhile
\EndFor
  \end{algorithmic}
\end{algorithm}
\end{minipage}
\\


\noindent
Indeed, it can be checked inductively that each time the execution of the algorithm reaches line~7 then we have $x^z \leq y < x^{z+1}=t$, which justifies the assignment 
$\arr{LOGAR}[\var{x}][\var{y}] \gets \var{z}$.
It is also easy to verify that the algorithm runs in time $O(\largeB^2)=O(N)$.

\paragraph{The main algorithm.}

We assert that the following procedure $\fun{logarithm}$ 
correctly computes $\lfloor \log_x y\rfloor$, for all the integers $x\in[2,N^d[$ and $y\in[1,N^d[$.

\begin{algorithm}[H]
  \caption{Computation of $\fun{logarithm}$}
  \begin{algorithmic}[1]
\Procedure{logarithm}{\var{x},\var{y}}
 \If{$\var{y} < \var{x}$} \Comment{case 1}
     \State \Return 0
 \EndIf
 \If{$\var{y} < 2\times\largeB$} \Comment{case 2}
     \State \Return $\arr{LOGAR}[\var{x}][\var{y}]$
 \EndIf
   \If {$\var{y} < 2\times \Call{exponential}{\var{x},\arr{L}[x]}$}
   \Comment{
   case 3: $1\leq  y/x < 2B$}
        \State \Return $\Call{logarithm}{\var{x},\var{y} / \var{x}}+1$ 
   \Else  \Comment{case 4 ($y \geq 2 x^{L[x]}$): apply Lemma~\ref{lemma:choice} with $\ell=L[x]$}
 \State $\var{s} \gets \fun{logarithm}(\var{x}, \var{y} / \Call{exponential}{\var{x},\arr{L}[\var{x}]}) + \arr{L}[\var{x}]$     
   \If{$\Call{exponential}{\var{x},\var{s}+1} > \var{y}$}
     \State \Return $\var{s}$
   \Else
     \State \Return $\var{s}+1$
   \EndIf
 \EndIf 
\EndProcedure       
  \end{algorithmic}
\end{algorithm}

\noindent
\emph{Justification:} Obviously, the  $\fun{logarithm}$ procedure is correct for $y<x$ (case 1: lines 2-3) and for $x\leq y<2\largeB$ (case 2: lines 4-5). 
Lines 6-7 treats the case 3, $x\leq y < 2 x^{L[x]}$ (recall $L[x]=\lc \log_x B \rc$), which implies $1\leq y/x < 2B$ : we therefore return to cases 1 or 2 by taking $y/x$ instead of $y$. This justifies line 7.
The case $y \geq 2 x^{L[x]}$ (case 4: lines 8-13) is justified by Lemma~\ref{lemma:choice} by taking $\ell\coloneqq L[x]$ and noting that the required assumptions $x\geq 2$, $\ell\geq 1$ and $y\geq 2x^{\ell}$ are satisfied.

\paragraph{Constant time and space complexity of the $\fun{logarithm}$ procedure.}
Let us first note that all the numbers (operands, intermediate values, results) manipulated by the procedure are less than $N^{2d}$ and that all the arithmetic operations it uses can be computed in constant time and space. In particular, this holds for the ``exponential" expressions $2x^{\mathtt{L}[x]} < 2x \largeB\leq N^{d+1}$ and $x^{s+1}\leq x y < N^{2d}$, 
which involve the function $\mathtt{exp}_{N,2d}$ of the previous subsection, computed by the $\fun{exponential}$ procedure in constant time and space. Therefore, it suffices to prove that the $\fun{logarithm}$ procedure performs at most a constant number of recursive calls.
We have $x^{\arr{L}[x]}\geq \largeB$, by definition of $\arr{L}[x]$. 
Therefore, for $y<N^d\leq \largeB^{2d}$, the algorithm cannot divide $y$ by 
$x^{\arr{L}[x]}\geq \largeB$ more than $2d-1$ times. This implies that it makes at most $2d-1$ recursive calls (by case 4) before reaching a value of $y$ less than $x$ (basic case~1), or less than $2\largeB$ (basic case~2), or less than $2x^{L[x]}$ (case~3, which performs one recursive call leading to case 1 or case 2). 
Thus, for a fixed~$d$, the $\fun{logarithm}$ procedure runs in constant time and space.


\paragraph{Open problem: computing the modular exponentiation.}
Obviously, the functions \linebreak $(x,y,z)\mapsto x+y \mod z$, and
$(x,y,z)\mapsto x\times y \mod z$, for $x,y<N^d$ and $1\leq z \leq
N^d$, are computable in constant time.  Thus, a natural candidate for
being computable (or not) in constant time is the following version of
the exponential function, widely used in cryptography: the function
$\fun{ExpMod}$ from $[0,N^d[^2\times [1,N^d]$ to $ [0,N^d[$ defined,
        for $x,y<N^d$ and $1\leq z \leq N^d$, by
\[
\Call{ExpMod}{x,y,z}\coloneqq x^y \mod z.
\] 
Now, we cannot prove that $\fun{ExpMod}(x,y,z)$ is computable in
constant time, even when fixing $x=c$ or $z=N^d$, for constant integers $c>1$, $d\geq1$. 
Indeed, it seems likely that even the following functions are
not computable in constant time after a linear preprocessing, for $x,y,z<N$:
\[
(y,z)\mapsto 2^y \mod z
\hspace{3em}
(x,y)\mapsto x^y \mod N^2 
\hspace{3em}
\mathrm{or}\; \mathrm{even}\;\; \;(x,y)\mapsto x^y \mod N.
\]

Intuitively, the irregularity of the modulo function applied to a
large number -~an exponential~- prevents us from designing an
induction by approximation as we have done for division and logarithm
and as we will for the square root and the other roots in the next
section.

\bigskip
\section{Computing the square root and other roots in constant time}\label{sec:roots}

In this section, we will show that, for a fixed integer $c\ge 2$, the
$c$-th root of a ``polynomial'' integer can be computed in constant
time.
For ``small'' integers (i.e.\ less than $N$), it is easy to compute
during a linear time preprocessing a table giving the square roots or
the $c$-th roots $\lfloor x^{1/c} \rfloor$, for any fixed integer
$c\ge 2$ but here we will show that we can generalize to polynomial
integers using a discrete adaptation of Newton's approximation method.

\subsection{The $c$-th root algorithm}\label{subsec:croot algo}
Let $c\ge 2$ be a fixed integer. Let us explain our strategy to compute the
$c$-th root of an integer $x<N^d$, for a constant integer $d$, i.e.
the function $x\mapsto \lfloor x^{1/c} \rfloor$ on the domain
$[0,N^d[$.

Similarly to the division method, our algorithm will distinguish
``small values'' for which the $c$-th root will be pre-computed and
``large values'' for which we will compute an approximation (using
recursively the $\fun{cthRoot}$ function with smaller values) before
improving this approximation.

The preprocessing here is similar to that of the division: we have a bound
$\var{M}'$ and we populate an array $\arr{cthRoot}$ such that
$\arr{cthRoot}[\var{x}] = \lf \var{x}^{1/c}\rf $ for all
$0\leq \var{x}<\var{M}'$. To compute the $c$-th root for an $\var{x}$ below
this $\var{M}'$ limit, we return its pre-computed value. For an $\var{x}$ over
this $\var{M}'$ limit, we first compute $\var{t}$ which is the floored
$c$-th root of the Euclidean quotient $\lf \var{x}/ \var{K}^c \rf$.
Note that the flooring happens twice: we floor the quotient
$\var{x}/ \var{K}^c$ and we floor the $c$-th root of $\var{x}/ \var{K}^c$. 
Overall, we only know that the value $\var{x}^{1/c}$ is in the range 
$[\var{t}\times \var{K}, (\var{t}+1)\times \var{K}[$.
As we will show below, the following two points are essential:
\begin{itemize}
\item by a careful choice of $\var{K}$ and $\var{M}'$, this is a sufficiently good approximation and it suffices to update this approximation by using two steps of Newton's method and one step of checking whether the answer is $\var{g}$ or $\var{g}-1$ where 
$\var{g}$ is our current approximation; 
\item the values for $\var{K}$ and $\var{M}'$ can also be chosen such that 
$\var{M}'=O(N)$, which allows a linear preprocessing, and $\var{K}=\Theta(N^{1/(2c)})$, which ensures that the procedure executes in constant time.
\end{itemize}

We know present the algorithm whose
correctness is justified in the next paragraphs.

\begin{minipage}{\almosttextwidth}
  \begin{algorithm}[H]\label{algo croot}
    \caption{Computation of $x\mapsto \lf x^{1/c}\rf$ using the pre-computed constants $\var{M}'$
    and $\var{K}$
    }
    \begin{minipage}[t]{0.49\textwidth}

    \begin{algorithmic}[1]
    \Procedure{LinPreprocCthRoot}{$\mbox{}$} 
       \State $\arr{cthRoot} \gets $ an array of size $\var{M}'$
       \State $\arr{cthRoot}[0] \gets 0$
       \For {$\var{x} \From 1 \To \var{M}'-1 $}
         \State $\var{s} \gets \arr{cthRoot}[\var{x}-1]$
         \If {$\var{x} < (\var{s}+1)^c$} 
           \State $\arr{cthRoot}[\var{x}] \gets \var{s}$
         \Else 
           \State $\arr{cthRoot}[\var{x}] \gets \var{s}+1$
         \EndIf
       \EndFor
     \EndProcedure
\end{algorithmic}
\vspace{1em}
      \begin{algorithmic}[1]
\Procedure{improve}{$\var{g},\var{x}$}
  \If {$\var{g}^c \leq \var{x} < (\var{g}+1)^c $}
  \State \Return $\var{g}$ 
  \Else
  \State \Return $(\var{x}+(c-1) \times \var{g}^c)/(c \times \var{g}^{c-1})$
  \EndIf
\EndProcedure
      \end{algorithmic}
    \end{minipage}
    \begin{minipage}[t]{0.500\textwidth}
    \vspace{0.5em}
      \begin{algorithmic}[1]
\Procedure{cthRoot}{$\var{x}$}
  \If {$ \var{x} < \var{M}' $}
   \State \Return $\arr{cthROOT}[\var{x}]$
   \Else
  \State  $\var{s} \gets \var{x} / \var{K}^c$
  \State  $\var{t} \gets \fun{cthRoot}(\var{s})$ 
  \State  $\var{g}_0 \gets \var{t} \times \var{K}$
  \State  $\var{g}_1 \gets  \fun{improve}(\var{g}_0,\var{x})$
  \State  $\var{g}_2 \gets  \fun{improve}(\var{g}_1,\var{x})$
  \If {$ \var{g}_2^c \leq \var{x} < (\var{g}_2+1)^c$} 
  \State  \Return $\var{g}_2$ 
  \Else
  \State  \Return $\var{g}_2-1$
  \EndIf
  \EndIf
\EndProcedure
      \end{algorithmic}
    \end{minipage}
  \end{algorithm}
\end{minipage}

\paragraph{Setting the integers $\var{K}$ and $\var{M}'$:}
Here, we will simply use $\var{K}\coloneqq\left\lceil N^{1/(2c)}\right\rceil$, which is the integer~$\var{K}$ such that
$(\var{K}-1)^{2c}< N \leq \var{K}^{2c}$, and let $\var{M}\coloneqq \var{K}^{2c}$. 
(Note that $\var{K}$ and $\var{M}$ can be easily computed in time $O(N)$.) 
The value of $\var{M}'$ will be set as 
$\var{M}'\coloneqq\mathtt{max}(\var{M},c_0^{2c}+c_0^{c})$, for $c_0$ a constant specified in the next paragraph, which only depends on $c$. 
Note that, for any given $c$, we have
$\var{M}=\var{M}'$ for any~$N$ large enough, which
justifies the linearity of the preprocessing, and we have
$\var{K}^c \geq N^{1/2}$, which, considering lines 5-6 of the $\fun{cthRoot}$ procedure, will justify that the recursion depth is at most
$2d$ for an integer $x<N^d$ (and thus the $\fun{cthRoot}$ procedure is constant
time for any fixed~$d$).


\paragraph{The mysterious constant $c_0$:} We define the constant integers 
$c_0\coloneqq \left\lceil \left\lceil\sqrt{6c-12}\;\right\rceil\times
(c^2-1)/(6c) \right\rceil$ and $c_0^{2c}+c_0^{c}$. This allows us to
define $M'\coloneqq\mathtt{max}(M,c_0^{2c}+c_0^{c})$. The role of the
strange constants~$c_0$ and $c_0^{2c}+c_0^{c}$ will be elucidated at
the end of the proof of the algorithm.  Note that for $c=2$, we have
$c_0=0$ and therefore $M'=M$, and for $c=3$, we have $c_0=2$ and
$c_0^{2c}+c_0^{c}=72$.  More generally, for any fixed $c$, the values
$c_0$ and $c_0^{2c}+c_0^{c}$ are explicit integers.

\subsection{Correctness and complexity of the $c$-th root algorithm}\label{subsec:croot justif}
First, note the inequality, for all $a\ge 0$ and $c\geq 1$:
\begin{equation}
(a+1)^c+1 \leq \left(a+\dfrac{c+1}{c}\right)^c 
\label{inequal}
\end{equation}
It follows from the (in)equalities
$\left(a+\frac{c+1}{c}\right)^c=\left((a+1)+\frac{1}{c}\right)^c \ge (a+1)^c+c(a+1)^{c-1}\times \frac{1}{c} = (a+1)^c+(a+1)^{c-1}\ge (a+1)^c+1$.

When $x < M'$, the $\fun{cthRoot}$ procedure returns $\arr{cthROOT}[x]$, which is the correct value. Otherwise, we have $x\ge K^{2c}$ and $x\ge c_0^{2c}$.
The integers $s$ and $t$ will be such that \linebreak
$t^c \leq s < (t+1)^c$. 
Because of $t^c\le s\le x/K^c <s+1$ and $g_0=tK$ and inequality~\eqref{inequal} for $a=t$,
we obtain: $g_0^c=t^c K^c \leq x <(s+1)K^c <
(\left(t+1 \right)^c+1)K^c \leq \left(t+\dfrac{c+1}{c}\right)^c K^c = \left(tK+\dfrac{c+1}{c}K\right)^c$ 
and then $g_0^c \leq x < \left(g_0+\dfrac{c+1}{c}K\right)^c$.
Hence, $g_0 \leq x^{1/c} < g_0+\dfrac{c+1}{c} K$.
Let $\epsilon_0$ denote the real number such that 
\begin{equation}\label{epsilon0}
x^{1/c} = g_0 \times (1+\epsilon_0). 
\end{equation}
We have $0 \leq \epsilon_0 < \dfrac{K(c+1)}{cg_0} = \dfrac{c+1}{ct}$.

Starting from the approximate value $g_0$ of $x^{1/c}$, let us apply Newton’s method: let $g_1$ be the abscissa of the point of ordinate 0 of the tangent to the point $(g_0,f(g_0))$ of the function \linebreak
$X\mapsto f(X)=X^c-x$. This means that $g_1$ is the number satisfying the equation:
\begin{equation}\label{deriv}
f'(g_0)=\dfrac{0-f(g_0)}{g_1-g_0}
\end{equation}
This gives $c g_0^{c-1}=\dfrac{x-g_0^c}{g_1-g_0}$, and (multiplying by $g_1-g_0$),
$c g_1 g_0^{c-1}-cg_0^c=x-g_0^c$, and therefore $c g_1 g_0^{c-1} =(c-1)g_0^c + x$, and finally
\begin{equation}\label{Newton g1}
g_1 = \dfrac{1}{c}\left((c-1)g_0+\dfrac{x}{g_0^{c-1}}\right) 
\end{equation}
From~(\ref{Newton g1}) and~(\ref{epsilon0}) we get  
$g_1=\dfrac{(c-1)g_0+(1+\epsilon_0)^c\times g_0}{c} = 
g_0 (1+\epsilon_0) \times \dfrac{c+ c\epsilon_0 +  (1+\epsilon_0)^c -1 - c\epsilon_0}{c+c\epsilon_0} \leq \linebreak
x^{1/c} \times  \left(1+\dfrac{\sum_{i=2}^{c}\binom{c}{i}\epsilon_0^i}{c}\right)=
 x^{1/c} \times  \left(1+\epsilon_0^2 \times\dfrac{\sum_{i=2}^{c}\binom{c}{i}\epsilon_0^{i-2}}{c}\right)$. We deduce
\begin{equation}\label{equ: cth root bounded} 
 x^{1/c}\leq g_1 \leq x^{1/c} \times  \left(1+\epsilon_0^2/c \times \left(\sum_{i=2}^{c}\binom{c}{i}\epsilon_0^{i-2}\right)\right).
\end{equation}
 Let us compare two consecutive summands $S_i,S_{i+1}$ of the sum 
 $\sum_{i=2}^{c}S_i$ where $S_i \coloneqq \binom{c}{i}\epsilon_0^{i-2}$. \linebreak
 For $2\leq i < c$, we have 
 $S_{i+1}=S_i\times \dfrac{(c-i)\epsilon_0}{i+1}\leq S_i\times \dfrac{(c-2)\epsilon_0}{3}$,
 from which we deduce \linebreak
 $S_{i+1}/S_i\leq\dfrac{(c-2)\epsilon_0}{3} \leq \dfrac{(c-2)(c+1)}{3ct}$
 because of $\epsilon_0 \leq \dfrac{c+1}{ct}$. The quotient $S_{i+1}/S_i$
 is at most 1 if we assume 
 \begin{center}
 $t\geq c_1$ for the constant $c_1\coloneqq \dfrac{(c-2)(c+1)}{3c}$.
 \end{center}
 Under this assumption, each of the $c-1$ summands $S_i=\binom{c}{i}\epsilon_0^{i-2}$, $2\leq i \leq c$, is less than or equal to $S_2=\binom{c}{2}=\frac{c(c-1)}{2}$, so that we obtain
 $\dfrac{\sum_{i=2}^{c}\binom{c}{i}\epsilon_0^{i-2}}{c}\leq \dfrac{1}{c}\times (c-1) \times\dfrac{c(c-1)}{2}=\dfrac{(c-1)^2}{2}$ and, as another consequence of $\epsilon_0\leq\dfrac{c+1}{ct}$,
 \begin{equation}
 \epsilon_0^2 \times \dfrac{\sum_{i=2}^{c}\binom{c}{i}\epsilon_0^{i-2}}{c} \leq 
 \dfrac{(c^2-1)^2}{2c^2t^2}.
 \label{bound}
\end{equation}
Note that in the $\fun{cthRoot}$ procedure, the variable $\var{g}_1$ represents the integer part $\lfloor g_1\rfloor$ of $g_1$. 
We know that $x^{1/c} \leq g_1$ but it is possible that
$\lfloor g_1 \rfloor \le x^{1/c}$. In this case, we have $\lfloor g_1 \rfloor \le x^{1/c}\leq g_1$, 
which  implies $\var{g}_1=\lfloor g_1 \rfloor= \lfloor x^{1/c} \rfloor$, that means $\var{g}_1$  is the value we are looking for. 
Otherwise, we have $\lfloor g_1 \rfloor > x^{1/c}$ and we denote $\epsilon_1$ the positive number such that 
$\lfloor g_1 \rfloor=x^{1/c} \times(1+\epsilon_1)$.
From the inequalities $x^{1/c}\leq x^{1/c}(1+\epsilon_1)\leq g_1\leq x^{1/c} \times  \left(1+\epsilon_0^2/c \times \left(\sum_{i=2}^{c} \binom{c}{i}\epsilon_0^{i-2} \right) \right)$ and~\eqref{bound}, we deduce 
\begin{equation}
0\leq \epsilon_1 \leq  \dfrac{(c^2-1)^2}{2c^2t^2}
\label{epsilon_1}
\end{equation}

\noindent
When  $\var{g}_1=\lfloor g_1 \rfloor > \lfloor x^{1/c} \rfloor$ 
we iterate Newton's method by using the approximate value $\lfloor g_1\rfloor$
and, by replacing $g_0$ by $\lfloor g_1\rfloor$ in expression~(\ref{Newton g1}), we obtain the integer part $\var{g}_2= \lfloor g_2 \rfloor$ of the number 
\[
g_2 \coloneqq 
\dfrac{1}{c} \times \left((c-1) \lfloor g_1\rfloor + \dfrac{x}{\lfloor g_1\rfloor^{c-1}}\right).
\]
Then, we have
$g_2 =\dfrac{1}{c} \times \left((c-1)x^{1/c}(1+\epsilon_1) + 
\dfrac{x}{x^{(c-1)/c}\times (1+\epsilon_1)^{c-1}}\right) =
\dfrac{x^{1/c}}{c} \times\;\dfrac{(c-1)(1+\epsilon_1)^c+1}{(1+\epsilon_1)^{c-1}}=
x^{1/c}\times\;\dfrac{(c-1)(1+ c\epsilon_1+\sum_{i=2}^{c}\binom{c}{i}\epsilon_1^i)+1}{c(1+\epsilon_1)^{c-1}}= 
x^{1/c} \times\;\dfrac{c+ c(c-1)\epsilon_1+(c-1)(\sum_{i=2}^{c}\binom{c}{i}\epsilon_1^i)}{c+ c(c-1)\epsilon_1+c(\sum_{i=2}^{c-1}\binom{c-1}{i}\epsilon_1^i)}\leq \linebreak
x^{1/c} \times \left(1+\dfrac{c-1}{c}\times \left(\sum_{i=2}^{c}\binom{c}{i}\epsilon_1^{i-2}\right)\times \epsilon_1^2 \right)$.
Finally, we obtain
\begin{equation}
x^{1/c} \leq g_2 \leq 
x^{1/c} \times \left(1+\dfrac{c-1}{c}\times \left(\sum_{i=2}^{c}\binom{c}{i}\epsilon_1^{i-2}\right)\times \epsilon_1^2 \right)
\label{g2 inf}
\end{equation}
With the same argument as above, the quotient between two consecutive summands of the sum 
$\sum_{i=2}^{c}\binom{c}{i}\epsilon_1^{i-2}$ is not greater than $\dfrac{(c-2)\epsilon_1}{3}$, which is at most $\dfrac{(c-2)(c^2-1)^2}{6c^2t^2}$ because of~\eqref{epsilon_1}.
This quotient is not greater than 1 if we assume $t^2\geq \dfrac{(c-2)(c^2-1)^2}{6c^2}$,
that means 
\begin{center}
$t\geq  \dfrac{\sqrt{6c-12}\times (c^2-1)}{6c}$ or, if we assume, more strongly, $t \geq c_0$, by definition of $c_0$.
\end{center}
Therefore, if we assume $t \geq c_0$, then each of the $c-1$ summands $\binom{c}{i}\epsilon_1^{i-2}$, $2\leq i \leq c$, is less than or equal to the first summand $\binom{c}{2}=\dfrac{c(c-1)}{2}$ so that one deduces the inequality
\[
 \dfrac{c-1}{c} \times \left(\sum_{i=2}^{c}\binom{c}{i}\epsilon_1^{i-2}\right)\leq 
 \dfrac{c-1}{c}\times (c-1) \times\dfrac{c(c-1)}{2}=\dfrac{(c-1)^3}{2}
 \] 
 and, as another consequence of~\eqref{epsilon_1}, 
\begin{equation}
\dfrac{c-1}{c} \times \left(\sum_{i=2}^{c}\binom{c}{i}\epsilon_1^{i-2}\right)\times \epsilon_1^2 \leq 
\dfrac{(c-1)^3}{2} \times \dfrac{(c^2-1)^4}{4c^4t^4}=
\dfrac{(c-1)^7(c+1)^4}{8c^4t^4}.
\label{bound2}
\end{equation}
From the inequalities~\eqref{g2 inf} and~\eqref{bound2}, we deduce
\begin{center}
$x^{1/c} \leq g_2\leq x^{1/c} \times \left(1+ \dfrac{c_2}{t^4} \right)$ 
where 
$c_2 \coloneqq \dfrac{(c-1)^7(c+1)^4}{8c^4}$.
\end{center}

\begin{remark}
Recall that those inequalities 
hold \emph{under the hypotheses} $t\ge c_1=(c-2)(c+1)/(3c)$ and $t\ge c_0$. Also, note that the hypothesis~$t\ge c_0$ is sufficient since we have $c_0 > c_1$ for $c\ge 2$, which is easy but tedious to prove.  
\end{remark}

\paragraph{Proof of $x^{1/c} \leq g_2< x^{1/c}+1$ (under the hypothesis~$t\geq c_0$).}
We have $x^{1/c} \leq g_2\leq x^{1/c}+\epsilon_3$ 
for $\epsilon_3 \coloneqq \dfrac{c_2 x^{1/c}}{t^4}$. 
Besides, we have 
$t=\left\lfloor \left\lfloor \dfrac{x}{K^c}\right\rfloor^{1/c}\right\rfloor 
> \left(\dfrac{x}{K^c}\right)^{1/c}-1 = \dfrac{x^{1/c}-K}{K}$
(because we have 
$\left\lfloor \left\lfloor a \right\rfloor^{1/c}\right\rfloor= \left\lfloor a^{1/c} \right\rfloor >a^{1/c}-1$, for each positive real number $a$).
We deduce $\epsilon_3= \dfrac{c_2 x^{1/c}}{t^4}\leq 
\dfrac{c_2 K^4 x^{1/c}}{(x^{1/c}-K)^4}=
c_2\times \dfrac{K^4((x^{1/c}-K)+K)}{(x^{1/c}-K)^4}= 
c_2\times\left(\dfrac{K^4}{(x^{1/c}-K)^3}+ \dfrac{K^5}{(x^{1/c}-K)^4}\right)$.
Recall the hypothesis $x\geq K^{2c}$ for $K=\left\lceil N^{1/(2c)} \right\rceil$. 
This implies $x^{1/c}-K \ge K^2-K$ and \linebreak
$\epsilon_3 \leq c_2\times  \left(\dfrac{K^4}{(K^2-K)^3}+ \dfrac{K^5}{(K^2-K)^4}\right)=
c_2\times \left(\dfrac{K}{(K-1)^3}+ \dfrac{K}{(K-1)^4}\right)= 
c_2\times \dfrac{K^2}{(K-1)^4}$. \linebreak
We obtain $\epsilon_3 \leq c_2\times \dfrac{K^2}{(K-1)^4}$.\\
\noindent
For $K\geq 1+2 \sqrt{c_2}$~\footnote{That means
$M=K^{2c}\geq \left(1+(c-1)^{7/2}\times(c+1)^2 / (\sqrt{2}\times c^2) \right)^{2c}$. 
For instance, it means $M\geq 46$ for $c=2$ and $M\geq 12~405~543$ for $c=3$.},
we get 
$\dfrac{K^2}{(K-1)^4}=\dfrac{1}{(K-1)^2}+\dfrac{2}{(K-1)^3}+\dfrac{1}{(K-1)^4} \leq
\dfrac{1}{4c_2}+\dfrac{2}{4c_2}+\dfrac{1}{4c_2}=\dfrac{1}{c_2}$, which implies
$\epsilon_3 \leq c_2 \times  \dfrac{K^2}{(K-1)^4} \leq 1$. 
Therefore, we obtain $ x^{1/c}\leq g_2 \leq  x^{1/c}+1$, or, equivalently,
$g_2-1 \leq  x^{1/c}\leq g_2$.

\medskip
This implies that we have
either $\left\lfloor x^{1/c}\right\rfloor=\left\lfloor g_2\right\rfloor=\var{g}_2$ or 
$\left\lfloor x^{1/c}\right\rfloor=\left\lfloor g_2\right\rfloor-1=\var{g}_2-1$. 
This justifies the last line of the $\fun{cthRoot}$ procedure and completes the proof of its correctness \emph{if} we can prove that the hypothesis~$t\geq c_0$ \emph{always} holds when $x\geq M'$.

\paragraph{Why $x\geq M'$ implies~$t\geq c_0$.} 
Let us prove this implication by contradiction. Assume we have both $x\geq M'$ and~$t < c_0$.
We have $x\geq M=K^{2c}$ and $x\geq c_0^{2c}+c_0^c$.
 
Recall that $s=\lf x/K^c \rf$ and $t=\lf s^{1/c} \rf$ imply $x/K^c<s+1$ and $s^{1/c}<t+1$, which means $s<(t+1)^c$. 
With the hypothesis $x\geq M=K^{2c}$, this yields
$K^{c}\leq x/K^c < s+1< (t+1)^c+1$ and therefore $K^c\leq (t+1)^c$.
Finally, using the hypothesis~$t < c_0$, we get $K\leq t+1<c_0+1$ and $K\leq c_0$.

Besides, from the hypothesis $x\geq c_0^{2c}+c_0^c$ and from $K\leq c_0$, we deduce, for $s=\lfloor x/K^c \rfloor$, the inequalities:  
$s> x/K^c -1\geq (c_0^{2c}+c_0^c)/K^c -1\geq (c_0^{2c}+c_0^c)/c_0^c -1=c_0^c$, hence $s>c_0^c$ and $t=\lfloor s^{1/c}\rfloor\geq \lfloor (c_0^c)^{1/c} \rfloor=c_0$. 
Therefore, we get $t\geq c_0$, a contradiction. 

\bigskip
The proof of the correctness of the $\fun{cthRoot}$ procedure is now complete. \hfill $\square$

\paragraph{Proof of the constant time of the $\fun{cthRoot}$ procedure.}
For $x<N^d\leq K^{2cd}$, the algorithm (see lines 5-6) can divide $x$ by $K^c$ at most $2d-2$ times before reaching a value $x<M=K^{2c}$ for which the $c$-th root is given by 
$\arr{cthRoot}[x]$. This means that the $\fun{cthRoot}$ procedure makes at most $2d-2$ recursive calls and therefore, for a fixed $d$, executes in constant time. \hfill  $\square$

\medskip
Finally, we have proved that any root of any ``polynomial" integer can be computed in constant time. More precisely:
\begin{proposition}\label{prop: roots constant time} 
Let $c\ge 2$ and $d\ge 1$ be fixed integers.
There is a RAM with addition such that, for any input integer $N$:
\begin{enumerate}
\item \emph{Pre-computation:} the RAM computes some tables in time $O(N)$;
\item \emph{Computation of the root:} using these pre-computed tables and reading any integer $x<N^d$, the RAM computes in constant time the $c$-th root $\lfloor x^{1/c}\rfloor$. 
\end{enumerate}
\end{proposition}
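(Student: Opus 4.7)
The plan is to exhibit the algorithm $\fun{cthRoot}$ together with the preprocessing procedure $\fun{LinPreprocCthRoot}$, and then verify three items: (i) linearity of the preprocessing, (ii) correctness of $\fun{cthRoot}$, and (iii) that $\fun{cthRoot}$ runs in constant time for any fixed $d$.

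First, I would fix the parameters $K := \lceil N^{1/(2c)} \rceil$ (computable in time $O(N)$ by a sweep analogous to the one for $\lceil \sqrt{N} \rceil$ in Section~\ref{sec:SumDiffProdBase}), $M := K^{2c}$, and $M' := \max(M,\, c_0^{2c} + c_0^{c})$, where $c_0$ is a constant depending only on $c$ to be fixed by the convergence analysis below. Since $K^{2c} = O(N)$ and the additive correction $c_0^{2c} + c_0^c$ is a constant, the table $\arr{cthRoot}[0..M'-1]$ is filled in time $O(N)$ by the incremental rule that increments the previous entry exactly when $x \geq (\arr{cthRoot}[x-1]+1)^c$; the comparison invokes the constant-time multiplication of Section~\ref{sec:SumDiffProdBase}.

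The core of the proof is the correctness of the recursive step. For $x \geq M'$, set $s := \lfloor x/K^c \rfloor$, recursively obtain $t := \lfloor s^{1/c} \rfloor$, and take $g_0 := tK$. The elementary inequality $(a+1)^c + 1 \leq (a + (c+1)/c)^c$ gives $g_0 \leq x^{1/c} < g_0 + (c+1)K/c$, so writing $x^{1/c} = g_0(1+\epsilon_0)$ yields $\epsilon_0 \leq (c+1)/(ct)$. Applying one integer Newton step $g_1 = \lfloor ((c-1)g_0^c + x)/(c\, g_0^{c-1}) \rfloor$ and expanding $(1+\epsilon_0)^c$ by the binomial theorem shows $x^{1/c} \leq g_1 \leq x^{1/c}(1 + O(\epsilon_0^2))$, and a second Newton step from $g_1$ gives $g_2$ satisfying $x^{1/c} \leq g_2 \leq x^{1/c} + 1$. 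I expect the main obstacle to be the quantitative bound on the tail binomial sums $\sum_{i=2}^{c} \binom{c}{i}\, \epsilon^{i-2}$: each such sum must be controlled by a constant multiple of its first term, which requires the threshold $t \geq c_0$. In turn, the choice $M' := \max(M,\, c_0^{2c} + c_0^{c})$ will force $t \geq c_0$ whenever $x \geq M'$, verified by contradiction: $t < c_0$ forces $K \leq c_0$, and combined with $x \geq c_0^{2c} + c_0^c$ one derives $s = \lfloor x/K^c \rfloor > c_0^c$ and hence $t \geq c_0$. The final bracketing $g_2 - 1 \leq x^{1/c} \leq g_2$ reduces the answer to $\{g_2 - 1, g_2\}$, which the comparison $g_2^c \leq x < (g_2+1)^c$ in the last lines of the procedure disambiguates.

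For the running time, each recursive call replaces $x$ by $s \leq x/K^c$, and since $K^c \geq N^{1/2}$, the recursion depth for $x < N^d$ is at most $2d - 2$, a constant. Each level performs a bounded number of multiplications, divisions, comparisons, and the two Newton updates, each constant-time by Section~\ref{sec:SumDiffProdBase} and Theorem~\ref{theorem: division constant time}. The total running time is therefore constant, establishing Proposition~\ref{prop: roots constant time}.
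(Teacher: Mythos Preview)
Your proposal is correct and follows essentially the same approach as the paper: the same choice of parameters $K=\lceil N^{1/(2c)}\rceil$, $M'=\max(K^{2c},c_0^{2c}+c_0^{c})$, the same recursive scheme $g_0=tK$ followed by two integer Newton steps, the same binomial-tail control yielding the threshold $t\geq c_0$, the same contradiction argument that $x\geq M'$ forces $t\geq c_0$, and the same recursion-depth bound $2d-2$ from $K^c\geq N^{1/2}$. The only point you leave implicit is the auxiliary condition $K\geq 1+2\sqrt{c_2}$ (with $c_2=(c-1)^7(c+1)^4/(8c^4)$) needed to conclude $g_2\leq x^{1/c}+1$, but this holds for all sufficiently large $N$ and is absorbed into the constant threshold, so your sketch is complete in spirit.
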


\subsection{Generalized root: a conjecture partially resolved positively}\label{subsec:genroot}

Although we are unable to prove it, we believe that Proposition~\ref{prop: roots constant time}
can be generalized to $y$th root, for an integer variable $y<N^d$.
\begin{conjecture}\label{conj: genRoot}
For any fixed integer $d\geq 1$, the function $(x,y)\mapsto \lf x^{1/y} \rf$, where $x<N^d$ and $2\leq y <N^d$, is computable in constant time after preprocessing in $O(N)$ time.
\end{conjecture}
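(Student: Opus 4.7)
The plan is to case-split on the magnitude of $y$ into three regimes, solving the two extremes cleanly and isolating a narrow middle band as the main obstacle. When $y \geq d\lceil \log_2 N\rceil$, every $x<N^d \leq 2^y$ gives $x^{1/y} < 2$, so $\lfloor x^{1/y}\rfloor$ is $0$ or $1$ according to whether $x=0$ or not; no preprocessing is required. When $y$ is bounded by the constant $y_0 \coloneqq 2d$, one invokes Proposition~\ref{prop: roots constant time} in parallel for each $c \in \{2, \ldots, y_0\}$: the $y_0-1$ linear-time preprocessings combine into a single $O(N)$ phase and every query $(x,y)$ with $y \leq y_0$ becomes a direct lookup.

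The heart of the plan is the intermediate range $y_0 < y < d\lceil\log_2 N\rceil$. There, since $y > 2d$, the answer $k \coloneqq \lfloor x^{1/y}\rfloor$ is at most $N^{d/y} < N^{1/2}$ and fits in one register. The first step is to precompute the table $\arr{POW}[y][j] \coloneqq \fun{exponential}(j, y)$ (using Section~\ref{sec:expLog}) for every such $y$ and every $j \in [0, \lceil N^{d/y}\rceil]$; this requires $\sum_{y>y_0} \lceil N^{d/y}\rceil = O(N^{1/2}\log N)$ calls to $\fun{exponential}$, each in constant time, so the preprocessing cost stays in $O(N)$ and the total storage is $o(N)$. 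To answer a query $(x,y)$, one computes $\ell \coloneqq \fun{logarithm}(2, x)$, the Euclidean quotient $q \coloneqq \fun{divide}(\ell, y)$, and the approximation $\tilde k \coloneqq \fun{exponential}(2, q)$. From $2^q \leq 2^{\ell/y} \leq x^{1/y} < 2^{(\ell+1)/y} \leq 2^{q+1}$ (the last by $\ell+1 \leq qy + y$) one gets $k \in \{\tilde k, \tilde k + 1, \ldots, 2\tilde k - 1\}$, and $k$ is pinpointed by scanning the entries $\arr{POW}[y][\tilde k], \arr{POW}[y][\tilde k+1], \ldots$ and returning the largest index $j$ with $\arr{POW}[y][j] \leq x$.

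The main obstacle is the complexity of this scan: it runs in constant time only when the candidate interval has constant length, namely $\tilde k = O(1)$. Since $\tilde k = 2^q$ with $q = \lfloor \ell/y \rfloor \leq d\log_2 N / y$, this holds only for $y = \Omega(\log N)$. Thus the plan fully resolves the conjecture in the two outer regimes and in the upper intermediate regime $y = \Omega(\log N)$, but leaves open the narrow band $\omega(1) \leq y = o(\log N)$, where $k$ may be polylogarithmic. Replacing the scan by a Newton-style refinement à la Subsection~\ref{subsec:croot algo} fails because the required base threshold $c_0 = \Theta(y^{5/2})$ forces the preprocessed base $M'$ beyond $O(N)$ for such $y$; binary search inside $\arr{POW}[y]$ costs $\Theta(d\log N / y)$, which is constant only in the regime already handled. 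Bridging this narrow middle band in constant time within an $O(N)$ preprocessing budget is the principal obstacle and presumably the reason the full conjecture remains open.
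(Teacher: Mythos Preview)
This statement is a conjecture; the paper does not prove it either, but it narrows the open case considerably further than you do. Your case split and your handling of the outer regimes are sound. The gap is your claim that a Newton-style refinement ``fails'' throughout the middle band $\omega(1) \le y = o(\log N)$.

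Two points there. First, $c_0 = \lceil\lceil\sqrt{6y-12}\rceil(y^2-1)/(6y)\rceil = \Theta(y^{3/2})$, not $\Theta(y^{5/2})$. Second, and this is the substantive miss, the resulting threshold is $M' \approx c_0^{2y} = y^{\Theta(y)} = 2^{\Theta(y\log y)}$, which stays $o(N)$ precisely when $y\log y = o(\log N)$, i.e.\ for every $y$ up to $\Theta(\log N/\log\log N)$. The paper exploits exactly this: take $K \coloneqq \lceil N^{1/(3y)}\rceil$ (precomputed for all relevant $y$ in $O(N)$ total) and a single base-case array of width $B = \Theta(N^{3/4})$, then verify that the three inequalities driving the analysis of Subsection~\ref{subsec:croot justif} --- namely $B \ge K^{2y}$, $B \ge c_0^{2y}+c_0^y$, and $K \ge 1+2\sqrt{c_2}$ --- all hold whenever $y < (\log_2 N)/(12\lceil\log_2\log_2 N\rceil)$. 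The two-step Newton correction then returns $\lfloor x^{1/y}\rfloor$ with recursion depth $O(d)$, exactly as in the fixed-$c$ case. This shrinks your open band from $(2d,\,o(\log N))$ down to $[\lambda, L]$ with $\lambda = \Theta(\log N/\log\log N)$ and $L = \lfloor\log_2 N\rfloor$.

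The paper then goes one step further, via an idea absent from your proposal: the identity $\lfloor x^{1/(pq)}\rfloor = \lfloor\lfloor x^{1/p}\rfloor^{1/q}\rfloor$. Any $y \in [\lambda,L]$ that is \emph{not} of the form ``a prime times an integer $\le d$'' has a divisor $y_1 \in (d,\lambda)$; compute the $y_1$-th root by the Newton routine above (its value is $< B$), then the $(y/y_1)$-th root by lookup in the base-case array. Only the $O(L/\log L)$ integers in $[\lambda,L]$ of that ``forbidden form'' remain genuinely open --- a much thinner residue than your $\omega(1) \le y = o(\log N)$.
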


In this subsection, we give strong arguments in favor of this conjecture.

\begin{remark} The conjecture is true for the restriction of the function $(x,y)\mapsto \lf x^{1/y} \rf$ to the following subdomains:
\begin{enumerate}
\item $y \geq \log_2 N$: then $0\leq \lf x^{1/y} \rf<2^d$ (because for $z=\lf x^{1/y} \rf$ we have 
$z^{\log_2 N} \leq z^y \leq x < N^d= 2^{d\log_2 N}$, which implies $z<2^d$) and therefore, 
$\lf x^{1/y} \rf$ can be computed in time $O(d)$ by dichotomic search;
\item $y \leq (\log_2 N) /(12 \log_2\log_2 N)$: then the algorithm of subsection~\ref{subsec:croot algo} computing the $c$th root $\lf x^{1/c} \rf$, for $x<N^d$ and any fixed integer $c$, and its justification (subsection~\ref{subsec:croot justif}) can be slightly adapted. 
\end{enumerate}
Since the conjecture is also answered positively for $y < \log_2 N$ and $x< N^{1-\varepsilon}$, for any fixed $\varepsilon>0$, by pre-computing in time $O(N)$ an array $\arr{root}$ of dimensions
$N^{1-\varepsilon}\times \log_2 N$, in fact the only open case is 
$\log_2 N /(12 \log_2\log_2 N) < y < \log_2 N$ with $N^{1-\varepsilon}\leq x < N^d$.
\end{remark}

This yields the following algorithm, which makes free use of the exponential and (rounded) logarithm functions proven to be computed in constant time in Section~\ref{sec:expLog}.


\begin{minipage}{\almosttextwidth}
  \begin{algorithm}[H]
    \caption{Computation of $(x,y)\mapsto \lf x^{1/y}\rf$, for $x<\var{N}^d$ and 
    ($2\leq y < \var{L} / (12 \lc \log_2 \var{L} \rc)$ or $\var{L} < y < \var{N}^d$), 
   where $\var{L}\coloneqq \lfloor \log_2 \var{N} \rfloor$}
    
    \begin{minipage}[t]{0.49\textwidth}

     \begin{algorithmic}[1]
    \Procedure{LinPreprocRoot}{\mbox{}}
       \State $\var{B} \gets \mathtt{max}(\lc \var{N}^{1/4} \rc ^3,\lc \var{N}^{1/(d+1)} \rc^d + 1)$
       \State $\var{L} \gets \lfloor \log_2 \var{N} \rfloor$
       \State $\arr{RootN} \gets $ an array of dimension $\var{L}$
       \State $\arr{root} \gets $ an array of dimensions $\var{B} \times \var{L}$
       \For {$\var{y} \From 1 \To \var{L} $}
          \State $\var{r} \gets 0$
          \While {$ \var{r}^{3\times \var{y}} < \var{N}$}
              \State $\var{r} \gets \var{r}+1$
          \EndWhile
          \State $\arr{RootN}[\var{y}] \gets \var{r}$ 
          \Comment{$r=\lceil N^{1/ (3 \times y) } \rceil$}
       \EndFor
       \For {$\var{y} \From 1 \To \var{L} $}
       \State $\arr{root}[0][\var{y}] \gets 0$
       \For {$\var{x} \From 1 \To \var{B}-1 $}
         \State $\var{s} \gets \arr{root}[\var{x}-1][\var{y}]$
         \If {$\var{x} < (\var{s}+1)^\var{y}$} 
           \State $\arr{root}[\var{x}][\var{y}] \gets \var{s}$
         \Else 
           \State $\arr{root}[\var{x}][\var{y}] \gets \var{s}+1$
         \EndIf
       \EndFor
       \EndFor
     \EndProcedure
\end{algorithmic}
\vspace{1em}
      \begin{algorithmic}[1]
\Procedure{improve}{$\var{g},\var{x},\var{y}$}
  \If {$\var{g}^{\var{y}} \leq \var{x} < (\var{g}+1)^{\var{y}} $}
  \State \Return $\var{g}$ 
  \Else
  \State \Return $(\var{x}+(\var{y}-1) \times \var{g}^{\var{y}})/(\var{y} \times \var{g}^{\var{y}-1})$
  \EndIf
\EndProcedure
      \end{algorithmic}
    \end{minipage}
    \begin{minipage}[t]{0.500\textwidth}
      \begin{algorithmic}[1]
\Procedure{root}{$\var{x},\var{y}$}
  \If{$\var{y} > \var{L}$} \Comment{$0\leq \lf x^{1/y} \rf < 2^d$}
  \State $\var{r} \gets 0$
  \State $\var{u} \gets 2^d$ 
  \While {$\var{u}-\var{r}>1$}
  \State $\var{z} \gets (\var{r}+\var{u})/2$
  \If {$ \var{x} < \var{z}^{\var{y}}$}
  \State $\var{u} \gets  \var{z}$
  \Else
  \State $\var{r} \gets  \var{z}$
  \EndIf
  \EndWhile
  \State \Return $\var{r}$
  \EndIf
 
 \If {$ \var{y} < \var{L} / (12\times \lc \log_2 \var{L} \rc) $}
  \If {$ \var{x} < \var{B}$}
   \State \Return $\arr{root}[\var{x}][\var{y}]$
   \Else
  \State  $\var{s} \gets \var{x} / (\arr{RootN}[\var{y}])^\var{y}$
  \State  $\var{t} \gets \Call{root}{\var{s},\var{y}}$ 
  \State  $\var{g}_0 \gets \var{t} \times \arr{RootN}[\var{y}]$
  \State  $\var{g}_1 \gets  \fun{improve}(\var{g}_0,\var{x},\var{y})$
  \State  $\var{g}_2 \gets  \fun{improve}(\var{g}_1,\var{x},\var{y})$
  \If {$ \var{g}_2^\var{y} \leq \var{x} < (\var{g}_2+1)^\var{y}$} 
  \State  \Return $\var{g}_2$ 
  \Else
  \State  \Return $\var{g}_2-1$
  \EndIf
  \EndIf
 \EndIf
\EndProcedure
      \end{algorithmic}
    \end{minipage}
  \end{algorithm}
\end{minipage}

\paragraph{Justification of the algorithm.} The case $y>\var{L}$, for $\var{L}=\lf \log_2 N \rf$, lines 2-11 of the $\fun{root}$ procedure, is justified by the following invariant condition: $r^y \leq x < u^y$ 
(meaning $0\leq x < 2^{dy}$ at the initialization): 
therefore, the return value $r$, obtained after at most $d$ iterations of the body of the while loop, satisfies $r^y \leq x < (r+1)^y$ as required.

The case $ y < \var{L} / (12\times \lc \log_2 \var{L} \rc) $, lines 12-24 of the $\fun{root}$ procedure,
is a variant of the $\fun{cthRoot}$ procedure, where the fixed integer $c$ is replaced by the integer variable $y$. Moreover, the computations which justify the correctness of this case are the same as those of $\fun{cthRoot}$, see Subsection~\ref{subsec:croot justif}, with the following adaptations. 

\begin{itemize}
\item Instead of setting $K \coloneqq \lc N^{1/(2c)} \rc$, we set $K\coloneqq\lc N^{1/(3y)} \rc$, which we pre-compute and store as an array element: $\arr{RootN}[y]\coloneqq\lc N^{1/(3y)} \rc$.
\item The recursive case of of the $\fun{root}$ procedure is $x\ge B$, for \\
$B\coloneqq  \mathtt{max}(\lc N^{1/4} \rc ^3,\lc N^{1/(d+1)} \rc^d + 1)$, instead of $x\ge M'$, for 
$M'\coloneqq \mathtt{max}(M,c_0^{2c}+c_0^{2c})$ 
(with $M\coloneqq K^{2c}$ and 
$c_0\coloneqq \left\lceil \left\lceil\sqrt{6c-12}\;\right\rceil\times (c^2-1)/(6c)\right\rceil$).
\item Our computations continue to use the numbers $c_0,c_1,c_2$ introduced in Subsection~\ref{subsec:croot justif}, now defined by replacing $c$ by $y$. In particular, we now set
$c_0\coloneqq \left\lceil \left\lceil\sqrt{6y-12}\;\right\rceil\times (y^2-1)/(6y)\right\rceil$ and 
$c_2 \coloneqq (y-1)^7 \times (y+1)^4/(8y^4)$.
\end{itemize}
For the reasoning and computations of Subsection~\ref{subsec:croot justif} to still be valid, it suffices to check that the following three conditions hold under the assumption 
$ y < L / (12\times \lc \log_2 L \rc) $, for sufficiently large integers $y$ and $N$:
\begin{enumerate}
\item $B\geq K^{2y}$; 
\item $B\geq c_0^{2y}+c_0^{y}$; 
\item $K \geq 1 + 2 \sqrt{c_2}$.
\end{enumerate}
It is important to note that, by items 1 and 2, the condition $x\geq B$ implies \linebreak
$x\geq 
\mathtt{max}(K^{2y},c_0^{2y}+c_0^{y})$. 
This corresponds to the condition 
$x\geq M'= \mathtt{max}(K^{2c},c_0^{2c}+c_0^{c})$ of the $\fun{cthRoot}$ procedure.
Recall also that, to conclude the reasoning of Subsection~\ref{subsec:croot justif}, condition~3 ($K \geq 1 + 2 \sqrt{c_2}$) has been assumed.

\begin{proof}[Proof of item 1] Since we have $K=\lc N^{1/(3y)} \rc\geq 2$ for $N\geq 2$, we obtain
$K/2 \leq K-1 < N^{1/(3y)}$. This implies $K < 2  N^{1/(3y)}$ and then 
$K^{2y} < 2^{2y} \times N^{2/3}$.
From the assumption \linebreak
$ y < L / (12\times \lc \log_2 L \rc) $, we get 
$2^{2y} < 2^{L / (6 \lc \log_2 L \rc)}\leq 2^{(\log_2 N) / (6  \lc \log_2 L \rc)} \leq N^{1/(6 \lc \log_2 L \rc)}$ and thus $2^{2y} < N^{1/(6 \lc \log_2 L \rc)}$.
Moreover, for $N\geq 16$, we have $L=\lf \log_2 N \rf\geq 4$ and then $\lc \log_2 L \rc \geq 2$. This gives $2^{2y} < N^{1/12}$.
All in all, we get $K^{2y} < 2^{2y} \times N^{2/3}<N^{1/12}\times N^{2/3}=N^{3/4}\leq B$, and 
$B \geq K^{2y}$ as claimed.
\end{proof}

The following lemma will be useful to prove items 2 and 3. 
\begin{lemma}\label{lemma: inverse log} 
For all numbers $y,z$ and $\alpha$ such that $y\geq 2$, $z>1$ and $0<\alpha \leq 1$, we have the implication
\[
y < \alpha \times z / \log_2 z \Rightarrow y \log_2 y < \alpha \times z
\]
\end{lemma}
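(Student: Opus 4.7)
The plan is to derive the conclusion by a short chain of two inequalities. First, since $z>1$ implies $\log_2 z > 0$, I would multiply both sides of the hypothesis $y < \alpha z /\log_2 z$ by $\log_2 z$ to obtain the key intermediate bound
\[
y \log_2 z \;<\; \alpha z.
\]
It then suffices to prove $y \log_2 y \leq y \log_2 z$, which, since $y > 0$ and $\log_2$ is monotone, reduces to showing $y \leq z$.

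To establish $y \leq z$, I would use $\alpha \leq 1$ together with $\log_2 z \geq 1$ (i.e.\ the regime $z \geq 2$, which is how the lemma is meant to be applied later in the paper, where $z$ will be of order at least $\log_2 L$ or $\log_2 N$). Under that regime we have $\alpha/\log_2 z \leq 1$, and therefore $y < \alpha z /\log_2 z \leq z$. Concatenating with the previous step gives
\[
y \log_2 y \;\leq\; y \log_2 z \;<\; \alpha z,
\]
as required.

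The only real subtlety is in getting $y \leq z$: for $z$ very close to $1$ the ratio $\alpha z / \log_2 z$ can exceed $z$ by a large factor, so the hypothesis alone does not force $y \leq z$ in that corner; hence the argument implicitly requires $z \geq 2^{\alpha}$ (and in particular $z \geq 2$ suffices since $\alpha \leq 1$). This is harmless because every subsequent invocation of the lemma will be in that regime.
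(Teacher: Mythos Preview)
Your argument is valid under the extra hypothesis $z\ge 2$ (more generally $z\ge 2^{\alpha}$) that you isolate, and it is a genuinely different route from the paper's. The paper does not go through $y\le z$. Instead, it chains the inequalities
\[
\frac{y\log_2 y}{\log_2(y\log_2 y)}=\frac{y\log_2 y}{\log_2 y+\log_2\log_2 y}\le y\le\frac{\alpha z}{\log_2 z}\le\frac{\alpha z}{\log_2 z+\log_2\alpha}=\frac{\alpha z}{\log_2(\alpha z)},
\]
using $\log_2\log_2 y\ge 0$ (from $y\ge 2$) and $\log_2\alpha\le 0$ (from $\alpha\le 1$), and then invokes monotonicity of $x\mapsto x/\log_2 x$ to deduce $y\log_2 y\le\alpha z$.

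Your approach is shorter and more elementary, and your caveat about small $z$ is well placed. In fact the lemma as literally stated is false in that corner: with $\alpha=1$, $z=1.5$, $y=2.5$ one has $y<\alpha z/\log_2 z\approx 2.56$ but $y\log_2 y\approx 3.30>1.5=\alpha z$. The paper's proof has the same blind spot, since $x\mapsto x/\log_2 x$ is increasing only on $[e,\infty)$, not on all of $(1,\infty)$. Both arguments are sound once $z\ge 2$, which is the only regime in which the lemma is ever invoked (Lemma~\ref{lemma: appli}, with $z=L=\lfloor\log_2 N\rfloor$).
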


\begin{proof}[Proof of the lemma]
Assume $y\leq \alpha \times z / \log_2 z$. 
Using also the inequalities $\log_2\log_2 y\geq 0$ and $\log_2\alpha \leq 0$, we obtain

$\dfrac{y \log_2 y}{\log_2(y\log_2 y)}=\dfrac{y \log_2 y}{\log_2 y+\log_2\log_2 y}\leq y
\leq \dfrac{\alpha \times z} {\log_2 z}\leq \dfrac{\alpha \times z} {\log_2 z+\log_2 \alpha}
= \dfrac{\alpha \times z} {\log_2(\alpha\times z)}$, and thus

$\dfrac{y \log_2 y}{\log_2(y\log_2 y)} \leq \dfrac{\alpha \times z} {\log_2(\alpha\times z)}$. 
Noticing that the function $x\mapsto \dfrac{x}{\log_2 x}$ is strictly increasing on its domain, we obtain $y \log_2 y \leq \alpha \times z$ as required.
\end{proof}

As a direct application of Lemma~\ref{lemma: inverse log} where we take $z=L=\lf log_2 N \rf>1$ and $\alpha=1/12$, we obtain 
\begin{lemma}\label{lemma: appli}
If $ y < L / (12\times \lc \log_2 L \rc) $ then $y \log_2 y < L/12$.
\end{lemma}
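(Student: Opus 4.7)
The plan is to apply Lemma~\ref{lemma: inverse log} (the ``inverse log'' lemma) directly, with the instantiation $z := L$ and $\alpha := 1/12$, and simply check that its hypotheses are all satisfied.

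First, I would observe that the conclusion of Lemma~\ref{lemma: inverse log} for these values reads exactly $y \log_2 y < L/12$, which is what we want. So the only work is to verify that the hypotheses of that lemma hold, namely $y \geq 2$, $z > 1$, $0 < \alpha \leq 1$, and $y < \alpha z / \log_2 z$.

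The conditions $0 < \alpha = 1/12 \leq 1$ and $z = L > 1$ (the latter using that we are in the asymptotic regime where $N$ is large enough that $L = \lfloor \log_2 N \rfloor \geq 2$) are immediate. For $y \geq 2$, this is the regime where the lemma is of interest; if $y < 2$ the conclusion $y \log_2 y \leq 0 < L/12$ holds trivially. The one small step requiring a brief comment is the hypothesis $y < L / (12 \log_2 L)$: by assumption we have $y < L / (12 \lceil \log_2 L \rceil)$, and since $\lceil \log_2 L \rceil \geq \log_2 L$ we get $L / (12 \lceil \log_2 L \rceil) \leq L / (12 \log_2 L)$, so the required inequality follows.

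There is no real obstacle here: the result is literally a specialization of the preceding lemma. I would present the proof as a single short paragraph that instantiates the parameters and notes the ceiling-versus-real-log comparison, then invokes Lemma~\ref{lemma: inverse log} to conclude.
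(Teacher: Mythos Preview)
Your proposal is correct and matches the paper's approach exactly: the paper also presents this lemma as a direct application of Lemma~\ref{lemma: inverse log} with $z=L$ and $\alpha=1/12$. Your additional care in noting the ceiling-versus-real-log comparison and the trivial case $y<2$ is more explicit than the paper, but the argument is the same.
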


\begin{proof}[Proof of item 2 ($B\geq c_0^{2y}+c_0^{y}$)] We have \\
$c_0 = \left\lceil \left\lceil\sqrt{6y-12}\;\right\rceil\times (y^2-1)/(6y)\right\rceil <
\dfrac{(\sqrt{6y-12}+1)\times (y^2-1)}{6y}+1 <
\dfrac{(\sqrt{y-2}+1/ \sqrt{6}) \times y +\sqrt{6}}{\sqrt{6}}$.
If $y\geq 6$, which implies $\sqrt{6} \leq y / \sqrt{6}$ and $2 / \sqrt{6} < 2 \leq \sqrt{y-2}$, we obtain \linebreak
$c_0 < \dfrac{(\sqrt{y-2} + 2/ \sqrt{6})\times y}{\sqrt{6}} < \dfrac{2 \sqrt{y-2} \times y}{\sqrt{6}} < y^{3/2}$. 
This implies $c_0^{2y}+c_0^{y}< 2\times c_0^{2y} < \linebreak
2\times (y^{3/2})^{2y} = 2 \times y^{3y} = 2 \times 2^{3y \log_2 y}< 2^{4 y \log_2 y}$ 
(note that $y\geq 2$ implies $y \log_2 y >1$) and therefore
$c_0^{2y}+c_0^{y}< 2^{4 y \log_2 y}$. 
Because of  $y \log_2 y < L/12$ according to Lemma~\ref{lemma: appli}, 
we get $\log_2 y < (1/12) L/y$ and then $4 y \log_2 y < (1/3) L$.
All in all, we deduce $c_0^{2y}+c_0^{y}< 2^{(1/3) L} \leq N^{1/3}<B$ and then 
$B\geq c_0^{2y}+c_0^{y}$.
\end{proof}

\begin{proof}[Proof of item 3 ($K \geq 1 + 2 \sqrt{c_2}$)] We have
$1 + 2 \sqrt{c_2} = 1 + \dfrac{(y-1)^{7/2}\times (y+1)^2}{\sqrt{2} \times y^2}$. 
Moreover, it is easy to verify that for $y\geq 6$, we have $\dfrac{(y+1)^2}{\sqrt{2} \times y^2}<1$. 
It comes $1 + 2 \sqrt{c_2} < 1 + (y-1)^{7/2} < y^{7/2}$, which implies 
$1 + 2 \sqrt{c_2} < 2^{(7/2) \log_2 y}$. Because of  $y \log_2 y < L/12$ according to Lemma~\ref{lemma: appli}, we get $\log_2 y < (1/12) L/y$ and then 
$1 + 2 \sqrt{c_2} < 2^{(7/2)\times (1/12) \times L/y} < 
2^{L/(3y)}\leq N^{1/(3y)}\leq \lc N^{1/(3y)} \rc = K$. 
We have proved $K \geq 1 + 2 \sqrt{c_2}$.
\end{proof}

\paragraph{Justification of the complexity.} In the $\fun{LinPreprocRoot}$ procedure, for each $y\in[1,L]$, the while loop, lines 8-9, does at most $N^{1/(3y)}$ iterations. 
Because of $N^{1/(3y)}\leq N^{1/3}$, the running time of the for loop, lines 6-10, is 
$O(L\times N^{1/3})=O(N)$.
Since the run time of lines 11-18 (of $\fun{LinPreprocRoot}$) is $O(L\times B)=O(N)$, all preprocessing runs in time $O(N)$.

To analyze the time of the $\fun{root}$ procedure, it suffices to determine its recursive depth. 
We have $(\arr{RootN}(y))^y=\lc N^{1/(3y)}\rc^y \geq  N^{1/3}$. 
Since we cannot divide the integer $x<N^d$ by  $(\arr{RootN}(y))^y$ 
(which is an integer greater than $ N^{1/3}$)
more than $3\times d$ times, the running time of the $\fun{root}$ procedure is $O(d)$, which is constant for fixed $d$.


\medskip
Conveniently, the notations used must be recalled and completed:
\begin{notation}
Let $L$, $\lambda$ and $B$ denote the integers $L \coloneqq \lf \log_2 N \rf$,
$\lambda\coloneqq L / (12\times \lc \log_2 L \rc)$, and, for a fixed integer $d\geq 1$,
$B\coloneqq  \mathtt{max}(\lc N^{1/4} \rc ^3,\lc N^{1/(d+1)} \rc^d + 1)$.
\end{notation}
In addition to the above algorithm, the root $\lf x^{1/y} \rf$ can also be computed in constant time for ``most'' integers $y$ in the ``remaining'' interval $[\lambda, L]$. 
This is the consequence of the following two lemmas.

\begin{lemma}\label{lemma:composeRoots}
For all positive integers $x,p,q$, the identity
$\lf x^{1/(p\times q)} \rf = \lf \lf x^{1/p} \rf^{1/q} \rf$ holds.
\end{lemma}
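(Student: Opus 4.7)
The plan is to reduce the identity to a well-known fact about compatibility of integer floor with monotone root extraction, namely that for any positive real $y$ and positive integer $q$,
\[
\lf y^{1/q} \rf = \lf \lf y \rf^{1/q} \rf.
\]
Once this auxiliary identity is established, the lemma follows immediately by applying it to $y \coloneqq x^{1/p}$, since $(x^{1/p})^{1/q} = x^{1/(pq)}$ gives
\[
\lf x^{1/(pq)} \rf = \lf (x^{1/p})^{1/q} \rf = \lf \lf x^{1/p} \rf^{1/q} \rf.
\]

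To prove the auxiliary identity, I would set $n \coloneqq \lf y^{1/q} \rf$, so by definition $n^q \leq y < (n+1)^q$. The key observation is that both bounds $n^q$ and $(n+1)^q$ are \emph{integers}, since $n$ and $q$ are. Hence $n^q \leq \lf y \rf$ (because $n^q$ is an integer no larger than $y$), and $\lf y \rf \leq y < (n+1)^q$ forces $\lf y \rf < (n+1)^q$ as well. Taking $q$-th roots of the inequalities $n^q \leq \lf y \rf < (n+1)^q$ yields $n \leq \lf y \rf^{1/q} < n+1$, so $\lf \lf y \rf^{1/q} \rf = n = \lf y^{1/q} \rf$, as claimed.

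There is essentially no obstacle here: the argument is purely a matter of exploiting that $n^q$ and $(n+1)^q$ are integers, which lets the outer floor ``pass through'' the root. I would present the auxiliary fact as a one-line sub-lemma, then state the lemma in one line as the specialization to $y = x^{1/p}$. No assumptions on primality of $p,q$ or on $x$ being a perfect power are needed, and the identity holds for all positive integers $x,p,q$ as stated.
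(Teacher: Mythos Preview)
Your proof is correct. The route differs slightly from the paper's: you factor through an auxiliary real-valued identity $\lf y^{1/q}\rf = \lf \lf y\rf^{1/q}\rf$ and then specialize to $y=x^{1/p}$, whereas the paper works directly with integers, setting $u\coloneqq\lf x^{1/p}\rf$ and $v\coloneqq\lf u^{1/q}\rf$ and chaining the inequalities $v^{pq}\le u^p\le x\le (u+1)^p-1\le (v+1)^{pq}-1$. Both arguments rest on the same observation---that the bounds $n^q$ and $(n+1)^q$ are integers, so the floor can be inserted between them---so the difference is one of packaging. Your version has the minor advantage that the auxiliary lemma is reusable (indeed the paper uses the special case $\lf\lf a\rf^{1/c}\rf=\lf a^{1/c}\rf$ elsewhere in Section~\ref{subsec:croot justif}); the paper's version is marginally more self-contained since it never leaves the integers.
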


\begin{proof}
Let $u,v$ denote the successive roots $u \coloneqq \lf x^{1/p} \rf$ and $v \coloneqq \lf u^{1/q} \rf$.
That means $u^p \leq x \leq (u+1)^p-1$ and  $v^q \leq u \leq (v+1)^q-1$, from which we deduce
$v^{p\times q} \leq u^p \leq x \leq (u+1)^p-1 \leq ((v+1)^q-1+1)^p-1=(v+1)^{p\times q}-1$. That gives
$v^{p\times q} \leq x \leq (v+1)^{p\times q}-1$, which means $v=\lf x^{1/(p\times q)} \rf$ as expected.
\end{proof}

\begin{lemma}\label{lemma:forbidden form}
Let $d\geq 1$ be a fixed integer. Suppose $N$ is large enough for the inequality 
$\lambda^2 / d^2> L$ to hold. Let $y$ be an integer in $[\lambda, L]$, which is not of the form, called \emph{forbidden form}\footnote{In particular, for $d=1$ (resp.\ $d=2$), an integer is of the forbidden form iff it is prime (resp. it is prime or twice a prime number). 
In general, by the repartition theorem of prime numbers, there are asymptotically about $d\times L/(\ln L)$ integers of forbidden form among the integers in $[\lambda,L]$, which is only 
$O((\log N) / (\log \log N))$.},  
 $p\times q$ for a prime number $p$ and an integer $q\leq d$. Then, $y$ has a divisor $y_1$ such that 
$d<y_1< \lambda$.
\end{lemma}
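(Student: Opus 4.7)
The plan is to argue by contrapositive: I will assume that every divisor of $y$ is either $\leq d$ or $\geq \lambda$, and show that then $y$ must have the forbidden form $p \times q$ with $p$ prime and $q \leq d$. The key algebraic input is the hypothesis $\lambda^2 / d^2 > L \geq y$, i.e.\ $\lambda > d\sqrt{y}$.

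First, I would factor $y$ into primes with multiplicity, ordered non-decreasingly: $y = p_1 \cdot p_2 \cdots p_k$ with $p_1 \leq \cdots \leq p_k$. Consider the chain of prefix-product divisors
\[
1 = D_0 \;\leq\; D_1 = p_1 \;\leq\; D_2 = p_1 p_2 \;\leq\; \cdots \;\leq\; D_k = y,
\]
which is a non-decreasing sequence of divisors of $y$ going from $1$ to $y$. Since $D_0 = 1 \leq d < \lambda \leq y = D_k$, there exists a unique index $j \in \{0, \dots, k-1\}$ such that $D_j \leq d$ and $D_{j+1} \geq \lambda$ (otherwise $D_{j+1}$ would lie strictly between $d$ and $\lambda$, giving the desired divisor $y_1$ and contradicting my assumption).

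Next I would show that the ``jump'' forces $k - j = 1$. From $D_{j+1} = D_j \cdot p_{j+1} \geq \lambda$ and $D_j \leq d$, I get $p_{j+1} \geq \lambda/d$, and by the ordering $p_{j+1} \leq p_{j+2} \leq \cdots \leq p_k$. If $k - j \geq 2$, then
\[
y \;\geq\; p_{j+1} \cdot p_{j+2} \;\geq\; (\lambda/d)^2 \;>\; L \;\geq\; y,
\]
a contradiction to the standing hypothesis $\lambda^2 / d^2 > L$. So $k - j = 1$, meaning $y = D_j \cdot p_{j+1}$ with $p_{j+1}$ prime and $D_j \leq d$. Setting $p := p_{j+1}$ and $q := D_j$ (including the case $j = 0$, in which $q = 1 \leq d$ and $y = p$ is prime), this exhibits $y$ in the forbidden form $p \times q$ with $p$ prime and $q \leq d$, completing the contrapositive.

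No step looks like a real obstacle: the only delicate point is noticing that the prefix-product chain both (a) is guaranteed to have a jump across the interval $(d, \lambda)$ under the contrapositive assumption, and (b) can only have a one-prime jump thanks to the quantitative hypothesis $\lambda^2 > d^2 L$. Everything else is bookkeeping on the unique factorization of $y$.
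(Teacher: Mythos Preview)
Your proof is correct and follows essentially the same strategy as the paper's: both form prefix products of the prime factorization of $y$, locate the point where these products first exceed $d$, and use the hypothesis $(\lambda/d)^2 > L$ to bound how many ``large'' primes can occur. The paper proceeds directly (ordering primes non-increasingly, splitting into the cases $i_0 < i_1$ and $i_0 = i_1$, and in the second case exhibiting the cofactor $y/p_1$ as the desired divisor), whereas you argue by contrapositive with the primes ordered non-decreasingly. Your ordering choice is a minor but pleasant simplification: once the jump prime $p_{j+1}$ is shown to be $\geq \lambda/d$, all later primes inherit this bound automatically, so the ``at most one prime after the jump'' conclusion is immediate and no separate case analysis on a cofactor is needed.
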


\begin{proof}
Let us consider the decomposition of $y$ into prime factors, $y=\prod_{i=1}^k p_i$, with $k\geq 2$ and $p_1\geq p_2 \geq \cdots \geq p_k$. Let $i_0$ (resp.\ $i_1$) be the least index such that 
$\prod_{i=1}^{i_0} p_i >d$ (resp.\ $\prod_{i=1}^{i_1} p_i \geq \lambda$). For~$N$ large enough, we have $\lambda \geq d$, which implies $i_0\leq i_1$. We have two cases:
\begin{enumerate}
\item $i_0<i_1$: taking $y_1=\prod_{i=1}^{i_0} p_i$, we obtain (by definition of $i_0$ and $i_1$) $d<y_1< \lambda$;
\item $i_0=i_1$: then, we get $\prod_{i=1}^{i_0} p_i =\prod_{i=1}^{i_1} p_i \geq \lambda$ and $\prod_{i=1}^{i_0-1} p_i \leq d$ (by definition of $i_0$ and~$i_1$), which gives, by division, 
$p_{i_0}\geq \lambda / d$. It implies $i_0=1$, since otherwise, we would have 
$L\geq y\geq p_{i_0 -1} \times p_{i_0} \geq p_{i_0}^2\geq \lambda^2 / d^2$, which contradicts the inequality $\lambda^2 / d^2> L$, for~$N$ large enough. 
Thus, we obtain the factorization $y=p_1\times y_1$, where $p_1 \geq \lambda$ and
$y_1 \coloneqq \prod_{i=2}^{k} p_i$. 
We get $y_1<\lambda$, since otherwise, we would have 
$y=p_1\times y_1\geq \lambda^2 > L$, a contradiction. Moreover, the inequality $y_1>d$ also holds, since otherwise, $y=p_1\times y_1$ would be of the ``forbidden form''. 
\end{enumerate}
Thus, in both cases, we can exhibit a divisor $y_1$ of $y$ such that $d<y_1<\lambda$.
\end{proof}

Notice that we can compute in time $O(L\times \lambda)= O(N)$ an array 
$\arr{factor}[\lambda..L]$ defined by $\arr{factor}[y]\coloneqq y_1$, where $y_1$ is the least divisor of $y$ in $[d+1,\lambda-1]$, if it exists (by Lemma~\ref{lemma:forbidden form}), and is 0 otherwise, i.e.\ when $y$ is of the forbidden form: it suffices to check, for all $y_1\in [d+1,\lambda-1]$, if 
$y \modop y_1 = 0$.

Now, using the $\arr{factor}$ array and Lemma~\ref{lemma:composeRoots}, it is easy to compute
$\lf x^{1/y} \rf$ in constant-time (with linear-time preprocessing), for every $x<N^d$ and every 
$y \in [\lambda, L]$ not in forbidden form: 
take $y_1=\arr{factor}[y]$ and $y_2=y / y_1$, which gives
$\lf x^{1/y} \rf = \lf \lf x^{1/y_1} \rf^{1/y_2} \rf$; 
the roots \linebreak
$z\coloneqq \lf x^{1/y_1} \rf$ and then 
$\lf z^{1/y_2} \rf = \lf \lf x^{1/y_1} \rf^{1/y_2} \rf$ 
can be computed respectively by the above algorithm, lines~12-24, and by reading the element 
$\arr{root}[z,y]$ of the array $\arr{root}[0..B-1][1..L]$, since we have respectively $y_1< \lambda$ and 
$y_1\geq d+1$, from which $z=\lf x^{1/y_1} \rf \leq (N^d)^{1/(d+1)} \leq \lc N^{1/(d+1)} \rc^d < B$ is deduced.

\begin{remark}
Using Lemmas~\ref{lemma:forbidden form} and~\ref{lemma:composeRoots}, it is easy to verify that Conjecture~\ref{conj: genRoot} is equivalent to its following restriction:
For any fixed integer $d\geq 1$, the function $(x,y)\mapsto \lf x^{1/p} \rf$, where $x<N^d$ 
and $p$ is a prime number such that $\log_2 N /(12 \log_2\log_2 N) < p < \log_2 N$, is computable in constant time after preprocessing in $O(N)$ time. Indeed, an integer in forbidden form being of the form $y=p \times q$, for a prime number $p$ and an integer $q\leq d$, the computation of 
$\lf x^{1/y} \rf =  \lf \lf x^{1/p} \rf^{1/q}\rf$ is reduced to the computation of $z\coloneqq \lf x^{1/p} \rf$ since, obviously, the final result $ \lf z^{1/q} \rf$ can be computed in constant time. 
\end{remark}


\section{Computing many other operations in constant time}\label{sec:bitwiseOperConstTime}
Real computers treat the contents and addresses of registers as binary
words. In addition to arithmetic operations and for the sake of
efficiency, assembly languages as well as programming languages
(Python, etc.) use string operations and bitwise logical operations:
$\andop$, $\orop$, $\xorop$, etc.

\newcommand{\xorar}{xorAr}
\newcommand{\andar}{andAr}

The objects processed by the RAM model are integers but can be assumed
to be in binary notation, i.e.\ they can be seen as binary words.  In
this section, we show that all the usual string and logical operations
on binary words can be computed in constant time with linear-time
preprocessing, just like on a standard CPU of a real computer.

\subsection{String operations and bitwise logical operations computed in constant time}

We identify each integer $x\in\mathbb{N}$ with the string
$x_{\ell-1}\cdots x_1x_0\in\{0,1\}^{\ell}$, of its binary notation with
$x=\sum_{0\leq i < \ell} 2^i \times x_i$.  
The length $\ell$ of the representation of $x$ (without useless leading zeros: $x_{\ell-1}>0$) is given by the
following function:

\begin{equation*}
\Call{length}{x} =
\begin{cases}
1 & \mathtt{if}\; x=0\\
\lfloor\log_2 x\rfloor +1& \mathtt{otherwise}.
\end{cases}
\end{equation*}

\begin{remark}
We have $\Call{length}{x} \leq \log_2 N+ O(1)$, for each integer $x=O(N)$
contained in a RAM register, and $\Call{length}{x}=O(\log N)$ for each
``polynomial'' integer $x=O(N^d)$, for a fixed~$d$.
\end{remark}

\noindent
Three operations are essential for manipulating binary strings:
\begin{itemize}
\item the concatenation of two strings $X$ and $Y$ is defined as
  $\Call{Conc}{X,Y}\coloneqq X \times 2^{\Call{length}{Y}}+Y$;
\item the function that maps each pair $(X,i)$ of a string $X$ and an index $i$ to the $i$th bit $x_i$ of $X$ is defined as 
$\Call{Bit}{X,i}\coloneqq (X \divop 2^i) \modop 2$; 
\item the substring function which associates to each string
  $X=x_{\ell-1}\cdots x_1x_0$ and two indices $i,j$ with $\ell\ge  i> j \ge 0$ 
  the (nonempty) substring $x_{i-1} x_{i-2}\cdots x_{j}$, is defined as
  $\Call{Substring}{X,i,j}\coloneqq (X \modop 2^{i}) \divop 2^j$.
\end{itemize}
With $\fun{Substring}$, one can express that a string $X$ is a suffix
or a prefix of another string $Y$:
\begin{itemize}
\item $X$ is a suffix of $Y \iff X=\Call{Substring}{Y,\Call{length}{X},0}$;
\item $X$ is a prefix of $Y \iff
  \begin{cases}
    \Call{length}{X} \leq \Call{length}{Y} \text{ and }\\
    X=\Call{Substring}{Y,\Call{length}{Y}, \Call{length}{Y}-\Call{length}{X}}
  \end{cases}$.
\end{itemize}

\begin{remark}
Note that these procedures can be straightforwardly adapted if the
integers are represented in a fixed base $\gamma\geq 2$, i.e.\ are
identified with strings $x_{\ell-1}\ldots
x_1x_0\in\{0,1,\ldots,\gamma-1\}^{\ell}$.
\end{remark}


Processing is similar for each bitwise logical operation. As an
example, let us study the operation $\xorop$ applied to two integers
$X,Y$ which are less than $N^d$ (for a fixed $d$) and can therefore be identified with the
strings of their binary notations $x_{\ell-1}\cdots x_0$ and
$y_{\ell-1}\cdots y_0$ of length \linebreak
$\ell= \Call{length}{\mathtt{max}(X,Y)} = O(\log N)$.

The natural idea is to decompose each operand $X,Y<2^{\ell}$ into its
$\ell$ bits copied in $\ell$ registers, to compute the bits $z_i=x_i
\;\xorop \; y_i$ in $\ell$ registers too, and to get the result $Z=X
\;\xorop\; Y$ by concatenating the $\ell$ bits $z_i$ in one register:
$Z=z_{\ell-1}\cdots z_0$. Obviously, this process takes time
$O(\ell)$, which is, generally, not a constant time.

To compute the integer $X\;\xorop \;Y$ in constant time, for operands $X,Y<N^d$ where $d$ is a fixed integer, our trick is to
notice that if we set $X=X_1 \times p+ X_0$ and $Y= Y_1\times p + Y_0$, where
$p$ is a power of two such that $X_0,Y_0 < p$, then we get
\begin{eqnarray}\label{eqn:decomposebit}
X\;\xorop \;Y = (X_1\;\xorop\;Y_1)\times p + (X_0 \; \xorop \; Y_0)
\end{eqnarray}
Of course, the same ``decomposition'' property holds for the other bitwise operations $\andop$, 
$\orop$: $X\;\andop \;Y = (X_1\;\andop\;Y_1)\times p + (X_0 \; \andop \; Y_0)$
and $X\;\orop \;Y = (X_1\;\orop\;Y_1)\times p + (X_0 \; \orop \; Y_0)$.
Also, note that those equalities hold for all pairs of integers $X,Y>1$, i.e.\ represented with at least 2 bits. 

Now, by using equation~(\ref{eqn:decomposebit}) iteratively for $p=2$, we can pre-compute a 
2-dimensional table, called $\arr{\xorar{}}$, giving  $Z=X\;\xorop\; Y$ 
for all the pairs of small enough operands $X,Y$. 
For large operands, we use equation~(\ref{eqn:decomposebit})
for $p=2^{\ell}$ and $\ell=\lf \frac{1}{2} \times \Call{length}{\mathtt{max}(X,Y)}\rf$.
Thus, we obtain a recursive procedure called $\fun{xor}$: 
at each step the maximum length $L$ of the arguments of $\fun{xor}$, 
$L\coloneqq \Call{length}{\mathtt{max}(X,Y)}$, is divided by two, 
i.e.\ $L$ becomes $\lf L/2 \rf$ or $\lc L/2 \rc$.

The previous arguments justify that the following algorithm correctly computes the $\xorop$ operation.

\begin{minipage}{\almosttextwidth}
\begin{algorithm}[H]
  \caption{Computation of the $\xorop$ operation}
      \begin{minipage}{0.58\textwidth}
        \begin{algorithmic}[1]
\Procedure{LinPreprocXOR}{\mbox{}}
  \State $\var{K} \gets {\Call{sqrt}{N}}$ \Comment{$\var{K}=\lc \sqrt{N} \rc$}
  \State $\arr{\xorar{}} \gets $ array of dimension $\var{K}\times \var{K}$
   \For{$\var{i} \From 0 \To 1$}
      \For{$\var{j} \From 0 \To 1$}
         \State $\arr{\xorar{}}[\var{i}][\var{j}] \gets (\var{i}+\var{j})\modop 2$
      \EndFor
   \EndFor
  
  \For{$\var{i} \From 0 \To \var{K}-1$}
  \For{$\var{j} \From 0 \To \var{K}-1$}
  \If{$\var{i}> 1$ or $\var{j}> 1$}
      \State $\arr{\xorar{}}[\var{i}][\var{j}] \gets 2 \times \arr{\xorar{}}[\var{i}/2][\var{j}/2]$
       \\ \mbox{~}\hspace{4em}
      $+ \; \arr{\xorar{}}[\var{i} \modop 2][\var{j} \modop 2] $
  \EndIf
    \EndFor
  \EndFor
\EndProcedure
 \end{algorithmic}
 \end{minipage}  
 \begin{minipage}{0.41\textwidth}
      \begin{algorithmic}[1]
\Procedure{xor}{$\var{x},\var{y}$}
  \If{$\var{x} < \var{K}$ and $\var{y} < \var{K}$}
    \State \Return $\arr{\xorar{}}[\var{x}][\var{y}]$
  \Else
    \State $\ell \gets \Call{length}{\mathtt{max}(\var{x},\var{y})}/2$
    \State $\var{p} \gets 2^{\ell}$
    \State \Return $\var{p} \times \Call{xor}{\var{x}/\var{p},\var{y}/\var{p}}$
        \mbox{~}\hspace{6em} $+\; \Call{xor}{\var{x} \modop \var{p}, \var{y} \modop \var{p}}$
    \EndIf
    \EndProcedure
      \end{algorithmic}
      \end{minipage}
\end{algorithm}
\end{minipage}

\paragraph{Time complexity.}
We have the linearity of the preprocessing because $K=\lc \sqrt{N} \rc$ and
thus the two nested loops, lines 7-10, take $O(N)$ time. 
The result of the preprocessing is an array such that 
$\arr{\xorar{}}[\var{x}][\var{y}]=\var{x}\;\;\xorop \;\;\var{y}$ for all $\var{x},\var{y} < K$. 
When calling $\Call{xor}{\var{x},\var{y}}$,
either both arguments $\var{x},\var{y}$ are smaller than $K$, or we recurse with two calls, 
$\Call{xor}{\var{x}',\var{y}'}$ and $\Call{xor}{\var{x}'',\var{y}''}$, where the length of the maximum of
$\var{x}'$ and~$\var{y}'$ (resp.\ $\var{x}''$ and~$\var{y}''$) is, at most, half the length of the maximum of $\var{x}$ and $\var{y}$ plus one.
Starting with $\var{x},\var{y}<N^d$, which implies $\var{x},\var{y}<K^{2d}$, we obtain,  
for the number~$L_2$ defined as the smallest power of two greater than or equal to 
$\Call{length}{\mathtt{max}(\var{x},\var{y})}$,
the inequalities\footnote{We use the general inequality 
$\Call{length}{u \times v} \leq  \Call{length}{u} + \Call{length}{v}$, for all integers $u,v$.}
$L_2< 2 \times \Call{length}{\mathtt{max}(\var{x},\var{y})} \leq 2 \times \Call{length}{K^{2d}} 
\leq 4d \times \Call{length}{K}$, from which we deduce 
$L_2/(2^{\lc \log_2 d \rc +2})\leq L_2/(4d) < \Call{length}{K}$.

Therefore, the recursion depth is bounded by the number $\delta\coloneqq \lc \log_2 d \rc +2$ since, after repeatedly dividing $L_2$ by two, $\delta$ times, $L_2$ becomes smaller than $\Call{length}{K}$ so that the operands of the $\fun{xor}$ procedure themselves become less than $K$. 
This means that the $\fun{xor}$ procedure executes in constant time, for any fixed integer $d\geq 1$.

\paragraph{Getting other bitwise logical operators.}
All the bitwise operations can be obtained in the same way. 
To get the operation $\andop$ (resp.\ $\orop$), 
we only have to replace the $\arr{\xorar{}}$ array with an $\arr{\andar{}}$ (resp.\ $\arr{orAr}$) array,
and the assignment $\arr{\xorar{}}[\var{i}][\var{j}] \gets (\var{i}+\var{j})\modop 2$, which defines the Boolean $\xorop$ operation (line 6 of the preprocessing procedure), is replaced by the assignment \linebreak
$\arr{\andar{}}[\var{i}][\var{j}] \gets \var{i}\times\var{j}$ 
(resp.\ $\arr{orAr}[\var{i}][\var{j}] \gets \mathtt{max}(\var{i},\var{j})$), defining the Boolean $\andop$ 
(resp.\ $\orop$) operation. 


\subsection{Constant time and complexity of Turing machines and cellular automata}\label{subsec:TuringCA}

The relationships between the complexity of RAMs and that of other
major computation models, Turing machines and cellular automata, are
wide open issues.  It is easy to verify that addition, subtraction and
all the string or logical operations studied in the previous
subsections are computable in linear time on multi-tape Turing
machines.  On the other hand, it is known that multiplication is
computable in linear time on cellular
automata~\cite{Goyal76,Even90,MazoyerY12}.

A natural question arises: is it true that each operation computable
in linear time on a Turing machine or on a cellular automaton is
computable in constant time on a RAM with addition, with linear-time
preprocessing? In this subsection, we answer this question positively.

\medskip
Since it is known and easy to verify that any problem computable in linear time on a multi-tape Turing machine is computable in linear time on a cellular automaton~\cite{Smith71,Terrier12}, it suffices to prove the expected result for cellular automata.

First, we need some definitions and lemmas about one-dimensional cellular automata~\cite{Kari12, Terrier12}.

\begin{definition}[cellular automaton, configuration, computation] 
• A \emph{cellular automaton} (CA or \emph{automaton}) 
$\mathcal{A}\coloneqq (Q,\delta)$ consists of a finite set of \emph{states} $Q$ and a \emph{transition function} 
$\delta: Q^3\to Q$.

• A \emph{configuration} $C$ of the CA is a bi-infinite word $\cdots c_{-2}c_{-1}c_0 c_1 c_2 \cdots \in Q^{\mathbb{Z}}$.
For $i\in \mathbb{Z}$, we say that the cell $i$ of the configuration $C$ is in the state $c_i$.

• The \emph{successor configuration} of $C$ for the CA is 
$\delta(C)\coloneqq \cdots c'_{-2}c'_{-1}c'_0 c'_1 c'_2 \cdots$ where \linebreak
$c'_i\coloneqq\delta(c_{i-1},c_i,c_{i+1})$, for each $i\in\mathbb{Z}$.

• The \emph{computation} of the CA 
from any initial configuration $C$ is the sequence of configurations $(\delta^j(C))_{j\geq 0}$ defined by 
$\delta^0(C)\coloneqq C$ and $\delta^{j+1}(C)\coloneqq  \delta(\delta^j(C))$.
\end{definition}

We are interested in finite word problems. 
\begin{definition}[word configuration]
The configuration of a cellular automaton 
$\mathcal{A}=(Q,\delta)$ that represents a finite non-empty word $w\in\Sigma^+$ over a finite alphabet 
$\Sigma\subset Q$ is the bi-infinite word $\cdots\sharp\sharp w \sharp\sharp\cdots$, 
simply denoted $\sharp w \sharp$ and called a \emph{word configuration}, where $\sharp\in Q\setminus \Sigma$ is a special state (comparable to the ``blank'' symbol of a Turing machine).
\end{definition}  

Let us describe how ``special'' the state $\sharp$ can be:

\begin{definition}[
permanent state on the right]\label{def:sharp}


A state $\sharp$ is \emph{permanent on the right} for
a cellular automaton $\mathcal{A}=(Q,\delta)$ if we have:
(1) $\delta(q,\sharp,\sharp)=\sharp$, for each $q\in Q$;
\;(2) $\delta(q_{-1},q,q_1)\neq \sharp$, for each $q\in Q\setminus\{\sharp\}$ and all $q_{-1},q_1\in Q$.

\end{definition} 

\begin{remark}
If $\sharp$ is a permanent state on the right and 
$C$ is a configuration having state~$\sharp$ on all cells of position $j\geq i$
(i.e.\ $\forall j\geq i, c_j = \sharp$), then these cells remain in state $\sharp$ in $\delta(C)$. 
Also, note that the condition (2) of Definition~\ref{def:sharp} prevents the computation from creating new $\sharp$ symbols.
\end{remark}

\begin{definition}[active cell, active interval]
A cell $i$ is said to be \emph{active} in a computation of a CA if its state $c_i$ is not $\sharp$ in at least one configuration $C$ of the computation.
By Definition~\ref{def:sharp}, the set of active cells of the computation from a word configuration is an interval, called the \emph{active interval} of the computation.
\end{definition}

We need to define a function computable by a CA in linear time, a notion defined and studied in~\cite{GrandjeanRT12} in the case where the output length is equal to the input length.
We need to adapt this definition to allow the outputs to have any length. 
Our definition below is a bit technical: 
namely, 
the CA must \emph{synchronize} its output production but the synchronization is a constraint a priori difficult to satisfy by a parallel computation.

\begin{definition}[function computed by a CA in linear time]\label{def:lintimeCA}
 Let $\Sigma$ and $\Sigma'$ be two finite alphabets and let $\mathcal{A}=(Q,\delta)$ be a CA with 
 $\Sigma\subset Q$, a non-empty set of \emph{output states} $Q_{out} \subset Q\setminus \Sigma$ and two special states: 
 the state $q_{\sharp}\in Q_{out}$ 
 and the state $\sharp \in Q \setminus (Q_{out} \cup \Sigma \cup \Sigma')$ permanent on the right.
 
$\mathcal{A}$ \emph{computes} a function  $f:\Sigma^+ \to \Sigma'^+$ \emph{in linear time} if there are an integer $c\geq1$ and a projection $\pi: Q_{out}\cup \{\sharp\} \to \Sigma'\cup \{\sharp\}$
such that, for each integer $n>0$ and each word $w\in \Sigma^n$, called the \emph{input} of $\mathcal{A}$, 
there is an integer $T\leq cn$ so that conditions 1-4 are satisfied:

\begin{enumerate}
\item 
$\delta^{T}(\sharp w \sharp)=\sharp (q_{\sharp})^k v \sharp$ for an integer $k\geq 0$ and a word 
$v=v_m\cdots v_1\in (Q_{out}\setminus \{q_{\sharp}\})^+$;
 
\item for each $t<T$, no cell of the configuration $\delta^{t}(\sharp w \sharp)$
is in a state of $Q_{out}$;

\item $\pi(q_{\sharp})=\pi(\sharp)= \sharp$ and $\pi(q)\in\Sigma'$ for each 
$q \in Q_{out}\setminus \{q_{\sharp}\}$;

\item The word $f(w)=\pi(v)\coloneqq \pi(v_m)\cdots \pi(v_1)$ of $\Sigma'^+$ is called the 
\emph{output} of $\mathcal{A}$.

\end{enumerate}

\end{definition}

\noindent
Note that if we set 
$\pi(\sharp (q_{\sharp})^k v \sharp)\coloneqq \sharp(\pi(q_{\sharp}))^k \pi(v)\sharp$, then we get
$\pi(\sharp (q_{\sharp})^k v \sharp)=\sharp f(w) \sharp$ and therefore 
$\pi(\delta^T(\sharp w \sharp))=\sharp f(w) \sharp$.\\
Note also that the length of $f(w)$, which is that of $v$, is bounded by $n+T\leq (c+1)n$.

\begin{remark}
The conjunction of conditions 1 and 2 expresses the synchronization of the output configuration
$\sharp (q_{\sharp})^k v \sharp$ : 
at the exact instant $T$, not before, each of its $k+m$ active cells ``knows" that it is in its final state.
\end{remark}

\begin{remark}\label{lemma:binaryOutput}
When the output is an integer written in binary with $\Sigma' \coloneqq \{0,1\}$, which will be the case in our application (Theorem~\ref{th:linTimeCA}), it is natural and more convenient to take \linebreak
$\pi: Q_{out}\cup\{\sharp\}\to \{0,1,\sharp\}$ with 
$\pi(q_{\sharp})\coloneqq 0$ (alternatively, the $q_{\sharp}$ state can be deleted). 
This will give $\pi(\delta^T(\sharp w \sharp))= \sharp f(w) \sharp$
with the output integer $f(w)\in \{0,1\}^T$ written to $T$ bits, possibly including leading zeros.
\end{remark}

Our proof of Theorem~\ref{th:linTimeCA} will use properties of functions computable in linear time by~CA. 

\begin{lemma}\label{lemma:compose}
The class of functions computable in linear time by cellular automata is closed under composition.
\end{lemma}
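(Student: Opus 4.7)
The plan is to construct a composite CA $\mathcal{A} = (Q, \delta)$ that runs $\mathcal{A}_1$ on the input $w$ for $T_1 \leq c_1|w|$ steps, then in a single cell-local step rewrites each $Q_{out,1}$-cell by its $\pi_1$-image, then runs $\mathcal{A}_2$, and finally performs a synchronized cleanup to produce an output configuration of the form required by Definition~\ref{def:lintimeCA}. The state set will essentially be $Q_1 \cup Q_2$ with the two blanks identified as a single $\sharp$, augmented by one fresh ``pseudo-blank'' state $p$ used in place of $\mathcal{A}_1$'s padding cells during phase~2, and by a fresh set $Q_{out,\mathcal{A}}$ of $\mathcal{A}$-output states disjoint from everything else. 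The transition function $\delta$ will be defined by cases on the centre cell: apply $\delta_1$ on $Q_1 \setminus (Q_{out,1} \cup \{\sharp\})$; apply the cell-local projection on $Q_{out,1}$, with $q_{\sharp,1} \mapsto p$ and each $v_i \mapsto \pi_1(v_i) \in \Sigma'$; apply $\delta_2$ on $Q_2 \setminus \{\sharp\}$ (treating $p$-neighbours as $\sharp$); and handle $b = \sharp$ by inspecting the neighbours to choose between the $\delta_1$- and $\delta_2$-branches, without ambiguity since cells of $Q_{out,1}$ occur only at the single instant $T_1$. Because $\mathcal{A}_1$ is synchronized at $T_1$, the projection step will fire everywhere simultaneously and produce the configuration $\sharp\, p^k\, f_1(w)\, \sharp$, which is the intended input $\sharp f_1(w) \sharp$ of $\mathcal{A}_2$ with a pseudo-blank left padding.

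The hard part will be the padding state $q_{\sharp,1}$. Mapping it directly to $\sharp$ would create a blank cell from a non-blank one, violating condition~(2) of Definition~\ref{def:sharp} (permanence of $\sharp$ on the right for $\mathcal{A}$); and since placing $Q_{out,2}$-cells in the configuration already at time~$T_1$ is forbidden, $Q_{out,\mathcal{A}}$ has to be chosen \emph{disjoint} from $Q_{out,2}$ in order to preserve condition~2 of Definition~\ref{def:lintimeCA}. The pseudo-blank $p$ bypasses both obstructions because $p \notin Q_{out,\mathcal{A}}$ and because $\delta$ on $p$-centred neighbourhoods will return $p$ whenever $\delta_2$ would have returned $\sharp$, and will return $\delta_2$'s value otherwise; thus $p$ is never rewritten into $\sharp$ and it faithfully imitates $\sharp$ for~$\mathcal{A}_2$. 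After at most $T_2 \leq c_2|f_1(w)| \leq c_2(c_1+1)|w|$ further steps, the $Q_2$-part of the configuration will therefore be $\sharp\, (q_{\sharp,2})^{k_2}\, v_2\, \sharp$ with $\pi_2(v_2) = f_2(f_1(w))$.

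To finally obtain an output configuration of the form $\sharp\, (q_{\sharp,\mathcal{A}})^k\, v\, \sharp$ required by Definition~\ref{def:lintimeCA}, I will trigger a synchronized cleanup at the moment $\mathcal{A}_2$ synchronizes: using a classical firing-squad-style construction over the $O(|w|)$ active cells, all residual $p$- and $(q_{\sharp,2})^{k_2}$-cells will simultaneously enter $q_{\sharp,\mathcal{A}}$, while each $v_2$-cell will simultaneously enter a matching $\mathcal{A}$-output state in $Q_{out,\mathcal{A}} \setminus \{q_{\sharp,\mathcal{A}}\}$ whose projection $\pi_\mathcal{A}$ mirrors~$\pi_2$; this cleanup will cost $O(|w|)$ extra steps. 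The total running time will be $T_1 + 1 + T_2 + O(|w|) = O(|w|)$, linear as required, and the four conditions of Definition~\ref{def:lintimeCA} will follow from the corresponding properties of $\mathcal{A}_1$ and $\mathcal{A}_2$ together with the key observations that the transition step never produces $\sharp$ (so $\sharp$ remains permanent on the right for $\mathcal{A}$) and that no cell enters $Q_{out,\mathcal{A}}$ before the cleanup fires.
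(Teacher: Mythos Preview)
Your high-level plan coincides with the paper's: sequence the two automata with a one-step, cell-local projection in between, dispatching on the centre state. The paper sets $Q'' = Q_1 \cup Q_2$ (with the two blanks identified) and defines $\delta''(q_{-1},q,q_1)$ to be $\delta_1(q_{-1},q,q_1)$ when all three arguments lie in $Q_1\setminus Q_{out,1}$, to be $\pi_1(q)$ when $q\in Q_{out,1}$, and to be $\delta_2(q_{-1},q,q_1)$ when all three lie in $Q_2$; it then takes the output states and projection of the composite to be those of~$\mathcal{A}_2$, and simply observes that the three cases are traversed in order, giving total time $T_1+1+T_2\le c_1 n+1+c_2(c_1+1)n$.

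Where you differ is in care with the formal conditions. The paper's projection rule sends $q_{\sharp,1}\mapsto\pi_1(q_{\sharp,1})=\sharp$, so its composite $\delta''$ maps a non-$\sharp$ centre to $\sharp$ and therefore fails clause~(2) of Definition~\ref{def:sharp}; the paper does not address this. Your pseudo-blank $p$ is exactly the repair. That repair, in turn, forces the cleanup: once the leftover padding at the end of phase~2 can be a mix $p^{k'}(q_{\sharp,2})^{k_2}$ rather than a power of a single state, clause~1 of Definition~\ref{def:lintimeCA} is not literally met, and a synchronised sweep over the $O(|w|)$ active cells is a clean way to restore the single-padding form (while also delaying entry into $Q_{out,\mathcal{A}}$ until the true final step, so clause~2 holds). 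The paper's sketch sidesteps both issues only because it has already collapsed $q_{\sharp,1}$ into $\sharp$. In short: same mechanism, but your version actually checks the definitions, at the cost of the extra $p$-state and the firing-squad pass.
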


\begin{proof}

Let $f:\Sigma\to \Sigma'$ and $f':\Sigma'\to \Sigma''$ be two functions computed in linear time by the cellular automata $\mathcal{A}=(Q,\delta)$ and $\mathcal{A'}=(Q',\delta')$, respectively, with the projections 
\linebreak
$\pi: Q_{out}\cup \{\sharp\} \to \Sigma'\cup \{\sharp\}$ and 
$\pi': Q'_{out}\cup \{\sharp\} \to \Sigma''\cup \{\sharp\}$, respectively.
Without loss of generality, assume $Q\cap Q'=\{\sharp\}$.
To compute the composition function $f'\circ f$, we must build a cellular automaton 
$\mathcal{A''}=(Q'',\delta'')$ whose program is the composition of the programs of 
$\mathcal{A}$, $\pi$ and $\mathcal{A}'$.
Using the fact that the sets of states $Q \setminus Q_{out}$, $Q_{out}$ and $Q' \setminus \{\sharp\}$ are mutually disjoint,
 we define the automaton $\mathcal{A''}=(Q'',\delta'')$ by $Q'' \coloneqq Q \cup Q'$ and,
 for all  $q_{-1},q,q_1\in Q''$,

\begin{equation}\label{eqn:casesCA}
\delta''(q_{-1},q,q_1) \coloneqq
\begin{cases}
\delta(q_{-1},q,q_1)& \mathtt{if}\; q_{-1},q,q_1\in Q\setminus Q_{out}\\
\pi(q) & \mathtt{if}\; q \in Q_{out}\\
\delta'(q_{-1},q,q_1)& \mathtt{if}\; q_{-1},q,q_1\in Q'\\
\mathtt{any}\; \mathtt{value}& \mathtt{otherwise}.
\end{cases}
\end{equation}
The reader can easily convince herself that starting from any word configuration 
$\sharp w \sharp$, $w\in \Sigma^+$, \linebreak
the computation of $\mathcal{A''}$ only meets cases 1, 2 and 3 of equation~(\ref{eqn:casesCA}), in this order\footnote{The only ``exception'' is the case 
$\delta''(\sharp,\sharp,\sharp)=\delta(\sharp,\sharp,\sharp)=\delta'(\sharp,\sharp,\sharp)=\sharp$ common to cases 1 and 3. The other cases are mutually disjoint}.
By hypothesis, there are constants $c,c'$ such that, for each $n>0$ and each word $w\in\Sigma^n$, 
we have $\pi(\delta^T(\sharp w \sharp))=\sharp f(w) \sharp$, for an integer $T\leq cn$, and 
$\pi'(\delta'^{T'}(\sharp f(w) \sharp))=\sharp f'(f(w)) \sharp$, for an integer $T'\leq c'n'$ where $n'$ is the length of $f(w)$.
This implies $T'\leq c'(c+1)n$. 
By construction, the automaton $\mathcal{A''}$ transforms the configuration $\sharp w \sharp$
into the configuration $\delta^T(\sharp w \sharp)$ (in $T$ time), and then into 
$\pi(\delta^T(\sharp w \sharp))=\sharp f(w) \sharp$ (in unit time), and finally into 
$\delta'^{T'}(\sharp f(w) \sharp))$ (in $T'$ time) from which $f'(f(w))$ is obtained by the $\pi'$ projection.
Thus, $\mathcal{A''}$ computes $f' \circ f (w)$, with $\pi'$, in linear time, 
more precisely, time $T+1+T' \leq cn +1 + c'(c+1)n$.
\end{proof}

We are going to show that we can impose that the computation time of a function computable in linear time on a CA be an \emph{exact multiple} of the length of the input. For this we need the following result:

\begin{lemma}\label{lemma:fire}
Let $\Sigma$ be a finite alphabet and $c$ be a positive integer. 
There is a cellular automaton $\mathcal{A}_{\varphi}=(\delta_{\varphi},Q_{\varphi})$, called \emph{firing squad automaton}, with 
$\Sigma\subset Q_{\varphi}$, two special states, $\varphi$, called the ``fire'' state, and $\sharp$, the permanent state on the right, such that, for each $n\geq 1$ and each word $w\in\Sigma^n$, we have
$\delta_{\varphi}^{cn}(\sharp w \sharp) = \sharp\varphi^{cn} \sharp$ whereas, for each $t<cn$, the ``fire'' state $\varphi$ does not appear in the configuration $\delta_{\varphi}^{t}(\sharp w \sharp)$.
\end{lemma}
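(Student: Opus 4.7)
The plan is to treat Lemma~\ref{lemma:fire} as a variant of the classical firing squad synchronization problem (FSSP; see e.g.\ \cite{MazoyerY12,Terrier12}), adapted to two novelties: (i) the initial configuration is an arbitrary word $w\in\Sigma^n$ rather than a single ``general'' at one endpoint, and (ii) the target firing region has size $cn$ rather than $n$. Because Definition~\ref{def:sharp} forbids the active region to grow to the right but does \emph{not} forbid it to grow to the left (the rule $\delta(\sharp,\sharp,q_1)$ with $q_1\neq\sharp$ is unconstrained), the extra $(c-1)n$ cells will be added on the left of $w$.

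First, I would perform a one-step \emph{uniform abstraction}: the transition function is set up so that at time~$1$, every non-$\sharp$ cell is rewritten to a common ``pre-general'' state~$g$ (via rules $\delta_\varphi(\alpha,\sigma,\beta)=g$ for every $\sigma\in\Sigma$ and every $\alpha,\beta\in Q_\varphi$), with two distinguished sub-states $g_L,g_R$ marking the left and right boundary cells of~$w$ (recognizable thanks to their $\sharp$-neighbor). This erases the content of $w$ and reduces the problem to the uniform configuration $\sharp g_L g^{n-2}g_R\sharp$. Then I would launch a \emph{leftward expansion}: a signal emitted from the leftmost cell activates one new cell on the left at each step, so that after $(c-1)n$ additional steps the active region contains exactly $cn$ cells.

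Concurrently with the expansion, I would overlay a Mazoyer-style \emph{recursive synchronization}, using auxiliary signals at rational speeds $1,\,1/3,\,1/7,\ldots$ that repeatedly halve the region. The state $\varphi$ is produced by a single dedicated transition rule, triggered only when the innermost recursion signals coincide on a cell; by construction this coincidence happens simultaneously at every cell of the active region at time exactly $cn$, and cannot happen earlier. The conceptual reason that the tight bound $cn$ is reachable here is that the $n$ initial cells of $w$ serve as $n$ spatially distributed ``generals'', so the classical single-general lower bound $2m-2$ for synchronizing $m$ cells does not apply.

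The hard part will be the precise timing: one must calibrate the signal speeds and the recursion depth so that (a)~every cell --- in particular those that join the active region late, during the expansion --- receives the firing trigger at exactly time $cn$, and (b)~no auxiliary state ever equals $\varphi$ along the way (so that Definition~\ref{def:lintimeCA}-style uniqueness of the firing instant holds). This is technically intricate but falls within the standard FSSP methodology, and an explicit construction can be assembled by coupling the Waksman/Mazoyer scheme with the leftward-expansion signal, together with a finite lookup table handling the edge cases where $n$ is small enough that the abstraction, expansion, and synchronization cannot be separated.
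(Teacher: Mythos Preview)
Your approach is quite different from the paper's. The paper first cites the classical firing-squad construction for the base case $c=1$ (the $n$ cells of $w$ all fire at time~$n$), then reduces general~$c$ to this case by two standard moves: pass to an \emph{extended} neighborhood of $c+2$ cells and run $c$ shifted copies of the $c=1$ automaton in parallel, so that with the wider radius $cn$ cells fire together at time~$n$; then apply a time dilation by factor~$c$ (one extended-neighborhood step simulated by $c$ ordinary steps) to return to a radius-$1$ automaton firing at time~$cn$. No growing-line synchronization is ever needed.

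Your sketch, by contrast, has a genuine hole at exactly the point you flag as ``the hard part.'' You never explain how the speed-$1$ leftward expansion \emph{stops} after exactly $(c-1)n$ cells: the expansion front carries only finitely many states and so cannot count to~$(c-1)n$, and no signal emitted from inside the active region can overtake a speed-$1$ front moving in the same direction. Because Definition~\ref{def:sharp}(2) forbids any non-$\sharp$ cell from returning to~$\sharp$, over-expansion is fatal---the configuration at time~$cn$ would contain extra non-$\varphi$ active states and could not equal $\sharp\varphi^{cn}\sharp$. Moreover, the Waksman/Mazoyer schemes you invoke synchronize a \emph{fixed} line from one or two endpoints; synchronizing a line that is still growing while the protocol runs is not ``standard FSSP methodology'' and requires a genuinely new construction (at the very least a sub-unit expansion speed so that a trailing signal can catch the front, together with a redesigned recursion on a moving boundary) that you do not supply. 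The paper's extended-neighborhood-plus-time-dilation reduction sidesteps both difficulties entirely.
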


\begin{proof}[Proof sketch]
It is a variant (slight extension) of the classical ``firing squad'' automaton~\cite{Mazoyer86, Mazoyer87,Mazoyer96} which corresponds to the case $c=1$ of the lemma. 
This means $\delta_{\varphi}^{n}(\sharp w \sharp) = \sharp\varphi^{n} \sharp$ whereas, for each $t<n$, 
the $\varphi$ state does not appear in the configuration $\delta_{\varphi}^{t}(\sharp w \sharp)$. 
Now, let us allow ``extended'' neighborhoods of $c+2$ cells. 
Specifically, allow ``extended'' transition functions $\delta: Q^{c+2}\to Q$ given by
$q'=\delta(q_{-c},q_{-c+1},\dots,q_{-1},q_0,q_1)$ 
where $q'$ denotes the state of a cell numbered $\gamma$ at time~$t$, 
and $q_j$, $j\in[-c,1]$, is the state of the cell numbered $\gamma + j$ at time $t-1$.
By launching in parallel (from an input $w\in \Sigma^n$) $c$ ``firing squad'' automata, the original one and $c-1$ ``shifted copies'', we obtain
$\delta^{n}(\sharp w \sharp) = \sharp\varphi^{cn} \sharp$.
By slowing down time, precisely by simulating one computation step of the ``extended'' transition function thus constructed by $c$ steps of a ``standard'' transition function denoted 
$\delta_{\varphi}:(Q_{\varphi})^3 \to Q_{\varphi}$, we obtain the desired result: 
$\delta_{\varphi}^{cn}(\sharp w \sharp) = \sharp\varphi^{cn} \sharp$.
\end{proof}

From Lemma~\ref{lemma:fire} we deduce the following lemma:
\begin{lemma}\label{lemma:exactlintime}
Let $f: \Sigma^+ \to  \Sigma'^+$ be a function computable by a CA in linear time. 
Then there are a constant integer $c$ and a cellular automaton $\mathcal{A}=(Q,\delta)$ 
that computes $f(w)$ 
in space (width of the active interval) and in time both equal to $cn$ for an input $w\in\Sigma^n$. 
\end{lemma}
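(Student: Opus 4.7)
My plan is to construct the desired CA $\mathcal{A}$ as a carefully glued product of the given CA $\mathcal{A}_0=(Q_0,\delta_0)$ (that computes $f$ in time $T\le c_0 n$ with active interval of width at most $c_1 n$, by the velocity-of-signals bound) and the firing squad automaton $\mathcal{A}_\varphi=(Q_\varphi,\delta_\varphi)$ furnished by Lemma~\ref{lemma:fire}. I choose a single constant $c$ large enough that $cn\ge T$ and $cn\ge c_1 n$ for every input length~$n$; a preliminary rearrangement (e.g.\ reflecting $\mathcal{A}_0$ and padding on one side) lets me assume without loss of generality that the active interval of $\mathcal{A}_0$ is contained in the active interval $[1,cn]$ of the firing squad.

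The first step is to modify $\mathcal{A}_0$ so that its output becomes ``invisible'' during the run: rename each state $q\in Q_{out}$ to a fresh non-output marker state $\tilde q$, producing a set $\tilde Q_{out}$ disjoint from the output states of the final~$\mathcal{A}$. I also extend $\delta_0$ with a freezing rule: whenever the three cells of a neighborhood are all in $\tilde Q_{out}\cup\{\sharp\}$, the middle cell keeps its state. By condition~2 of Definition~\ref{def:lintimeCA}, the entire active interval of $\mathcal{A}_0$ simultaneously enters $\tilde Q_{out}$ at time $T$, and thereafter every cell satisfies the freezing hypothesis, so the renamed configuration $\sharp(\tilde q_\sharp)^k \tilde v\,\sharp$ stays fixed forever.

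Next I take the product of this modified $\mathcal{A}_0$ with $\mathcal{A}_\varphi$, letting the two components transition independently; crucially, the $\sharp$ of $\mathcal{A}_0$ is treated as $\sharp$ by $\delta_0$ regardless of the firing squad coordinate, and symmetrically for $\mathcal{A}_\varphi$, so neither automaton's dynamics are perturbed by the other's activity. Finally I add a ``fire'' rule: at the unique time step when a cell's firing-squad coordinate becomes $\varphi$ (which is exactly time $cn$, simultaneously across the full width $cn$), the cell transitions to the output state $q\in Q_{out}$ corresponding to its marker $\tilde q\in\tilde Q_{out}$, or to $q_\sharp$ if its $\mathcal{A}_0$ coordinate is any other non-$\sharp$ state or is $\sharp$ but the cell is firing-squad-active. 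No output state ever appears before time $cn$ (because $\mathcal{A}_0$'s output states have been renamed away, and $\varphi$ only appears at time $cn$), so condition~2 of Definition~\ref{def:lintimeCA} is preserved; the projection reading $q\mapsto \pi_0(q)$ on output states then yields exactly $f(w)$, with time and active-interval width both equal to $cn$ as required.

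The main obstacle I anticipate is the spatial alignment: the firing-squad interval and $\mathcal{A}_0$'s active interval may sit in different regions of $\mathbb Z$ if $\mathcal{A}_0$ naturally sends signals leftward. I resolve this cleanly by first composing with a trivial linear-time CA (available via Lemma~\ref{lemma:compose}) that copies $w$ into a one-sided layout, so that the resulting $\mathcal{A}_0$ only uses cells to the right of the input and its active interval is automatically contained in the firing-squad interval $[1,cn]$. Once this cosmetic point is handled, the verification of time $=$ width $=$ $cn$ and of correctness of the output reduces to checking the freezing argument above and the synchronization guarantee supplied by Lemma~\ref{lemma:fire}.
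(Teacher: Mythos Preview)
Your proposal is correct and follows essentially the same approach as the paper: run the original automaton $\mathcal{A}_0$ in parallel (as a product) with the firing-squad automaton of Lemma~\ref{lemma:fire}, freeze $\mathcal{A}_0$'s configuration once it reaches its (hidden) output states, and let the firing signal at time $cn$ reveal the true output states simultaneously across all $cn$ cells. Your explicit renaming $Q_{out}\to\tilde Q_{out}$ and your alignment discussion are slightly more elaborate than the paper's treatment (the paper simply observes that the product state $(q,q')$ is never itself in $Q_{out}$, and that $\sharp$ being permanent on the right forces both active intervals to grow leftward from the same input, so alignment is automatic), but the underlying mechanism is identical.
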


\begin{proof}
Essentially, the idea is to launch in parallel on the same input $w\in\Sigma^n$ the original automaton, denoted $\mathcal{A}_0=(Q_0,\delta_0)$, of time complexity $T<(c-1)n$ and therefore of space complexity  bounded by $cn$, 
and the firing squad automaton 
$\mathcal{A}_{\varphi}=(\delta_{\varphi},Q_{\varphi})$ of Lemma~\ref{lemma:fire} which acts as a $cn$-clock.

Define the automaton $\mathcal{A}=(Q,\delta)$ as follows:
$Q \coloneqq (Q_0 \times Q_{\varphi}) \cup \Sigma \cup Q_{out} \cup \{\sharp\}$, 
with the same set $Q_{out}$ of output states (and the same projection $\pi$) for $\mathcal{A}$ as for 
$\mathcal{A}_0$;
the transition function $\delta$ is given by: 
for the first step of the computation of $\mathcal{A}$, set, for all $q_{-1},q,q_1\in \Sigma\cup\{\sharp\}$,
$$\delta(q_{-1},q,q_1) \coloneqq (\delta_0(q_{-1},q,q_1), \delta_{\varphi}(q_{-1},q,q_1));$$
for the other steps, set for all $q_{-1},q,q_1\in Q_0$ and $q'_{-1},q',q'_1 \in Q_{\varphi}$,
\begin{equation*}
\delta((q_{-1},q'_{-1}),(q,q'),(q_1,q'_1)) \coloneqq \!
\begin{cases}
(\delta_0(q_{-1},q,q_1), \delta_{\varphi}(q'_{-1},q',q'_1))& \mathtt{if}\; 
q_{-1},q,q_1 \in Q_0 \setminus Q_{out} \\
(q, \delta_{\varphi}(q'_{-1},q',q'_1))& \mathtt{if}
\; q \!\in \! Q_{out}\!\cup\!\{\sharp\} \;\mathtt{and}\; \delta_{\varphi}(q'_{-1},q',q'_1) \!\neq\! \varphi \\
q & \mathtt{if}\; \delta_{\varphi}(q'_{-1},q',q'_1) = \varphi. 
\end{cases}
\end{equation*}
The first case of this equation applies from the 2nd step of the computation of $\mathcal{A}$ to its $T$-th step, corresponding to the last step of $\mathcal{A}_0$ ($T\leq (c-1)n$); 
the second case applies from the $(T+1)$-th step to the $(cn-1)$-th step of $\mathcal{A}$ ($\mathcal{A}_0$ has finished its computation);
finally, the third case of the equation applies to the $cn$-th step (the last step of $\mathcal{A}_{\varphi}$).

Note that since the second and third cases of the above equation retain the output states \linebreak
$q \!\in \! Q_{out} \cup \{\sharp\}$ the final configuration 
$\delta^{cn}(\sharp w \sharp)$ of $\mathcal{A}$ 
is the same as that of the original \linebreak
automaton $\mathcal{A}_0$.
Finally, observe that the computation time and space of $\mathcal{A}$ are those of $\mathcal{A}_{\varphi}$ which are exactly $cn$.
\end{proof}

All the preceding definitions and lemmas aim to study the operations computed in linear time by cellular automata as defined below.

\begin{definition}[operation computed by a CA in linear time]
A cellular automaton $\mathcal{A}$ \emph{computes} an operation $\op: \mathbb{N}^r \to  \mathbb{N}$ of arity $r\geq 1$ \emph{in linear time} if $\mathcal{A}$ computes in linear time a function 
$X_1;\cdots;X_r \mapsto Y$ from $\{0,1,;\}^+$ to $\{0,1\}^+$,
with integers $X_1,\ldots,X_r$ represented in binary and a binary representation $Y$ of 
$\op(X_1,\ldots,X_r)$, possibly including leading zeros. 
\end{definition}





Here is our general result relating the complexity of cellular automata and RAMs:


\begin{theorem}\label{th:linTimeCA}
Let $d\geq 1$ be a constant integer and let $\mathtt{op}: \mathbb{N}^r\to \mathbb{N}$  be an operation of arity $r\geq 1$ which is computable in linear time on a cellular automaton. 
There is a RAM with addition, such that, for any input integer $N>0$:
\begin{enumerate}
\item \emph{Pre-computation:} the RAM computes some tables in time $O(N)$;
\item \emph{Operation:} from these pre-computed tables and after reading $r$ integers $X_1,\ldots,X_r<N^d$, the RAM computes in constant time the integer $\mathtt{op}(X_1,\ldots,X_r)$. 
\end{enumerate}
\end{theorem}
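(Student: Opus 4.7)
Each input $X_i<N^d$ has $O(\log N)$ bits, so the CA's input word $w=X_1;\cdots;X_r$ has length $n=O(\log N)$, and by Lemma~\ref{lemma:exactlintime} we may assume the CA $\mathcal{A}=(Q,\delta)$ runs for exactly $T=cn$ steps in space $cn=O(\log N)$. My plan is to simulate $\mathcal{A}$ on the RAM via a ``Four Russians''-style block decomposition: precompute a single transition table that advances a whole block of cells by many CA steps in one lookup, and use the constant-time arithmetic of the earlier sections (notably division, multiplication, and exponentiation) to move between register integers and block encodings. Both the number of blocks covering the active interval and the number of block-transition applications needed to reach the final configuration will be constant.

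\textbf{Main steps.} (1) Fix the block size $b\coloneqq \lfloor (\log_{|Q|} N)/4 \rfloor$, so every $b$-cell block encodes as an integer less than $|Q|^b\leq N^{1/4}$, i.e.\ fits in a single register. (2) Precompute a three-argument table $\mathtt{STEP}[L][M][R]$ giving, for three consecutive $b$-blocks, the state of the middle block after $\lfloor b/2\rfloor$ CA steps; this table has $|Q|^{3b}\leq N^{3/4}$ entries, each filled by direct iteration of $\delta$ on $3b$ cells in time $O(b^2)$, for a total preprocessing of $O(N^{3/4}\log^2 N)=o(N)$. (3) Encode the inputs: pre-compute $2^{jb}$ for the relevant $O(1)$ values of $j$ via \fun{exponential}, then extract the block $\lfloor(X_i \bmod 2^{(j+1)b})/2^{jb}\rfloor$ using the constant-time Euclidean division of Section~\ref{section:division}; this produces $O(rd\log N/b)=O(1)$ initial blocks. (4) Iterate the super-step: each super-step replaces the current block sequence by its image under one $\mathtt{STEP}$ lookup per block and advances the simulation by $\lfloor b/2\rfloor$ CA steps; since $T/\lfloor b/2\rfloor=O(\log|Q|)=O(1)$ super-steps suffice and each involves $O(1)$ block updates, the whole simulation is constant time. (5) Decode the final configuration $\sharp(q_\sharp)^k v\sharp$: apply $\pi$ block-by-block (via a second precomputed table of size $|Q|^b$) and reassemble the binary digits of $v$ into one polynomial integer via $O(1)$ multiplications by the precomputed powers of two and additions.

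\textbf{Main obstacle.} The delicate point is the joint calibration of $b$: it must be large enough that both the number of super-steps $T/\lfloor b/2\rfloor$ and the number of blocks $cn/b$ are constant, forcing $b=\Theta(\log N/\log|Q|)$; yet small enough that the table $\mathtt{STEP}$, of size $|Q|^{3b}$, stays within the $O(N)$ preprocessing budget, forcing $|Q|^{3b}=o(N)$. The choice $b=\lfloor(\log_{|Q|}N)/4\rfloor$ reconciles both requirements, but only because $|Q|$ and $c$ are constants depending solely on the fixed automaton $\mathcal{A}$; making this calibration explicit and verifying that all auxiliary operations (base conversion, bit extraction, bit reassembly, projection $\pi$) are implementable in constant time from Sections~\ref{sec:SumDiffProdBase}--\ref{sec:bitwiseOperConstTime} is the bulk of the work. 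A secondary subtlety is the handling of the active-interval boundaries: we pad the block sequence with a constant number of all-$\sharp$ blocks on each side so that every relevant block always has well-defined $L$ and $R$ neighbors, which is consistent because $\sharp$ is permanent on the right (and the symmetric convention on the left can be imposed without loss of generality).
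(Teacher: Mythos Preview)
Your Four-Russians block decomposition is essentially the same idea the paper uses, and your calibration of $b$ is right. But there is a real gap in step~(4): your table $\mathtt{STEP}$ advances by exactly $\lfloor b/2\rfloor$ steps, yet $T=cn$ is in general not a multiple of $\lfloor b/2\rfloor$. Overshooting is unsafe, since Definition~\ref{def:lintimeCA} does not say the CA preserves output states past time $T$, and doing the residual $T\bmod\lfloor b/2\rfloor$ steps one cell at a time would cost $\Theta(b)=\Theta(\log N)$, not $O(1)$. Worse, because your input word $X_1;\cdots;X_r$ has length depending on the actual values of the $X_i$, the residual is not known at preprocessing time, so you cannot tailor your single table to it.

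The paper repairs both points. First, it replaces the raw input by an encoding of the tuple $(N,X_1,\ldots,X_r)$ of \emph{fixed} length $L=\mathtt{length}(N^d)$, so that $T=cL$ is a preprocessing-time constant independent of the $X_i$. Second, it precomputes the local-transition table $\arr{LT}[\rho][B]$ for \emph{every} step count $\rho\in[1,\ell]$ (this multiplies the preprocessing by a harmless $O(\log N)$ factor), so the last partial super-step is also a single lookup. Either device would mend your argument; the second alone already suffices and keeps your preprocessing at $O(N^{3/4}\log^3 N)=o(N)$. Two smaller points: you may advance by $b$ rather than $\lfloor b/2\rfloor$ per super-step, since after $b$ steps the middle $b$-block still depends only on its two $b$-block neighbours; and your bit-extraction in step~(3) produces base-$2$ integers, whereas the block indices into $\mathtt{STEP}$ must be base-$|Q|$ encodings of strings over $\{0,1,;,\sharp\}\subset Q$, so you also need a small ($\le 2^b\le N^{1/4}$) conversion table and must handle the blocks that straddle a separator.
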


\noindent 
The formal proof of this theorem is very technical and is given in the appendix, but the basic idea is quite simple.
\begin{proof}[Proof sketch]
Let $\mathcal{A}=(Q,\delta)$ be a cellular automaton that computes the operation 
$\op: \mathbb{N}^r \to  \mathbb{N}$ in linear time according to Lemma~\ref{lemma:exactlintime} and let 
$X\coloneqq X_1;\cdots ; X_r$ be an input word for $\mathcal{A}$.
Let $L$ be the common value of the computation time and space of $\mathcal{A}$ on the input $X$. 
By hypothesis, we have $L=\Theta(\mathtt{length}(X))=O(\log N)$.

The idea is to divide the time-interval $[0,L]$ and the active interval of $L$ cells
into $k$ time-periods of duration $\ell\coloneqq L/k$, called $\ell$-\emph{periods}, and $k$ segments of $\ell$ cells, respectively, for a large enough fixed integer~$k$. 
For $1\leq i,t \leq k$, let $w_{i,t}\in Q^{\ell}$ be the word, called $\ell$-\emph{block}, contained in the $i$-th segment 
of $\ell$ cells at the end of the $t$-th $\ell$-period of the computation of $\mathcal{A}$ on $X$.
Similarly, let $w_{i,0}$ be the $i$-th $\ell$-block of the initial configuration $C_0$.
Thus, we consider the sequence of configurations of $\mathcal{A}$ on~$X$, every $\ell$ computation steps 
(= every $\ell$-period): 
\begin{center}
$C_0=\sharp w_{1,0}\cdots w_{k,0}\sharp, \;C_1=\delta^{\ell}(C_0),\ldots,\;C_k=\delta^{\ell}(C_{k-1})=
\sharp w_{1,k}\cdots w_{k,k}\sharp$. 
\end{center}
Note that the total number of $\ell$-blocks $w_{i,t}$ in this sequence is $k^2+k=O(1)$.
Here are the important points:
\begin{enumerate}

\item for all $i,t$ with $1\leq i \leq k$ and $0\leq t<k$, the value of $w_{i,t+1}$ only depends on the three $\ell$-blocks $w_{i-1,t}$, $w_{i,t}$, $w_{i+1,t}$, which can be written $w_{i,t+1}=\delta^{\ell}(w_{i-1,t},w_{i,t},w_{i+1,t})$ 
with the convention $w_{0,t}=w_{k+1,t}=\sharp^{\ell}$;

\item it suffices to perform $k^2$ assignments of the form $w_{i,t+1}\gets\delta^{\ell}(w_{i-1,t},w_{i,t},w_{i+1,t})$ 
to simulate the whole computation of $\mathcal{A}$ on $X$, since there are exactly $k^2$ words $w_{i,t}$, $1\leq i,t \leq k$;

\item if $k$ is large enough, then the number of \emph{distinct} triplets $(w_{-1},w_0,w_1)\in (Q^{\ell})^3$ is \linebreak
$\lvert Q \rvert^{3\ell} = (\lvert Q \rvert^{3 L})^{1/k}= O(N^{1/2})$ and the array 
$T[w_{-1}][w_0][w_1]\coloneqq \delta^{\ell}(w_{-1},w_0,w_1)$ for \linebreak
$(w_{-1},w_0,w_1)\in (Q^{\ell})^3$,
can be computed in time $O(N^{1/2}\times \ell^{O(1)})=O(N)$.

\end{enumerate}

\noindent
Thus, the RAM desired first pre-computes the table $T$ in time O(N) (by point 3).
Secondly, after reading the operands $X_1,\ldots,X_r < N^d$, the RAM computes the initial configuration 
$C_0 \coloneqq \sharp X \sharp$ of $\mathcal{A}$, 
for $X\coloneqq X_1;\cdots ; X_r$, and the successive configurations 
\begin{center}
$\delta^{\ell}(C_0)=C_1,\ldots, \delta^{\ell}(C_{k-1})=C_{k}=w_{1,k} \cdots w_{k,k}$
\end{center}
by $k^2=O(1)$ consultations of the $T$ table (see points 1 and 2), therefore in constant time, and finally computes, still in constant time, the projection $\pi(w_{1,k})\cdots \pi(w_{k,k})$
which is by definition a binary representation of the integer $\mathtt{op}(X_1,\ldots,X_r)$.
\end{proof}

\section{Minimality of the RAM with addition}\label{sec: MinAdd}

We have proved that the RAM model using addition as the only operation can perform in constant time -- with linear preprocessing -- any other usual arithmetic operation and, above all, Euclidean division.
The reader may ask the following natural questions which we explore in this section:

\begin{enumerate}
\item Is the addition a ``minimal'' operation for our complexity results? 
Can it be replaced by a finite set of unary operations (such as the successor and predecessor functions)?
\item Can addition be replaced, as a primitive operation, by any other
  usual arithmetic operation: subtraction, multiplication, division,
  or a combination of two of them?
\end{enumerate}

\subsection{Addition is not computable in constant time on a RAM with only unary operations} 

In this subsection, we will establish the negative result given by its
title and formulated precisely and in a stronger form by
Proposition~\ref{prop: AddNotCTonSuccRAM}.

\begin{proposition}\label{prop: AddNotCTonSuccRAM} 
Let $\mathtt{UnaryOp}$ be a finite set of \emph{unary} operations.
There is \emph{no} RAM in $\RAM[\mathtt{UnaryOp}]$ such that, for any input integer $N$:
\begin{enumerate}
\item \emph{Pre-computation:} the RAM computes some tables in \emph{space} $O(N)$;
\item \emph{Addition:} by using these pre-computed tables and by reading any two integers $x,y<N$, the RAM computes in \emph{constant time} their sum $x+y$. 
\end{enumerate}
Further, the result is still valid if the RAM is allowed to use any
(in)equality test, with $=,\neq,<,\leq$, in its branching
instructions.
\end{proposition}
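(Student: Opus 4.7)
The plan is to derive a contradiction from the hypothesis that some RAM $M\in\RAM[\mathtt{UnaryOp}]$, possibly using $=,\neq,<,\leq$ in branches, runs in at most $k$ instructions (for a constant $k$) on the query phase and correctly outputs $x+y$ for all $(x,y)\in[0,N)^2$ after producing a preprocessed memory state $\mu_0$ of size $O(N)$. The goal is to show that the output function of $M$'s query is too ``separable'' in $x$ and $y$ to cover all $N^2$ input pairs once $N$ is large.

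First I would model the query's execution as an (analytic) decision tree of constant depth whose internal nodes are comparisons between values, while exposing two kinds of \emph{implicit} branching that the RAM hides. (i) For every $\mathtt{Store}$--$\mathtt{Load}$ interaction, each $\mathtt{Load}\;R[A]$ is resolved by a case split over the at most $k$ prior Store addresses $A_1,\ldots,A_j$ with written values $B_1,\ldots,B_j$: the loaded value is $B_i$ for the largest $i\le j$ with $A=A_i$, otherwise it is the frozen preprocessed value $\mu_0(A)$. (ii) For every $\mathtt{Input}\;I[A]$ with non-constant address, a case split distinguishes whether $A$ equals $0$ or $1$ (returning $x$ or $y$) or neither (returning the preprocessed input register value). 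These expansions are purely analytic partitions of the input space and do not require the RAM itself to perform such tests; each introduces at most $O(k)$ extra branches, so the resulting analytic tree still has $2^{O(k^2)}=O(1)$ leaves.

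Next I would prove, by induction along any root-to-leaf branch of this analytic tree, the structural invariant that every value held in a register along the branch is of one of three \emph{restricted forms}: a constant depending on $N$ alone, a function $f(x)$ of $x$ alone, or a function $g(y)$ of $y$ alone. The induction steps are routine: $\mathtt{CST}$, $\mathtt{getN}$, and $\mathtt{Buffer}$ preserve the type trivially; each unary $\op\in\mathtt{UnaryOp}$ applied to a restricted-form value yields a restricted-form value of the same flavour; every $\mathtt{Load}$ or $\mathtt{Input}$, after the expansion above, either returns $x$, returns $y$, or composes a restricted-form address with the frozen $\mu_0$; and all the RAM's comparisons -- be they $\mathtt{Jzero}$ or the stronger $=,\neq,<,\leq$ -- operate on restricted-form values and merely refine which branch is taken. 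Consequently, at each leaf $p$ of the analytic tree the output $h_p$ is a constant $c_p$, a function $f_p(x)$, or a function $g_p(y)$.

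I would then conclude by counting. For each leaf $p$, let $L_p\subseteq[0,N)^2$ be the set of inputs reaching $p$; correctness $h_p(x,y)=x+y$ on $L_p$ forces $L_p$ to lie on one of the ``thin'' curves $x+y=c_p$, $y=f_p(x)-x$, or $x=g_p(y)-y$, each of which contains at most $N$ pairs. Since the tree has at most a constant $K$ number of leaves, the union $\bigcup_p L_p$ covers at most $K\cdot N$ pairs, which is strictly less than $N^2$ for $N$ large enough, contradicting the assumed correctness. The main obstacle is precisely the $\mathtt{Store}/\mathtt{Load}$ interaction -- and similarly $\mathtt{Input}$ at non-constant addresses -- which at first glance can ``mix'' $x$ and $y$: for instance, storing $g(y)$ at address $f(x)$ and later loading at $f'(x)$ seems to create a genuinely bivariate value. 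The analytic tree of step one defuses this, revealing every such apparent mixing as a constant-size case distinction \emph{between} restricted-form values rather than a genuine combination of $x$ and $y$.
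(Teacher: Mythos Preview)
Your proposal is correct and follows essentially the same argument as the paper: a constant-size case analysis in which every computed value is a ``unary'' term depending on $x$ alone, on $y$ alone, or on neither, followed by a pigeonhole/counting step showing that such outputs cannot cover all of $[0,N)^2$. The one cosmetic difference is that the paper first normalizes the query program so the preprocessed memory becomes read-only (absorbing the constantly many writes into extra buffer registers $B_1,\ldots,B_k$), whereas you keep the writes and expand each $\mathtt{Load}$ analytically over the prior $\mathtt{Store}$ addresses---both devices produce the same structural invariant and the same contradiction.
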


\begin{remark}
By its item 1, Proposition~\ref{prop: AddNotCTonSuccRAM} states a strictly stronger result than what is required. Indeed, the space used by a RAM operating in linear time is by definition linear. 
\end{remark}

\begin{proof}[Proof of Proposition~\ref{prop: AddNotCTonSuccRAM} by contradiction] Suppose that there is a RAM $M$ in $\RAM[\mathtt{UnaryOp}]$
which fulfills the conditions 1,2 of Proposition~\ref{prop: AddNotCTonSuccRAM}. Without loss of generality, the following conditions can be assumed.
\begin{itemize}
\item Step 1 of $M$ (the \emph{pre-computation}) ends with the internal memory registers $R(0),R(1),\ldots$ containing the successive integers 
$p(0),p(1),\ldots,p(cN),0,0,\ldots$, for a constant $c\ge 1$, with $p(i)\leq cN$ in the register $R(i)$, for each $i\leq cN$, and 0 in $R(i)$, for each $i> cN$;
\item Step 2 of $M$ (the \emph{addition}) does not modify the internal memory registers $R(0),R(1),\ldots$ (which are read-only throughout Step 2) but uses a fixed number of read/write registers, denoted~$A$ (the accumulator) and $B_1,\ldots,B_k$ (the buffers), all initialized to 0, and two read-only registers $X$ and~$Y$, which contain the operands $x$ and $y$, respectively. 
The program of Step 2 is a sequence $I_0,\ldots, I_{r-1}$ of labeled instructions of the following forms:
\begin{itemize}
\item $A=0$ ; $A=X$ ; $A=Y$;
\item $A=\mathtt{op}(A)$, for some $\mathtt{op}\in \mathtt{UnaryOp}$;
\item $A=R(A)$ (meaning $A=R(a)$, where $a$ is the current content of $A$);
\item $B_i=A$, for some $i\in\{1,\ldots,k\}$ ; $A=B_i$, for some $i\in\{1,\ldots,k\}$;
\item $\mathtt{if}$ $A\prec B_1$ $\mathtt{then}$ $\mathtt{goto}$ $\ell_0$ 
$\mathtt{else}$ $\mathtt{goto}$ $\ell_1$, for $\prec$  $\in \{=,\neq,<,\leq,>,\geq\}$ and some \linebreak
$\ell_0,\ell_1\in \{0,\ldots,r-1\}$ (branching instruction);
\item $\mathtt{return}$ $A$ (this is the last instruction $I_{r-1}$; recall that the instruction executed after any other instruction $I_{j}$, $j<r-1$, is $I_{j+1}$, except after the branching instruction).
\end{itemize}
\end{itemize}

\noindent
\emph{Justification:} The constraint that the internal memory be read-only throughout Step 2 is made possible by the hypothesis that the time of Step 2 is bounded by some constant. Thus, Step 2 can consult/modify only a constant number of registers $R(i)$. Instead of modifying the contents of a register $R(i)$, we copy it -- at the first instant when it needs to be modified -- in one of the buffer registers $B_1,\ldots,B_k$ so that its contents can be modified as that of the original register $R(i)$ that it simulates. 

More precisely, if $m$ is an upper bound of the constant time bound of
the simulated (original) constant-time program so that this program
modifies (at most) $m$ distinct registers $R(i)$, then $k\coloneqq
2m+2$ buffer registers $B_1,\ldots,B_k$ are sufficient for the
simulation: $B_1$ is the original buffer; the contents of $B_2$ is the
number of original registers $R(i)$ which have been currently modified
(in the simulated/original constant-time program); the sequence of $m$
buffers $B_3,\ldots,B_{m+2}$ (resp.\ $B_{m+3},\ldots,B_{2m+2}$) contain
the current sequence of addresses $i$ (resp.\ sequence of contents) of
these registers $R(i)$~\footnote{When the original RAM tries to write
a value $v$ in a register $R(i)$, then the simulating RAM $M$ consults
the list of contents of the $s$ ``address" buffers
$B_3,\ldots,B_{s+2}$, where $s$ is the contents of the counter $B_2$.
If the address~$i$ is found, that means, for some $j$ ($3 \leq j \leq
s+2$), $B_j$ contains $i$, then the value $v$ is written in $B_{j+m}$
(the buffer that simulates $R(i)$).  Otherwise, the (not found)
address $i$ is copied in the buffer $B_{s+3}$ (the first buffer not
yet used after $B_{s+2}$), the value $v$ is written in the buffer
$B_{s+3+m}$ -- this buffer now simulates $R(i)$ -- and the counter
buffer $B_2$ is incremented by 1: it now contains $s+1$.  }.  The
reader can check that the time of Step 2 of $M$ is still bounded by a
constant $\mu=O(m^2)$.

\bigskip
For $0\leq i \leq \mu$, let $(A(i),B_1(i),\ldots,B_k(i), L(i))$ respectively be the sequence of contents of the read/write registers $A,B_1,\ldots,B_k$ and the value $L(i)$ of the ordinal instruction counter at instant~$i$ of Step~2. At instant 0, we have
$(A(0),B_1(0),\ldots,B_k(0), L(0))=(0,0,\ldots,0,0)$. For convenience, we adopt the following convention
when $L(i)=r-1$, for some $i<\mu$ (recall that the instruction $I_{r-1}$ is $\mathtt{return}$ $A$):
\begin{eqnarray}\label{eqn:return}
(A(i+1),B_1(i+1),\ldots,B_k(i+1),L(i+1))\coloneqq (A(i),B_1(i),\ldots,B_k(i), r-1)
\end{eqnarray}

\begin{claim}\label{claim:unary}
At each instant $i$ ($0\leq i \leq \mu$) of Step 2, an equation of the following form holds:

\begin{eqnarray}\label{eq: unarySucc}
(A(i),B_1(i),\ldots,B_k(i), L(i)) =
\left\{
\begin{array}{l}
\mathtt{if}\; t_1\;\mathtt{then}\;(a_1,b_{1,1},\ldots,b_{k,1},\lambda_1)\\
\ldots\ldots\ldots\ldots\ldots\ldots\ldots\ldots\ldots\ldots \\
\mathtt{if}\; t_q\;\mathtt{then}\;(a_q,b_{1,q},\ldots,b_{k,q},\lambda_q)\\
\end{array}
\right .
\end{eqnarray}

\noindent
Here, each test $t_j$ is a conjunction of (in)equalities 
$\bigwedge_{h} c_h \prec_h d_h$ where $\prec_h$  $\in \{=,\neq,<,\leq,>,\geq\}$
and each term $a_h,b_{h,j},c_h,d_h$ is of the (unary) form $(f_1\circ f_2 \circ \cdots \circ f_s)(u)$ where $u$ is either $X$, or $Y$, or the constant 0, and $s\geq 0$, and each $f_j$ is either 
a unary operation $\op\in\mathtt{UnaryOp}$ or the ``internal memory function" $p:[0,cN]\to [0,cN]$ which sends each address $\alpha$ to the contents $p(\alpha)$ of the corresponding read-only register $R(\alpha)$ (note that the function $p$ depends on $N$).

Moreover, the family of tests $(t_j)_{1\leq j \leq q}$ satisfies the \emph{partition property}: 
for each pair of integers $(x,y)$, \emph{exactly one} test $t_j$ is true for $X\coloneqq x$ and $Y\coloneqq y$.
\end{claim}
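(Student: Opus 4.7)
The plan is to prove Claim~\ref{claim:unary} by induction on the instant $i$, for $0\leq i\leq \mu$.

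For the base case $i=0$, all read/write registers are zero and $L(0)=0$, so I take a single case $q=1$ with the trivially true test $t_1=\top$ (empty conjunction) and tuple $(0,0,\ldots,0,0)$; the constant $0$ is a unary term (the empty composition applied to the atom $0$), and the partition property holds trivially since every pair $(x,y)$ satisfies $t_1$.

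For the inductive step, suppose~\eqref{eq: unarySucc} holds at instant $i$ with cases indexed by $j\in\{1,\ldots,q\}$. Within each case~$j$, the instruction counter equals the specific constant $\lambda_j$, so the instruction about to be executed is the concrete $I_{\lambda_j}$, and I would carry out a sub-case analysis on its syntactic form. For an assignment $A\gets u$ with $u\in\{0,X,Y\}$, the new $A$ is $u$, a trivial unary term. For $A\gets \op(A)$ with $\op\in\mathtt{UnaryOp}$, or for $A\gets R(A)$ interpreted as $A\gets p(A)$, one prepends the unary function $\op$ (resp.\ the memory function $p$) to the existing unary composition defining $a_j$. The buffer moves $B_h\gets A$ and $A\gets B_h$ merely copy an already-unary term into another component. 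For a branching instruction $\mathtt{if}\;A\prec B_1\;\mathtt{then}\;\mathtt{goto}\;\ell_0\;\mathtt{else}\;\mathtt{goto}\;\ell_1$, I split case~$j$ into two sub-cases with the same tuple of register contents but instruction counters $\ell_0$ and $\ell_1$, guarded respectively by $t_j\wedge(a_j\prec b_{1,j})$ and $t_j\wedge\neg(a_j\prec b_{1,j})$; the negation of a comparison in $\{=,\neq,<,\leq,>,\geq\}$ is another comparison in the same set, so each new test remains a conjunction of (in)equalities between unary terms. Finally, the $\mathtt{return}\;A$ instruction is handled by convention~\eqref{eqn:return}, leaving the whole configuration unchanged. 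After updating all $j$, each component of every new tuple is a unary term and each $\lambda$-value is a specific constant, which is exactly the required form.

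The partition property at instant $i+1$ follows immediately: tests from distinct cases $j\neq j'$ at instant $i$ were pairwise disjoint and together covered $\mathbb{N}^2$ by the inductive hypothesis; this is preserved by any non-branching update, which leaves the test unchanged, and by a branching update, which refines $t_j$ into two sub-tests whose union is exactly~$t_j$. The only delicate point -- and the reason the induction goes through cleanly -- is to observe that the memory-lookup step $A\gets R(A)$ corresponds to applying the \emph{unary} function $p:\alpha\mapsto p(\alpha)$, well-defined because the read-only normalization carried out at the beginning of the proof ensures Step~2 never modifies the tables; no other instruction introduces a dependency on two atoms simultaneously. This is the structural invariant that prevents any term from ever mixing $X$ and $Y$, and it is what the remainder of the proof of Proposition~\ref{prop: AddNotCTonSuccRAM} will exploit to derive the contradiction.
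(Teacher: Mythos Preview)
Your proof is correct and follows essentially the same approach as the paper's: induction on $i$, with a case analysis on the instruction type for the inductive step, and the observation that branching splits each test into two complementary conjuncts while preserving the partition property. If anything, your version is slightly more precise than the paper's in that you explicitly process each case $j$ according to its own instruction $I_{\lambda_j}$, whereas the paper's displayed equations implicitly assume all $\lambda_j$ in the current partition point to the same instruction type.
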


\begin{proof}[Proof of the claim] It is a proof by induction on the instant $i$.  Clearly, Claim~\ref{claim:unary} is true for $i=0$.
Assume that the equality~\ref{eq: unarySucc} is true for some $i<\mu$. 
There is nothing to prove if $I_{L(i)}$ is the instruction $\mathtt{return}$ $A$ because of equality~\ref{eqn:return}.
All other instructions are handled by a fairly similar process, with specific processing for the branching instruction.
First, suppose that $I_{L(i)}$ is an assignment of the form $A=v$ where $v\in\{0,X,Y\}$. Then we have

\begin{eqnarray*}
(A(i+1),B_1(i+1),\ldots,B_k(i+1), L(i+1)) =
\left\{
\begin{array}{l}
\mathtt{if}\; t_1\;\mathtt{then}\;(v,b_{1,1},\ldots,b_{k,1},\lambda_1+1)\\
\ldots\ldots\ldots\ldots\ldots\ldots\ldots\ldots\ldots\ldots \\
\mathtt{if}\; t_q\;\mathtt{then}\;(v,b_{1,q},\ldots,b_{k,q},\lambda_q+1)\\
\end{array}
\right .
\end{eqnarray*}

\noindent
Now, suppose that $I_{L(i)}$ is an assignment 
$A=\mathtt{op}(A)$ where $\mathtt{op}\in \mathtt{UnaryOp}$.
Then we have

\begin{eqnarray}\label{eq: Succ}
(A(i+1),B_1(i+1),\ldots,B_k(i+1), L(i+1)) =
\left\{
\begin{array}{l}
\mathtt{if}\; t_1\;\mathtt{then}\;(\mathtt{op}(a_1),b_{1,1},\ldots,b_{k,1},\lambda_1+1)\\
\ldots\ldots\ldots\ldots\ldots\ldots\ldots\ldots\ldots\ldots \\
\mathtt{if}\; t_q\;\mathtt{then}\;(\mathtt{op}(a_q),b_{1,q},\ldots,b_{k,q},\lambda_q+1)\\
\end{array}
\right .
\end{eqnarray}

\noindent
In the case where $I_{L(i)}$ is the assignment $A=R(A)$, replace the terms 
$\mathtt{op}(a_1),\ldots,\mathtt{op}(a_q)$
by the respective terms $p(a_1),\ldots, p(a_q)$ (involving the ``internal memory function" $p$) in equality~\ref{eq: Succ}.

\medskip
\noindent
The cases where $I_{L(i)}$ is $A=B_i$ or $B_i=A$ are similar. For example, for $B_1=A$, we have 

\begin{eqnarray*}
(A(i+1),B_1(i+1),\ldots,B_k(i+1), L(i+1)) =
\left\{
\begin{array}{l}
\mathtt{if}\; t_1\;\mathtt{then}\;(a_1,a_1,b_{2,1},\ldots,b_{k,1},\lambda_1+1)\\
\ldots\ldots\ldots\ldots\ldots\ldots\ldots\ldots\ldots\ldots \\
\mathtt{if}\; t_q\;\mathtt{then}\;(a_q,a_q,b_{2,q},\ldots,b_{k,q},\lambda_q+1)\\
\end{array}
\right .
\end{eqnarray*}

\noindent 
Finally, suppose that $I_{L(i)}$ is 
$\mathtt{if}$ $A\prec B_1$ $\mathtt{then}$ $\mathtt{goto}$ $\ell_0$ 
$\mathtt{else}$ $\mathtt{goto}$ $\ell_1$. 
Clearly, we have 

\begin{eqnarray*}
(A(i+1),B_1(i+1),\ldots,B_k(i+1), L(i+1)) =
\left\{
\begin{array}{l}
\mathtt{if}\; t_1\land a_1\prec b_{1,1}\;\mathtt{then}\;(a_1,b_{1,1},\ldots,b_{k,1},\ell_0)\\
\mathtt{if}\; t_1\land a_1\not\prec b_{1,1}\;\mathtt{then}\;(a_1,b_{1,1},\ldots,b_{k,1},\ell_1)\\
\ldots\ldots\ldots\ldots\ldots\ldots\ldots\ldots\ldots\ldots \\
\mathtt{if}\; t_q \land a_q \prec b_{1,q} \;\mathtt{then}\;(a_q,b_{1,q},\ldots,b_{k,q},\ell_0)\\
\mathtt{if}\; t_q \land a_q \not\prec b_{1,q} \;\mathtt{then}\;(a_q,b_{1,q},\ldots,b_{k,q},\ell_1)\\
\end{array}
\right .
\end{eqnarray*}
Note that each test has the required form since the negation $\not\prec$ of a relation $\prec$ in the set \linebreak
$\{=,\neq,<,\leq,>,\geq\}$ belongs also to this set; 
moreover, by construction, the $2q$ tests still satisfy the ``partition property".

\medskip
In conclusion, the reader can observe that for every possible instruction $I_{L(i)}$ of $M$, an equality of the required form~\ref{eq: unarySucc} with $i$ replaced by $i+1$ is true. This completes the proof by induction of Claim~\ref{claim:unary}.
\end{proof}

\noindent
\emph{End of the proof of Proposition}~\ref{prop: AddNotCTonSuccRAM}. The contradiction will be proven as an application of the ``pigeonhole" principle.
By hypothesis, the Step 2 of $M$ (addition step) computes/returns the sum $x+y$ at time $\mu$, that means $A(\mu)=x+y$. Therefore, by Claim~\ref{claim:unary}, we obtain, for all $(x,y)\in[0,N[^2$,

\begin{eqnarray}\label{eq: PlusSucc}
x+y=A(\mu) =
\left\{
\begin{array}{l}
\mathtt{if}\; t_1(x,y)\;\mathtt{then}\;a_1\\
\ldots\ldots\ldots\ldots \\
\mathtt{if}\; t_q(x,y)\;\mathtt{then}\;a_q\\
\end{array}
\right .
\end{eqnarray}
where each test $t_j$ is a conjunction of (in)equalities 
$\bigwedge_{h} c_h \prec_h d_h$ where $\prec_h$  $\in \{=,\neq,<,\leq,>,\geq\}$
and each term $a_j,c_h,d_h$ is of the (unary) form $f(0)$, $f(x)$ or $f(y)$, in which $f$ is a composition 
$f_1\circ f_2 \circ \cdots \circ f_s$ of functions $f_i\in \mathtt{UnaryOp} \cup \{p\}$.

Without loss of generality, assume $N>q$. Let us define the $q$ parts $D_j$ of $[0,N[^2$ defined by the tests $t_j$:
$D_j\coloneqq \{(x,y)\in [0,N[^2 \;\mid\; t_j(x,y)\}$. 
The partition property implies $\sum_{j=1}^q \mathtt{card}(D_j)=N^2$ from which it comes -- by the pigeonhole principle -- that at least one of the $q$ sets $D_j$ has at least $N^2/q$ elements. For example, assume $\mathtt{card}(D_1)\geq N^2/q$. 

Also, by~\ref{eq: PlusSucc}, the implication $t_1(x,y) \Rightarrow x+y=a_1$  is valid.
Therefore, we get \linebreak
$\mathtt{card}\{(x,y)\in [0,N[^2 \;\mid\; x+y=a_1\}\geq \mathtt{card}(D_1)\geq N^2/q$.
Now, suppose that $a_1$ is of the form $f(x)$ (the case $f(y)$ is symmetrical and the case $f(0)$ is simpler).
By a new application of the pigeonhole principle, we deduce that there is at least one integer $x_0\in [0,N[$ such that 
\[
 \mathtt{card} \{y\in [0,N[\; \;\mid\; x_0+y=f(x_0)\}\geq N/q
 \]
 From the assumption $N>q$, we deduce that there are at least two distinct integers $y_1,y_2\in[0,N[$ such that $x_0+y_1=f(x_0)=x_0+y_2$, a contradiction.
 This concludes the proof of Proposition~\ref{prop: AddNotCTonSuccRAM} by contradiction. 
\end{proof}
 


\begin{remark}
Using the pigeonhole principle again, it is also easily seen that not only addition but also 
each \emph{binary} operation whose result really depends on the two operands cannot be computed in constant time on a RAM \emph{using only unary operations}. 
Intuitively, a constant number of comparisons (branching instructions) generates a constant number of cases: to compute a binary operation, it is not enough to use a constant number of cases defined by (in)equalities over “unary” terms, each of which depends on only one operand.
\end{remark}

\subsection{Can addition be replaced by another arithmetic operation?}

Although addition is the simplest of the four arithmetic operations
and perhaps for this reason, it is among them the only ``fundamental''
operation of the RAM model, as this subsection shows.

\bigskip
The question we attempt to answer in this subsection is: Can addition be replaced by one of the other three standard operations, subtraction, multiplication $\times$ and Euclidean division $\mathtt{div}$? If not, can it be replaced by two of them?
 
Obviously, the addition can be replaced by the subtraction over $\mathbb{Z}$, i.e.\ relative
integers, as \linebreak
$x+y=x-(0-y)$. 
Addition can also be replaced by subtraction
\emph{and} multiplication (or multiplication by~2) on natural numbers
because of the following identity where \linebreak
$x\dot{-}y\coloneqq\mathtt{max}(0,x-y)$ denotes our subtraction on
natural numbers:
\begin{eqnarray*}
x+y=
\left\{
\begin{array}{l}
2*x\dot{-}(x\dot{-}y)\;\mathtt{if}\; y\dot{-}x=0,\;\mathtt{i.e.}\; y\leq x \\
2*y\dot{-}(y\dot{-}x)\;\mathtt{otherwise}\; \\
\end{array}
\right .
\end{eqnarray*}

Obviously also, subtraction and division are not enough to compute the successor function, and a fortiori the addition, because, for all integers $x$ and $y>0$, we have $x\dot{-}y\leq x$ and \linebreak
$x \divop y\leq x$.
More fully, we state the following negative results.

\begin{proposition}
\begin{enumerate}
\item No RAM with subtraction \emph{and} division over natural numbers
  computes the successor function.
\item No RAM with multiplication computes the successor function.
\end{enumerate}
\end{proposition}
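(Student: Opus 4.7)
For both items the strategy is the same: I establish, by induction on the number of execution steps, an invariant on the set of values that can ever be held in any register of the RAM during its computation on input $N$, and then I observe that $N+1$ violates this invariant for sufficiently many $N$. Since registers are zero-initialized and the only instructions that can introduce a genuinely new value are $\mathtt{CST}\,j$, $\mathtt{getN}$, $\mathtt{Input}$, and the RAM's arithmetic operations, the verification reduces to checking that each such instruction respects the invariant; control flow and indirect addressing only move existing values around.

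For item 1, fix the RAM and let $c_{\max}$ denote the largest integer occurring in any of its $\mathtt{CST}$ instructions. I claim: every value ever held in $A$, $B$, or $R[j]$ on input $N$ is at most $\max(N, c_{\max})$. Indeed, $\mathtt{CST}$ introduces values $\leq c_{\max}$, $\mathtt{getN}$ introduces $N$, $\mathtt{Input}$ (on an input consisting solely of the integer $N$, as is the case for the successor function) introduces only $N$, and the permitted arithmetic operations satisfy $x \dot{-} y \leq x$ and, for $y \geq 1$, $x \divop y \leq x$ (adopting any reasonable convention for $x \divop 0$, such as ``undefined'' or $0$). Taking $N > c_{\max}$ then forces every stored and hence output value to be at most $N$, ruling out $N+1$.

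For item 2, let $C$ denote the finite set of integers occurring in the $\mathtt{CST}$ instructions of the given RAM, and let $\mathcal{M}(C) \subseteq \mathbb{N}$ be the multiplicative submonoid of $\mathbb{N}$ generated by $C \cup \{1\}$. I claim: every value ever held in a register on input $N$ is either $0$ or of the form $d \cdot N^k$ for some $d \in \mathcal{M}(C)$ and $k \geq 0$. This is trivial for all non-arithmetic instructions, and multiplication preserves it since $(d_1 N^{k_1}) \cdot (d_2 N^{k_2}) = (d_1 d_2) N^{k_1+k_2}$ with $d_1 d_2 \in \mathcal{M}(C)$. If the RAM outputs $N+1$ on input $N$, then $N+1 = d \cdot N^k$ for some such $d, k$; the cases $k \geq 2$ (giving $d \cdot N^k \geq N^2 > N+1$ for $N \geq 2$, $d \geq 1$) and $k = 1$ (which would force $(d-1)N = 1$) are excluded for $N \geq 2$, forcing $k = 0$ and $d = N+1 \in \mathcal{M}(C)$. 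But every element of $\mathcal{M}(C)$ has its prime factors confined to the fixed finite set of primes dividing elements of $C$, so $\mathbb{N} \setminus \mathcal{M}(C)$ is infinite and the RAM fails to output $N+1$ for infinitely many~$N$.

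The step I expect to require the most care is the preservation of the item-2 invariant: one must resist the temptation to bound the ``$d$-part'' by a function of the running time (which could itself grow with $N$) and instead observe that every occurrence of $N$ in a multiplication raises the exponent $k$ rather than inflating~$d$, so that $d$ always remains inside the $N$-independent monoid $\mathcal{M}(C)$. Once this is correctly formulated, the final arithmetic case analysis is routine.
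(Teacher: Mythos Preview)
Your proof is correct and, for item~1, matches the paper exactly (both observe that $x\dot{-}y\le x$ and $x\divop y\le x$, so no value ever exceeds $\max(N,c_{\max})$).

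For item~2 you and the paper use the same invariant---every register value lies in the multiplicative closure of $\{0,N\}\cup C$---but you finish differently. You separate each value as $d\cdot N^{k}$ with $d\in\mathcal{M}(C)$, do a case split on $k$ to force $N+1\in\mathcal{M}(C)$, and then invoke the bounded prime support of $\mathcal{M}(C)$ to get infinitely many bad~$N$. The paper is more direct: it simply picks $N$ larger than every program constant with $N+1$ prime; then $N+1$, being prime and exceeding every generator, cannot be a product of elements of $\{N,j_{1},\dots,j_{k}\}$. Your route is a bit more structural and makes the ``bad set of $N$'' explicit, while the paper's primality trick is a one-line shortcut to a single witness. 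Both are valid and rest on the same invariant.
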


\begin{proof}
As we have given the argument for item 1, it suffices to justify item
2. Suppose $M$ is a RAM with multiplication which computes the
function $N\mapsto N+1$. Let us fix an integer $N$ greater than all
the numbers $j_1,\ldots,j_k$ that appear in the program of $M$ and
such that $N+1$ is a prime number. By construction, each number that
appears in the computation of $M$ over the input $N$ must be a product
of integers belonging to the set $\{N,j_1,\ldots,j_k\}$ and therefore
cannot be equal to $N+1$, because $N+1$ is prime and greater than all
the factors, a contradiction.
\end{proof}

We therefore have $\RAM[\{\dot{-},\times\}]=\RAM[\{+\}]$ but the RAM models $\RAM[\{\dot{-},\mathtt{div}\}]$ and $\RAM[\{\times\}]$ are strictly weaker. 

Also, we do not know if addition can be replaced by multiplication
\emph{and} division, i.e.\ if we have
$\RAM[\{\times,\mathtt{div}\}]=\RAM[\{+\}]$, even if we think the
answer is no.

\section{Final results, concluding remarks and open problems}\label{sec:conclusion}

We hope that the results established in the previous sections have convinced the reader that our RAM model with addition, also called \emph{addition RAM}, or $\RAM[\{+\}]$, is the standard model for the design and analysis of algorithms. 
This means that the complexity classes defined in this model are robust and faithfully reproduce/model the intuitive complexity classes of concrete algorithms. 
This is already true for the ``minimal'' complexity classes defined as follows. 

\paragraph{``Minimal'' complexity classes of addition RAM}

We now recall the complexity classes presented in Section~\ref{sec:RAMcomplexityclasses}:
\begin{itemize}
\item $\lin$ is the class of functions $\mathcal{I}\mapsto f(\mathcal{I})$ computed in \emph{linear time} by an addition RAM.
\item $\clin$ is the class of ``dynamic'' problems $(\mathcal{I},\mathcal{X})\mapsto f(\mathcal{I},\mathcal{X})$ computed in \emph{constant time} by an addition RAM with \emph{linear-time preprocessing}, which means:

\begin{enumerate}
\item \emph{Pre-computation:} from an input $\mathcal{I}$ of size $N$, the RAM computes an ``index" $p(\mathcal{I})$ in time $O(N)$;
\item \emph{Computation:} from $p(\mathcal{I})$ and after reading a second input $\mathcal{X}$ of constant size, the RAM computes in \emph{constant time} the output $f(\mathcal{I},\mathcal{X})$.
\end{enumerate}
\item $\cdlin$ (see~\cite{DurandG07,BaganDG07,Bagan09,Durand20}) is the class of enumeration problems $\mathcal{I}\mapsto f(\mathcal{I})$ computed with \emph{constant delay} and \emph{constant-space} by an addition RAM with \emph{linear-time preprocessing}, 
which means:
\begin{enumerate}
\item \emph{Pre-computation:} from an input $\mathcal{I}$ of size $N$, the RAM computes an ``index" $p(\mathcal{I})$ in time $O(N)$;
\item \emph{Enumeration:} from $p(\mathcal{I})$ and using
  \emph{constant space}, the RAM enumerates without repetition the
  elements of the set $f(\mathcal{I})$ with a \emph{constant delay}
  between two successive elements and stops immediately after
  producing the last one. The constant-space criterion means that
  the enumeration algorithm can read all the memory initialized during
  the pre-computation but can only write to a constant number of fixed registers\footnote{
The $\cdlin$ class naturally generalizes to the case where the sizes of the elements of the set $f(\mathcal{I})$ to be enumerated are arbitrary instead of being constant:
the enumeration problem $f$ is of ``minimal'' complexity if the time of the pre-computation is still
$O(\lvert\mathcal{I}\vert)$ and each enumerated element $S\in f(\mathcal{I})$ is computed 
in time and space $O(\lvert S \rvert)$. 
The time/delay and space used to produce a solution are linear in the size of that solution~\cite{Bagan06,Bagan09,Courcelle09}.}. 
\end{enumerate}
\end{itemize}

\paragraph{More liberal constant-delay complexity classes with linear-time preprocessing.} 
In the literature on enumeration algorithms, the space complexity of
the enumeration phase is often not accurately described or appears to
be larger than our constant space bound, see e.g.~\cite{Segoufin14,
  AmarilliBJM17, Durand20}.  
We think that the most realistic 
extensions of our $\cdlin$ class are the following:

\begin{itemize}
\item Let $\cdlinlinspace$ denote the extension of the class $\cdlin$
  where the space bound of the enumeration phase is $O(N)$, instead of
  constant.  Note that this complexity class can be defined in our RAM
  model which only uses $O(N)$ integers (contents and addresses of
  registers).
\item \emph{Constant delay with polynomial space}: 
One can also allow the constant-delay enumeration phase to use a
  set of registers of size $O(g(N))$ for some $g$. Note that if we
  don't have $g(N)=O(N)$ then this is, by default, forbidden by our
  RAM model but our ``extended'' model supports it for polynomial $g$
  (i.e.\ $g(N)=O(N^d)$ for some $d$), 
  as described in the paragraph ``RAM model with more memory'' of Section~\ref{sec:more_memory}.
\end{itemize}


\paragraph{The operations we have studied.} We have extended the RAM model with the following two kinds of operations:
\begin{itemize}
\item Let $\mathtt{Op}_{\mathtt{Arith}}$ be the set of arithmetic operations that follow: addition, subtraction, multiplication, division, exponential, logarithm, and $c$-th root (for any fixed integer $c\ge 2$), acting on ``polynomial'' integers, i.e.\ integers less than~$N^d$, for a fixed $d$.
\item Let $\mathtt{Op}_{\mathtt{CA}}^{\mathtt{lin}}$ be the set of operations on ``polynomial'' integers computable in linear time on cellular automata.
\end{itemize}

The following corollaries summarize our main results and establish the robustness and expressiveness of the complexity classes thus defined in our RAM model.

\begin{corollary}\label{cor:invOp+}
Let $\mathtt{Op}$ be a finite set of operations such that 
$\{+\}\subseteq \mathtt{Op}\subseteq 
\mathtt{Op}_{\mathtt{Arith}}\cup \mathtt{Op}_{\mathtt{CA}}^{\mathtt{lin}}$. 
Then we have $\RAM{}=\RAM[{+}]$.
\end{corollary}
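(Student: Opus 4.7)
The plan is to assemble the constant-time-after-linear-preprocessing simulation results accumulated throughout Sections~\ref{sec:SumDiffProdBase}--\ref{sec:bitwiseOperConstTime} via the transitivity of faithful simulation after linear-time initialization (Corollary~\ref{cor:transitivity}). The inclusion $\RAM[\{+\}]\subseteq\RAM[\mathtt{Op}]$ is immediate, since every $AB$-program of $\RAM[\{+\}]$ is already an $AB$-program of $\RAM[\mathtt{Op}]$ (the inclusion $\{+\}\subseteq\mathtt{Op}$ just means $\RAM[\mathtt{Op}]$ has extra primitive operations it is free to ignore). So the content of the corollary is the reverse equivalence.

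For the reverse direction, I would proceed operation-by-operation. For every $\op\in\mathtt{Op}$, the paper has exhibited an addition-RAM that, after a linear-time preprocessing building some tables, computes $\op$ on ``polynomial'' operands in constant time: for $\op\in\mathtt{Op}_{\mathtt{Arith}}$ this is covered by Section~\ref{sec:SumDiffProdBase} (sum, difference, product, base change), Theorem~\ref{theorem: division constant time} (division), Section~\ref{sec:expLog} (exponential and logarithm), and Proposition~\ref{prop: roots constant time} together with Subsection~\ref{subsec:genroot} (each fixed $c$-th root); for $\op\in\mathtt{Op}_{\mathtt{CA}}^{\mathtt{lin}}$ this is exactly Theorem~\ref{th:linTimeCA}. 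Each of these results says precisely that a single-operation extension $\RAM[\{+,\op\}]$ is faithfully simulated by $\RAM[\{+\}]$ after linear-time initialization.

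Since $\mathtt{Op}$ is \emph{finite}, I can concatenate the finitely many preprocessings: running them in sequence still takes $O(N)$ time, because a finite sum of $O(N)$ terms is $O(N)$, and they pairwise do not interfere (each uses its own tables, and the tables are read-only during the ``response'' phase). Equivalently, one applies Corollary~\ref{cor:transitivity} (transitivity of faithful simulation after linear-time initialization) inductively: starting from $\RAM[\{+\}]$, one adds the primitives of $\mathtt{Op}\setminus\{+\}$ one at a time, each step being a faithful simulation after linear-time initialization. Transitivity then yields a faithful simulation of $\RAM[\mathtt{Op}]$ by $\RAM[\{+\}]$ after linear-time initialization, which, by the very definition of the equivalence relation on RAM models (Definition~\ref{def:equivalence}), is exactly $\RAM[\mathtt{Op}]=\RAM[\{+\}]$.

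The only mildly delicate point I anticipate is bookkeeping: one must check that combining several ``linear-preprocessing + constant-time simulation'' procedures does not produce interference between their pre-computed tables (disjoint names, read-only after initialization) and that the polynomial bound $O(N^d)$ on the operands assumed by each individual simulation is consistent across the whole set $\mathtt{Op}$. Both points are routine: take $d$ to be the maximum of the finitely many per-operation exponents, and rename array identifiers so that the $O(1)$ many tables used by the different simulations are pairwise disjoint. No genuinely new argument is needed beyond this assembly step.
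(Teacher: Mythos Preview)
Your proposal is correct and follows exactly the approach intended by the paper: the corollary is stated there essentially without proof, as the immediate consequence of the accumulated constant-time-after-linear-preprocessing results (Sections~\ref{sec:SumDiffProdBase}--\ref{sec:bitwiseOperConstTime}, Theorem~\ref{th:linTimeCA}) assembled via the transitivity corollary (Corollary~\ref{cor:transitivity}). Your additional bookkeeping remarks about disjoint table names and a common exponent~$d$ are helpful but, as you note, routine.
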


\begin{corollary}\label{cor:invRAM+}
The classes $\lin$, $\clin$, $\cdlin$ and $\cdlinlinspace$ do not change if the set of primitive operations of the RAMs is any finite set of operations containing addition and included in 
$\mathtt{Op}_{\mathtt{Arith}}\cup \mathtt{Op}_{\mathtt{CA}}^{\mathtt{lin}}$.
\end{corollary}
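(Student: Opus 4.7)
The plan is to show both inclusions of each of the four class equalities, with one direction being immediate and the other reducing to the ``faithful simulation after linear-time initialization'' machinery already developed in the paper.

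\textbf{Trivial direction.} Since $\{+\}\subseteq\mathtt{Op}$, every $\RAM[{+}]$ program is literally an $\RAM{}$ program, and runs with the same time and space. Hence each class computed over $\RAM[{+}]$ is included in the corresponding class over $\RAM{}$. No work is required here.

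\textbf{Non-trivial direction.} Here I would invoke, for every operation $\op\in\mathtt{Op}\subseteq \mathtt{Op}_{\mathtt{Arith}}\cup \mathtt{Op}_{\mathtt{CA}}^{\mathtt{lin}}$, the fact (established throughout Sections~\ref{sec:SumDiffProdBase}--\ref{sec:bitwiseOperConstTime} and Theorem~\ref{th:linTimeCA}) that $\op$ can be computed on $\RAM[{+}]$ in constant time after a linear-time preprocessing. Concatenating the finitely many preprocessings yields one global preprocessing procedure $\fun{preprocOp}$ that runs in time (and space) $O(N)$ on $\RAM[{+}]$ and afterwards allows every single instance of any $\op\in\mathtt{Op}$ to be simulated in $O(1)$ time and $O(1)$ additional registers. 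This is exactly the transitivity of ``faithful simulation after linear-time initialization'' stated as Corollary~\ref{cor:transitivity}.

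\textbf{Class-by-class verification.} Given any $\RAM{}$ machine $M$, one constructs an $\RAM[{+}]$ machine $M'$ whose first phase runs $\fun{preprocOp}$ and then faithfully simulates $M$, replacing each native $\op$-instruction by the constant-cost gadget provided by the preprocessing.
For $\lin$: if $M$ runs in time $O(N)$, then $M'$ uses $O(N)$ for $\fun{preprocOp}$ plus $O(N)\cdot O(1)=O(N)$ for the simulation, so $M'\in\lin$.
For $\clin$: the preprocessing of $M$, of size $O(N)$, is merged with $\fun{preprocOp}$ into a single $O(N)$ preprocessing of $M'$; the constant-time query phase of $M$ uses only finitely many $\op$-instructions, each simulated in $O(1)$, so $M'$ still answers in constant time.
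For $\cdlin$ and $\cdlinlinspace$: again merge the two preprocessings; the enumeration phase of $M$ has constant delay and (respectively) constant or linear space, and each $\op$-step it performs is simulated on $M'$ in $O(1)$ additional time and in $O(1)$ additional read/write registers (the large pre-computed tables used by $\fun{preprocOp}$ are read-only, hence they count as part of the linear preprocessing and do not inflate the write-space of the enumeration phase). Therefore the delay remains constant and the enumeration space of $M'$ stays within the same bound ($O(1)$ in the $\cdlin$ case, $O(N)$ in the $\cdlinlinspace$ case).

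\textbf{Main obstacle.} The only delicate point is the space accounting for $\cdlin$: the constant-space discipline forbids writing to more than finitely many registers during enumeration, so one must check that the $\op$-simulation gadgets of Sections~\ref{sec:SumDiffProdBase}--\ref{sec:bitwiseOperConstTime} only read the tables produced by $\fun{preprocOp}$ and use a fixed number of scratch registers (independent of~$N$). This is indeed what the algorithms in those sections do, but it is the one invariant that needs to be explicitly verified operation by operation to conclude.
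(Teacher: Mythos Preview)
Your proof is correct and follows essentially the same approach as the paper's, which simply observes that each of the four classes allows linear-time preprocessing and that every operation in $\mathtt{Op}_{\mathtt{Arith}}\cup\mathtt{Op}_{\mathtt{CA}}^{\mathtt{lin}}$ has been shown computable by an addition RAM in constant time after such preprocessing. Your version is considerably more explicit---in particular the class-by-class verification and the space-accounting remark for $\cdlin$ spell out details the paper leaves implicit---but the underlying argument is the same.
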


\begin{proof}
Each of these four complexity classes allows linear-time preprocessing and we have proved that each operation in $\mathtt{Op}_{\mathtt{Arith}}\cup \mathtt{Op}_{\mathtt{CA}}^{\mathtt{lin}}$ can be computed by an addition RAM in constant time after linear-time preprocessing.
\end{proof}

\begin{remark}
For simplicity, we have chosen to state Corollary~\ref{cor:invRAM+} for the ``minimal'' complexity classes of the RAM model. 
However, it is easy to deduce the same result for complexity classes of greater time and/or greater space, e.g.\ $\mathtt{Time}(N^2)$, etc.
\end{remark}

\paragraph{Addition of randomness?}
Our RAM model is purely deterministic but we can easily add to our
model a source of randomness, for instance by having an instruction
\texttt{RANDOM} that fills the accumulator $A$ (of an $AB$-RAM) with
some random value in $[0,cN[$, for some constant integer~$c$.

Adding randomness to a RAM model does not change the worst case
complexity of well-defined problems as, in the worst case, the random
instruction might behave exactly like the instruction \texttt{CST} 0. 
For that reason, the addition of randomness is only useful when
talking about complexity classes where the answer is allowed to be
only \emph{probably} correct (as opposed to \emph{always} correct) or about
programs terminating in \emph{expected time} $T(N)$, for some function $T$.

In computational complexity theory, there are open conjectures about
whether the addition of a source of randomness to Turing machines
might make them exponentially faster.  Since the classes of problems
computable in polynomial or exponential time in our RAM or in a Turing
machine are the same, there is little hope to determine whether a RAM
machine with randomness is equivalent to a RAM without randomness.

\paragraph{Weakness of the RAM with successor and predecessor?}

Let $\linsp$, $\clinsp$, and $\cdlinsp$ denote the complexity classes similar to the $\lin$, $\clin$, and \linebreak
$\cdlin$ classes, respectively, when the successor and predecessor functions, $x\mapsto x+1$ and $x\mapsto x-1$, where $x<cN$, for some constant integer $c>0$, are the only primitive operations of the RAM model instead of the addition operation.

\medskip
In contrast with Corollary~\ref{cor:invRAM+}, the following ``negative'' result holds.

\begin{corollary}\label{cor:strictIncConstLinSucc} 
We have the strict inclusion $\clinsp\subsetneqq \clin$.
\end{corollary}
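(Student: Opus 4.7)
The plan is to exhibit a ``dynamic'' problem which lies in $\clin$ but not in $\clinsp$, namely the binary addition problem itself. Formally, consider the function $f: (N, x, y) \mapsto x+y$ where the fixed part of the input is the integer $N>0$ and the variable part consists of two integers $x,y<N$.

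First, the inclusion $\clinsp \subseteq \clin$ is immediate: a RAM with only the successor and predecessor operations is a particular case of a RAM with addition, since $x+1$ and $x-1$ are computable in constant time from $+$ using the constant $1$ (or via $\mathtt{CST}\;1$ followed by an addition). Thus, any $\clinsp$ algorithm is already a $\clin$ algorithm.

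Second, the addition problem $f$ clearly belongs to $\clin$: the pre-computation phase on input $N$ does nothing (or simply stores $N$), and the computation phase on input $(x,y)$ consists of a single addition $x+y$, executed in constant time by the primitive operation $+$ of $\RAM[\{+\}]$.

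The key step is showing $f \notin \clinsp$. This follows directly from Proposition~\ref{prop: AddNotCTonSuccRAM}, which we apply with $\mathtt{UnaryOp} \coloneqq \{\mathtt{succ}, \mathtt{pred}\}$: there is no RAM in $\RAM[\{\mathtt{succ}, \mathtt{pred}\}]$ which, after pre-computing some tables in $O(N)$ space (and a fortiori in $O(N)$ time), can compute $x+y$ in constant time for all $x,y<N$, even if the RAM is allowed to use arbitrary (in)equality tests in its branching instructions. Since any $\clinsp$ algorithm for $f$ would satisfy exactly the hypotheses of Proposition~\ref{prop: AddNotCTonSuccRAM} (linear-time preprocessing uses linear space, and constant-time computation of $x+y$ for $x,y<N$ is exactly what the proposition forbids), we conclude $f \notin \clinsp$.

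There is essentially no obstacle here: the hard work has already been done in Proposition~\ref{prop: AddNotCTonSuccRAM}, whose pigeonhole argument establishes the impossibility result. The only subtlety to mention is checking that the ``input format'' of the addition problem fits the definition of $\clinsp$: the fixed part is $N$ and the variable part $(x,y)$ has constant size (two registers), so this is indeed a dynamic problem in the sense of the $\clinsp$ class definition.
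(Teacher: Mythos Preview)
Your proof is correct and follows essentially the same approach as the paper: both use the addition problem $(N,(x,y))\mapsto x+y$ as the separating problem, observe that it trivially belongs to $\clin$, and invoke Proposition~\ref{prop: AddNotCTonSuccRAM} (with $\mathtt{UnaryOp}=\{\mathtt{succ},\mathtt{pred}\}$) to show it does not belong to $\clinsp$. Your write-up is slightly more detailed in justifying the inclusion and the input-format check, but the argument is the same.
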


\begin{proof}
The inclusion is a direct consequence of what we have proved before. 
So, it suffices to prove that it is strict.
Let us consider the ``dynamic'' problem $(\mathcal{I},\mathcal{X}=(x,y))\mapsto x+y$, for $x,y\in[0,N[$, where $N$ is the size of $\mathcal{I}$. This problem trivially belongs to $\clin$ but it does not belongs to $\clinsp$ by Proposition~\ref{prop: AddNotCTonSuccRAM}.
\end{proof}

We ignore if Corollary~\ref{cor:strictIncConstLinSucc} can be ``extended'' to the other ``minimal'' complexity classes.

\begin{openpb}\label{conj:succ moins que plus pour lin}
Is the inclusion $\linsp\subseteq \lin$ strict?
\end{openpb}

\begin{openpb}\label{conj:succ moins que plus pour cdlin}
Is the inclusion $\cdlinsp\subseteq \cdlin$ strict?
\end{openpb}

The difficulty comes from the fact that RAM with successor, also known as successor RAM, is much more powerful than it seems at first glance. 
As a striking example, Schönhage~\cite{Schonhage80} showed as early as 1980 that integer-multiplication, which is the problem of computing the product of two integers in binary notation, i.e., with one bit per input register~\footnote{Note that in our formalism, the input of this problem is of the form \linebreak
$\mathcal{I}\coloneqq (N,I[0]=a_0,\ldots, 
I[n-1]=a_{n-1},I[n]=2,I[n+1]=b_0,\ldots,I[2n]=b_{n-1})$, where $N=2n+1$ and the $a_i,b_i$ belongs to $\{0,1\}$, and its output is $(2n,p_0,\ldots,p_{2n-1})$, where $p=a\times b$, for $a=\overline{a_{n-1}\ldots a_1a_0}$,  $b=\overline{b_{n-1}\ldots b_1b_0}$, and $p=\overline{p_{2n-1}\ldots p_1p_0}$.}, is computed in linear time on a successor RAM and therefore belongs to $\linsp$. 
Moreover, Schönhage~\cite{Schonhage80} proved that the successor RAM is equivalent to what he calls the ``Storage Modification Machine'', also called ``Pointer Machine" by Knuth~\cite{Knuth68} and Tarjan~\cite{Tarjan75}, who proves that in this computational model, Union-Find operations can be performed in amortized quasi-constant time $\alpha(n)$, where $\alpha(n)$ is the inverse of Ackermann's function $A(n,n)$.

\paragraph{Why does our RAM model model the algorithms in the literature?}

Our RAM model faithfully mimics/reproduces the computations of the
algorithms on the combinatorial structures: trees, formulas, graphs, hypergraphs, etc.
The reason is twofold: 

\begin{enumerate}

\item Such a structure $\mathcal{S}$, e.g.\ a tree or a graph, is naturally encoded by a RAM input \linebreak 
$\mathcal{I}=(N,I[0],\ldots,I[N-1])$, with $I[j]=O(N)$ and $N=\Theta(L)$, where $L$ is the ``natural length'' of the structure $\mathcal{S}$, which is $L=n$ if $\mathcal{S}$ is a tree of $n$ nodes, or $L=m+n$ if $\mathcal{S}$ is a graph of $m$ edges and $n$ vertices, etc.

\item The structure and acting of the RAM model are homogeneous: 
on the one hand, its input array $\mathcal{I}=(N,I[0],\ldots,I[N-1])$ is of the same nature as the array $R[0],R[1],\ldots$ of its registers of contents also $O(N)$;
on the other hand, a read/write instruction is a random access instruction like all other instructions.



\end{enumerate}

\paragraph{Why RAM linear time is intuitive linear time under unit cost criterion?}
By our conventions, any \emph{linear-time} (and linear-space\footnote{As is the case with almost all linear-time algorithms in the literature and as required for RAMs, we require that a linear-time algorithm use linear space.}) algorithm from the literature on combinatorial structures (trees, graphs, etc.) is modeled/implemented by a RAM algorithm that uses $O(N)$ registers and executes $O(N)$ instructions, each involving $O(1)$ registers of bit length $\Theta(\log N)$. 
Consequently, its time complexity under unit cost criterion is $O(N)$, which is $O(L)$, where $L$ is the ``natural length'' of the input structure, by item 1 of the above paragraph. 

\medskip
What about linear time under logarithmic cost criterion? 
Since a linear-time algorithm performs $O(N)$ instructions, each of cost $O(\log N)$, its time is 
$O(N \log N)$ under logarithmic cost criterion. 
Is it a linear time for this criterion?

\paragraph{Linear-time algorithms on combinatorial structures are also linear under logarithmic cost criterion.}
Let us justify this assertion for graph algorithms. (The proof is similar for algorithms on other structures: trees, formulas, etc.) 
A graph without isolated vertex whose vertices are $1,\ldots, n$ and  edges are 
$(a_1,b_1),\ldots, (a_m,b_m)$ is ``naturally represented'' by the standard input 
$\mathcal{I}=(N,I[0],\ldots,I[N-1])$ where $N\coloneqq 2m+2$,  $I[0]\coloneqq m$,  $I[1]\coloneqq n$,  and finally $I[2j]\coloneqq a_j$ and $I[2j+1]\coloneqq b_j$ for $j=1,\ldots, m$. 
Recall that we have $n\leq 2m=N-2$.


If $\mathcal{I}$ is encoded by a binary word, its bit length, called
$\Call{length}{\mathcal{I}}$, is
$\Theta(\Call{length}{N}+\sum_{j=1}^m\Call{length}{\Call{binarycode}{a_j,b_j})}$,
which is $O(N\log N)$. 
On the other hand, since the $m$ edges $(a_j,b_j)$ are pairwise distinct, we have
\begin{eqnarray}\label{eqn:ineq_shortest_length_words}
\sum_{j=1}^m \Call{length}{\Call{binarycode}{a_j,b_j}}\geq \sum_{j=1}^m \Call{length}{w_j}
\end{eqnarray}
where $w_1,\ldots,w_m$ is the list of the $m$ shortest nonempty binary words numbered in increasing length, that means $\Call{length}{w_j}\leq \Call{length}{w_{j+1}}$, for each $j<m$.
The inequality~(\ref{eqn:ineq_shortest_length_words}) is justified by the following points:
\begin{itemize}
\item without loss of generality, assume that the $m$ edges $(a_1,b_1),\ldots, (a_m,b_m)$ are numbered in increasing length of their (pairwise distinct) encodings $\Call{binarycode}{a_j,b_j}$, $j=1,\ldots,m$;
\item consequently, we have $\Call{length}{\Call{binarycode}{a_j,b_j}}\geq \Call{length}{w_j}$, for each $j\leq m$.
\end{itemize}
Moreover, the (in)equality $\sum_{j=1}^m \Call{length}{w_j}=\Omega(m \log m)$ 
is intuitive and not hard to prove\footnote{
Note the sequence of equivalences $\Call{length}{w_j}=\ell$ $\iff$ 
$\sum_{i=1} ^{\ell-1} 2^i < j \leq \sum_{i=1} ^{\ell} 2^i$ $\iff$
$2^{\ell}-2<j \leq  2^{\ell+1}-2$ $\iff$
$2^{\ell}<j+2 \leq  2^{\ell+1}$ $\iff$
$\ell=\lc\log_2(j+2)\rc -1$.
Hence, $\Call{length}{w_j}=\lc\log_2(j+2)\rc -1$.
Therefore, we get $\sum_{j=1}^m \Call{length}{w_j} \geq
\sum_{j=\lc m/2 \rc}^m \Call{length}{w_j} \geq
(m/2) \Call{length}{w_{\lc m/2 \rc}} = (m/2) \lc \log_2(\lc m/2 \rc+2) \rc -1$, from which
$\sum_{j=1}^m \Call{length}{w_j}=\Omega(m \log m)$ is deduced.
}
.
By~(\ref{eqn:ineq_shortest_length_words}), this implies
$\Call{length}{\mathcal{I}}=\Omega(m\log m)=\Omega(N\log N)$.
Thus, we have proved\footnote{
Note that the truth of~(\ref{eqn:unitcost_versus_logcost}) is largely independent of the representation of the graph provided it is ``natural''. E.g., if the vertices of the graph are $1,\ldots,n$, some of which may be isolated, and its edges are $(a_1,b_1),\ldots, (a_m,b_m)$, the graph is ``naturally'' represented by the standard input $\mathcal{I}=(N,I[0],\ldots,I[N-1])$ where $N\coloneqq 2m+n+1$,  $I[0]\coloneqq m$,  $I[j]\coloneqq j$ for $j=1,\ldots, n$, and finally $I[2j+n-1]\coloneqq a_j$ and $I[2j+n]\coloneqq b_j$ for $j=1,\ldots, m$. 
We still have $\Call{length}{\mathcal{I}}=O(N\log N)$. 
Moreover, by the same reasoning as above, we get
$\Call{length}{\mathcal{I}}=\Omega(n \log n + m \log m)=\Omega(N \log N)$ 
(because $N=\Theta(max(m,n))$) and therefore 
$\Call{length}{\mathcal{I}}=\Theta(N \log N)$.
}
\begin{eqnarray}\label{eqn:unitcost_versus_logcost}
\Call{length}{\mathcal{I}}=\Theta(N\log N)
\end{eqnarray} 
Consequently, a graph algorithm running in linear time under unit cost criterion, therefore in time $O(N\log N)$ under logarithmic cost criterion, runs in time $O(\Call{length}{\mathcal{I}})$ under the same criterion, according to~(\ref{eqn:unitcost_versus_logcost}). 
This is true linear time under the logarithmic cost criterion!

Thus, algorithms on graphs or similar combinatorial structures (trees, formulas, hypergraphs, etc.) running in linear time under the unit cost criterion also run in linear time, i.e.\ time $O(\Call{length}{\mathcal{S}})$, under the logarithmic cost criterion, where $\Call{length}{\mathcal{S}}$ is the length in bits of the input structure $\mathcal{S}$. 

However, our argument above for combinatorial structures does \emph{not} apply to \emph{word structures}.

\paragraph{Which linear time complexity for algorithms on words?}
As seems most natural, it is the rule for a text algorithm (see e.g.\ the reference book~\cite{CrochemoreR94} or Chapter 32 of~\cite{CormenLRS09}) to represent an input word (text, string) 
$w=w_0\ldots w_{n-1} \in \Sigma^n$ over a finite fixed alphabet $\Sigma$, identified with the set of integers
$\{0,1,\ldots,b-1\}$ for $b=\mathtt{card}(\Sigma)$, by an array 
$\mathtt{W}[0..n-1]$ where $\mathtt{W}[j]=w_j$.
Thus, the text algorithm is faithfully simulated by a RAM whose input is 
$\mathcal{I}=(N,I[0],\ldots,I[N-1])$ where $N\coloneqq n$ and $I[j]\coloneqq w_j$; 
this means $0\leq I[j] < b = O(1)$.

Many word algorithms of the literature, e.g.\ the well-known string-matching algorithm of Knuth, Morris and Pratt~\cite{KnuthMP77,CrochemoreR94,Gusfield1997}, are executed in the RAM model in time $O(N)$ under the unit cost criterion, where $N$ is the length of the input word $w$ (or the maximum of the lengths of the two input words). 
Therefore, such an algorithm runs in time $O(N\log N)$ under the logarithmic cost criterion, for $N=\Call{length}{w}$. 
A priori, it is not $O(\Call{length}{w})$...
This corresponds to the fact that the input elements $I[j]$ are in $O(1)$ while the contents of the memory registers are in $O(N)$: 
here, there is \emph{heterogeneity between input and memory}.

To recover homogeneity, we can adopt a technical solution already introduced in \cite{Grandjean96, 
Schwentick97, GrandjeanSchwentick02}: 
partition any input word $w\in\Sigma^n$ 
(where $\Sigma=\{0,\ldots,b-1\}$) into
$N=\Theta(n/\log n)$ factors of length $\Theta(\log n)=\Theta(\log N)$. 
More precisely, using the parameters 
$\ell \coloneqq \lc (\log_b n)/2 \rc$ and $N \coloneqq \lc n/\ell \rc$,
we partition $w$ into the concatenation $w=f_0\ldots f_{N-1}$ of $N$ factors $f_j$, 
with $\Call{length}{f_j}=\ell$ for $j<N-1$ and 
$1\leq \Call{length}{f_{N-1}}\leq \ell$.
The word $w\in\Sigma^n$ is represented by the input $\mathcal{I}=(N,I[0],\ldots,I[N-1])$ defined by 
$I[j]\coloneqq f_j$ for each $j<N$. 
Many other variants of this representation can also be adopted, the important point being the orders of magnitude obtained: 
\begin{center}
$N=\Theta(n/\log n$) ; $I[j]=O(b^{\ell})=O(b^{(\log_b n)/2})=O(n^{1/2})=O(N)$, for each $j<N$.
\end{center}
With such a word representation, the time complexity of a RAM
operating in time $O(N)$, which is $O(n/\log n)$, under the unit cost
criterion, is $O(N\log N)=O(n)$, i.e.\ \emph{truly} linear,
under the logarithmic cost criterion.  
It would be a large and challenging goal to determine which word problems known to be computable in
``linear time'' can be computed in time $O(n/\log n)$ under the unit cost criterion using our word representation. 
For example, it is
true for the following ``sorting''
problem~\cite{Grandjean96} from $(\Sigma\cup\{\sharp\})^+$ to $(\Sigma\cup\{\sharp\})^+$,
for $\sharp\not\in\Sigma$, defined as
\[
x_1\sharp x_2\sharp\cdots \sharp x_q \mapsto x_{\pi(1)}\sharp x_{\pi(2)}\sharp\cdots \sharp x_{\pi(q)}
\]
where $x_1,x_2,\ldots,x_q$ are words in $\Sigma^+$, and $\pi$ is a permutation of $\{1,2,\ldots,q\}$ such that \linebreak
$x_{\pi(1)}\leq x_{\pi(2)}\leq\ldots \leq x_{\pi(q)}$ for the lexicographic order $\leq$ of $\Sigma^+$.

\paragraph{Is multiplication more difficult than addition in the RAM model?}
We mentioned above that Schönhage~\cite{Schonhage80} surprisingly proved that the successor RAM, equivalent to his ``storage modification machine'' (but seemingly weaker than our addition RAM!) can compute in time $O(n)$ the product of two integers of length $n$ in binary notation, each contained in $n$ input registers (one bit per input register). 

On this point, in his lecture on receiving the ACM Turing award~\cite{Cook83} in 1982, Stephen Cook expressed his astonishment as follows:
`` Schönhage~\cite{Schonhage80} recently showed$\dots\dots$that his storage modification machines can multiply in time $O(n)$ (linear time!). 
We are forced to conclude that either multiplication is easier than we thought or that Schönhage’s machines cheat.''

Although we agree that Schönhage's result is very beautiful and strong (it is all the more powerful that it uses a successor RAM instead of an addition RAM!), we believe that the paradox highlighted by Cook is only apparent. 
The product of two integers of binary length $n$ is computed by a successor RAM in time $O(n)$ under the unit cost criterion, and therefore in time $O(n \log n)$ under the logarithmic cost criterion. 
This is not real linear time!

At the opposite, we have noticed, see Remark~\ref{rem:complexitySumDiff}, that our addition RAM can compute the sum (resp.\ difference) of two integers of binary length $n$ in time $O(n/ \log n)$ under the unit cost criterion
(by representing each input integer by the array of binary words of length $\lc (\log_2 n)/2 \rc$ whose concatenation is its binary notation, like the word representation we suggest in the previous paragraph).
This is $O(n)$ time under the logarithmic cost criterion, which is \emph{really} linear time.
This is not the case with the Schönhage multiplication algorithm or the Fast Fourier multiplication and we conjecture that our RAM model cannot compute the product of two integers of binary length $n$ in time $O(n/ \log n)$ under the unit cost criterion as its does for their sum and difference.
Multiplication continues to seem harder than addition and subtraction!

\paragraph{Linear space or larger space?} The RAM model we have chosen uses \emph{linear space}. This means that the addresses and contents of registers $R[0],R[1],\dots$ are integers in $O(N)$; 
in particular, no $k$-dimensional array $N_1\times N_2\times\dots\times N_k$ is allowed unless it uses
$N_1\times N_2\times\dots\times N_k= O(N)$ registers so that it can be represented by a one-dimensional array of size $O(N)$. 
This is justified by two arguments:
\begin{itemize}
\item our RAM model models the vast majority of linear-time algorithms in the literature: they use a linear space;
\item our RAM model is very ``robust'': its complexity classes are invariant for many variations, in particular with respect to the primitive operations of the RAM, as established at length in this paper.
\end{itemize}
Nevertheless, we have noticed that several recent papers about queries in logic and databases, see e.g.\ ~\cite{DurandSS14, Segoufin14, BerkholzKS17,BerkholzKS17b}, present algorithms which use $k$-dimensional arrays $\mathbf{A}$ with $O(N^k)$ available registers, such that for given $(n_1,\dots,n_k)\in\mathbb{N}^k$, with each $n_i=O(N)$, the entry $\mathbf{A}[n_1,\dots,n_k]$ at position $(n_1,\dots,n_k)$ can be accessed in constant time. 
Although this ``multidimensional'' RAM model is ``unrealistic for real-world computers'', as~\cite{BerkholzKS17} writes, it is robust in two respects:
\begin{itemize}
\item it inherits all the invariance properties of our (stricter) RAM model according to the primitive operations;
\item it is invariant according to the dimensions of arrays: Theorem~\ref{thm:mem} states that the RAM model with arrays of arbitrary dimensions is equivalent to the RAM model using only 2-dimensional arrays, and even, arrays of dimensions $N \times N^\epsilon$, for any $\epsilon>0$, hence with 
$O(N^{1+\epsilon})$ registers available.
\end{itemize}
It remains the open and difficult question of whether the RAM model
with $k$-dimensional arrays but only using $O(N)$ cells is equivalent
to the RAM model with only $1$-dimensional arrays of~$O(N)$ size.
Answering this question is crucial to determine whether the algorithms
of~\cite{DurandSS14, Segoufin14, BerkholzKS17,BerkholzKS17b} can be
implemented in linear space in ``real-world computers''.
\paragraph{Which space complexity to choose for enumeration algorithms?} In accordance with and in the spirit of the definition of the complexity class $\clin$, whose second phase uses constant time and therefore constant space, the ``minimal" enumeration complexity class that we have chosen as the standard (when each solution to be enumerated is of constant size\footnote{Note that if the solutions are not of constant size, e.g.\ if they are subsets of vertices of an input graph, then the ``minimal''  (standard) enumeration complexity class is the class of problems, called $\lindlin$, whose preprocessing phase uses $O(N)$ time and space and enumeration phase uses $O(\vert S \vert)$ delay and space, for each solution $S$, see~\cite{Bagan06,Bagan09,Courcelle09}. Of course, belonging to $\lindlin$ amounts to belonging to $\cdlin$ for an enumeration problem whose size of solutions is constant.}
) is the class $\cdlin$ whose second phase uses constant time \emph{and also} constant space. This is justified by two points:
\begin{itemize}
\item some basic enumeration algorithms in combinatorics, logic and database theory~\cite{Feder94, BaganDG07, DurandG07, Bagan09, Durand20} belong to $\cdlin$;
\item this complexity class is as ``robust'' (and for the same reasons) as the classes $\lin$ and $\clin$.
\end{itemize}
The a priori larger complexity class $\cdlinlinspace$, which allows the enumeration phase to use $O(N)$ space, instead of constant space, is also realistic (its two phases use a linear space) 
and as robust as $\cdlin$.

Again, the question of whether the inclusion $\cdlin \subseteq \cdlinlinspace$ is strict remains an open and difficult problem.

\paragraph{What other operations are computed in constant time?}
We have the feeling that our addition RAM can compute in constant time
(with linear-time preprocessing) any operation on ``polynomial''
integers defined from the usual arithmetic operations (whose versions
on $\mathbb{R}$ are continuous), plus the rounding functions $x\mapsto
\lf x \rf$ and $x\mapsto \lc x \rc$, and the compositions of such
operations, provided that the intermediate results are ``polynomial''.
We managed to prove this for all the operations we studied in
Sections~\ref{sec:SumDiffProdBase} to~\ref{sec:roots}, except for the
general root operation $(x,y) \mapsto \lf x^{1/y} \rf$ for which
Subsection~\ref{subsec:genroot} gives only a partial proof.  However,
we have failed to prove any general result -- except for the set of
linear-time computable operations on cellular automata --- and our
proofs are ``ad hoc'' even though a discrete analog of Newton's
approximation method used in Section~\ref{sec:roots} might be repeated
or ``generalized'' in some way.

\paragraph{Final remarks on the RAM model.} 
As observed and discussed in the introduction to this paper, on the one hand the RAM model is generally considered to be the standard model in the foundations of algorithm design and algorithm analysis, and on the other hand, surprisingly, the literature does not agree on a standard definition of the RAM model and its complexity classes.

Although in the past some papers~\cite{Grandjean94,Grandjean94-cc, Grandjean96,GrandjeanSchwentick02, GrandjeanOlive04,
DurandG07,Bagan09} have defined and started to study the $\lin$ class of our RAM model with addition and its subclass of decision problems called $\dlin$, so far no systematic work has investigated from scratch the robustness and extent of computational power of this model and its complexity classes.

Having such a systematic and unified study was all the more urgent as the many recent studies on the algorithms and the complexity of ``dynamic'' problems, and especially of enumeration problems~\cite{Segoufin13,Segoufin14,AmarilliBJM17,BerkholzKS18,KazanaS19,Strozecki19,CarmeliZBKS20,BerkholzGS20,Durand20}, essentially refer to the reference book~\cite{AhoHU74} (excellent but almost half a century old!) and to the scattered and incomplete papers cited above. 
This gives an impression of imprecision and approximation and can lead to ambiguities or even errors\footnote{Most of the papers cite the book~\cite{AhoHU74} which considers the four operations $+,-,\times$ and $/$ as primitive operations of RAMs without limiting the values of their operands. However, Schönhage~\cite{Schonhage79} proved that the class of languages recognized by such RAMs in polynomial time under the unit cost criterion contains the entire class NP!}.

We hope that this self-contained article meets this need for precision. We have designed our paper as a progressive, detailed and systematic study. 

We consider that the main arguments in favor of adopting the addition RAM as the basic algorithmic model are as follows:
\begin{itemize}
\item its simplicity/minimality and its homogeneity;
\item its ability to represent usual data structures and in particular tables, essential for preprocessing;
\item the robustness and computing power of its complexity classes (Corollary~\ref{cor:invRAM+}): 
the most important and surprising result that this paper establishes is that the addition RAM can compute the Euclidean division in constant time with linear-time preprocessing; 
as we also proved, this allows to compute many other arithmetic operations on ``polynomial'' integers in constant time with linear-time preprocessing.
\end{itemize}

\paragraph{Open problems.} Even if we have positively solved in this paper the main (old) open question, which is the equivalence of the addition RAM with the RAM equipped with the four operations $+,-,\times$ and $ / $ acting on ``polynomial'' integers, for the complexity classes under the unit cost criterion, there remain many other interesting open questions regarding the RAM model, several of which have been presented throughout this paper.

We have tried to establish the most complete list of integer
operations that an addition RAM is able to compute in
constant time with linear-time preprocessing.  In fact, many
algorithms deal with combinatorial structures such as trees, circuits,
graphs, etc.  The next step would therefore be to undertake a similar
systematic study of operations on combinatorial structures computable
in constant time (or quasi-constant time) on addition RAMs with
linear-time preprocessing.  For example, although a lot of work has
been done in this direction, it would be very useful to establish the
``largest'' set of ``atomic'' tree operations computable in constant
(or quasi-constant) time after a linear (or quasi-linear) time
preprocessing\footnote{For instance, as soon as 1984, Harel and
Tarjan~\cite{HarelT84} proved that the closest common ancestor of any
two nodes $x,y$ of a tree $T$ of $N$ nodes can be computed in constant
time after a preprocessing of time $O(N)$ and depending only on $T$.}.
This would constitute a toolbox to optimize algorithms for ``dynamic''
problems such as those studied in~\cite{AmarilliBJM17,BerkholzKS17,
  BerkholzKS18b, BerkholzKS18, AmarilliJP21}.

\bigskip
\noindent
\emph{Acknowledgments:} 
This work was partly supported by the PING/ACK
project of the French National Agency for Research
(ANR-18-CE40-0011). 
We thank Antoine Amarilli and Luc Segoufin for
their proofreading of the manuscript. 
In particular, we thank Luc for his comment that led to the drafting of
Subsection~\ref{subsec:RAMwithLargerSpace} 
``RAM models with a different space usage''.

\bibliographystyle{plain}
\bibliography{biblioDivRAM}

\section{Appendix: Proof of Theorem~\ref{th:linTimeCA} of Subsection~\ref{subsec:TuringCA}}

Let us first recall the result.

\medskip \noindent
{\bf Theorem}~\ref{th:linTimeCA}.
\emph{
Let $d\geq 1$ be a constant integer and let $\mathtt{op}: \mathbb{N}^r\to \mathbb{N}$ be an operation of arity 
$r\geq 1$, computable in linear time on a cellular automaton (CA). 
There is a RAM with addition, such that, for any input integer $N>0$:
\begin{enumerate}
\item \emph{Pre-computation:} the RAM computes some tables in time $O(N)$;
\item \emph{Operation:} from these pre-computed tables and after reading $r$ integers $X_1,\ldots,X_r<N^d$, the RAM computes in constant time the integer $\mathtt{op}(X_1,\ldots,X_r)$. 
\end{enumerate}
}

\begin{proof}

Let $\mathcal{A}=(Q,\delta)$ be a CA that computes in linear time an operation $\op$, i.e.\ the function 
$f_{\op}: X_1;\cdots;X_r\mapsto \op(X_1,\ldots,X_r)$. 
Before building a RAM that simulates the CA, let us slightly tweak the CA input configuration. 
In particular, we add the integer $N$ to the input as a reference to the upper bound $N^d$ of the operands $X_1,\ldots,X_r<N^d$.

Let us represent the $(r+1)$-tuple of integers $(N,X_1,\ldots,X_r)$, for $X_1,\ldots,X_r<N^d$, by the word, called
$\mathtt{code}(N,X_1,\ldots,X_r)\coloneqq u_{L-1}\cdots u_1 u_0\in \Gamma^L$, on the alphabet 
$\Gamma \coloneqq \{0,1,\ldots,\gamma-1\}$ where $\gamma \coloneqq 2^{r+1}$, which is defined as follows:
\begin{description}
\item [Length of the code:] $L\coloneqq \mathtt{length}(N^d)$; 
\item[Letters of the code:]  $u_j\coloneqq \sum_{i=0}^{r} x_{i,j} 2^i$, for $0\leq j<L$,
where $x_{i,L-1}\dots x_{i,1} x_{i,0}$ denotes the binary representation of $X_i$ in $L$ bits, 
for $i\in\{0,1,\ldots,r\}$ and $X_0\coloneqq N$. In other words, $u_j$ is the integer whose binary representation is column $j$ of the 0/1 matrix $(x_{i,j})_{0\leq i \leq r, 0 \leq j <L}$ when the columns are numbered from $L-1$ (left) downto 0 (right).
\end{description}
\emph{Example:} For $r=2$, $d=1$, and the size integer $N=13$, which has the binary representation $1101$ of length $L=4$, and for $X_1=7$ and $X_2=9$, which have the binary representations (of length $L=4$) $0111$ and $1001$, respectively, we obtain $\gamma=2^{r+1}=8$,
$\Gamma=\{0,1,\ldots,\gamma-1=7\}$ and 
$\mathtt{code}(N,X_1,\ldots,X_r)=5327\in\Gamma^4$, see Table~\ref{table: ExEncodeInput}. 

\begin{table}[H]
  \centering
\begin{tabular}{|l||c|c|c|c|}
\hline
$X_2=9$ & 1 & 0 & 0 & 1 \\
\hline
$X_1=7$ & 0 & 1 & 1 & 1\\
\hline
$X_0=N=13$ & 1 & 1 & 0 & 1\\ 
\hline
\hline
$\mathtt{code}(N,X_1,X_2)$ &  $u_3=5$ & $u_2=3$ & $u_1=2$ & $u_0=7$ \\
\hline
\end{tabular}
\caption{Encoding the input $(N,X_1,\ldots,X_r)$}
\label{table: ExEncodeInput}
\end{table}

It is tedious but routine to build a cellular automaton that computes in linear time the function $g: \mathtt{code}(N,X_1,\ldots,X_r) \mapsto X_1;\cdots;X_r$. 
By Lemma~\ref{lemma:compose}, the composition of functions \linebreak
$f_{\op}\circ g: \mathtt{code}(N,X_1,\ldots,X_r) \mapsto \op(X_1,\ldots,X_r)$ is computed in linear time on a CA, also called for simplicity $\mathcal{A}=(Q,\delta)$, with the set of output states $Q_{out} \subset Q$. 
By Definition~\ref{def:lintimeCA} supplemented by Lemma~\ref{lemma:exactlintime} and Remark~\ref{lemma:binaryOutput}, this means that there exist a constant integer $c\geq 1$ and a projection 
$\pi: Q_{out}\cup\{\sharp\} \to \{0,1,\sharp\}$ such that, 
for all integers $N>0$ and $X_1,\ldots,X_r<N^d$, 
we have $\delta^{cL}(\sharp  \mathtt{Input} \sharp)=\sharp Y \sharp$,
\begin{itemize}
\item with $\mathtt{Input}\coloneqq \mathtt{code}(N,X_1,\ldots,X_r)$ 
and $L\coloneqq \mathtt{length}(\mathtt{Input})=\mathtt{length}(N^d)$,
\item and for a word $Y=y_{cL-1}\cdots y_0\in (Q_{out})^{cL}$ such that 
$\mathtt{Output} \coloneqq \pi(y_{cL-1})\cdots \pi(y_0)\in \{0,1\}^{cL}$
is the binary notation of the integer $\op(X_1,\ldots,X_r)$ written to $cL$ bits. 
\end{itemize}

We can assume $Q\coloneqq \{0,1,\ldots s-1\}$ with $\Gamma \coloneqq \{0,1,\ldots,\gamma-1\}$, 
$\gamma \coloneqq 2^{r+1}$, $\sharp\coloneqq \gamma$, 
and  \linebreak
$\Gamma\cup \{\sharp\} \subseteq~Q$, which implies $s\geq \gamma+1$.
For convenience, the active cells of the computation are numbered $cL-1,\ldots,1,0$, increasing from right to left (like the representation of integers in any base).
Thus, at any time $t\in[0,cL]$ of the computation, the configuration has the form 
$\sharp q_{cL-1}\cdots q_1q_0 \sharp$, 
where $q_i\in Q$ is the state of cell~$i$ (possibly $\sharp$) at time $t$.

\medskip
We are now ready to present the simulation of the CA by a RAM. 
Before describing it with all the necessary details, let us first give the guiding ideas of this simulation.

\paragraph{A simplified view of the simulation.} The trick is to divide the active interval of the computation, of length~$cL$, into a constant number of consecutive blocks of ``small'' length $\ell$, called \linebreak
$\ell$-blocks, numbered $\ldots 2,1,0$ (increasing from right to left!).
We take $\ell \coloneqq \epsilon \log_2 N$ for a ``small'' constant~$\epsilon$. (Precisely, we will define $\ell \coloneqq \lc L/D \rc$, for a large fixed integer $D$.) 
Similarly, we divide the computation, of time $cL$, into a constant number of computation intervals of duration~$\ell$, called ``global $\ell$-transitions''. 

By abuse of language, the string of states $B_j^t \in Q^{\ell}$ of an $\ell$-block at some time $t$ is called an $\ell$-block at time $t$. 
Therefore, the configuration at any time $t$ is the concatenation $\sharp B_{c_0}^{t} \cdots B_{1}^{t}\sharp$, for the constant $c_0 \coloneqq \lc cL/\ell \rc$.

The essential point is the ``locality'' of the cellular automaton: the state of any cell $x$ at any time $t+1$ (resp.\ $t+\ell$) is entirely determined by the states of the three cells $x-1,x,x+1$ (resp. the state string of the cell interval $[x-\ell,x+\ell]$) at time~$t$. 
As a consequence, an $\ell$-block $B_j^{t+\ell}$, i.e.\ the $\ell$-block $j$ at time $t+\ell$, is determined by the three 
$\ell$-blocks, $B_{j+1}^{t}$, $B_j^{t}$, $B_{j-1}^{t}$, which are the $\ell$-block itself and its left and right neighboring $\ell$-blocks at time $t$.
In other words, the $\ell$-block $B_j^{t+\ell}\in Q^{\ell}$ 
is determined by the concatenation $B_{j+1}^{t}B_{j}^{t}B_{j-1}^{t}\in Q^{3\ell}$. 
Such a concatenation is called a ``local configuration'' and the transformation 
$B_{j+1}^{t}B_{j}^{t}B_{j-1}^{t}\mapsto B_j^{t+\ell}$ from $Q^{3\ell}$ to $Q^{\ell}$ is called a ``local $\ell$-transition''. 
There are two crucial points:
\begin{description}
\item[What preprocessing in linear time?] The number of possible ``local configurations'', therefore the number of possible ``local $\ell$-transitions'' for the parameter $N$, called $\mathtt{NbLocalConf}(N)$, is the cardinality of the set $Q^{3\ell}$, which is ``small'', precisely $s^{3\ell}$;  
this will make it possible to pre-compute in $O(N)$ time a table of all possible ``local $\ell$-transitions'' for the integer $N$;
\item[How to compute $\op$ in constant time?] Since the time (resp.\ length of the active interval) of the computation is $cL$, then the computation (resp. the active interval) is the concatenation of $cL/\ell=O(1)$ global $\ell$-transitions (resp. $\ell$-blocks), and all the computation on \linebreak
$\mathtt{Input}\coloneqq \mathtt{code}(N,X_1,\ldots,X_r)$ can be done by performing $(cL/\ell)^2=O(1)$ local $\ell$-transitions; it is a constant time since each local $\ell$-transition is performed by consulting a single element in the table of local $\ell$-transitions.
\end{description}

\begin{remark}[processing the input and output in constant time]
In this simplified view, we have not mentioned the reading of the operands $X_1,\ldots,X_r<N^d$, with the encoding process giving the initial configuration 
$\sharp\mathtt{Input}\sharp$, for $\mathtt{Input}\coloneqq \mathtt{code}(N,X_1,\ldots,X_r)$, nor the final output process building $\op(X_1,\ldots,X_r)$ by the projection of the final configuration.
We will explain below how these input and output processes can also be performed in constant time.
\end{remark}

\begin{remark}
Our simplified view also overlooked the fact that in general, the time $cL$ of the computation is not a multiple of $\ell$.
In general, we have $cL=(c_0-1)\ell +\rho$, for the ``constant'' $c_0 \coloneqq \lc cL/\ell \rc$ and an integer $\rho \in [1,\ell]$.
Therefore, the computation breaks down into $c_0-1$ global \linebreak
$\ell$-transitions, each composed of $c_0$ local $\ell$-transitions,
and one global $\rho$-transition composed of $c_0$ local $\rho$-transitions of time $\rho$. Like a local $\ell$-transition, a  local $\rho$-transition produces, from a local configuration $(B_2,B_1,B_0)\in (Q^{\ell})^3$ of adjacent $\ell$-blocks at some time $t$, the value $B'_1\in Q^{\ell}$ of the intermediate $\ell$-block at time $t+\rho$ (instead of time $t+\ell$ for a local $\ell$-transition).

\end{remark}

\paragraph{The number of local configurations.} Let us now determine the value of the ``cutting'' constant~$D$ which determines the block length $\ell(N) \coloneqq \lc L(N)/D \rc$, 
for $L(N)\coloneqq \mathtt{length}(N^d)$, 
so that the number $\mathtt{NbLocalConf}(N)$
of the possible local configurations, for an input $\mathtt{code}(N,X_1,\ldots,X_r)$, 
is $O(N^{\sigma})$, for some $\sigma<1$. 
By definition of local configurations, we have
\[
\mathtt{NbLocalConf}(N) = s^{3\ell} = s^{3\lc L/D \rc}<s^3 s^{3L/D}
\]
because of $\lc L/D \rc < L/D +1$. 
We also have $L=\mathtt{length}(N^d)\leq 1+\log_2 N^d = 1+d\log_2 N$. \linebreak
It comes 
$s^{3L/D} \leq s^{3/D} s^{(3d/D)\log_2 N}= s^{3/D} N^{(3d/D)\log_2 s}$.
from which we deduce
\[
\mathtt{NbLocalConf}(N) <  s^{3+3/D} \times N^{(3d/D)\log_2 s}
\]
This implies $\mathtt{NbLocalConf}(N) < c_1 N^{\sigma}$, for the constant factor $c_1\coloneqq s^{3+3/D}$
and the constant exponent $\sigma\coloneqq (3d/D)\log_2 s$, 
which is less than 1 if we fix the value of our constant integer $D$ to more than $3d \log_2 s$. 
Therefore, setting $D\coloneqq 1+\lf 3d \log_2 s \rf$ is suitable.

\medskip
It is time to present in detail how a RAM can simulate in constant time the cellular automaton $\mathcal{A}=(Q,\delta)$ thanks to pre-computed tables in linear time. 
It is convenient to confuse the active part (resp.\ any $\ell$-block) of a computation of $\mathcal{A}$, which is a non-empty word over the alphabet $Q=\{0,1,\ldots,s-1\}$, with the integer represented by this word in the base~$s$. 
As we have already seen, we can easily go from a string of $s$-digits to the integer represented by this string in the base $s$, by Horner's method, and vice versa.

\paragraph{Computation of local transition tables.} First, to represent the transition function $\delta$ of the CA, let us define the 3-dimensional array 
$\mathtt{DELTA}[0..s-1][0..s-1][0..s-1]$  by
\[
\mathtt{DELTA}[x_2][x_1][x_0]
\coloneqq \delta(x_2,x_1,x_0)
\]
The $\mathtt{DELTA}$ array can be computed in constant time $s^3$ 
since for each $(x_0,x_1,x_2)\in[0..s-1]^3$, $\delta(x_2,x_1,x_0)$ is an explicit integer.


\medskip

Let us now represent the table of local $\rho$-transitions $B_{j+1}^{t}B_{j}^{t}B_{j-1}^{t}\mapsto B_j^{t+\ell}$ from $Q^{3\ell}$ to $Q^{\ell}$, for all the $\rho\in [1,\ell]$, by the 2-dimensional array
$\arr{LT}[1..\ell][0..s^{3\ell}-1]$ defined, for 
$1\leq \rho \leq \ell$, $B_0,B_1,B_2\in Q^{\ell}$ and $B=B_2 s^{2\ell}+B_1s^{\ell}+B_0$, by
$ \arr{LT}[\rho][B] \coloneqq R $,
which is interpreted as follows:  if, at any time $t$, three consecutive $\ell$-blocks numbered $j+1$, $j$, $j-1$ are respectively $B_2,B_1,B_0$, then at time $t+\rho$, the middle $\ell$-block (numbered $j$) is $R$.

\medskip 
The following code computes the $\arr{LT}$ array thanks to the $\mathtt{DELTA}$ array:

\begin{nosAlgos}[Computation of the \arr{LT} array]

\For {$\var{B} \From 0 \To \var{s}^{3\times \ell}-1$}
    \For {$\rho \From 1 \To \ell$}
    
         \State $\var{R} \gets \var{B}$
         \For {$\var{x} \From 0 \To 3\times \ell-1$}
           \State $\arr{C}[\var{x}][0] \gets \var{R} \modop \var{s}$
           \State $\arr{R} \gets \var{R} \divop \var{s}$
        \EndFor
 
    \For {$\var{t} \From 1 \To \rho$}
        \For {$\var{x} \From \ell - \rho + \var{t} \To 2\times \ell - 1 + \rho - \var{t}$}
           \State $\arr{C}[\var{x}][\var{t}] \gets 
           \arr{DELTA}[\arr{C}[\var{x}+1][\var{t}-1]][\arr{C}[\var{x}][\var{t}-1]][\arr{C}[\var{x}-1][\var{t}-1]]$
        \EndFor
    \EndFor
    
    \State $\var{R} \gets 0$
    \For {$\var{x} \From 2\times \ell -1 \Downto \ell$} 
            \State $\var{R} \gets \var{R}\times \var{s}+\arr{C}[\var{x}][\ell]$
    \EndFor
    \State $\arr{LT}[\rho][\var{B}] \gets \var{R}$      
    
    \EndFor
 \EndFor 
\end{nosAlgos}

\noindent
\emph{Justification:}
First, note that the main loop (line 1) goes through the integers $B<s^{3\ell}$ encoding local configurations (triplets of adjacent $\ell$-blocks).
Of course, $C[x,t]$ represents the state of a cell $x\in[0,3\ell-1]$ of the local configuration at an instant numbered $t\in [0,\rho]$ of a time interval of duration $\rho$, and the main assignment (line 9) is a paraphrase of the transition function 
\begin{eqnarray}\label{eqn:delta}
C[x][t] = \delta(C[x+1][t-1],C[x][t-1],C[x-1][t-1])
\end{eqnarray}
The loop condition $x\in[\ell - \rho + t, \;2\ell -1 + \rho - t]$ of line 8 for the application of~(\ref{eqn:delta}) is justified inductively by the locality of the transition function: obviously, the sequence of states at time~$t$ of the cell interval $[\ell - \rho + t, \;2\ell -1 + \rho - t]$
(which is equal to the middle $\ell$-block interval $[\ell, \;2\ell -1]$ for $t=\rho$) are determined by the sequence of states at time $t-1$ of the cell interval $[\ell - \rho + (t-1), \;2\ell -1 + \rho - (t-1)]$, which has one more cell, numbered $\ell - \rho + t - 1$ 
(resp.\ $2\ell -1 + \rho - t + 1$), at each end.

In addition, lines 3-6 of the algorithm transform an integer $B<s^{3\ell}$ (representing a local configuration) into the array of its $3\ell$ $s$-digits: the initial string $C[x][0]$, for $0\leq x \leq 3 \ell-1$.  

Conversely, lines 10-12 transform the final string $C[x][\rho]$, for $\ell \leq x \leq 2 \ell-1$, 
into the integer $R<s^{\ell}$ it represents: 
this is the resulting middle $\ell$-block of the local $\rho$-transition (line 13).

\paragraph{Representing the computation by a two-dimensional array of $\ell$-blocks.} 
We represent the computation of the CA by the sequence of its $c_0+1$ global configurations 
$\mathcal{C}_{t\ell} \coloneqq \sharp B_{c_0}^{t\ell} \cdots B_{1}^{t\ell}  \sharp$ at time $t\ell$, for $0\leq t\leq c_0$, 
where $B_{i}^{t\ell}$ is the $i$th $\ell$-block, $1\leq i \leq c_0$, of $\mathcal{C}_{t\ell}$ from the right.
Since we have $cL=(c_0-1)\ell +\rho$, for the constant $c_0 \coloneqq \lc cL/\ell \rc$ and an integer $\rho$ such that
$1\leq \rho\leq \ell$, note that, in case $\rho < \ell$, we need to add $\ell-\rho$ symbols $\sharp$ at the left of the $\rho$-block numbered~$c_0$ to make it an $\ell$-block.

For convenience, we add to each global configuration $\mathcal{C}_{t\ell}$ two $\ell$-blocks, 
$B_{0}^{t\ell}\coloneqq \sharp^{\ell}$ at its right, and $B_{c_0+1}^{t\ell}\coloneqq \sharp^{\ell}$ at its left. 
Thus, the global configuration $\mathcal{C}_{t\ell}$ is now written 
$B_{c_0+1}^{t\ell} B_{c_0}^{t\ell} \cdots B_{1}^{t\ell} B_{0}^{t\ell}$.
Note that the string $\sharp^{\ell}$ is assimilated to the integer represented in base $s$ by $\ell$ digits $\gamma$, 
$\overline{\gamma\cdots\gamma}$, which, by our convention $\sharp\coloneqq \gamma\coloneqq 2^{r+1}$, 
is equal to $2^{r+1}(s^{\ell-1}+\cdots + s +1)= 2^{r+1}(s^{\ell}-1)/(s-1)$.

\paragraph{Computing the initial configuration.} 
This is the most technical and most tedious part of the simulation.
Recall $cL=(c_0-1)\ell +\rho$, for the ``constant'' 
$c_0 \coloneqq \lc cL/\ell \rc$ and the integer $\rho$, which satisfies
$1\leq \rho\leq \ell$. Similarly, $L=(c_1-1)\ell +\lambda$, for the ``constant'' $c_1 \coloneqq \lc L/\ell \rc$ and the integer $\lambda$, which satisfy $1\leq \lambda\leq \ell$ and $c_1\leq c_0$. 
(We write that $c_0$ and $c_1$ are ``constant'' in quotation marks because, even if they depend on $N$, they are bounded by constants: precisely, we have $c_1=\lc L/\ell \rc=\lc L/ \lc L/D \rc \rc \leq \lc L/ (L/D) \rc=D$ and, likewise, $c_0\leq cD$.)
Also, recall the definition $\mathtt{code}(X_0,X_1,\ldots,X_r)\coloneqq u_{L-1}\cdots u_1 u_0\in \Gamma^L$
where $u_j\coloneqq \sum_{i=0}^{r} x_{i,j} 2^i$, for $0\leq j <L$, and $x_{i,L-1}\cdots x_{i,1} x_{i,0}$ denotes the binary representation of $X_i$ in $L$ bits. In other words, the $j$th $\gamma$-digit $u_j$ of 
$\mathtt{code}(X_0,X_1,\ldots,X_r)$, 
where $\gamma\coloneqq 2^{r+1}$, is
\begin{center} 
$\sum_{i=0}^r (j\mathtt{th}\;\mathtt{bit}\;\mathtt{of}\; X_{i}) \times 2^i$.
\end{center} 
We obtain the initial configuration $\mathcal{C}_{0}=B_{c_0+1}^{0} B_{c_0}^{0} \cdots B_{1}^{0} B_{0}^{0}$ by the four following steps.

\begin{enumerate}
\item Decompose each of the binary strings $X_i\in\{0,1\}^L$, $0\leq j \leq r$, as a concatenation \linebreak
$X_i=X_{c_1,i}X_{c_1-1,i}\cdots X_{1,i}$ of, from right to left,  $c_1-1$ $\ell$-blocks $X_{1,i},\ldots,X_{c_1-1,i}\in\{0,1\}^{\ell}$ and 
a $\lambda$-block $X_{c_1,i}\in\{0,1\}^{\lambda}$; 
\item Compute the decomposition of $\mathtt{Input}\coloneqq\mathtt{code}(X_0,X_1,\ldots,X_r)\in \Gamma^L$ as a concatenation 
$U=U_{c_1}U_{c_1-1}\cdots U_{1}$ of $c_1-1$ $\ell$-blocks $U_{1},\ldots,U_{c_1-1}\in \Gamma^{\ell}$ and a $\lambda$-block $U_{c_1}\in \Gamma^{\lambda}$; here again, for each $h\in[1,c_1]$, we write $U_h=\mathtt{code}(X_{h,0},X_{h,1},\ldots,X_{h,r})$ to mean that for each rank $i$, $0\leq i <\ell$, 
the $i$th $\gamma$-digit of $U_h$ is
$
\sum_{j=0}^r (i\mathtt{th}\;\mathtt{bit}\;\mathtt{of}\; X_{h,j}) \times 2^j 
$;
\item Compute the $\ell$-blocks $B_{j}^{0}$, for $j \in [1, c_1-1]$, of the global configuration 
$\mathcal{C}_{0}=B_{c_0+1}^{0} B_{c_0}^{0} \cdots B_{0}^{0}$ at time 0;
note that, for each $j\in[1,c_1-1]$, the representation of $U_j$ in base $\gamma$, 
denoted $\overline{a_{\ell-1}\cdots a_1 a_0}^{\gamma}$, ``is'' the representation of $B_j^0$ in base $s$, i.e.\ the same string of digits: $B_{j}^{0}=\overline{a_{\ell-1}\cdots a_1 a_0}^s$;
in terms of integers, this means that if we have $U_j=\sum_{i=0}^{\ell-1}a_i \gamma^i$ ($0\leq a_i\leq \gamma-1$), 
then we obtain $B_{j}^{0}=\sum_{i=0}^{\ell-1}a_i s^i$, which we write 
$B_{j}^{0}=\mathtt{convert}_{\gamma\to s}(U_j)$;

\item Compute the other $\ell$-blocks of $\mathcal{C}_{0}$: 
\begin{enumerate}
\item as seen above, we have 
$B_{j}^{0}=\sharp^{\ell}=2^{r+1}(s^{\ell}-1)/(s-1)$, for $j\in \{0\}\cup[c_1+1,c_0+1]$ 
(recall $\sharp\coloneqq \gamma \coloneqq 2^{r+1}$); 
\item finally, the representation of the integer $B_{c_1}^{0}$ in the base $s$ corresponds to the concatenation of strings $\sharp^{\ell-\lambda}$ and $U_{c_1}$.
Thus we obtain
\begin{center}
$ B_{c_1}^{0}=
2^{r+1}(s^{\ell}-s^{\lambda})/(s-1) + U'_{c_1} $
\end{center}
with $U'_{c_1}\coloneqq \mathtt{convert}_{\gamma\to s}(U_{c_1})$, which means that  
$U_{c_1}=\sum_{i=0}^{\lambda-1}a_i \gamma^i$ ($0\leq a_i\leq \gamma-1$) implies 
$U'_{c_1}=\sum_{i=0}^{\lambda-1}a_i s^i$.
\end{enumerate}


\end{enumerate}
To compute the initial configuration in constant time by the previous algorithm (1-4), we need to pre-compute two arrays in linear time.

\paragraph{Computing the tables $\mathtt{CODE}$ and $\mathtt{CONVERT}$.}
Let $\mathtt{CODE}[1..\ell][0..2^{\ell}-1][0..2^{\ell}-1]\cdots [0..2^{\ell}-1]$ be the array defined, for each $\lambda\in [1,\ell]$ and all $X_0,X_1,\ldots,X_r \in [0,2^{\lambda}-1]$, by
\begin{center}
$\mathtt{CODE}[\lambda][X_0],[X_1]\cdots[X_r]\coloneqq \mathtt{code}(X_0,X_1,\ldots,X_r)$
\end{center}
where $ \mathtt{code}(X_0,X_1,\ldots,X_r)$ is the integer in $[0,\gamma^{\lambda} -1]$ ($\gamma=2^{r+1}$) defined by item~2 above. 
The $\mathtt{CODE}$ array is computed by the following algorithm:

\begin{nosAlgos}[Computation of the \arr{CODE} array]
\For {$\lambda \From 1 \To \ell $}
  \For {$(\var{X}_0,\var{X}_1,\ldots,\var{X}_{\var{r}}) \in [0, 2^{\lambda} - 1]^{\var{r}+1}$}
  
    \For {$ \var{j} \From 0 \To \var{r} $}
      \For {$\var{i} \From 0 \To \lambda - 1 $}
        \State $ \arr{BIT}_{\var{j}}[\var{i}] \gets \var{X}_{\var{j}} \modop 2 $
         \State $ \var{X}_{\var{j}} \gets \var{X}_{\var{j}}  \divop 2 $
       \EndFor
     \EndFor
        
    \For {$ \var{i} \From 0 \To \lambda - 1 $}
       \State $ \arr{DIGIT}[\var{i}] \gets 0 $
        \For {$ \var{j} \From \var{r} \Downto 0 $} 
            \State $\arr{DIGIT}[\var{i}] \gets  2 \times  \arr{DIGIT}[\var{i}] + \arr{BIT}_{\var{j}}[\var{i}]$
        \EndFor
     \EndFor
      
    \State $\var{V} \gets 0$
    \For {$ \var{i} \From \lambda - 1 \Downto 0 $}
      \State $\var{V} \gets \gamma \times \var{V} + \arr{DIGIT}[\var{i}]$
    \EndFor
    \State $\arr{CODE}[\lambda][\var{X}_0][\var{X}_1]\cdots[\var{X}_{\var{r}}] \gets \var{V}$
    
     \EndFor
\EndFor
\end{nosAlgos}

\medskip \noindent
\emph{Comment:} 
Lines 3-6 of the algorithm compute the $\lambda$ bits of each integer
$X_0,\dots,X_r<2^{\lambda}$. 
Lines~7-10 and Lines 11-14 compute respectively the $\lambda$ $\gamma$-digits
of the integer $\arr{CODE}[\lambda][\var{X}_0][\var{X}_1]\cdots[\var{X}_{\var{r}}]$
and its value. 

\medskip
Similarly, let $\mathtt{CONVERT}[1..\ell][0..\gamma^{\ell}-1]$ be the array defined, for each length $\lambda\in [1,\ell]$ and each integer $B \in [0,\gamma^{\lambda}-1]$, by
\;$\mathtt{CONVERT}[\lambda][B]\coloneqq \mathtt{convert}_{\gamma\to s}(B)$\;
where $\mathtt{convert}_{\gamma\to s}(B)$ is the integer in $[0,s^{\lambda} -1]$ defined by the following implication (see items 3 and 4 above):
$B=\sum_{i=0}^{\lambda-1}a_i \gamma^i$, for $0\leq a_i <\gamma$, implies 
$\mathtt{convert}_{\gamma\to s}(B) \coloneqq \sum_{i=0}^{\lambda-1}a_i s^i$.
We can easily verify that the $\mathtt{CONVERT}$ array is computed by the following algorithm:

\begin{nosAlgos}[Computation of the \arr{CONVERT} array]
\For {$ \lambda \From 1 \To \ell $}
  \For {$ \var{B} \From 0 \To \gamma^{\lambda} - 1 $}
  
    \State $\var{U} \gets \var{B}$
    \For {$ \var{i} \From 0 \To \lambda - 1 $}
      \State $\arr{DIGIT}[\var{i}] \gets \var{U} \modop \gamma$ 
      \State $\var{U} \gets \var{U} \divop \gamma$
    \EndFor
      
    \State $\var{U} \gets 0$
    \For {$ \var{i} \From \lambda - 1 \Downto 0 $} 
       \State $\var{U} \gets \var{s} \times \var{U} + \arr{DIGIT}[\var{i}]$
     \EndFor
    \State $\arr{CONVERT}[\lambda][\var{B}] \gets \var{U}$
    
  \EndFor 
\EndFor   
\end{nosAlgos}

\medskip
\paragraph{Computing the output by projecting the final configuration.}
This step is similar but much simpler than the computation of the initial configuration from the input. 

For simplicity, we define $\pi(x)$ 
for each $x\in Q$ and not only for 
$x\in Q_{out}\cup \{\sharp\}$ as it is necessary. 
It also convenient to set $\pi(\sharp)\coloneqq 0$ (instead of $\pi(\sharp)\coloneqq \sharp$) to take account of the possible $\sharp$ symbols at the left end of the leftmost $\ell$-block, the $c_0$-th $\ell$-block.
More generally, we set $\pi: Q\to \{0,1\}$ with $\pi(q)\coloneqq 0$ for each $q\in Q\setminus Q_{out}$.

Here again, by its locality, the projection is done for each symbol $x\in Q$ of each $\ell$-block of the final configuration. It uses the array $\mathtt{PROJECT}[0..s^{\ell}-1]$ defined by 
\;$\mathtt{PROJECT}[B]\coloneqq \sum_{i=0}^{\ell-1} \pi(x_i)\times 2^i$
where $B$ is any $\ell$-block $x_{\ell-1}\cdots x_1x_0\in Q^{\ell}$, or equivalently, $B= \sum_{i=0}^{\ell-1} x_i\times s^i$, for $x_i \in [0,s-1]$.
The $\mathtt{PROJECT}$ array is computed by the following algorithm where $\mathtt{PI}[0..s-1]$ denotes the array defined by 
$\mathtt{PI}[x] \coloneqq \pi(x)$, for $x\in Q=[0..s-1]$.

\begin{nosAlgos}[Computation of the \arr{PROJECT} array]
\For {$ \var{B} \From 0 \To \var{s}^{\ell} - 1$} 
    \State $\var{R} \gets \var{B}$
    \For {$ \var{i} \From 0 \To \ell - 1 $} 
         \State $\arr{DIGIT}[\var{i}] \gets \arr{PI}[\var{R} \modop \var{s}]$ 
         \State $\var{R} \gets \var{R} \divop \var{s}$  
     \EndFor
     \State $\var{R} \gets 0$ 
    \For {$ \var{i} \From \ell - 1 \Downto 0 $}  
         \State $\var{R} \gets 2 \times \var{R} + \arr{DIGIT}[\var{i}]$
     \EndFor
    \State $\arr{PROJECT}[\var{B}] \gets \var{R}$
 \EndFor
\end{nosAlgos}


\paragraph{Putting things together: simulation of the cellular automaton.}

Recall that, for \linebreak
$cL=(c_0-1) \ell + \rho$ with $1\leq \rho\leq \ell$, i.e.\ for
$c_0 \coloneqq \lc cL/\ell \rc$ and $\rho\coloneqq cL-(c_0-1)\ell$,  
the computation of the automaton $\mathcal{A}=(Q,\delta)$ decomposes into 
\begin{itemize}
\item $c_0-1$ global $\ell$-transitions, each composed of $c_0$ local $\ell$-transitions, and 
\item one global $\rho$-transition composed of $c_0$ local $\rho$-transitions of time $\rho$.
\end{itemize}
 

Let $B_{x}^{T}$ be the $\ell$-block numbered $x$, $0\leq x \leq c_0+1$, 
of the global configuration $\mathcal{C}_{T}$ at time~$T$, $0\leq T \leq cL$.
The $\mathtt{op}$ operation is computed by the following procedure, called $\fun{op}$, which simulates the automaton $\mathcal{A}=(Q,\delta)$ with its projection $\pi: Q \to \{0,1\}$. 
The main data structure of the $\fun{op}$ procedure is a two-dimensional array, called 
$\mathtt{BK}[0..c_0+1][0..c_0]$ and defined by 
\begin{center}
$\mathtt{BK}[x][t] \coloneqq B_{x}^{t\ell}$ \;for $0\leq t \leq c_0-1$, and
\;$\mathtt{BK}[x][c_0] \coloneqq B_{x}^{cL}$.
\end{center}

\begin{nosAlgos} [Constant-time computation of the $\mathtt{op}$ operation 
]
\Procedure{op}{$\var{X}_1,...,\var{X}_\var{r}$} 
  \State $ \var{X}_0 \gets \var{N} $
  \For {$ \var{j} \From 0 \To \var{r} $}
     \For {$ \var{x} \From 1 \To \var{c}_1 $}
       \State $ \arr{X}[\var{j}][\var{x}] \gets \var{X}_{\var{j}} \modop 2^{\ell} $ 
       \State $ \var{X}_{\var{j}} \gets \var{X}_{\var{j}} \divop 2^{\ell} $
     \EndFor
  \EndFor
     \For {$ \var{x} \From 1 \To \var{c}_1 - 1 $} 

       \State $ \arr{U}[\var{x}] \gets \arr{CODE}[\ell][\arr{X}[0][\var{x}]]\cdots [\arr{X}[\var{r}][\var{x}]] $
   \EndFor
 \State $ \arr{U}[\var{c}_1] \gets \arr{CODE}[\lambda][\arr{X}[0][\var{c}_1]]\cdots [\arr{X}[\var{r}][\var{c}_1]] $
 \State $ \arr{BK}[0][0] \gets 2^{\var{r}+1}\times (\var{s}^{\ell} - 1) / (\var{s}-1)  $
   \For {$ \var{x} \From 1 \To \var{c}_1 - 1 $} 
      \State $ \arr{BK}[\var{x}][0] \gets \arr{CONVERT}[\ell][\arr{U}[\var{x}]] $
   \EndFor
 \State $ \arr{BK}[\var{c}_1][0] \gets \arr{CONVERT}[\lambda][\arr{U}[\var{c}_1]] + 
 2^{\var{r}+1}\times (\var{s}^{\ell} - \var{s}^{\lambda}) / (\var{s}-1)  $
   \For {$ \var{x} \From \var{c}_1+1 \To \var{c}_0+1 $} 
      \State $ \arr{BK}[\var{x}][0] \gets 2^{\var{r}+1}\times (\var{s}^{\ell} - 1) / (\var{s}-1) $
   \EndFor

   \For {$ \var{t} \From 1 \To \var{c}_0 - 1 $}
     \For {$ \var{x} \From 1 \To \var{c}_0 $}
       \State $ \arr{BK}[\var{x}][\var{t}] \gets \arr{LT}[\ell][\fun{Conc}(\arr{BK}[\var{x}+1][\var{t}-1],\arr{BK}[\var{x}][\var{t}-1],\arr{BK}[\var{x}-1][\var{t}-1])] $
     \EndFor
  \EndFor
  \For {$ \var{x} \From 1 \To \var{c}_0 $} 
     \State $ \arr{BK}[\var{x}][\var{c}_0] \gets \arr{LT}[\rho][\fun{Conc}(\arr{BK}[\var{x}+1][\var{c}_0-1],\arr{BK}[\var{x}][\var{c}_0-1],\arr{BK}[\var{x}-1][\var{c}_0-1])] $
  \EndFor

  \State $ \var{R} \gets 0 $
  \For {$ \var{x} \From \var{c}_0 \Downto 1 $} 
      \State $ \var{R} \gets \var{s}^{\ell} \times \var{R} + \arr{PI}[\arr{BK}[\var{x}][\var{c}_0]]  $
  \EndFor
  \State $ \Return\; \var{R} $
\EndProcedure 
\end{nosAlgos}


\smallskip \noindent
\emph{Comments:} Lines 2-15 (the major part of the $\fun{op}$ procedure!) construct the part of the 
$\arr{BK}$ array, $BK[x][0]$, $0\leq x \leq c_0+1$, 
that represents the initial configuration $\mathcal{C}_0$ of $\mathcal{A}$
on the input $\mathtt{code}(N,X_1,\ldots,X_r)$. More precisely, the construction decomposes into the following steps:
\begin{itemize}
\item from the integers $X_0\coloneqq N,$ $X_1,\ldots,X_r$, lines 2-6 of the $\fun{op}$ procedure compute the array $\arr{X}[0..r][1..c_1]$ such that $\arr{X}[i][x]$ is the $x$-th bit of $X_i$;
\item from the array $\arr{X}[0..r][1..c_1]$,  lines 7-9 compute the array $\arr{U}[1..c_1]$ such that $\arr{U}[x]$ is the $x$-th $\gamma$-digit of 
$\mathtt{code}(X_0,X_1,\ldots,X_r)$ (for $\gamma\coloneqq 2^{r+1}$), which is $\arr{U}[x]=\sum_{i=0}^r 2^i \times \arr{X}[i][x]$;
\item from the $\arr{U}[1..c_1]$ array, lines 10-15 construct the part $\mathtt{BK}[0..c_0+1][0]$ of the 
$\mathtt{BK}$ array that encodes the initial configuration 
$\mathcal{C}_0=\sharp B^0_{c_0+1}\cdots B^0_{1} B^0_{0} \sharp$ with the $\ell$-blocks
$B^0_{x}$ among which 
\begin{itemize}
\item $B^0_{0}$ and $B^0_{c_1+1},\ldots, B^0_{c_0+1}$ are equal to the string $\sharp^{\ell}$ encoded by the integer \linebreak
$2^{r+1}(s^{\ell}-1)/(s-1)$ (line 10 and lines 14-15, respectively, corresponding to item 4.a above), and
\item corresponding to the ``input'', each $\mathtt{BK}[j]$, $1\leq j \leq c_1-1$, is the integer whose string of 
($\ell$) digits in base $s$ is equal to $B^0_j$ (lines 11-12 corresponding to item 3),
and $\mathtt{BK}[c_1]$ is the integer whose string of 
($\ell$) digits in base $s$ is $B^0_{c_1}=\sharp^{\ell-\lambda}U_{c_1}$ (line 13 corresponding to item 4.b; recall that $\lambda$ is the length of $U_{c_1}$).
\end{itemize}

\end{itemize}

Lines 16-20 simulate the computation of $\mathcal{A}$: the $c_0-1$ global $\ell$-transitions (lines 16-18) and the final global $\rho$-transition (lines 19-20).

Finally, lines 21-24 compute and return the output integer by projecting the final configuration~$\mathcal{C}_{cL}$ with the $\pi$ projection.
 
It is immediate to check that the body of each ``for'' loop of the $\fun{op}$ procedure is repeated a number of times bounded by $r+1$, $c_0$ or $c_1$, therefore by a constant, and therefore the $\fun{op}$ procedure executes in constant time as claimed. 

\paragraph{Why is the preprocessing time linear?} 
Our constant-time procedure $\fun{op}$ uses five pre-computed tables: $\mathtt{PI}$, computed in constant time,
and $\arr{LT}$, $\arr{CODE}$, $\arr{CONVERT}$ and $\arr{PROJECT}$.
By examining the structure of the loops of the above algorithms, it is easy to verify that 

\begin{itemize}
\item $\arr{LT}$ is computed in time $O(\mathtt{NbLocalConf}(N)\times \ell^3)= 
O(N^{\sigma}\times (\log N)^3)= O(N)$
(recall $\mathtt{NbLocalConf}(N)=s^{3\ell}=O(N^{\sigma})$, for a fixed $\sigma<1$), 
\item $\arr{CODE}$ and $\arr{CONVERT}$ are computed in time 
$O((2^{\ell})^{r+1}\times \ell^2)=O(\gamma^{\ell}\times \ell^2)$, 
and $O(\gamma^{\ell}\times \ell^2)$, respectively, and therefore in time $O(s^{\ell}\times \ell^2)$, and 
$\arr{PROJECT}$ is computed in time $O(s^{\ell}\times \ell$), and therefore, a fortiori, all these tables are computed in time $O(N)$.
\end{itemize}
This completes the proof of Theorem~\ref{th:linTimeCA}.
\end{proof}
\section{Appendix: Another constant-time algorithm for division}

While multiplication can be computed in linear time on a cellular automaton, so Theorem~\ref{th:linTimeCA} applies to it, it is unknown if division is.  
However, the intuitive principle of the usual division algorithm is also ``local'': to compute the most significant (leftmost) digit of the quotient of integers $\lf A/B \rf$, the first step is to ``try'' to divide the integer $a$ consisting of the 1, 2, 3, or 4 leftmost digits of $A$ by the integer~$b$ consisting of the 1, 2, or 3 leftmost digits of $B$. 

We are going to design a new constant-time algorithm for division by a careful adaptation of the usual division principle combined with a judicious choice of the base $K$ in which the integers are represented.
To avoid any vicious circle here, we will only use the addition, subtraction, multiplication and base change of ``polynomial'' integers, and the procedure $\fun{DivBySmall}(A,B)$ of Subsection~\ref{subsec:divbysmallint}, which returns $\lf A/B \rf$ for $A<\beta^d$, $0<B<\beta$, for a fixed integer $d\geq 1$ and $\beta\coloneqq \lc N^{1/6} \rc^3$.

\paragraph{Analyzing and optimizing the usual division algorithm.} Assume the dividend $A$ and the divisor $B$ are represented in any fixed base $K\ge 2$ and let $a$ (resp.\ $b$) be the integer represented by the 1, 2, 3, or 4 (resp. 1, 2, or 3) leftmost digits of $A$ (resp. $B$). Each positive integer is assimilated to the string $\overline{x_{p-1}\cdots x_1 x_0}$ of digits $x_i$, $0\leq x_i<K$, which represents it in base~$K$, with leading digit $x_{p-1}>0$. Clearly, we get the inequalities
$a K^m \leq A < (a+1) K^m$ and $b K^n \leq B < (b+1) K^n$, for some integers $a,b\geq 1$ and $m,n$. 
Because of $A\geq B$, we can assume $m\geq n$. It implies
$\frac{a}{b+1}K^{m-n} < A/B < \frac{a+1}{b}K^{m-n}$.
To do \emph{one} test instead of several tests to get the \emph{most significant} digit of the quotient 
$A/B$ in base $K$, we want the difference between the bounds to be less than $K^{m-n}$, it means
$\frac{a+1}{b}-\frac{a}{b+1}\leq 1$.
Noticing the identity $\frac{a+1}{b}-\frac{a}{b+1}= \frac{a}{b(b+1)}+\frac{1}{b}$, this can be rephrased as 
\begin{eqnarray}\label{eqn:identity}
\frac{a}{b(b+1)}+\frac{1}{b}\leq 1.
\end{eqnarray}

\begin{lemma}\label{lemma:q or q+1}
Assume $a K^m \leq A < (a+1) K^m$ and $b K^n \leq B < (b+1) K^n$, for some integers $K\geq 2$, and $A,B$ such that 
$A> B \geq 1$, and $m,n,a,b$ such that $m\geq n \geq 0$, and $a > b \geq 1$, and $\frac{a}{b(b+1)}+\frac{1}{b}\leq 1$.
We then get
\begin{center}
$q K^{m-n} \leq A/B < (q+1)K^{m-n}$ for $q=\lf \frac{a}{b+1} \rf$ or $q=\lf \frac{a}{b+1} \rf+1$,
\end{center}
which implies $A = B q K^{m-n}+R$ for some integer
$R< B K^{m-n}$, and therefore 
\begin{eqnarray}\label{eqn: recursiveDiv}
\lf A/B \rf = q K^{m-n}+ \lf R/B \rf 
\end{eqnarray}
with $q\geq 1$ and $\lf R/B \rf < K^{m-n}$.

\end{lemma}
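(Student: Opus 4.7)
The plan is to translate the statement directly into elementary inequalities. First I would derive from the hypotheses $aK^m\le A<(a+1)K^m$ and $bK^n\le B<(b+1)K^n$ the enclosure
\[
\frac{a}{b+1}K^{m-n}\;\le\;\frac{A}{B}\;<\;\frac{a+1}{b}K^{m-n},
\]
by dividing the extreme terms (using $A>0$, $B>0$ and $m\ge n$ so that $K^{m-n}$ is an integer $\ge 1$).

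Next I would quantify the width of this enclosure using the identity
\[
\frac{a+1}{b}-\frac{a}{b+1}=\frac{(a+1)(b+1)-ab}{b(b+1)}=\frac{a+b+1}{b(b+1)}=\frac{a}{b(b+1)}+\frac{1}{b},
\]
which together with the hypothesis $\frac{a}{b(b+1)}+\frac{1}{b}\le 1$ gives $\frac{a+1}{b}\le \frac{a}{b+1}+1$. Setting $q_0\coloneqq \lfloor a/(b+1)\rfloor$, so that $q_0\le a/(b+1)<q_0+1$, the enclosure tightens to
\[
q_0 K^{m-n}\;\le\;\frac{A}{B}\;<\;(q_0+2)K^{m-n}.
\]

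The key case-split then selects $q$: if $A/B<(q_0+1)K^{m-n}$, set $q\coloneqq q_0$; otherwise set $q\coloneqq q_0+1$, in which case $qK^{m-n}\le A/B<(q_0+2)K^{m-n}=(q+1)K^{m-n}$. Either way, $qK^{m-n}\le A/B<(q+1)K^{m-n}$, as claimed. Multiplying by $B$ yields $qBK^{m-n}\le A<(q+1)BK^{m-n}$, so the Euclidean division $A=qBK^{m-n}+R$ has remainder $R$ satisfying $0\le R<BK^{m-n}$, whence $\lfloor A/B\rfloor=qK^{m-n}+\lfloor R/B\rfloor$ with $\lfloor R/B\rfloor<K^{m-n}$. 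Finally, $q\ge 1$ follows from $a>b$, which gives $a\ge b+1$ and hence $q_0=\lfloor a/(b+1)\rfloor\ge 1$.

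There is no real obstacle here; the only subtle point is recognizing that the hypothesis $\frac{a}{b(b+1)}+\frac{1}{b}\le 1$ is precisely the condition that makes the enclosing interval have width at most $K^{m-n}$, which is what guarantees that only two consecutive candidates $q_0,q_0+1$ for the leading block of the quotient need to be tested. The rest is routine interval arithmetic.
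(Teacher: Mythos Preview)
Your argument is correct and follows essentially the same route as the paper: derive the enclosure $\frac{a}{b+1}K^{m-n}\le A/B<\frac{a+1}{b}K^{m-n}$, use the hypothesis to bound its width by $K^{m-n}$, and conclude that $\lfloor A/B\rfloor$ lies in $[q_0 K^{m-n},(q_0+2)K^{m-n})$ for $q_0=\lfloor a/(b+1)\rfloor$. Your write-up is in fact more complete than the paper's terse version, since you spell out the case split, the Euclidean-division consequences for $R$, and the reason $q\ge 1$.
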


\begin{proof}
From the inequalities $\frac{a}{b+1}K^{m-n} < A/B < \frac{a+1}{b}K^{m-n}$ we deduce
$\lf\frac{a}{b+1}\rf K^{m-n} < A/B < \lc \frac{a+1}{b}\rc K^{m-n}$.
Besides, the inequality $\frac{a+1}{b}-\frac{a}{b+1}\leq 1$ implies $\lc\frac{a+1}{b}\rc\leq \lf\frac{a}{b+1}\rf +2$.
This yields $q K^{m-n} < A/B < (q+2) K^{m-n}$ for $q=\lf \frac{a}{b+1} \rf \geq 1$, which proves the lemma.
\end{proof}

\begin{remark}
Lemma~\ref{lemma:q or q+1} gives and justifies the principle of a recursive division algorithm in any fixed base $K\geq 2$: 
to compute $\lf A/B \rf$, compute the integers $q$ and $\lf R/B \rf$ in base $K$; 
the representation of the integer $\lf A/B \rf$ in base $K$ is the concatenation of the string representing the integer $q\geq 1$ and the string of $m-n$ digits representing the integer $\lf R/B \rf< K^{m-n}$ (possibly with leading zeros).
\end{remark}

\paragraph{How to use Lemma~\ref{lemma:q or q+1}?} How many leftmost (most significant) digits of $A$ and $B$, written in base $K$, should we take to obtain integers $a$ and $b$, with $a\geq b\geq 1$, that satisfy the condition $\frac{a}{b(b+1)}+\frac{1}{b}\leq 1$ of Lemma~\ref{lemma:q or q+1}? This is expressed precisely by the following question: 
for which integers $i,j$ is the implication 
$K^i \leq a < K^{i+1} \land K^j \leq b < K^{j+1} \Rightarrow \frac{a}{b(b+1)}+\frac{1}{b}\leq 1$ 
true for all~$a,b$?

\begin{lemma}\label{lemma:i<2j}
Let $i,j$ be two positive integers. Then the condition $i< 2j$ is equivalent to the condition
\[ 
\forall (a,b)\in[K^i,K^{i+1}[\times[K^j,K^{j+1}[ \;\; \left(\frac{a}{b(b+1)}+\frac{1}{b}\leq 1 \right)
\] 
\end{lemma}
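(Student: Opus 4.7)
The plan is to reduce the fractional inequality to an integer inequality, and then check when the universal statement over the two intervals is equivalent to the inequality $i < 2j$.

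First I would simplify the condition $\frac{a}{b(b+1)} + \frac{1}{b} \leq 1$ into something much easier to handle. Since $b \geq K^j \geq 1$, the denominator $b(b+1)$ is positive, so multiplying through yields the equivalent inequality $a + (b+1) \leq b(b+1)$, which rearranges to $a \leq b(b+1) - (b+1) = (b-1)(b+1) = b^2 - 1$, i.e.\ $a < b^2$. So the whole displayed statement in the lemma is equivalent to
\[
\forall (a,b) \in [K^i, K^{i+1}[ \times [K^j, K^{j+1}[, \quad a < b^2.
\]

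Next I would observe that a universal statement of the form ``$a < b^2$ for all $a$ in some range and all $b$ in some range'' holds iff it holds for the worst case, namely the largest admissible $a$ against the smallest admissible $b$. Here the largest $a$ is $K^{i+1} - 1$ and the smallest $b$ is $K^j$ (so the smallest $b^2$ is $K^{2j}$). Hence the universal statement is equivalent to $K^{i+1} - 1 < K^{2j}$. Since both $K^{i+1}$ and $K^{2j}$ are (positive) integers, this is in turn equivalent to $K^{i+1} \leq K^{2j}$, hence, using that $K \geq 2$ and in particular the exponential $x\mapsto K^x$ is strictly increasing, to $i+1 \leq 2j$, i.e.\ to $i < 2j$.

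Putting these two reductions together yields the desired equivalence. The only mildly delicate step is the passage from the strict inequality $K^{i+1} - 1 < K^{2j}$ to $K^{i+1} \leq K^{2j}$, which I would justify explicitly by invoking the fact that $K^{i+1}$ and $K^{2j}$ are integers (so $x - 1 < y \Leftrightarrow x \leq y$ for integers $x, y$); every other step is an algebraic rewriting. I do not expect any real obstacle here — the statement is elementary, and the proof will essentially be two short equivalences chained together.
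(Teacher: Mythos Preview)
Your proof is correct and follows essentially the same approach as the paper: both identify the worst case $(a,b)=(K^{i+1}-1,K^j)$ and reduce the inequality to $K^{i+1}\leq K^{2j}$, hence $i+1\leq 2j$. Your intermediate simplification of the fractional condition to $a\leq b^2-1$ (equivalently $a<b^2$) is a clean touch, but the overall argument is the same as the paper's.
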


\begin{proof}
The maximum of the numbers 
$\frac{a}{b(b+1)}+\frac{1}{b}$ for all pairs of integers $(a,b)\in[K^i,K^{i+1}[\times[K^j,K^{j+1}[$ 
is $\frac{K^{i+1}-1}{K^j(K^j+1)}+\frac{1}{K^j}=\frac{K^{i+1}+K^j}{K^{2j}+K^j}$, which is at most 1 if and only if $i+1\leq 2j$.
This proves the lemma.
\end{proof}

We are now ready to give our new division algorithm using the principle given by Lemma~\ref{lemma:q or q+1}. Here again, we must choose carefully the base $K$.

\paragraph{The choice of the base.} We take $K\coloneqq\lc N^{1/7} \rc$
and we assume that each ``polynomial'' operand $X<N^d$, for a fixed integer $d$, is written 
$\overline{x_{m-1}\cdots x_1 x_0}$, $0\leq x_i\leq K-1$, in base~$K$, with one digit $x_i$ per register, 
and is therefore stored in $m\leq 7d$ registers, since  $X<N^d\leq K^{7d}$. In particular, the following functions can be computed in constant time for their argument $X$ so represented: 
$\mathtt{length}_{\mathtt{K}}(X)\coloneqq m$;  
$\mathtt{Substring}_{\mathtt{K}}(X,i,j)\coloneqq\overline{x_{i-1}\cdots x_j}$, for $m\geq i > j \geq 0$.

\begin{notation}
For an integer $X$ written $\overline{x_{m-1}\cdots x_1 x_0}$ in base $K$, $0\leq x_i\leq K-1$,  \linebreak 
and a positive integer $p\leq m$, denote by $X_{(p)}$ the prefix of the $p$ most significant digits of $X$: 
\linebreak
$X_{(p)} \coloneqq \overline{x_{m-1} \cdots x_{m-p}}= X \divop K^{m-p}$.

\end{notation}

\paragraph{The division algorithm.} Let $A,B$ be two ``polynomial'' integers such that $0<B\leq A < K^d$, for a fixed integer $d$. To divide $A$ by $B$, we consider four cases: 

\begin{enumerate}
\item  $B<K^2$;
\item $K^p\leq B \leq A< K^{p+1}$, for an integer $p\geq 2$, and $A_{(3)} = B_{(3)}$; 
\item $K^2\leq B \leq A$ and $A_{(3)} > B_{(3)}$;
\item $K^p\leq B < K^{p+1} \leq A$, for an integer $p\geq 2$, and $A_{(3)} \leq B_{(3)}$.
\end{enumerate}
Obviously, these four cases are mutually exclusive.
To ensure that there is no other case than these four cases, just note that if condition 2 is false when $K^2\leq B\leq A < K^d$, then either we have $A_{(3)} > B_{(3)}$ (case 3), or we have 
$A_{(3)} \leq B_{(3)}$ \emph{and} $A$ has \emph{more} digits than $B$ in base $K$ (case~4).

\begin{itemize}
\item Case~1 is handled by the procedure $\fun{DivBySmall}(A,B)$ of Subsection~\ref{subsec:divbysmallint} 
(note that \linebreak
$K^2=\lc N^{1/7} \rc^2 \leq \lc N^{1/6} \rc^3=\beta$).

\item In case 2, 
we clearly get $A-B<K^{p-2}<B$, which implies $A<2B$ and therefore $\lf A/B \rf=1$.

\item  Cases 3 and 4 are handled by Lemma~\ref{lemma:q or q+1}.
In case 3 (resp.\ case~4), take $a=A_{(3)}$ and $b=B_{(3)}$ 
(resp.\ $a=A_{(4)}$ and $b=B_{(3)}$), which implies 
$K^2\leq b < 
a < K^3$ (resp.\ \linebreak
$K^2\leq b <K^3 \leq a < K^4$). 
\end{itemize}

The integers $i=2$ and $j=2$ (resp.\ $i=3$ and $j=2$) satisfy the condition  $i<2j$ of Lemma~\ref{lemma:i<2j}. Thus, in cases 3 and 4, the condition  $\frac{a}{b(b+1)}+\frac{1}{b}\leq 1$ of Lemma~\ref{lemma:q or q+1} is satisfied. 
Therefore, Lemma~\ref{lemma:q or q+1} proves the correctness of the following $\fun{Division}$ procedure which implements cases~1-4 for \linebreak
$A\geq B$. More precisely, the lines 17-21 of the algorithm compute 
$\lf A/B \rf$ in cases 3 and 4 by using the pre-computed array $\arr{DIV}[0..K^4-1][2..K^3]$ defined by $\arr{DIV}[x][y] \coloneqq \lf x / y \rf$, for $0\leq x < K^4$ and $2\leq y \leq K^3$ 
(see the pre-computation code of the $\arr{DIV}$ array presented after the $\fun{Division}$ procedure). 

\begin{nosAlgos}[Constant-time computation of the $\mathtt{division}$ operation]
\Procedure{Division}{$\var{A},\var{B}$}
    \If {$\var{A} < \var{B}$} 
      \State \Return $0$
    \EndIf
    \If {$\var{B} < \var{K}^2$} \Comment{case 1}
       \State \Return $\fun{DivBySmall}(\var{A},\var{B})$ 
    \EndIf 
    \State $\ell_{\var{A}} \gets \mathtt{length}_{\var{K}}(\var{A})$ 
    \State $\var{a} \gets \mathtt{Substring}_{\var{K}}(\var{A},\ell_{\var{A}},\ell_{\var{A}}-3)$ 
    \Comment{$a \gets A_{(3)}$} 
    \State $\var{m} \gets \ell_{\var{A}}-3$
    \State $\ell_{\var{B}} \gets \mathtt{length}_{\var{K}}(\var{B})$ 
    \State $\var{b} \gets \mathtt{Substring}_{\var{K}}(\var{B},\ell_{\var{B}},\ell_{\var{B}}-3)$  
    \Comment{$b \gets B_{(3)}$}
    \State $\var{n} \gets \ell_{\var{B}}-3$
    \If {$\ell_{\var{A}}  = \ell_{\var{B}}~\mathbf{and}~\var{a} = \var{b}$}
        \State \Return $1$
        \Comment{$\mathtt{length}_{\var{K}}(\var{A})=\mathtt{length}_{\var{K}}(\var{B})$ and 
         $A_{(3)}= B_{(3)}$: case 2}
    \EndIf 
    \If {$\var{a} \leq \var{b}$}   \Comment{$A_{(3)} \leq B_{(3)}$: case 4}
        \State $\var{a} \gets \mathtt{Substring}_K(\var{A},\ell_{\var{A}},\ell_{\var{A}}-4)$ 
        \Comment{$a \gets A_{(4)}$, so $b=B_{(3)} < a < (b+1)K$}
         \State $\var{m} \gets \ell_{\var{A}}-4$
    \EndIf
    \State $\var{q} \gets \arr{DIV}[\var{a}][\var{b}+1]$ 
    \Comment{$b+1\leq a < (b+1)K$ and therefore $1\leq q < K$}\\
    \Comment{common treatment of cases 3 and 4 justified by Lemma~\ref{lemma:q or q+1}}
    \If {$\var{A} \geq \var{B}\times (\var{q}+1)\times \var{K}^{\var{m}-\var{n}}$} 
         \State $\var{q} \gets \var{q}+1$
    \EndIf
    \State $\var{R} \gets \var{A} - \var{B}\times \var{q}\times \var{K}^{\var{m}-\var{n}}$ 
    \State \Return $\var{q}\times \var{K}^{\var{m}-\var{n}}+ \fun{Division}(\var{R},\var{B})$
    \Comment{justified by equality~\ref{eqn: recursiveDiv}}
\EndProcedure
\end{nosAlgos}

\paragraph{Comments:} In case 4 ($a=A_{(3)}\leq B_{(3)}=b$), after the assignment $a \gets A_{(4)}$ of line 15, we have 
$a=A_{(4)} < (A_{(3)}+1)K$. 
Now, by the hypothesis $A_{(3)}\leq B_{(3)}=b$ (line 14), we get  \linebreak
$b=B_{(3)}<a=A_{(4)} < (A_{(3)}+1)K \leq (b+1)K$, and thus $b<a <(b+1)K$, as the comment of line 15 claims.

In case 3, i.e.\ for $a=A_{(3)}> B_{(3)}=b$, we also have $a=A_{(3)}<B_{(4)}<(B_{(3)}+1)K=(b+1)K$.
This justifies, for cases 3 and 4, the assertion given as a comment of line 17: $b+1\leq a<(b+1)K$.

\bigskip 
It remains to describe the preprocessing and to prove that the claimed complexity is achieved.

\paragraph{Pre-computation 
of the $\mathtt{DIV}$ array.} Note that in the 
$\fun{Division}$ procedure, we compute $q=\lf a/(b+1) \rf$ when $2\leq b+1 \leq K^3$ and $1\leq q \leq K-1$. Therefore, the following algorithm constructs 
the $\mathtt{DIV}$ array.

\begin{nosAlgos}[Pre-computation of the \arr{DIV} array]
\For {$\var{y} \From 2 \To \var{K}^3$}
  \For {$\var{q} \From 1 \To \var{K}-1$}
     \For {$\var{r} \From 0 \To \var{y}-1$}
      \State $\var{x} \gets \var{q}\times \var{y} + \var{r} $
      \State $\arr{DIV}[\var{x}][\var{y}] \gets \var{q}$
    \EndFor
  \EndFor
\EndFor
\end{nosAlgos}

\paragraph{Complexity of the preprocessing and division algorithm.} We immediately see that the time of the above algorithm which constructs the array $\arr{DIV}[0..K^4-1][2..K^3]$
is $O(K^3\times K \times K^3)=O(K^7)=O(N)$.

The complexity of the $\fun{Division}$ procedure is determined by its recursive depth (number of recursive calls). 
By equation~\ref{eqn: recursiveDiv} (implemented by line~21), which is $\lf A/B \rf = q K^{m-n}+ \lf R/B \rf$
with $q\geq 1$ and $\lf R/B \rf < K^{m-n}$, the quotient  $\lf R/B \rf $ has at least one digit less than $\lf A/B \rf$ when represented in base $K$ (the digit $q$, multiplied by $K^{m-n}$). Therefore, the recursive depth is bounded by the number of digits of the representation of $\lf A/B \rf$ in base $K$, which is less than~$7d$ because of $\lf A/B \rf<K^{7d}$.
Since the most time consuming instruction is the multiplication, executed $O(d)$ times (lines 18, 20, 21) and whose time cost is $O(d^2)$, the time of the $\fun{Division}$ procedure is $O(d^3)$, at compare to the time $O(d^4)$ of the $\fun{Division}$ procedure of Section~\ref{section:division}, see Theorem~\ref{theorem: division constant time}. 

\end{document}